\def\eqref#1{equation~\ref{#1}}
\def\1{\bm{1}}
\DeclareMathAlphabet{\mathsfit}{\encodingdefault}{\sfdefault}{m}{sl}
\SetMathAlphabet{\mathsfit}{bold}{\encodingdefault}{\sfdefault}{bx}{n}
\newcommand\notsotiny{\@setfontsize\notsotiny{8}{9}}
\newcommand{\alglinelabel}{%
  \addtocounter{ALC@line}{-1}% Reduce line counter by 1
  \refstepcounter{ALC@line}% Increment line counter with reference capability
  \label% Regular \label
}
\theoremstyle{plain}
\newtheorem{theorem}{Theorem}[section]
\newtheorem{lemma}[theorem]{Lemma}
\newtheorem{corollary}[theorem]{Corollary}
\theoremstyle{definition}
\newtheorem{definition}[theorem]{Definition}
\newtheorem{assumption}[theorem]{Assumption}
\newtheorem{remark}[theorem]{Remark}
\title{Online Bandit Nonlinear Control \\ with Dynamic Batch Length and \\ Adaptive Learning Rate}
\author{Jihun Kim \& Javad Lavaei \\
Department of IEOR, 
University of California, Berkeley, CA 94720, USA \\
\texttt{\{jihun.kim,lavaei\}@berkeley.edu}
}
\begin{document}

\maketitle

\begin{abstract}
This paper is concerned with the online bandit nonlinear control, which aims to learn the best stabilizing controller from a pool of stabilizing and destabilizing controllers of unknown types for a given nonlinear dynamical system. We develop an algorithm, named \textbf{D}ynamic \textbf{B}atch length and \textbf{A}daptive learning \textbf{R}ate (DBAR), and study its stability and regret. Unlike the existing Exp3 algorithm requiring an exponentially stabilizing controller, DBAR only needs a significantly weaker notion of controller stability, in which case substantial time may be required to certify the system stability. Dynamic batch length in DBAR effectively addresses this issue and enables the system to attain asymptotic stability, where the algorithm behaves as if there were no destabilizing controllers. Moreover, adaptive learning rate in DBAR only uses the state norm information to achieve a tight regret bound even when none of the stabilizing controllers in the pool are exponentially stabilizing.
\end{abstract}

\section{Introduction}

The multi-armed bandit (MAB) problem aims to minimize the total cost of pulling a series of arms while receiving immediate cost feedback for each arm pulled. Given a finite number of arms, the problem balances between exploration and exploitation of arms without knowing the exact cost structure of each arm. On the other hand, the online optimal control problem considers a transition dynamic $x_{t+1}=f(x_t,u_t,w_t)$ and a set of cost functions $c_t(x_t,u_t),~ t=0,\dots,T$, where the goal is to minimize the sum of costs over time, while both $f$ and $c_t$ are fully or partially unknown. 
Basically, MAB is a special type of the online optimal control problem in the sense that MAB is stateless and simply selects an action each time, while the online control problem has a countable or an uncountable number of states and selects a controller, acting as a function from states into actions, each time without knowing the cost functions.
Bandit algorithms can thus be leveraged for online control, wherein the average cost incurred with a controller can be interpreted as the bandit feedback of pulling the controller-arm (\cite{lin2023policy, li2023switching}).

% Optimal control problems can be categorized into different subfields based on the mechanism of disturbances and the extent of information available about dynamics and cost functions. 

In this paper, we address the online nonstochastic control problem where both a transition dynamic $f$ and cost functions $c_t$ can be \textit{unbounded}, \textit{nonlinear}, and \textit{adversarially} chosen. We only have knowledge about $x_t$ and the bandit feedback $c_t(x_t,u_t)$ at time $t$, with adversarial disturbances $w_t$ injected at each time step as in \citet{gradu2020nonstochastic} and \citet{cassel2020bandit}. 
We operate the system with a single trajectory where the system state cannot be reset. 
% This setting was previously adopted in the works by Gradu et al. \yrcite{gradu2020nonstochastic} and Cassel \& Koren \yrcite{cassel2020bandit}. This problem is called online nonstochastic control with bandit feedback.
To overcome the difficulties of an unknown nonlinear system, we are given a finite set of $N$ controllers in advance, where we are not aware of whether each controller can stabilize the system but we are allowed to alternate between these controllers within a single trajectory according to a specific logic. 
We refer to this problem as the \textit{online bandit nonlinear control} problem.
% but it is necessary to have at least one controller stabilizing the system. 
% This problem is called logic-based switching control \cite{morse1995control}.  

To deal with this online bandit nonlinear control, \citet{li2023switching} 
% named the problem \textit{online switching control} and
adopted their Exp3-ISS algorithm, which uses the well-known Exp3 algorithm (\cite{auer2002bandit}) with a mini-batch approach (\cite{arora2012bandit}), while successively removing destabilizing controllers when detected in terms of input-to-state stability (ISS). 
% Their Exp3-ISS algorithm turns out to achieve
% % both finite-gain stability and
% $\Tilde{O}(T^{2/3})+\exp(O(|\mathcal{U}|))$ regret guarantee.
% % , even with unbounded cost functions. 
In this paper, we aim to significantly relax the requirement on the controllers and yet guarantee asymptotic stability of the closed-loop system and sharpen the regret bound
% for the online switching control problem 
by designing our algorithm \textbf{DBAR} (\textbf{D}ynamic \textbf{B}atch length and \textbf{A}daptive learning \textbf{R}ate).

% table

\textbf{Motivation and contribution.} Our main contribution is to allow a broader class of controllers to qualify as a stabilizing controller within \textit{a priori} controller pool. For the motivation, consider a continuous-time gradient flow in the vector space:
\begin{align}
    \dot{x}(t) = -\nabla F(x(t)),
\end{align}
where $F:\mathbb{R}^n \rightarrow \mathbb{R}$ is a smooth function. A merely convex $F$ can be extremely flat around its minimum, leading to a slowly (asymptotically) converging trajectory unlike 
exponentially converging behavior achieved for strongly convex $F$ (\cite{kahlilnonlinear}). In fact, assuming that a minimizer $x^\ast$ of $F$ exists, the decay rate $F(x(t)) - F(x^\ast)$ is $O(1/(t \log^2 t))$ if $F$ is convex\footnote{Note that $O(1/(t \log^2 t))$ is integrable at infinity. In the context of controllers, we also handle the challenging case where $f(x_t, \pi(x_t), w_t)-\inf_{x\in\mathbb{R}^n}f(x, \pi(x), w_t)$ may not be integrable at infinity. This corresponds to a convex function without minimizers, such as a log-exp-type softmax loss function for classification.} (\cite{siegel2023flow}), and $O(e^{-t})$ if $F$ is strongly convex. 
In the machine learning literature, a loss function $l(g(x), y)$ of a gradient-based method is often given as a convex function in $g$ (\textit{e.g.}, mean-squared error or cross-entropy loss), but not necessarily strongly convex since $g$ is often over-parameterized and there could be a continuum of parameters corresponding to the value of $g$. Analogous to this concept, one can consider $F$ as $f(x_t, \pi(x_t), w_t)$, a dynamic governed by a given controller $\pi$ and its converging behavior as a (asymptotic or exponential) controller stability. Our work merely requires the existence of at least one \textit{asymptotically} stabilizing controller in the pool, which is far weaker than \textit{exponentially} stabilizing notions and represents a more realistic environment one may encounter.

The existing literature on online bandit control of linear dynamics with adversarial disturbance has intrinsically assumed the existence of strongly stable controllers, which are exponentially stabilizing controllers in our context, and achieves $\Tilde{O}(T^{2/3})$ regret under general convex cost functions (\cite{cassel2020bandit, chen2021blackbox, ghai2023online}). In this paper, we will achieve the same $\Tilde{O}(T^{2/3})$ regret bound even when none of the stabilizing controllers are exponentially stabilizing.

\textbf{Algorithm Design.} The idea of our algorithm is two-fold: 

\vspace{-0.5mm}
\hspace{2.7mm} 1. We adopt a dynamic batch length instead of a fixed length to certify the stability of the system without requiring exponentially stabilizing controllers and achieve both asymptotic and finite-gain stability. The batch length is scheduled to be non-decreasing and growing unboundedly over time. However, the strategy suffers from a resulting multiplicative exponential regret in return.

\vspace{-0.5mm}
\hspace{2.7mm} 2. To alleviate the multiplicative exponential regret without requiring the conservative notion of exponentially stabilizing controllers, we adopt a novel adaptive learning rate scheme that relies on the system state norm, instead of a fixed learning rate. While the conventional way to apply the Exp3 Algorithm is to use a non-increasing learning rate, we decrease the learning rate if the state is unstable and subsequently increase the learning rate if the state returns to a stable region. By implementing this approach, we can alleviate the multiplicative term $\Tilde{O}(T^{1/3})\cdot \exp(O(|\mathcal{U}|))$ created by using a dynamic batch length and attain a regret bound order $[\Tilde{O}(T^{2/3})+\Tilde{O}(T^{-1/3})\cdot\exp(O(|\mathcal{U}|))]\cdot(|\mathcal{U}|+1)^{\alpha}$, where $\alpha=1/3$ if $|\mathcal{U}|$ is known and $\alpha=1/2$ if $|\mathcal{U}|$ is unknown.

Table \ref{summaryresults} shows a summary of our results with related works. Appendix \ref{necdb} provides more details on the intermediate step "Dynamic Batching", which operates under asymptotically stabilizing controller assumptions, and on how we devised DBAR algorithm to avoid the multiplicative exponential term.

\begin{table}[t]
    \small
    \caption{Summary of required controllers and results: Polynomial factors on $N$ and $|\mathcal{U}|$ are hidden.}
    \label{summaryresults}
    \centering
    \begin{tabular}{|c|c|c|c|}
    \hline
       \multirow{2}{*}{Algorithm}  & Required & Closed-loop system & \multirow{2}{*}{Regret Bound}\\ &Controller& asymptotic stability&  \\ \hline
       \citet{chen2021blackbox} & Exponential & N/A & $\Tilde{O}(T^{2/3})+\exp(O(|\mathcal{U}|))$\\ \hline
      \citet{li2023switching} & Exponential & No & $\Tilde{O}(T^{2/3})+\exp(O(|\mathcal{U}|))$\\ \hline 
      Dynamic Batching & Asymptotic & Yes & $\Tilde{O}(T^{2/3})+o(T^{1/3})\cdot\exp(O(|\mathcal{U}|))$\\ \hline
      Algorithm \ref{Algorithm 1} (DBAR)& Asymptotic & Yes & $\Tilde{O}(T^{2/3})+\Tilde{O}(T^{-1/3})\cdot\exp(O(|\mathcal{U}|))$\\ \hline
    \end{tabular}
    % \vspace{-2mm}
    % \vspace{-4mm}
\end{table}

\textbf{Related works. } \textit{Optimal control} problems have been widely leveraged in a variety of fields with the influential dynamic programming approach (\cite{bellman1957}). Recent successes of reinforcement learning (RL) in safety-critical systems, such as aircraft (\cite{razzaghi2022aircraft}), robotics (\cite{ibarz2021robot}), and autonomous driving (\cite{kiran2021autonomous}), are also deeply rooted in optimal control methods (\cite{bertsekas2019reinforcement}).  
% \textit{Logic-based switching control} 
% % can be performed in a non-parametric approach which does not execute the system identification 
% (\cite{morse1995control, narendra2003multi})
% % It could also be used for estimation-based adaptive control under significant parameter uncertainty \cite{hespanha1999logic, hespanha2003logic, kuipers2010multi, baldi2012multi}. 
% % Even though these approaches require little assumption on the transition dynamics, logic-based switching 
% has been demonstrated to gain stability of optimal control problems in various contexts, including 
% % finite-time control (\cite{fu2017finite}), 
% feedforward systems (\cite{yu2020logic}), multi-agent systems (\cite{lv2022power}), and fuzzy control (\cite{balta2023interval}). 
The common idea to gain system stability of optimal control problems is to falsify the detected destabilizing controller, meaning that one can completely remove those controllers failing to satisfy certain stability criteria from the controller pool (\cite{baldi2010false,  battistelli2010false, battistelli2014false, battistelli2018false, stefanovic2011safe,li2023switching}). 
% Other favorable properties such as reference tracking \cite{shahab2022adaptive} and robustness \cite{meira2021syn} are also attained.
% This approach can be also interpreted as event-triggered switching. (\cite{Boisseau2017event, xing2017event})

\textit{Online nonstochastic control} considers a dynamical system with adversarial disturbances, which is more challenging than having statistical noise. Early papers assumed full access to cost functions, enabling us to leverage optimal policy structure with cost function gradients (\cite{agarwal2019adver, foster2020log, hazan2020nonstochastic, hazan2022intro}). 
% However, cost functions are fully available for gradients once new states are observed. hazan's paper: not available before new state is observed
Later, studies were generalized to address the problem without cost gradients information (\cite{gradu2020nonstochastic, cassel2020bandit, ghai2023online, sun2023optimal}); instead, they estimated the cost gradients, using the history of scalar cost (bandit feedback) along the trajectory. 
However, the above research restricts the system to linear transition dynamics. Instead, our work considers the candidate controller pool to handle unknown nonlinear systems.

\textit{Multi-armed bandits} with adversarial disturbances were first addressed in the pioneering work by \citet{auer2002bandit} under bounded costs in their notable Exp3 algorithm. \citet{arora2012bandit} later improved the algorithm using the same controller within a mini-batch, attaining a regret bound equivalent to the lower bound presented in \citet{dekel2014bandits}. As we have access to the candidate controller pool in our problem setting, we adopt a bandit-related approach. 
% Meanwhile, as the cost deeply relies on the system state as a context, adopting contextual bandit (\cite{Luo2018efficient, Chen2019efficient, gur2022context}) seems appealing. However, this approach requires a polynomial number of interactions with an oracle that outputs the best policy. Thus, in our scenario of observing a single trajectory, we are not allowed to leverage a classic concept of contextual bandit. Instead, we only use the state norm as a context. Since this information is readily available to the agent, we provide a foundation for developing explainable and reliable online control / RL algorithms.

% 
%Meta optimization

\textit{Dynamic batching} gained considerable attention for training deep neural networks by increasing the batch size over time and adaptively increasing the learning rate to maintain the ratio between the two (\cite{Devarakonda2017batch, bolla2018prog, shallue2019measure, ma2023batch}). Although this has been widely used in the machine learning literature, we adopt this idea to our online control problem, progressively increasing the batch length within a single trajectory to achieve asymptotic stability.

\textit{Adaptive learning rate} in machine learning is generally determined by a set of gradients observed so far (\cite{ruder2016overview}). As we do not have access to the gradients in our problem, we focus on the learning rate for bandit algorithms. 
Recently, it was shown in \citet{aubert2023mle} that two different constant learning rates for bandits cannot be distinguished in the learning process, thus they emphasized the necessity of using a polynomially decreasing learning rate. 
Several works (\cite{erven2011adahedge, rooij2014follow}) also suggested using decreasing learning rate as the batch length increases. 
% Accordingly, it seems beneficial to use the decreasing learning rate also for our online switching control problem. 
Building on this idea, \citet{li2023switching} proposed to use a non-increasing learning rate over time, while no theoretical guarantee was presented. To the best of our knowledge, this paper is the first work to provide theoretical guarantees for the adaptive learning rate scheme based on the stability of state norm, where the rate is not necessarily non-increasing. 
% We later resolve the contradiction of our algorithm and the result of \cite{aubert2023mle}.

\textbf{Outline.} The paper is organized as follows. In Section \ref{probform}, we formulate the problem and provide necessary definitions and assumptions. In Section \ref{algodescription}, we propose our DBAR algorithm. In Section \ref{result}, we study the stability of the algorithm, the regret bound, and its applications in switched systems. In Section \ref{numerical}, we present numerical experiments on the DBAR algorithm with an ablation study on batch length and learning rate. Finally, concluding remarks are provided in Section \ref{conclusion}.

\textbf{Notation.} For a vector $z$, $\| z \|$ denotes the Euclidean norm of the vector. We use $O(\cdot)$ for the big-O notation, $o(\cdot)$ for the small-o notation, and $\Tilde{O}(\cdot)$ for the big-O notation hiding logarithmic factors.  Let $\mathbb{E}$ denote the expectation operator. For a set $Z$, we use $|Z|$ for the cardinality and $Z^c$ for the complement of the set $Z$. For a real number $e$, we use $\lfloor e \rfloor$ for the floor and $\lceil e \rceil$ for the ceiling of $e$. Let $\mathbb{R}$ denote the set of real numbers and $\mathbb{Z}_+$ denote the set of nonnegative integers. For $e_1, e_2 \in \mathbb{Z}_+$ where $e_2\leq e_1$, let $i_{e_1:e_2}$ denote the set $\{ i_e : e_2 \leq e \leq e_1, e\in\mathbb{Z}_+\}$.

\section{Problem Formulation}\label{probform}

Consider a general discrete-time dynamical system $x_{t+1}=f(x_t,u_t,w_t), ~ t=0,\dots,T-1$, where $x_t\in\mathbb{R}^n$ is the system state at time $t$, $u_t\in\mathbb{R}^m$ is the control input at time $t$ to be designed via an algorithm. $u_t$ is determined by selecting a controller from \textit{a priori} finite number of controller pool consisting of $\pi_i:\mathbb{R}^n\to\mathbb{R}^m, ~i=1,\dots,N$. $w_t\in \mathcal{W}\subset\mathbb{R}^g$ is the adversarial noise at time $t$, where 
$\mathcal{W}=\{w\in\mathbb{R}^g: \|w\| \leq w_\text{max}\}$ and the bounding constant $w_\text{max}>0$ is assumed to be known.
Each time instance $t$ is associated with a cost function $c_t: \mathbb{R}^n \times \mathbb{R}^m \to \mathbb{R}$. The state transition is governed by the dynamic $f:\mathbb{R}^n\times \mathbb{R}^m\times \mathbb{R}^g \to\mathbb{R}$. We have the following assumptions on the dynamic $f$.

\begin{assumption}[Dynamic]\label{transition dynamics}
    The transition dynamic $f$ is $L_f$-Lipschitz continuous with $L_f\geq 1$; \textit{i.e.}, $|f(x,u,w)-f(\Tilde{x}, \Tilde{u}, \Tilde{w})|\leq L_f (\|x-\Tilde{x}\|+\|u-\Tilde{u}\|+\|w-\Tilde{w}\|)$ for all $x,\Tilde{x}\in\mathbb{R}^n$, $u, \Tilde{u}\in\mathbb{R}^m$, $w,\Tilde{w}\in\mathcal{W}$.
    We let $f(0,0,0)=f_0$.
\end{assumption}

We adopt the notion of locally Lipschitz continuous cost functions $c_t$ given in \citet{li2023switching}, which contains quadratic tracking costs along an arbitrary bounded state trajectory and action sequence.
\begin{assumption}[Cost functions]\label{cost functions}
    There exist $L_{c1}, L_{c2}>0$ such that $|c_t(x,u)-c_t(\Tilde{x}, \Tilde{u})|\leq  (L_{c1}(\max\{\|x\|, \|\Tilde{x}\|\}+\max\{\|u\|, \|\Tilde{u}\|\})+L_{c2})(\|x-\Tilde{x}\|+\|u-\Tilde{u}\|)$ for all $x,\Tilde{x}\in\mathbb{R}^n, u,\Tilde{u}\in\mathbb{R}^m, t\in\mathbb{Z}_+$. There exists $c_{0,\text{max}}\geq0$ such that $|c_t(0,0)|\leq c_{0,\text{max}}$ for all $t\in\mathbb{Z}_+$. 
\end{assumption}

Input-to-state (asymptotic) stability (ISS) is a classic notion of stability implying that the controller successfully stabilizes the system under any bounded noises (\cite{sontagstability, kahlilnonlinear}). Incremental (asymptotic) stability extends the input-to-state stability to describe the asymptotic behavior of some trajectory towards a different trajectory (\cite{tran2016incremental}). It is worth noting that \citet{li2023switching} also adopted these concepts under an exponential stability assumption; \textit{i.e.}, they require some controllers to satisfy exponential ISS and exponential incremental stability. However, in practice, general asymptotic concepts need to be considered for stabilizing controllers. We will address this \textit{controller stability} issue below.

% assume that there exist some stabilizing controllers satisfying Definition \ref{iss} and \ref{is} in \textit{a priori} candidate controller pool. 

\begin{definition}[Input-to-state stable controller]\label{iss}
    A controller $\pi$ is (asymptotically) input-to-state stable (ISS) if there exists a non-increasing function $\beta(\cdot):\mathbb{Z}_{+}\to \mathbb{R}$ that satisfies $\beta(0)=1$\footnote{This assumption in Definitions \ref{iss} and \ref{is} 
    is to guarantee $\beta(t)^2\leq \beta(t)$ for all $t$, which 
    can be overcome by a large $\gamma$. If we relax Assumption \ref{cost functions} on $c_t$ to be Lipschitz continuous, we can remove the assumption $\beta(0)=1$.} with $\lim_{t\to \infty}\beta(t)= 0$ and $\gamma >0$ such that for any $x_0\in\mathbb{R}^n$ and $\|w_t\| \leq w_\text{max}$ for all $t\geq 0$, the sequence $\{x_t\}_{t\geq0}$ determined by $x_{t+1}=f(x_t,\pi(x_t), w_t)$ satisfies $\|x_t\|\leq \beta(t) \|x_0\| + \gamma w_\text{max}$.
\end{definition}

\begin{definition}[Incrementally stable controller]\label{incrementally stable}\label{is}
    A controller $\pi$ is (asymptotically) incrementally stable if there exists a non-increasing function $\beta(\cdot):\mathbb{Z}_{+}\to \mathbb{R}$ that satisfies $\beta(0)=1$ with $\lim_{t\to \infty}\beta(t)= 0$ such that for any $x_0,\Tilde{x}_0\in\mathbb{R}^n$ and $\|w_t\| \leq w_\text{max}$ for all $t\geq 0$, it holds that $\|x_t-\Tilde{x}_t\| \leq \beta(t)\|x_0-\Tilde{x}_0\|$ for any two sequences determined by $x_{t+1}=f(x_t,\pi(x_t), w_t)$ and $\Tilde{x}_{t+1}=f(\Tilde{x}_t,\pi(\Tilde{x}_t), w_t)$.
\end{definition}

% Assume $\beta(0)=1$? This is admissible if we set $\gamma$ sufficiently large.  
% $\beta$ can be different from $\beta$ from Definition \ref{iss}

\begin{assumption}[Controller pool]\label{controller set}
    Consider the candidate controller index set $\mathcal{P}_0=\{1,\dots,N\}$, in which there exists a controller satisfying Definitions \ref{iss} and \ref{is}. There exists $\pi_{0,\text{max}}\geq0$ such that $\|\pi_i (0)\|\leq \pi_{0,\text{max}}$ for all $i\in\mathcal{P}_0$. All candidate controllers are $L_\pi$-Lipschitz continuous; \textit{i.e.}, $\|\pi_i(x)-\pi_i(\Tilde{x})\| \leq L_\pi \|x-\Tilde{x}\|$ for all $x,\Tilde{x}\in\mathbb{R}^n$ and $i\in\mathcal{P}_0$.
\end{assumption}

\begin{definition}[Stabilizing and destabilizing controller]
    Let $\mathcal{S}$ denote an index set of stabilizing controllers that satisfy both of Definitions \ref{iss} and \ref{is}. We also let $\mathcal{U}$ denote an index set of destabilizing controllers that do not satisfy Definition \ref{iss}. Thus, we have $|\mathcal{S}|\geq1$ and $\mathcal{S} \subseteq \mathcal{U}^c$.
\end{definition}

\begin{remark}\label{linearassump}
    Definition \ref{is} is a stronger notion than Definition \ref{iss} due to the triangle inequality. However, for a special case of linear systems with additive noise; \textit{i.e.}, $f(x_t,\pi(x_t),w_t)=A x_t+h(w_t)$, where $A\in\mathbb{R}^{n\times n}$ and $h:\mathbb{R}^g \to \mathbb{R}^n$, a controller $\pi$ satisfying Definition \ref{iss} also satisfies Definition \ref{is}. In such a case, Assumption \ref{controller set} boils down to requiring at least one ISS controller in the pool.
\end{remark}

%example of the stabilizing controllers inside $\mathcal{S}$.

Now, we define different notions of closed-loop \textit{system stability} with bounded adversarial disturbances $w_t$, where $\|w_t\|\leq w_\text{max}$ holds.
% when the controllers are selected according to some algorithm. 
Asymptotic stability and finite-gain stability both shed light on the connection between the disturbance input and the state output, where none of them implies the other (\cite{hill1980connections}). Hence, it is desirable to achieve both system stability notions.
\begin{definition}[Asymptotic stability]\label{asymtoticstability}
   A system is asymptotically stable if the sum of state norms satisfies $\lim_{T\to\infty}\frac{1}{T}\sum_{t=0}^T \|x_t\| \leq \gamma w_{\text{max}}$.
\end{definition}

\begin{definition}[Finite-gain stability]\label{finite-gainstability}
    A system is finite-gain $\mathcal{L}_1$ stable if there exist constants $A_1, A_2>0$ such that for all $T\in\mathbb{Z}_+$, it holds that $\sum_{t=0}^T \|x_t\| \leq A_1 \cdot w_\text{max}T + A_2$.
\end{definition}

Recall that $x_t$ and $u_t$ denote the state and action sequence for the system according to the algorithm. We also let $x_t^*$ and $u_t^*$ denote the optimal state and action sequence generated by the best stabilizing controller $i^*$ that satisfies both of Definitions \ref{iss} and \ref{is}; \textit{i.e.}, $i^* = \arg\min_{i\in \mathcal{S}}\mathbb{E} [\sum_{t=0}^T c_t(x_t, \pi_{i}(x_t))]$ subject to the dynamic $f$. Then, the regret of the algorithm is defined as follows.
\begin{definition}[Regret]\label{defregret}
    The regret of the algorithm implementing the policy $\pi_{i_t}$ at time $t=0,\dots, T-1$ is defined as $\textit{Regret}_T = \mathbb{E}_{i_{T-1:0}}\sum_{t=0}^{T} [c_t(x_t, u_t) - c_t(x_t^*, u_t^*)]$.  
\end{definition}

% \vspace{-2mm}
\section{Algorithm Description}\label{algodescription}

Denote the number of batches in the algorithm by $B$. Denote by $t_b$ the start time for each batch $b=0,1,\dots,B-1$. We implement the same policy within the mini-batch. 

\begin{assumption}[Dynamic batch length]\label{batch length}
     We design our batch length $(\tau_b)_{b\geq 0}$ as follows:
     
    \vspace{-0.5mm}
    \hspace{2.7mm} 1. $\tau_b$ is non-decreasing in $b$ and $\lim_{b\to\infty} \tau_b=+\infty$.
    
    \vspace{-0.5mm}
    \hspace{2.7mm} 2. $\max_{b\geq0}\frac{\tau_{b+1}}{\tau_b}=\frac{\tau_1}{\tau_0}$ and $\lim_{b\to\infty} \frac{\tau_{b+1}}{\tau_b}=1$.

     For example, $\tau_0=\lfloor z_1 (z_2)^{z_3} \rfloor > 0$ and $\tau_b = \lceil z_1 (\nu b+z_2)^{z_3} \rceil$ for every $b\geq 1$ with the constants $z_1,z_2,z_3,\nu>0$ satisfy Assumption \ref{batch length}. For future use, we refer to this type of formulation as polynomial batches with $(z_1,z_2,z_3,\nu)$.
\end{assumption}

\begin{remark}
    As our dynamic batch length eventually grows unboundedly over time, excessively strict controller stability criteria may result in most of the candidate controllers violating these criteria. Thus, it is crucial to adopt (asymptotic) ISS and incremental stability as our criteria, instead of exponential notions in \citet{li2023switching} and the literature on linear dynamics (\cite{cassel2020bandit, chen2021blackbox, ghai2023online}). Moreover, our design only requires $\max_{b\geq0}\frac{\tau_{b+1}}{\tau_b} = \frac{\tau_1}{\tau_0}$, and thus one can adjust $\frac{\tau_{b+1}}{\tau_b}$ in a more flexible manner after the first two batches. Figure \ref{fig:fixedvsdyn} in Appendix \ref{necdb} strongly supports the necessity of a dynamic batch length regardless of the noise assumption.
\end{remark}

% Figure \ref{fig:fixedvsdyn} demonstrates the necessity of a dynamic batch length regardless of the noise assumption. The blue and orange lines represent the state norms generated by a fixed batch length and a dynamic batch length, respectively. With both relatively easier statistical noise and more challenging adversarial noise, the blue line shows a larger state norm than the orange line. Moreover, the blue line occasionally has higher values than the red line, which is our asymptotic stability bound $\gamma w_\text{max}=1.5$, while the orange line remains below the red line after a certain time.

% \begin{figure}[t]
%     \centering
%     % \footnotesize
%      \subfigure[Statistical Noise]{\label{fig:fixedvsdyn1}\includegraphics[height=100.5pt, width=116pt]{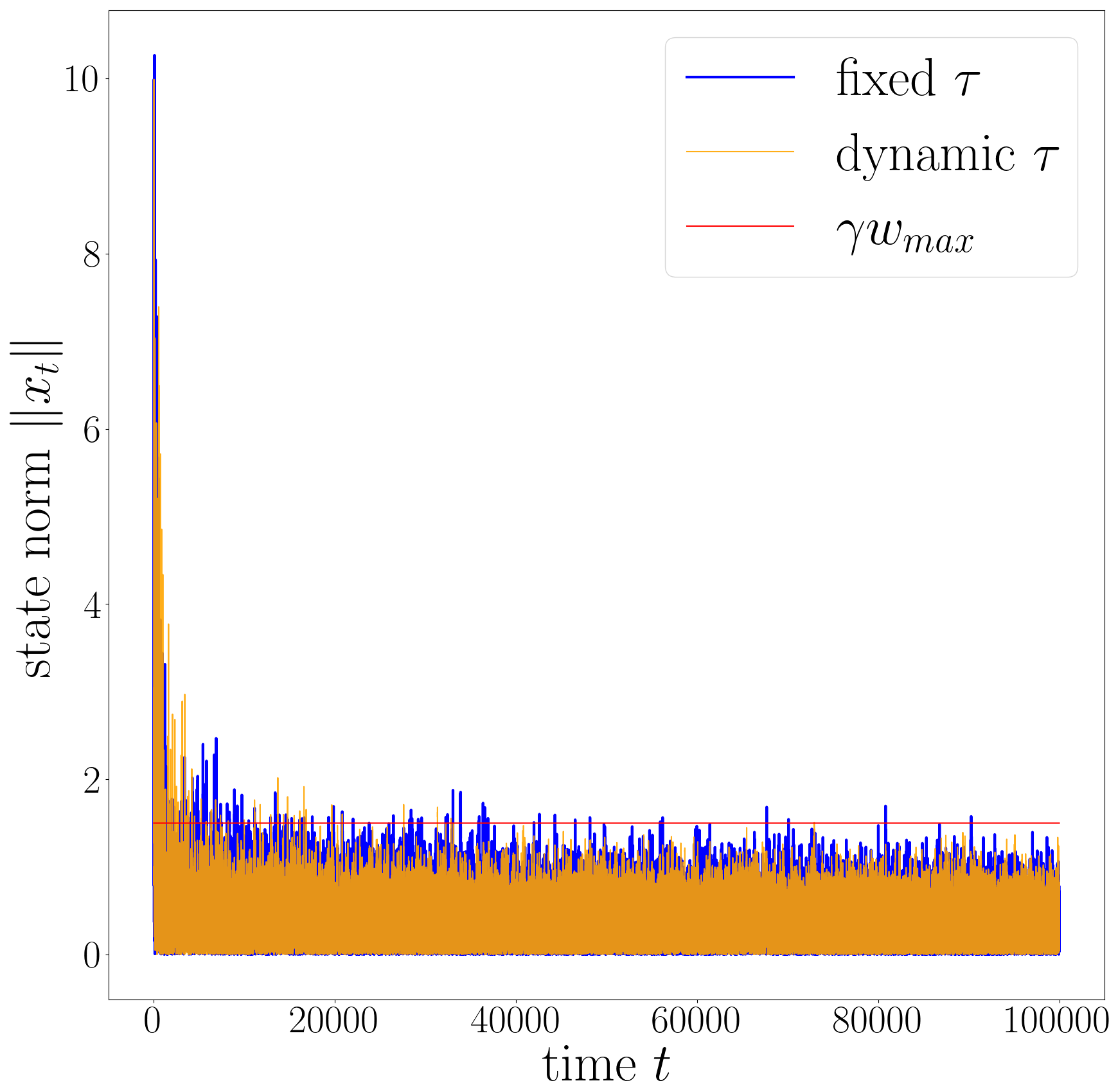}} 
%     \subfigure[Adversarial Noise]{\label{fig:fixedvsdyn2}\includegraphics[height=101pt, width=116pt]{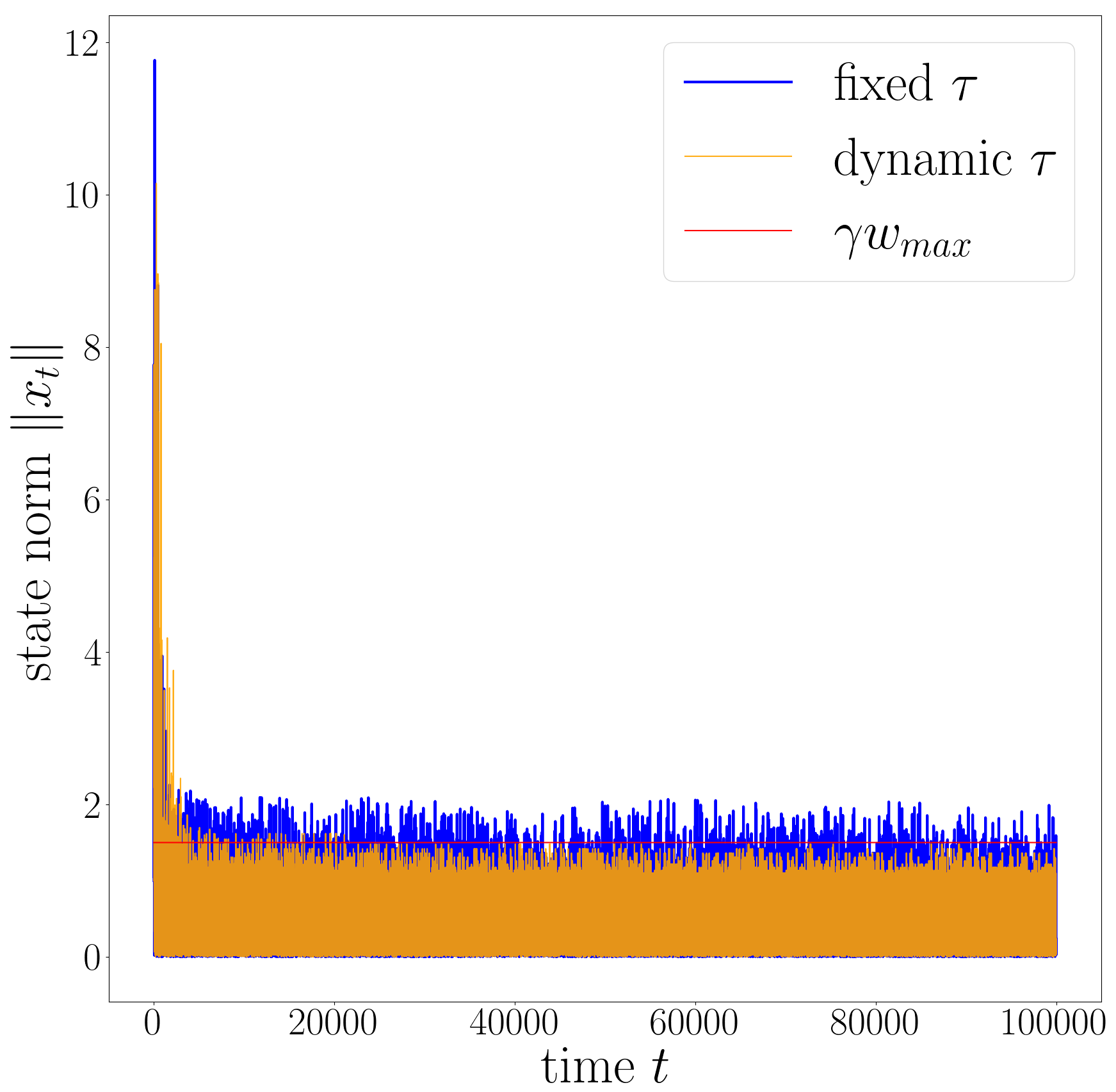}}
%     \vspace{-1mm}
%     \caption{The state norm with a fixed batch length compared to that with a dynamic batch length. $x_{t+1}=x_t+0.15u_t+w_t$ with $u_t=Kx_t$ where $K\in [-3.0,-2.9,-2.8, \dots, 4.9, 5.0]$. We use $\tau_0=10, \gamma=3$, and set $w_\text{max}=0.5$. The noise $w_t$ is (a) i.i.d. sampled from Uniform$[-0.2, 0.5]$, and (b) $0.15+0.35\sin(\frac{t}{3\pi})$.}
%     \label{fig:fixedvsdyn}
% \end{figure}

We propose our DBAR algorithm in Algorithm \ref{Algorithm 1} (see Appendix \ref{glossarychapter} for the notations). Lines \ref{line3}-\ref{line9} generate the state trajectory based on the selected controller $\pi_{K_b}$ for the current batch $b$, and falsify the controller if it is found to violate Definition \ref{iss}; \textit{i.e.}, $K_b\in \mathcal{U}$. 
Here, let $U$ denote the number of times that the Break statement in Line \ref{Break} is activated.
In the rest of the paper, when we say the Break statement is activated, it means that Line \ref{Break} of Algorithm \ref{Algorithm 1} has been activated. As the controllers in $\mathcal{U}^c$ do not suffer from the Break statement, they always remain in the controller pool. 
Accordingly, we have $U\leq |\mathcal{U}|$.

Lines \ref{line11}-\ref{line20} keep track of the state norm of $x_{b+1}$ by determining $\alpha_{b+1}$ and $s_{b+1}$ that indicates the magnitude of the next batch's initial state norm compared to $\|x_0\|$.
% \textit{i.e.}, we keep track of the state norm of $x_{b+1}$ by $(\alpha_{b+1})^{s_{b+1}}\|x_0\|\leq \|x_{b+1}\|-\delta \leq (\alpha_{b+1})^{s_{b+1}+1}\|x_0\|$.
Note that we keep adjusting the value of $\alpha_{b+1}$ to avoid $s_{b+1}> s_b + 1$ (Line \ref{line14}), and the adjusted $\alpha_{b+1}$ is guaranteed to be bounded by some constant
% by Lipschitz constants in Assumptions \ref{transition dynamics} and \ref{controller set}} 
(see Lemma \ref{next Break} in the Appendix). It is later discussed formally in Lemma \ref{orderM2} that these observations cause $s_b \neq 0$ to occur at most $O(U)$ times throughout the algorithm.

\begin{algorithm}[!t]
  \caption{DBAR}
  \notsotiny
  \begin{flushleft}
        \textbf{Input:} $T. $ $\eta_0>0. $  $(\tau_b)_{b\geq 0}$. $\beta(\cdot)$. $\gamma$.     $W_{0}(k)=0$ for all $k\in\mathcal{P}_0$. $t_0=0, s_0=0$. \\ A uniform distribution $p_0$; \textit{i.e.}, $p_0(k)=\frac{1}{N}$ for all $k\in\mathcal{P}_0.~$ $x_0 \neq 0$. $~\alpha_0>\beta(0)=1. ~\delta \geq \frac{\gamma w_\text{max}}{1-\beta(\tau_0)}.$ 
  \end{flushleft}
  \begin{multicols}{2}
  \begin{algorithmic}[1]
    \FOR{Batch $b=0,1,2,\dots,$}
        \STATE Sample $K_b$ from a distribution $p_b$. Terminate the algorithm if $\mathcal{P}_b$ is empty.\alglinelabel{line2}
        
        \textcolor{blue}{// Phase 1: Falsify a detected destabilizing controller}
        \FOR{$t=t_b, \dots, \min(t_b+\tau_b-1,T)$}\alglinelabel{line3}
            \STATE Implement $\pi_{K_b}$, observe $x_{t+1}$. 
            \IF{$\|x_{t+1}\| > \beta(t+1-t_b)\|x_{t_b}\|+\gamma w_{\text{max}}$}\alglinelabel{line5}
                \STATE Set $\mathcal{P}_{b+1}=\mathcal{P}_b-\{K_b\}$. \\ 
                \STATE \textbf{Break} \alglinelabel{Break}
            \ENDIF \alglinelabel{line8}
        \ENDFOR \alglinelabel{line9}
        \STATE Let $t_{b+1}=t+1.$

        \textcolor{blue}{// Record the magnitude of the state norm for Phase 2}
        \IF{$\|x_{t_{b+1}}\|\geq \alpha_{b}\|x_0\|+\delta$}\alglinelabel{line11}
            \STATE Pick $s\geq1$ that satisfies \\
            \hspace{1mm} $(\alpha_{b})^s \|x_0\|\leq \|x_{t_{b+1}}\|-\delta < (\alpha_{b})^{s+1} \|x_0\|.$
            \IF{$s-s_{b}>1$}
                \STATE Let $\alpha_{b+1}$ be any $\alpha>\alpha_{b}$ such that \\
                \hspace{0.5mm}$\alpha^{s_{b}+1} \|x_0\|\leq \|x_{t_{b+1}}\|-\delta < \alpha^{s_{b}+2} \|x_0\|$
                \\
                \hspace{0.5mm}and let $s_{b+1}=s_{b}+1$. \alglinelabel{line14}
            \ELSE
                \STATE Let $s_{b+1}=s$ and let $\alpha_{b+1}=\alpha_{b}.$
            \ENDIF
        \ELSE
            \STATE Let $s_{b+1}=0$ and let $\alpha_{b+1}=\alpha_{b}.$
        \ENDIF\alglinelabel{line20}

        \textcolor{blue}{// Phase 2: Set or reset weight for each controller}
        \STATE Let $w_b(K_b)=\sum_{t=t_b}^{t_{b+1}-1} c_t(x_t, u_t)$ \\ \hspace{2mm} and $w_b'(k)=\frac{w_b(K_b)}{p_b(k)} \mathcal{I}_{(K_b=k)}$ for $k\in\mathcal{P}_{b}$. \alglinelabel{line21}
        \IF{$s_{b+1}\neq s_{b}$} \alglinelabel{line22}
            \STATE Let $W_{b+1}(k) = 0$ for all $k\in\mathcal{P}_{b}$. 
        \ELSE
            \STATE Let $W_{b+1}(k) = W_{b}(k)+w_b'(k)$ for $k\in\mathcal{P}_{b}$. \alglinelabel{line25}
        \ENDIF\alglinelabel{line26}
        \STATE Let $\eta_{b+1}=\eta_0/(\alpha_{b+1})^{2s_{b+1}}.$\alglinelabel{line27}
        \STATE For all $k \in \mathcal{P}_{b+1},$ let 
        \\
        % \vspace{1mm}
        \centerline{$
        p_{b+1}(k) = \frac{\exp (-\eta_{b+1} W_{b+1}(k))}{\sum_{i\in\mathcal{P}_{b+1}}\exp (-\eta_{b+1} W_{b+1}(i))}
        $} 
        \alglinelabel{line28}
    \ENDFOR \alglinelabel{line29}
  \end{algorithmic}
  \end{multicols}
  \label{Algorithm 1}
  \vspace{-0.5mm}
\end{algorithm}

Lines \ref{line21}-\ref{line26} determine the weight $W_{b+1}(k)$ for each controller $k$. In Line \ref{line21}, 
% since we are using dynamic batch length, 
we use the sum of costs at the current batch $b$ 
% instead of the average cost 
to add up to the weight in Line \ref{line25}. In Lines \ref{line22}-\ref{line26}, we reset the weight if $s_{b+1}\neq s_b$. This resetting weight idea to forget the costs in the past is also proposed in \citet{erven2011adahedge}. In the scenario that the Lipschitz constant $L_f$ is very large, it may help to forget the time-varying costs $c_0,\dots,c_{t-1}$ and restart gathering the information from the outset. Line \ref{line22} reflects this case where the next batch's state norm significantly deviates from the current state norm.

Lines \ref{line27}-\ref{line29} calculate the adaptive learning rate $\eta_{b+1}$ for the next batch $b+1$ used to apply the Exp3 algorithm to our problem. It indicates that the learning rate decreases in unstable states and increases back to the initial value when the state norm returns to a stable region. Thus, the learning rate fluctuates depending on the state norm. However, it is essential to note that the \textit{effective} learning rate, determined by the ratio $\frac{\eta_b}{\tau_b}$, indeed decreases as the batch length increases even if $s_{b+1}=s_b$. The only plausible situation in which the effective rate may increase is $s_{b+1}<s_b$ with $(\alpha_{b+1})^2 > \frac{\tau_{b+1}}{\tau_b}$. Apart from this scenario, the effective learning rate experiences a polynomial decay with polynomial batches defined in Assumption \ref{batch length}, which does not cause any contradiction with the polynomially decreasing learning rate concept proposed in \citet{aubert2023mle}.

Our adaptive learning rate stabilizes the cost of current batch, which can be unbounded with a dynamic batch length. This alleviates the multiplicative exponential term 
% $\Tilde{O}(T^{1/3})\cdot\exp(O(|\mathcal{U}|))$ 
in the regret bound (see Table \ref{summaryresults}). Moreover, since we run the algorithm along a single trajectory with the selection of the policy only relying on the state norm as a context, we obtain a linear-time algorithm by harnessing a form of contextual bandit without requiring strict assumptions.

\vspace{-1.1mm}
\section{Main Results}\label{result}

\subsection{Stability}\label{stability}

In Algorithm \ref{Algorithm 1}, we define $H(t):= \sum_{i=0}^{t-1} \beta(i)$, which determines the scope of stabilizing controllers throughout the entire horizon. In this section, we will present the stability results of Algorithm \ref{Algorithm 1}, which deeply hinge on Lemma \ref{asymptotic2}.

\begin{theorem}[Asymptotic stability]\label{asymptotic stability1}
    In Algorithm \ref{Algorithm 1}, suppose that $\frac{\tau_1}{\tau_0} \beta(\tau_0)<1$. Then, it holds that 
    \\

% \vspace{-0.4mm}
    \centerline{$\lim_{T\to\infty}\frac{1}{T}\sum_{t=0}^T \|x_t\| \leq \gamma w_\text{max}.$}
\end{theorem}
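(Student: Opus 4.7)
The plan is to decompose $\sum_{t=0}^T \|x_t\|$ batch by batch, exploit the ISS bound enforced by the Break check in Line \ref{line5} to control the states inside each batch, and show that the per-batch initial norm $\|x_{t_b}\|$ stays uniformly bounded in $b$. After Cesaro-averaging across batches, the $\beta(\cdot)$ contribution vanishes and only the $\gamma w_\text{max}$ term survives. Concretely, because the Break is evaluated after every transition, any state that would violate $\|x_{t_b+k}\|\leq \beta(k)\|x_{t_b}\|+\gamma w_\text{max}$ is relabeled as $x_{t_{b+1}}$; hence for every $b$ and every $k \in \{0,\dots,t_{b+1}-t_b-1\}$ this inequality holds, and summing yields $\sum_{t=t_b}^{t_{b+1}-1}\|x_t\| \leq H(\tau_b)\|x_{t_b}\| + \tau_b \gamma w_\text{max}$.

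Next I would establish a uniform bound $\|x_{t_b}\|\leq M$ in $b$. For a batch that completes without Break, $\|x_{t_{b+1}}\|\leq \beta(\tau_b)\|x_{t_b}\|+\gamma w_\text{max}\leq \beta(\tau_0)\|x_{t_b}\|+\gamma w_\text{max}$, using $\tau_b\geq\tau_0$ and monotonicity of $\beta$; the hypothesis $\frac{\tau_1}{\tau_0}\beta(\tau_0)<1$ together with $\tau_1/\tau_0\geq 1$ forces $\beta(\tau_0)<1$, so this recursion is a strict contraction toward $\gamma w_\text{max}/(1-\beta(\tau_0))$. For a batch in which Break fires, one extra transition combined with Assumptions \ref{transition dynamics} and \ref{controller set} gives $\|x_{t_{b+1}}\| \leq L_f(1+L_\pi)(\|x_{t_b}\|+\gamma w_\text{max}) + L_f(\pi_{0,\text{max}}+w_\text{max})+f_0$. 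Since any controller in $\mathcal{S}$ satisfies Definition \ref{iss} and therefore cannot trigger Break, and each Break permanently removes one destabilizing controller from the pool, Break fires at most $|\mathcal{U}|$ times. Composing at most $|\mathcal{U}|$ such multiplicative inflations with arbitrarily many contractive batches yields a finite uniform bound $M$.

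Let $B(T)$ denote the number of batches started in $[0,T]$. Combining the two steps gives
\begin{equation*}
\frac{1}{T}\sum_{t=0}^T \|x_t\| \leq \frac{M}{T}\sum_{b=0}^{B(T)-1} H(\tau_b) \;+\; \gamma w_\text{max}\cdot\frac{\sum_{b=0}^{B(T)-1}(t_{b+1}-t_b)}{T} \;+\; o(1),
\end{equation*}
where $o(1)$ absorbs the (at most one) incomplete current batch. The second fraction tends to $1$ as $T\to\infty$. For the first, $\beta(i)\to 0$ and the Cesaro theorem imply $H(\tau_b)/\tau_b \to 0$; the elementary weighted-average fact that $h_b\to 0$, $a_b\geq 0$, $\sum_b a_b=\infty$ imply $\sum_{b\leq B}h_b a_b/\sum_{b\leq B} a_b\to 0$, applied with $h_b=H(\tau_b)/\tau_b$ and $a_b=\tau_b$, makes the first fraction vanish (noting $\sum_b \tau_b = \infty$ because $\tau_b$ is non-decreasing and positive). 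Hence $\limsup_T\frac{1}{T}\sum_{t=0}^T\|x_t\|\leq \gamma w_\text{max}$, which is the asserted bound.

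The main obstacle is the uniform bound in the second step: each Break can inflate $\|x_{t_b}\|$ by a factor of $L_f(1+L_\pi)$, so $M<\infty$ is only available because $|\mathcal{U}|$ is finite, producing a constant $M$ that is exponential in $|\mathcal{U}|$. This exponential factor is asymptotically benign for the present theorem but is precisely what the adaptive learning rate in Lines \ref{line11}--\ref{line28} is designed to absorb in the regret analysis (cf. Table \ref{summaryresults}).
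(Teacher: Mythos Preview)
Your proof is correct and covers the same ground as the paper, but the route is genuinely different. Both arguments start from the per-batch bound $\sum_{t=t_b}^{t_{b+1}-1}\|x_t\|\le H(\tau_b)\|x_{t_b}\|+(t_{b+1}-t_b)\gamma w_\text{max}$ (Lemma~\ref{single batch} in the paper) and both ultimately need $\frac{1}{T}\sum_b H(\tau_b)\to 0$. The divergence is in how $\sum_b H(\tau_b)\|x_{t_b}\|$ is controlled. You establish a \emph{uniform} bound $\|x_{t_b}\|\le M$ directly: non-Break batches contract $\|x_{t_b}\|$ with ratio $\beta(\tau_0)<1$, each of the at most $|\mathcal{U}|$ Break batches inflates it by a bounded affine map, and the composition is finite. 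Then you factor out $M$ and apply a weighted Ces\`aro argument with $h_b=H(\tau_b)/\tau_b\to 0$ and weights $a_b=\tau_b$. The paper instead tracks the \emph{weighted} quantity $H(\tau_b)\|x_{t_b}\|$ as a unit (Lemmas~\ref{weighted two Break}--\ref{along Break}), showing it contracts between Breaks with ratio $\frac{H(\tau_{b+1})}{H(\tau_b)}\beta(\tau_b)\le \frac{\tau_{b+1}}{\tau_b}\beta(\tau_b)$; this is precisely where the full hypothesis $\frac{\tau_1}{\tau_0}\beta(\tau_0)<1$ is consumed, whereas your contraction only needs the weaker consequence $\beta(\tau_0)<1$.

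What each buys: your argument is more elementary and, strictly speaking, proves the theorem under the weaker hypothesis $\beta(\tau_0)<1$. The paper's weighted approach yields the finer decomposition of Lemma~\ref{sum of state norms}, in which the exponential-in-$U$ factor multiplies only the single term $H(\tau_{b_U})$ rather than the entire sum $\sum_b H(\tau_b)$. That separation is immaterial for Theorem~\ref{asymptotic stability1} itself, but the same template is reused (with $\tau_bH(\tau_b)$ weights) in the regret proof (Lemma~\ref{mixgapfurther}), where isolating the exponential factor on $\tau_{B-1}H(\tau_{B-1})$ is exactly what produces the $\tilde O(T^{-1/3})\exp(O(|\mathcal{U}|))$ term in Theorem~\ref{knownu1} instead of a multiplicative one. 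Your closing paragraph correctly anticipates this.
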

\vspace{0.4mm}

\begin{theorem}[Finite-gain stability]\label{fgstab1}
    In Algorithm \ref{Algorithm 1}, suppose that $\frac{\tau_1}{\tau_0} \beta(\tau_0)<1$. Assume that $\lim_{t\to\infty}H(t)<\infty$. Then, Algorithm \ref{Algorithm 1} achieves finite-gain $\mathcal{L}_1$ stability; \textit{i.e.}, there exist constants $A_1, A_2>0$ such that for all $T\in\mathbb{Z}_+$, \\
    
    \centerline{$\sum_{t=0}^T \|x_t\| \leq A_1 \cdot w_\text{max}T + A_2.$}
\end{theorem}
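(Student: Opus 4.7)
The plan is to split $\sum_{t=0}^{T}\|x_t\|$ batch-by-batch, use the Break check in Line~\ref{line5} to bound each within-batch sum by the summable ISS envelope, and control the batch-initial norms $\|x_{t_b}\|$ via a strict contraction recursion, correcting for the finitely many Break activations. The two decisive ingredients are the summability hypothesis $H_\infty := \lim_{t\to\infty} H(t) < \infty$ and the strict contraction $\beta(\tau_0) < 1$, the latter following from $\frac{\tau_1}{\tau_0}\beta(\tau_0) < 1$ together with $\tau_1 \geq \tau_0$.

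For any batch $b$, the guard of Line~\ref{line5} ensures that $\|x_t\| \leq \beta(t-t_b)\|x_{t_b}\| + \gamma w_{\text{max}}$ for every $t \in \{t_b,\dots,t_{b+1}-1\}$: either the batch ran to completion with the envelope intact, or Break fired precisely when $x_{t_{b+1}}$ first violated the envelope, leaving all earlier states compliant. Summing over the batch and using $H(t_{b+1}-t_b) \leq H_\infty$ yields
\[
\sum_{t=t_b}^{t_{b+1}-1} \|x_t\| \;\leq\; H_\infty\,\|x_{t_b}\| + (t_{b+1}-t_b)\,\gamma w_{\text{max}}.
\]
Telescoping over batches then reduces the claim to showing $\sum_{b=0}^{B-1}\|x_{t_b}\| = O(T)$; since each non-Break batch has length at least $\tau_0$ and Break fires at most $|\mathcal{U}|$ times, the number of batches obeys $B \leq |\mathcal{U}| + (T+1)/\tau_0 = O(T)$.

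For the cross-batch recursion, whenever Break does not fire in batch $b$, taking $t = t_{b+1}-1$ in the envelope produces the strict contraction $\|x_{t_{b+1}}\| \leq \beta(\tau_b)\|x_{t_b}\| + \gamma w_{\text{max}} \leq \beta(\tau_0)\|x_{t_b}\| + \gamma w_{\text{max}}$---precisely the recursion that underlies Lemma~\ref{asymptotic2} and Theorem~\ref{asymptotic stability1}. Iterating this contraction yields $\|x_{t_b}\| \leq \beta(\tau_0)^b\|x_0\| + \gamma w_{\text{max}}/(1-\beta(\tau_0))$ and hence $\sum_b \|x_{t_b}\| = O(B) = O(T)$, which is exactly what is needed.

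The main obstacle is the at-most-$|\mathcal{U}|$ batches in which Break \emph{does} fire, since there the contraction fails: $x_{t_{b+1}}$ may exceed the envelope, and one must fall back on the one-step Lipschitz estimate $\|x_{t_{b+1}}\| \leq L_f(1+L_\pi)\|x_{t_{b+1}-1}\| + O(1)$, which combined with the envelope bound on $x_{t_{b+1}-1}$ yields $\|x_{t_{b+1}}\| \leq L_f(1+L_\pi)(\|x_{t_b}\| + \gamma w_{\text{max}}) + O(1)$. Because only finitely many such inflations occur, their cumulative effect on $\sup_b \|x_{t_b}\|$ is at most a factor $(L_f(1+L_\pi))^{|\mathcal{U}|}$ on the initial state plus an additive constant independent of $T$; these absorb into a constant offset. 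Combining with the within-batch estimate then delivers constants $A_1, A_2 > 0$ (depending on $H_\infty, \beta(\tau_0), \gamma, L_f, L_\pi, |\mathcal{U}|$, and $\|x_0\|$) such that $\sum_{t=0}^{T}\|x_t\| \leq A_1 w_{\text{max}} T + A_2$, as claimed.
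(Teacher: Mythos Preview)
Your proof is correct and shares the same three-stage skeleton as the paper's---the within-batch envelope from Line~\ref{line5}, a contraction recursion on the batch-initial norms $\|x_{t_b}\|$, and the one-step Lipschitz inflation at the at-most-$|\mathcal{U}|$ Break batches---but it is genuinely more elementary. The paper derives Theorem~\ref{fgstab1} as a corollary of the general Lemma~\ref{sum of state norms}, whose proof (via Lemmas~\ref{weighted two Break} and~\ref{along Break}) tracks the \emph{weighted} sums $\sum_b H(\tau_b)\|x_{t_b}\|$ so that the same machinery also delivers the asymptotic-stability result (Theorem~\ref{asymptotic stability1}), where $H(\tau_b)$ may grow; this is precisely why the paper needs the ratio hypothesis $\frac{\tau_1}{\tau_0}\beta(\tau_0)<1$. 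You instead invoke $H_\infty<\infty$ at the very first step to replace every $H(\tau_b)$ by the constant $H_\infty$, which reduces everything to bounding the \emph{unweighted} sum $\sum_b\|x_{t_b}\|$ via the scalar contraction $\|x_{t_{b+1}}\|\le\beta(\tau_0)\|x_{t_b}\|+\gamma w_{\max}$. What this buys you is a short self-contained argument that only needs $\beta(\tau_0)<1$ (a strictly weaker condition); what it costs is that your argument is not reusable for Theorem~\ref{asymptotic stability1}. One minor inaccuracy: your cross-reference to Lemma~\ref{asymptotic2} is off---that lemma is purely about $H(t)/t\to 0$ and does not involve the batch-to-batch contraction recursion.
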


\vspace{0.1mm}

\begin{lemma}\label{asymptotic2}
    Define $H(t):= \sum_{i=0}^{t-1} \beta(i)$. Then, we have 
    $
    \lim_{t\to\infty}\frac{H(t)}{t} = 0.
    $
\end{lemma}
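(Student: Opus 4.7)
The plan is to recognize that this is essentially the Cesàro mean statement: the arithmetic average of a sequence converging to zero also converges to zero. Since $\beta(\cdot)$ is non-increasing with $\lim_{t\to\infty}\beta(t)=0$, the sequence $\beta(i)$ tends to zero, and $H(t)/t$ is just its running average up to index $t-1$. So the core of the argument is the standard Cesàro/Stolz-type reasoning.

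Concretely, I would proceed by an $\varepsilon$-$N$ argument. Fix any $\varepsilon>0$. By $\lim_{i\to\infty}\beta(i)=0$, choose $N\in\mathbb{Z}_+$ large enough that $\beta(i)<\varepsilon/2$ for all $i\geq N$. Then split the sum defining $H(t)$ at index $N$:
\[
\frac{H(t)}{t} \;=\; \frac{1}{t}\sum_{i=0}^{N-1}\beta(i) \;+\; \frac{1}{t}\sum_{i=N}^{t-1}\beta(i).
\]
The first piece is a fixed finite number (depending only on $N$, not $t$) divided by $t$, so it is at most $\varepsilon/2$ once $t$ exceeds $T_0 := \frac{2}{\varepsilon}\sum_{i=0}^{N-1}\beta(i)$. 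The second piece has at most $t-N$ summands, each bounded by $\varepsilon/2$, so it is at most $\frac{(t-N)}{t}\cdot\frac{\varepsilon}{2}\leq \varepsilon/2$. Combining, for all $t\geq \max(N,T_0)$ we get $H(t)/t<\varepsilon$, which is exactly the claim.

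A minor subtlety worth noting: $\beta(\cdot)$ is stated to map into $\mathbb{R}$ rather than $\mathbb{R}_{\geq 0}$, but non-increasing with $\beta(0)=1$ and limit $0$ forces $\beta(i)\in[0,1]$ (or at least eventually nonnegative, by the limit), so the split above is valid and no absolute-value complications arise. I would also remark that monotonicity is not even essential here, only $\beta(i)\to 0$; the non-increasing property is used elsewhere in the paper (e.g.\ in comparing $\beta(\tau_b)$ to $\beta(\tau_0)$) but for this lemma the conclusion follows purely from the limit.

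I do not expect any substantive obstacle. The only thing to be careful about is writing the split at index $N$ cleanly and verifying that the "fixed part" $\frac{1}{t}\sum_{i=0}^{N-1}\beta(i)$ is genuinely bounded (it is, by $N$ since each $\beta(i)\leq \beta(0)=1$), and then choosing $t$ large enough to push it below $\varepsilon/2$.
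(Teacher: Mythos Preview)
Your proposal is correct and takes a genuinely different route from the paper. The paper argues by cases: if $\lim_{t\to\infty}H(t)<\infty$ the result is immediate; otherwise it bounds the sum by an integral via monotonicity, $H(t)\le \beta(0)+\int_0^{t-1}\beta(x)\,dx$, and then applies L'H\^opital's rule to the ratio $\frac{\beta(0)+\int_0^{t-1}\beta(x)\,dx}{t}$ to obtain $\lim_{t\to\infty}\beta(t-1)=0$. Your argument is the standard Ces\`aro mean $\varepsilon$--$N$ splitting and is more elementary: it needs neither a real-variable extension of $\beta$ nor L'H\^opital, and as you correctly observe, it does not even use monotonicity, only $\beta(i)\to 0$. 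The paper's approach, by contrast, leans on the non-increasing property to pass from sums to integrals. Both are perfectly valid; yours is shorter and isolates the minimal hypothesis, while the paper's integral framing dovetails with later arguments in the appendix that repeatedly compare sums of $\beta$ and $\tau$ terms to integrals.
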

\begin{proof}
    If $\lim_{t\to\infty}H(t)<\infty$, clearly $\lim_{t\to\infty}\frac{H(t)}{t}=0$ holds. If $\lim_{t\to\infty}H(t)=\infty$, we leverage L'H\^opital's rule with $\beta(t)\to 0$ as $t\to\infty$ to derive
    \begin{align*}
    \lim_{t\to\infty}\frac{H(t)}{t}&\leq \lim_{t\to\infty}\frac{\beta(0)+\int_{0}^{t-1}\beta(x)dx}{t}= \lim_{t\to\infty} \frac{\beta(t-1)}{1}=0,
    \end{align*}
    where the first inequality is due to designing $\beta(\cdot)$ to be non-increasing and nonnegative.
    The proof details can be found in the Appendix (see Lemma \ref{asymptotic}).
\end{proof}

\textit{Proof sketch of Theorems \ref{asymptotic stability1} and \ref{fgstab1}:} 
By Lemma \ref{asymptotic2}, we have $\lim_{t\to\infty}\frac{H(t)}{t}=0$.  Using this result with the non-decreasing property of both $\tau_b$ and $H(\tau_b)$, we obtain that 
$\sum_{b=0}^{B-1} H(\tau_b) = o(T)$
according to Assumption \ref{batch length} for the dynamic batch length. This assumption further indicates that falsifying destabilizing controllers in Lines \ref{line5}-\ref{line8} results in the existence of a constant $M>0$ such that the following inequality holds for all $T\geq 0$:
\begin{align}\label{constantM}
    \sum_{t=0}^T \|x_t\| &\leq M+\gamma w_{\text{max}}\cdot (O(\sum_{b=0}^{B-1} H(\tau_b))+T).
\end{align}
Thus, $\sum_{b=0}^{B-1} H(\tau_b) = o(T)$ along with (\ref{constantM}) proves both Theorems \ref{asymptotic stability1} and \ref{fgstab1}. More details about the proof are provided in Appendix \ref{app: stability}. \qed

\begin{remark}\label{dbremark}
With a fixed batch length $\tau$ as presented in \citet{li2023switching}, the resulting closed-loop system cannot achieve asymptotic stability since $\lim_{T\to\infty}\frac{1}{T}\sum_{t=0}^T \|x_t\| = \gamma w_\text{max}(1+O(\frac{1}{\tau}))>\gamma w_\text{max}$. Thus, it is intuitively desirable to design as $\lim_{b\to\infty}\tau_b=\infty$ to achieve an asymptotic system stability, validating our dynamic batch length strategy in Algorithm \ref{Algorithm 1}.
This idea also results in having $\lim_{T\to\infty}B/T=0$ (see Lemma \ref{T with B} in the Appendix). It is crucial to note that we have achieved asymptotic stability even when $\lim_{t\to\infty}H(t)=\infty$. In addition, finite-gain stability can be achieved for every $\beta(\cdot)$ that satisfies $H(\cdot)<\infty$, which incorporates exponentially stabilizing controllers. 
\end{remark}

\subsection{Regret}\label{regret}

In this section, we will present the regret bound of Algorithm \ref{Algorithm 1}, where the regret defined in Definition \ref{defregret} is equivalent to 
$\mathbb{E}_{K_{B-1:0}}\sum_{t=0}^{T} [c_t(x_t, u_t) - c_t(x_t^*, u_t^*)],$
considering that the policy at each time $t$ is determined by the policy at the corresponding batch. 

\begin{theorem}[Regret Bound]\label{regret bound1}
    In Algorithm \ref{Algorithm 1}, suppose that $\frac{\tau_1}{\tau_0}(\beta(\tau_0))^2<\frac{1}{2\sqrt{2}}$. Then, we have
    \begin{align*}
     &\textit{Regret}_T= O(|\mathcal{U}|) + O(\sum_{b=0}^{B-1}H(\tau_b)) + \frac{\Tilde{O}(|\mathcal{U}|+1)}{\eta_0}  + \frac{\eta_0 N }{2} [\exp(O(|\mathcal{U}|))O(\tau_{B-1}H(\tau_{B-1})) + O(\sum_{b=0}^{B-1} (\tau_b)^2)].
    \end{align*}
\end{theorem}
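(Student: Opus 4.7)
The plan is to decompose the regret into three pieces and bound each. For each batch $b$, let $\tilde{x}_t^{(k)}$ denote the counterfactual trajectory inside batch $b$ obtained by running $\pi_k$ from $x_{t_b}$, and set $\ell_b(k) := \sum_{t=t_b}^{t_{b+1}-1} c_t(\tilde{x}_t^{(k)}, \pi_k(\tilde{x}_t^{(k)}))$. Then $\textit{Regret}_T = (\mathrm{I}) + (\mathrm{II}) + (\mathrm{III})$ with
\begin{align*}
(\mathrm{I}) &:= \mathbb{E}\Bigl[\sum_{t=0}^{T} c_t(x_t,u_t) - \sum_{b=0}^{B-1} \ell_b(K_b)\Bigr], \\
(\mathrm{II}) &:= \mathbb{E}\Bigl[\sum_{b=0}^{B-1} \bigl(\ell_b(K_b) - \ell_b(i^*)\bigr)\Bigr], \\
(\mathrm{III}) &:= \mathbb{E}\Bigl[\sum_{b=0}^{B-1} \ell_b(i^*) - \sum_{t=0}^{T} c_t(x_t^*,u_t^*)\Bigr].
\end{align*}
Term (I) captures the cost charged on Break-terminated batches, (II) is the Exp3 bandit regret on the batched game, and (III) is the trajectory-coupling error introduced by restarting the counterfactual at $x_{t_b}$ instead of $x_0$.

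The easy terms are (I) and (III). For (I), up to the first ISS violation in Line~\ref{line5} the state norm obeys the ISS certificate, so each Break contributes only $O(1)$ excess cost; since Break fires at most $|\mathcal{U}|$ times, $(\mathrm{I}) = O(|\mathcal{U}|)$. For (III), the incremental stability of $\pi_{i^*}$ (Definition~\ref{is}) yields $\|\tilde{x}_t^{(i^*)} - x_t^*\| \le \beta(t-t_b)\|x_{t_b}-x_{t_b}^*\|$ inside batch $b$; plugging this into the locally Lipschitz cost bound from Assumption~\ref{cost functions} and using the uniform state-norm bounds guaranteed by Theorem~\ref{asymptotic stability1} gives $O(H(\tau_b))$ per batch, hence $(\mathrm{III}) = O(\sum_b H(\tau_b))$.

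The main work is (II). I would partition $\{0,\dots,B-1\}$ into maximal phases separated by the weight-reset event $s_{b+1}\neq s_b$ in Lines~\ref{line22}--\ref{line26}. Since $s_b\neq 0$ holds at most $O(U)=O(|\mathcal{U}|)$ times by Lemma~\ref{orderM2} and $s_{b+1}\le s_b+1$ by Line~\ref{line14}, the number of phases is $O(|\mathcal{U}|)$. Inside each phase $\eta_b$ is constant, so the standard Exp3 potential argument on the importance-weighted losses $w_b'(k)$ produces the per-phase bound
\[
\frac{\ln N}{\eta_b} + \frac{\eta_b}{2}\sum_{b\in\mathrm{phase}}\sum_{k\in\mathcal{P}_b}\ell_b(k)^2.
\]
Summed over phases and using $\alpha_b=O(1)$ from Lemma~\ref{next Break}, the first term contributes $\tilde{O}(|\mathcal{U}|+1)/\eta_0$.

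The delicate step is the variance term. Using $\|\tilde{x}_t^{(k)}\|\le \beta(t-t_b)\|x_{t_b}\|+\gamma w_\text{max}$ and the quadratic growth of $c_t$ from Assumption~\ref{cost functions}, I would show $\ell_b(k)^2 = O(\|x_{t_b}\|^2\tau_b H(\tau_b) + \tau_b^2)$. The adaptive rate $\eta_b = \eta_0/\alpha_b^{2s_b}$ is calibrated so that $\eta_b\|x_{t_b}\|^2 = O(\eta_0\|x_0\|^2)$ on every batch that is not the last in its phase, collapsing the $\|x_{t_b}\|^2\tau_b H(\tau_b)$ contribution into the clean $\eta_0 N \sum_b \tau_b^2$ piece. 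The phase-terminating batches escape this cancellation, and taking the worst case $\alpha_b^{2s_b}=\exp(O(|\mathcal{U}|))$ reached at $b=B-1$ produces the residual $\eta_0 N\exp(O(|\mathcal{U}|))\tau_{B-1}H(\tau_{B-1})$ that appears in the statement. The hardest part will be verifying this calibration: the hypothesis $\frac{\tau_1}{\tau_0}(\beta(\tau_0))^2 < \frac{1}{2\sqrt{2}}$ is precisely what keeps $\|x_{t_b}\|^2$ from growing faster than $\alpha_b^{2s_b}$ allows across the increasing batch lengths, so that the adaptive rate indeed cancels the state-norm factor as required.
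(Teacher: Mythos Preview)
Your three-way decomposition is natural, but it misses the mechanism by which the adaptive learning rate removes the additive exponential term, and as written the bounds for (II) and (III) fail.

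\textbf{Term (III).} You invoke ``uniform state-norm bounds'' from Theorem~\ref{asymptotic stability1}, but that theorem only gives $\lim_{T\to\infty}\frac{1}{T}\sum_t\|x_t\|\le\gamma w_{\max}$; it does \emph{not} bound $\|x_{t_b}\|$ uniformly. After $U$ Break activations $\|x_{t_b}\|$ can reach $\exp(O(U))$ (iterate Lemma~\ref{next Break}). Since the cost in Assumption~\ref{cost functions} grows quadratically, the per-batch coupling error is $O(H(\tau_b)\|x_{t_b}\|^2)$, and summing over $b$ produces an \emph{additive} $\exp(O(|\mathcal{U}|))H(\tau_{B-1})$ --- precisely the weaker ``Dynamic Batching'' bound of Table~\ref{summaryresults} and Corollary~\ref{onlycor}, not the statement of Theorem~\ref{regret bound1}.

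\textbf{Term (II).} Two problems. First, in a per-phase Exp3 bound with rate $\eta_b$, the $\frac{\ln N}{\eta_b}$ contributions sum to $\frac{\ln N}{\eta_0}\sum_l(\alpha_{b^l})^{2s_{b^l}}$; since $s_b$ can reach $U$, this is $\exp(O(|\mathcal{U}|))/\eta_0$, not $\tilde O(|\mathcal{U}|+1)/\eta_0$. Knowing $\alpha_b=O(1)$ does not help because the exponent $2s_b$ is what blows up. Second, because $c_t$ is quadratic, $\ell_b(k)^2$ carries an $\|x_{t_b}\|^4\tau_bH(\tau_b)$ term (equation~(\ref{wcubed})), and $\eta_b=\eta_0/(\alpha_b)^{2s_b}$ cancels only two of those four powers.

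\textbf{What the paper actually does.} The adaptive rate is not used to rescale the variance in a standard Exp3 bound. Instead the per-batch cost is split into mix loss plus mixability gap with \emph{mismatched} rates, $\eta_0$ outside the logarithm and $\eta_b$ inside the exponential. The mix-loss telescoping then emits the \emph{regularized} comparator $w_b^K(i^*)/(\alpha_b)^{2s_b}$ (Lemma~\ref{mix loss}), and since $\|x_{t_b}\|/(\alpha_b)^{s_b}=O(1)$ by the very definition of $s_b$, the quantity $\frac{c_t(x_t^K(i^*),u_t^K(i^*))}{(\alpha_b)^{2s_b}}-c_t(x_t^*,u_t^*)$ is bounded by $O(U)+O(\sum_bH(\tau_b))$ with no exponential (Lemma~\ref{mixlossfurther}). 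The mixability gap pays only an additive $\frac{O(U)}{2\eta_0}$ for the $|\mathcal{V}|=O(U)$ batches with $s_b\neq 0$ (Lemmas~\ref{mixability gap} and~\ref{orderM2}), and its variance keeps the prefactor $\eta_0$ rather than $\eta_b$, so the $\exp(O(|\mathcal{U}|))$ coming from $\sum_b\|x_{t_b}\|^4\tau_bH(\tau_b)$ (Lemma~\ref{mixgapfurther}) lands multiplied by $\eta_0N/2$, exactly as in the theorem. Routing the exponential into the $\eta_0$-multiplied term is the whole point of the adaptive rate, and your decomposition bypasses it.
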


\begin{theorem}[Regret bound with known $|\mathcal{U}|$]\label{knownu1}
    Consider Algorithm \ref{Algorithm 1} with
    polynomial batches defined in Assumption \ref{batch length} with $(\frac{1}{(N(|\mathcal{U}|+1))^{1/2}}, z, \frac{1}{2}, \nu)$, where the constants $z,\nu>0$ satisfy $\tau_0>0$ and $\frac{\tau_1}{\tau_0}(\beta(\tau_0))^2<\frac{1}{2\sqrt{2}}$. Then, with $\eta_0=O(\frac{(|\mathcal{U}|+1)^{2/3}}{T^{2/3} N^{1/3}})$ and $T\geq \max\{ \frac{|\mathcal{U}|^{3/2}}{(N(|\mathcal{U}|+1))^{1/2}}, N(|\mathcal{U}|+1)\}$, we achieve a sublinear regret bound. 
    Moreover\footnote{Among stabilizing controllers achieving $\Tilde{O}(T^{2/3})$ regret bound, we also cover the case where $H(t)$ can be of the order of a harmonic series that is not summable at infinity.}, when $H(t)\leq O(\sum_{i=1}^t \frac{1}{i})$ for all $t\geq 1$, we have
    \[
    \textit{Regret}_T=\bigr[\Tilde{O}(T^{2/3})+\Tilde{O}(T^{-1/3})\exp(O(|\mathcal{U}|))\bigr]N^{1/3}(|\mathcal{U}|+1)^{1/3}.
    \]
\end{theorem}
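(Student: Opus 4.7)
The plan is to instantiate the additive regret bound of Theorem \ref{regret bound1} with the polynomial-batch schedule (exponent $z_3=1/2$) and then to choose $\eta_0$ so that the two dominant terms balance. The first task is to relate the number $B$ of completed batches to the horizon $T$ through $T\approx\sum_{b=0}^{B-1}\tau_b$. With $\tau_b=\lceil(\nu b+z)^{1/2}/(N(|\mathcal{U}|+1))^{1/2}\rceil$, the partial sum behaves like $B^{3/2}/(N(|\mathcal{U}|+1))^{1/2}$, yielding $B=\Theta(T^{2/3}(N(|\mathcal{U}|+1))^{1/3})$. The two lower bounds assumed on $T$ are used precisely to guarantee $\tau_0\geq1$ under $z_1=1/(N(|\mathcal{U}|+1))^{1/2}$ and to ensure $B\geq|\mathcal{U}|$, so that at least one complete batch survives after any potential falsifications in Lines \ref{line5}--\ref{line8}.

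Given $B$, I would next compute the three batch-dependent quantities in Theorem \ref{regret bound1}: $\tau_{B-1}=\Theta(T^{1/3}/(N(|\mathcal{U}|+1))^{1/3})$ and $\sum_{b=0}^{B-1}(\tau_b)^2=\Theta(T^{4/3}/(N(|\mathcal{U}|+1))^{1/3})$. Under the logarithmic hypothesis $H(t)\leq O(\sum_{i=1}^t 1/i)$, each $H(\tau_b)$ is $O(\log T)$, giving $\sum_b H(\tau_b)\leq O(B\log T)=\Tilde{O}(T^{2/3})N^{1/3}(|\mathcal{U}|+1)^{1/3}$ and $\tau_{B-1}H(\tau_{B-1})=\Tilde{O}(T^{1/3}/(N(|\mathcal{U}|+1))^{1/3})$.

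With these in hand, substituting $\eta_0=\Theta((|\mathcal{U}|+1)^{2/3}/(T^{2/3}N^{1/3}))$ makes the entropy term $\Tilde{O}(|\mathcal{U}|+1)/\eta_0$ and the variance term $(\eta_0 N/2)\sum_b(\tau_b)^2$ both land at $\Tilde{O}(T^{2/3})N^{1/3}(|\mathcal{U}|+1)^{1/3}$; this balance is the sole purpose of the chosen $\eta_0$. The exponential-in-$|\mathcal{U}|$ piece $(\eta_0 N/2)\exp(O(|\mathcal{U}|))\tau_{B-1}H(\tau_{B-1})$ compresses to $\Tilde{O}(T^{-1/3})\exp(O(|\mathcal{U}|))N^{1/3}(|\mathcal{U}|+1)^{1/3}$, while the remainder $O(|\mathcal{U}|)+\sum_b H(\tau_b)$ is absorbed into the $\Tilde{O}(T^{2/3})$ piece, which gives the stated bound. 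For the sublinear-regret claim (without the log-type assumption on $H$), I would replace the logarithmic bound on $H(\tau_b)$ by Lemma \ref{asymptotic2}: since $H$ is non-decreasing, $\sum_b H(\tau_b)\leq B\,H(\tau_{B-1})$, and $H(\tau_{B-1})/\tau_{B-1}\to 0$ forces $\sum_b H(\tau_b)=o(B\tau_{B-1})=o(T)$; the same estimate yields $\tau_{B-1}H(\tau_{B-1})=o(\tau_{B-1}^2)$, which under the chosen $\eta_0$ makes the exponential term $o(1)$, leaving every remaining term sublinear.

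The main obstacle I anticipate is the arithmetic bookkeeping: tracking the powers of $N$ and $|\mathcal{U}|+1$ through the chain of substitutions to verify that the chosen $\eta_0$ equates the entropy and variance contributions up to logarithmic factors is what pins down the exponent $2/3$; any mismatched choice of $z_1$ or $\eta_0$ would leak extra polynomial factors and spoil the tightness. A secondary check is that the condition $(\tau_1/\tau_0)(\beta(\tau_0))^2<1/(2\sqrt{2})$ required by Theorem \ref{regret bound1} is compatible with the free constants $z,\nu$; this follows by taking $z$ large enough, since $\beta(\tau_0)\to 0$ as $\tau_0\to\infty$. Everything else reduces to a direct plug-in into Theorem \ref{regret bound1}.
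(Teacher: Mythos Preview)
Your proposal is correct and follows essentially the same route as the paper's proof: instantiate Theorem~\ref{regret bound1}, convert $T\approx\sum_b\tau_b$ into $B=\Theta(T^{2/3}(N(|\mathcal{U}|+1))^{1/3})$, evaluate $\tau_{B-1}$, $\sum_b\tau_b^2$, $\sum_bH(\tau_b)$, and $\tau_{B-1}H(\tau_{B-1})$, and then balance the entropy and variance terms with the stated $\eta_0$. One minor correction: your attribution of the two lower bounds on $T$ is off---the paper does not use them to force $\tau_0\geq1$ or $B\geq|\mathcal{U}|$; rather, $T\geq N(|\mathcal{U}|+1)$ is what makes the leading term dominate in the estimate $\sum_b\tau_b^2=O(B^2/(N(|\mathcal{U}|+1)))$ (equivalently, it lets one invert the $T$--$B$ relation without the additive $B-1$ interfering), and $T\geq|\mathcal{U}|^{3/2}/(N(|\mathcal{U}|+1))^{1/2}$ is exactly what is needed to absorb the additive $O(|\mathcal{U}|)$ into the $\Tilde{O}(T^{2/3})N^{1/3}(|\mathcal{U}|+1)^{1/3}$ term.
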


The regret bound deeply relies on Lemma \ref{orderM2}. For the lemma, define $\mathcal{L}:= \{0\leq b\leq B-1, ~b\in\mathbb{Z}_+ : s_{b+1}\neq s_b\}$ and
let $b^1,\dots,b^{|\mathcal{L}|}$ denote the batch where Line \ref{line22} is satisfied; \textit{i.e.}, $s_{b^l+1}\neq s_{b^l}$ for $l=1,\dots,|\mathcal{L}|$. For convenience, we let $b^0=0$, $b^{|\mathcal{L}|+1}=B-1,$ and $s_B = s_{B-1}$. 
Also, define $\mathcal{V}:= \{0\leq b\leq B-1, ~b\in\mathbb{Z}_+ : s_{b}\neq 0\}$. 

\begin{lemma}\label{orderM2}
    In Algorithm \ref{Algorithm 1}, suppose that $\beta(\tau_0)<1$ and let $U$ denote the number of times that the Break statement is activated. Then, it holds that $|\mathcal{L}|=O(U)$ and $|\mathcal{V}|=O(U)$.
\end{lemma}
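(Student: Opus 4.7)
The plan is to establish two monotonicity facts about $(s_b)_{b\geq 0}$ and a uniform bound on ``flat runs'', then deduce both cardinality bounds by a counting argument.

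First I would show (i) $s_{b+1}\leq s_b+1$ always, which is built into Lines~\ref{line11}-\ref{line20} (either Line~\ref{line14} caps the jump at $s_b+1$ or the ``else'' branch resets to $0$), and (ii) $s_{b+1}\leq s_b$ at every no-Break batch. For (ii), the unviolated stability check yields $\|x_{t_{b+1}}\|\leq\beta(\tau_b)\|x_{t_b}\|+\gamma w_{\text{max}}$, and combining with $\gamma w_{\text{max}}\leq(1-\beta(\tau_0))\delta$ and $\beta(\tau_b)\leq\beta(\tau_0)$ sharpens this to the tight decay
\[
\|x_{t_{b+1}}\|-\delta \;\leq\; \beta(\tau_b)\,\bigl(\|x_{t_b}\|-\delta\bigr).
\]
Plugging in $\|x_{t_b}\|-\delta<\alpha_b^{s_b+1}\|x_0\|$ together with $\beta(\tau_b)<1<\alpha_b$, the right side lies strictly below $\alpha_b^{s_b+1}\|x_0\|$, so the conditional in Line~\ref{line14} never fires (in particular $\alpha_{b+1}=\alpha_b$) and the candidate $s$ must satisfy $s\leq s_b$. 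Consequently ascents of $(s_b)$ are confined to Break batches; letting $A,D$ denote the total number of ascents and descents, $A\leq U$, and the telescoping identity $s_B-s_0=A-D\geq 0$ forces $D\leq A\leq U$. Hence $|\mathcal{L}|=A+D\leq 2U=O(U)$.

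For $|\mathcal{V}|$ I would partition $\mathcal{V}$ into maximal \emph{streaks} $[b_1,b_2]$ with $s_{b_1-1}=s_{b_2+1}=0$. Each streak is opened by the Break just before $b_1$, and $s_{b_1}=1$ by fact (i) applied to $s_{b_1-1}=0$. The claim is that a streak containing $k$ internal Breaks has length $O(k)$; summing over streaks and using $\sum k_{\text{streak}}\leq U$ then gives $|\mathcal{V}|=O(U)$. Within the streak, ascents are $\leq k$ and descents are $\leq k+1$ (from $A-D=s_{b_1}-s_{b_2+1}=1$), so only the flat transitions $F$ remain to be bounded. This is where I expect the main obstacle: I would prove that any run of $L$ consecutive no-Break flats at a common height $k$ satisfies $L\leq C_0:=\log\alpha_{\text{max}}/\log(1/\beta(\tau_0))=O(1)$, where $\alpha_{\text{max}}$ is the uniform bound on $\alpha_b$ furnished by Lemma~\ref{next Break}. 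Since $\alpha_b$ is constant on any no-Break stretch (Line~\ref{line14} never fires there by (ii)), iterating the tight decay produces $\|x_{t_{b+L}}\|-\delta\leq\beta(\tau_0)^L(\|x_{t_b}\|-\delta)$; sandwiching between $\alpha^k\|x_0\|\leq\|x_{t_{b+L}}\|-\delta$ and $\|x_{t_b}\|-\delta<\alpha^{k+1}\|x_0\|$ forces $1\leq\beta(\tau_0)^L\alpha$, so $L\leq\log\alpha/\log(1/\beta(\tau_0))\leq C_0$. Without the boundedness $\alpha_b\leq\alpha_{\text{max}}$, this estimate collapses; securing that uniform bound is the crux. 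Once $L\leq C_0$ is in hand, the flats in the streak decompose into $O(k)$ maximal same-height runs (separated by the $O(k)$ non-flat transitions), and each is further split by its $\leq k$ Break batches into no-Break sub-runs of length $\leq C_0+1$, giving $F=O(k)$. Hence the streak has length $O(k)$, and $|\mathcal{V}|=O(U)$ follows.
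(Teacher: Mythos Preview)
Your argument is correct and rests on the same two ideas as the paper: (a) ascents of $(s_b)$ happen only at Break batches, and (b) along a Break-free stretch the geometric contraction forces the level to drop within $O(1)$ batches. One small slip: the identity ``$s_B-s_0=A-D$'' (and likewise ``$A-D=s_{b_1}-s_{b_2+1}=1$'' inside a streak) is not literally true, because descents can have size $>1$ (the else branch sets $s_{b+1}=0$). The correct telescoping is $s_B-s_0 = A - (\text{total descent size})$, and since the number of descents $D$ is at most the total descent size, you still get $D\le A\le U$. This is cosmetic and your conclusions survive unchanged.

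Where you differ from the paper is in the bookkeeping for $|\mathcal{V}|$. The paper proves a single ``recovery time'' estimate: from any batch with $s_b>0$, within
\[
\Bigl\lceil \tfrac{\log(\alpha_b+\delta/\|x_0\|)}{-\log\beta(\tau_0)}\Bigr\rceil
\]
Break-free batches one must have $s_{b+\tilde b}<s_b$. Since ``Break or strict decrease'' can occur at most $O(U)$ times, $|\mathcal{V}|$ is at most $O(U)$ times this constant. Your route instead layers three decompositions (streaks $\to$ same-height runs $\to$ no-Break sub-runs) and bounds only \emph{flat} runs, which is more granular than necessary; the paper's single recovery-time bound subsumes your flat-run bound and avoids the extra counting. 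On the other hand, your contraction $\|x_{t_{b+1}}\|-\delta\le\beta(\tau_b)(\|x_{t_b}\|-\delta)$ is slightly tighter than the paper's $\|x_{t_{b+\tilde b}}\|\le(\beta(\tau_0))^{\tilde b}\|x_{t_b}\|+\delta$, which is why your constant is $\log\alpha_{\max}$ rather than $\log(\alpha_{\max}+\delta/\|x_0\|)$. Both rely on Lemma~\ref{next Break} to bound $\alpha_b$ uniformly, so neither is more elementary in that respect.
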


\begin{proof}
    For every batch $b=0,\dots,B-1$, we have 
    \begin{align}\label{alphasdelta}
    \|x_{t_b}\| < (\alpha_b)^{s_b + 1} \|x_0\|+\delta
    \end{align}
    by Lines \ref{line11}-\ref{line20}. If the Break statement is not activated, since we designed $\delta \geq \frac{\gamma w_\text{max}}{1-\beta(\tau_0)}$, it yields that 
    \begin{align*}
        \|x_{t_{b+1}}\| &\leq \beta(\tau_b)\|x_{t_{b}}\| + \gamma w_\text{max} \leq \beta(\tau_0) (\alpha_b)^{s_b + 1} \|x_0\|+\beta(\tau_0)\delta +\gamma w_\text{max} 
        \\ & \leq \beta(\tau_0) (\alpha_b)^{s_b + 1} \|x_0\| + \delta < (\alpha_b)^{s_b + 1} \|x_0\| + \delta,
    \end{align*}
    which implies that $s_{b+1}> s_b$ cannot occur when the Break statement is not activated. As a result, starting from $s_0=0$, the event $s_{b+1}= s_b+1$ can occur at most $U$ times. Also, since Line \ref{line14} avoids $s_{b+1}> s_b + 1$, the event $s_{b+1}<s_b$ can occur at most $U$ times as well, leading to $|\mathcal{L}|\leq 2U$.

    Now, we observe the number of batches $\Tilde{b}>0$ needed to stabilize the state norm; \textit{i.e.}, $\min \{\Tilde{b}>0 : s_{b+\Tilde{b}}<s_b\}$. 
    % starting from batch $b$ and the corresponding $s_b$, 
    When the Break statement is not activated, one can write
    \begin{align}\label{bplustildeb}
        \nonumber\|x_{t_{b+\Tilde{b}}}\| 
        % &\leq \beta(\tau_{b+\Tilde{b}-1}) \|x_{t_{b+\Tilde{b}-1}}\| + \gamma w_\text{max} 
         &\leq \beta(\tau_0) \|x_{t_{b+\Tilde{b}-1}}\| + \gamma w_\text{max} 
        % \\ \nonumber& \leq (\beta(\tau_0))^{\Tilde{b}}\|x_{t_{b}}\| + \gamma w_\text{max} \sum_{a=0}^{\Tilde{b}-1} (\beta(\tau_0))^a 
         \leq (\beta(\tau_0))^{\Tilde{b}}\|x_{t_{b}}\| + \frac{\gamma w_\text{max}}{1-\beta(\tau_0)} 
        \\ &\leq (\beta(\tau_0))^{\Tilde{b}}\|x_{t_{b}}\| + \delta 
         < (\beta(\tau_0))^{\Tilde{b}}[(\alpha_b)^{s_b + 1} \|x_0\|+\delta] + \delta,
    \end{align}
    where the last two inequalities are by the design of $\delta$ and (\ref{alphasdelta}). It is desirable to find the minimum value of $\Tilde{b}>0$ that makes the right-hand side of (\ref{bplustildeb}) smaller than $(\alpha_b)^{s_b} \|x_0\|+\delta$:
    \begin{align}\label{betaalpha}
         (\beta(\tau_0))^{\Tilde{b}}[(\alpha_b)^{s_b + 1} \|x_0\|+\delta] + \delta \leq (\alpha_b)^{s_b} \|x_0\|+\delta 
        % \\\nonumber &\Leftrightarrow (\beta(\tau_0))^{\Tilde{b}}[(\alpha_b)^{s_b + 1} \|x_0\|+\delta]\leq (\alpha_b)^{s_b} \|x_0\| 
        \quad\Longleftrightarrow\quad \frac{1}{(\beta(\tau_0))^{\Tilde{b}}} \geq
        % \frac{(\alpha_b)^{s_b + 1} \|x_0\|+\delta}{(\alpha_b)^{s_b}\|x_0\|} = 
        \alpha_b + \frac{\delta}{(\alpha_b)^{s_b}\|x_0\|},
    \end{align}
    where the right-hand side of (\ref{betaalpha}) can be upper-bounded by $\alpha_b + \frac{\delta}{\|x_0\|}$ since $\alpha_b>1$. Thus, if $s_b\neq 0$, 
    \begin{align}\label{lceilrceil}
    \min \{\Tilde{b}>0 : s_{b+\Tilde{b}}<s_b\} \leq \Biggr\lceil\frac{\log (\alpha_b + \frac{\delta}{\|x_0\|})}{-\log \beta(\tau_0)}\Biggr\rceil,
    \end{align}
    when the Break statement is not activated. In other words, starting from any arbitrary batch $b$ where $s_b>0$, within the number of batches on the right-hand side of (\ref{lceilrceil}), either the Break statement is activated or the value of $s_b$ decreases.
    % either the Break statement is activated or the value of $s_b$ decreases as $b$ increases. 
    % More specifically, consider two sets of batches: $\mathcal{B}_1 = \{0\leq b \leq B-1: \text{the Break statement activated}\}$ and $\mathcal{B}_2 = \{0\leq b \leq B-1: s_{b+1}<s_b\}$. Let $\mathcal{B}=\mathcal{B}_1 \cup \mathcal{B}_2$ be the ordered set. Then, the batch interval between two consecutive batches in $\mathcal{B}$ is upper-bounded by (\ref{lceilrceil}). 
    Thus, considering that $|\mathcal{L}|\leq 2U$, we have 
    \[
    |\mathcal{V}|\leq (2U-1)\Biggr\lceil\frac{\log (\alpha_b + \frac{\delta}{\|x_0\|})}{-\log \beta(\tau_0)}\Biggr\rceil,
    \]
    which completes the proof. More proof details can be found in the Appendix (see Lemma \ref{orderM}).
\end{proof}

\textit{Proof sketch of Theorems \ref{regret bound1} and \ref{knownu1}:} By adopting the analysis performed in previous works (\cite{cesa2006pred, erven2011adahedge, rooij2014follow}), we divide the expected total cost into the mix loss $-\frac{1}{\eta_0}\log(\mathbb{E}_{k\sim p_b}\exp(-\eta_b w_b'(k)))$ and the mixability gap $\mathbb{E}_{k\sim p_b} [w_b'(k)]+\frac{1}{\eta_0}\log(\mathbb{E}_{k\sim p_b}\exp(-\eta_b w_b'(k)))$. The big difference between the previous analysis and our approach is that we use different learning rates for the one in the denominator ($\eta_0$) and the other inside the exponential term ($\eta_b$) as we use an adaptive learning rate. The additional term introduced by using different rates is bounded in terms of $U$ by Lemma \ref{orderM2}. 

After bounding the expected total cost with cumulative mix loss and mixability gap, we need to study $\mathbb{E}_{K_{B-1:0}} \sum_{b=0}^{B-1}\sum_{t=t_b}^{t_{b+1}-1} \bigr[\frac{c_t(x_t^K(i^*), u_t^K(i^*))}{(\alpha_b)^{2 s_b}} - c_t(x_t^*, u_t^*)\bigr]$, where $x_t^K(i)$ and $u_t^K(i)$ for $t=t_b,\dots,t_{b+1}-1$ denote the state and action sequence generated by selecting the controllers before batch $b$ according to Algorithm \ref{Algorithm 1}, while selecting the controller $i$ at batch $b$.
This does not produce any exponential term since the costs are regularized with the factor $(\alpha_b)^{2 s_b}$. The additional term introduced by regularization is also bounded by the order of $U$ due to Lemma \ref{orderM2}. The proof details are provided in Appendix \ref{app: regret}. \qed

%\eta increase back is to bound the additional terms by the order of $U$. The proof sheds light on the reason why we use the adaptive learning rate (increasing and decreasing in some batches), which is to resolve the exponential term. Thus, this guarantees the safe (in the sense that adjusting learning rate is based on simply resolving exponential term, prevents the regret from exploding to higher-base exponential) adaptive control for the unbounded costs

% \vspace{-7mm}
\begin{remark}[Lower bound]\label{algorithmdesignremark}
% Our dynamic (non-decreasing) batch length design enables the reduction of the number of batches needed to recover the stable state after the Break statement is activated, unlike the fixed batch length case that requires a constant number of batches to do so. Thus, an adaptive learning rate with a dynamic batch length is effective when the number of destabilizing controllers $|\mathcal{U}|$ is large. 
The regret bound $\tilde{O}(T^{2/3}N^{1/3}(|\mathcal{U}|+1)^{1/3})$ provided in Theorem \ref{knownu1} is similar to the \textit{lower bound} presented in \citet{dekel2014bandits}, except that there is an extra term $(|\mathcal{U}|+1)^{1/3}$, reflecting the unbounded costs for the bandits. Moreover, a stability-agnostic nature of the given controllers implies that any algorithm will normally encounter destabilizing controllers and it is unavoidable to face the exponential term $\exp(O(|\mathcal{U}|))$ in regret. 
To be more specific, our work has an exponential term in the number of destabilizing controllers ($|\mathcal{U}|$), while the work \cite{chen2021blackbox} provides the \textit{lower bound} involving an exponential term in $L > k d_u$ (see Section 2.1 and Theorem 3), where $d_u$ is the dimension of the action and $k$ is the controllability index. Here, a large controllability index implies that the system is complex to control as more stages of control actions are needed to stabilize the system. Thus, together with a dimension of the controller action $d_u$, a large $k d_u$ in their work is analogous to a large $|\mathcal{U}|$ in our setting.
Thus, due to the \textit{lower bound}, the exponentially increasing term can be tackled by reducing it by the inverse power term on $T$ at best. 
Theorem \ref{knownu1} aligns with this idea since the resulting regret bound involves the term $\Tilde{O}(T^{-1/3})\cdot \exp(O(|\mathcal{U}|))$ by factoring in every potential exponential term to be multiplied with the initial learning rate $\eta_0=O(T^{-2/3})$, which inherently serves as a mitigating factor. 
Note that instead of dramatically reducing the regret bound, our main contribution is on significantly relaxing the stability assumptions for required controllers (see Table \ref{summaryresults} and Appendix \ref{necdb}). 
% Alleviating the exponential term guarantees safety even if the base of the term is large, for example, by drastic variation of transition dynamics and the policy along the state.

\begin{remark}[Nonlinear control]
    Our approach is useful to extend the stability and regret analysis beyond linear dynamics, but if $|\mathcal{U}|$ is too large, it would be difficult to reach good enough performance as the regret bound depends on $\exp(O(|\mathcal{U}|))$. This occurs because we have focused on a discrete set of controllers instead of a connected set as in linear dynamics. Note that in the linear dynamics case, it is guaranteed that the set of stabilizing controllers is connected. However, adopting a discrete set was inevitable to handle unknown nonlinear systems since the set of stabilizing controllers may not be connected. To address this limitation, we believe that this issue can be mitigated by the formulation where the problem of interest is $|\mathcal{U}|$ number of connected sets, where $|\mathcal{U}|$ is not too large and each set is disjoint from the others. The agent can apply techniques of continuous parameterization (\textit{e.g.} gradient descent) within a set and also transition between separate sets by leveraging our technique. This mixture of algorithms for discrete and connected sets will be an interesting future work.
\end{remark}
% (see the term $\frac{\eta_0 N }{2} \sum_{b=0}^{B-1} \mathbb{E}_{K_{b-1:0}} (w_b^K (i^b))^2$ in Lemma \ref{mixability gap}). 

% Meanwhile, in Line \ref{line14} in Algorithm \ref{Algorithm 1}, we update the value of $\alpha_b$ to an arbitrary number satisfying the range condition. In practice, we recommend setting $\alpha_b$ to be small since one can update this value on the next batch whenever needed. If $\alpha_b$ is too large, the term $\mathbb{E}_{K_{B-1:0}} \sum_{b=0}^{B-1}\sum_{t=t_b}^{t_{b+1}-1} [\frac{c_t(x_t^K(i^*), u_t^K(i^*))}{(\alpha_b)^{2 s_b}} - c_t(x_t^*, u_t^*)]$ may increase, leading to a deterioration in the overall performance of the algorithm. 
\end{remark}

Now, a question arises as to what happens if $|\mathcal{U}|$ is \textit{not known} in advance. 
With Algorithm \ref{Algorithm 1}, one can leverage $|\mathcal{U}|+1\leq N$ to upper-bound the regret in Theorem \ref{knownu1} and achieve $\Tilde{O}(T^{2/3}N^{2/3})$ at best (without considering exponential terms) by determining $\eta_0$ and $(\tau_b)_{b\geq0}$ as if there were only one stabilizing controller. It turns out that we can reduce the bound to $\Tilde{O}(T^{2/3}N^{1/3}(|\mathcal{U}|+1)^{1/2})$ by adaptively changing the value of $\eta_b$ as in Algorithm \ref{Algorithm 2}, where we increase the value of $\mu_b$ if the Break statement in Algorithm \ref{Algorithm 1} is activated and keep it unchanged otherwise.

\begin{theorem}[Regret bound with unknown $|\mathcal{U}|$]\label{unknownu1}
    Consider Algorithm \ref{Algorithm 2} with
    polynomial batches defined in Assumption \ref{batch length} with $(\frac{1}{N^{1/2}}, z, \frac{1}{2}, \nu)$, where the constants $z,\nu>0$ satisfy $\tau_0>0$ and $\frac{\tau_1}{\tau_0}(\beta(\tau_0))^2<\frac{1}{2\sqrt{2}}$. Also, let $\eta_0=O(\frac{1}{T^{2/3} N^{1/3}})$ and $y=\frac{1}{2}$. When $H(t)\leq O(\sum_{i=1}^t \frac{1}{i})$ for all $t\geq 1$ and $T\geq \max\{\frac{|\mathcal{U}|^{3/2}}{N^{1/2} (|\mathcal{U}|+1)^{3/4}}, N\}$, we have
    \[
    \textit{Regret}_T=\bigr[\Tilde{O}(T^{2/3})+\Tilde{O}(T^{-1/3})\exp(O(|\mathcal{U}|))\bigr]N^{1/3}(|\mathcal{U}|+1)^{1/2}.
    \]
\end{theorem}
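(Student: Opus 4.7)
The plan is to build on the proofs of Theorems \ref{regret bound1} and \ref{knownu1}, adjusting for the fact that $|\mathcal{U}|$ is no longer known at initialization. In Theorem \ref{knownu1}, knowing $|\mathcal{U}|$ allows one to absorb the $(|\mathcal{U}|+1)$ factor directly into the batch-length constant $z_1=1/\sqrt{N(|\mathcal{U}|+1)}$ and to tune $\eta_0$ with the correct dependence. Algorithm \ref{Algorithm 2} compensates by maintaining an adaptive multiplier $\mu_b$ that is increased on each Break activation; since Break fires at most $U\le|\mathcal{U}|$ times, $(\mu_b)^y$ serves as an online surrogate for $(|\mathcal{U}|+1)^y$.

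First, I would re-run the mix-loss / mixability-gap decomposition used in the proof of Theorem \ref{regret bound1}, following \citet{cesa2006pred, erven2011adahedge, rooij2014follow}, but with the modified learning rate of Algorithm \ref{Algorithm 2}. The mixability-gap side still produces the $\frac{\eta_0 N}{2}[\exp(O(|\mathcal{U}|))\,O(\tau_{B-1}H(\tau_{B-1}))+O(\sum_b\tau_b^2)]$ contribution from Theorem \ref{regret bound1}, since Lemma \ref{orderM2} continues to bound both the number of weight-resets and the regularization overhead by $O(|\mathcal{U}|)$. The substantive change is in the mix-loss side: because $\eta_b$ now contains the extra factor $\mu_b$, and $\mu_b$ is piecewise constant between Breaks with only $U\le|\mathcal{U}|$ jumps, the cumulative contribution scales as $\Tilde{O}((|\mathcal{U}|+1)^y/\eta_0)$ rather than $\Tilde{O}((|\mathcal{U}|+1)/\eta_0)$.

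Next, I would instantiate the polynomial batches with $(z_1,z_2,z_3,\nu)=(N^{-1/2},z,1/2,\nu)$, so that $\tau_b \asymp N^{-1/2}(\nu b+z)^{1/2}$ and $\sum_{b=0}^{B-1}\tau_b\le T$ yields $B=\Theta((TN^{1/2})^{2/3})$, $\tau_{B-1}=\Theta(T^{1/3}N^{-1/3})$, and $\sum_b\tau_b^2=\Theta(T^{4/3}N^{-1/3})$. Plugging these into the decomposition with $\eta_0=O(1/(T^{2/3}N^{1/3}))$, $y=1/2$, and $H(t)\le O(\log t)$, a routine computation collapses the dominant contributions to
\[
\textit{Regret}_T=\bigl[\Tilde{O}(T^{2/3})+\Tilde{O}(T^{-1/3})\exp(O(|\mathcal{U}|))\bigr]N^{1/3}(|\mathcal{U}|+1)^{1/2},
\]
with the hypothesis $T\ge\max\{|\mathcal{U}|^{3/2}/(N^{1/2}(|\mathcal{U}|+1)^{3/4}),\,N\}$ being precisely the regime in which this expression is sublinear and the algorithm is guaranteed to run for at least one batch.

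The main obstacle is the first step: controlling the mix-loss when $\eta_b$ depends simultaneously on $\alpha_{b+1}^{2s_{b+1}}$ and on $\mu_b$, neither of which is monotone in $b$. Writing $\eta_b=\eta_0/(\mu_b\alpha_{b+1}^{2s_{b+1}})$, one must carefully track how the discrepancy between the $\eta_0$ in the denominator of the per-batch bound and the $\eta_b$ inside the exponential accumulates. Lemma \ref{orderM2} handles the $\alpha^{2s}$ part by bounding $|\mathcal{L}|,|\mathcal{V}|=O(U)$, and the $\mu_b$ part is handled by the same $U\le|\mathcal{U}|$ count of Breaks. The exponent $y=1/2$ is precisely the balancing choice: a larger $y$ would overpay per Break, whereas a smaller $y$ would pay too much in mix-loss when Breaks occur late, and the square root emerges from equating these two costs.
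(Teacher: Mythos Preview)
Your overall scaffolding (mix-loss/mixability-gap split, Lemma \ref{orderM2} for the $\alpha^{2s}$ part, the batch-length arithmetic giving $B,\tau_{B-1},\sum_b\tau_b^2$) matches the paper. But the step you flag as ``the main obstacle'' is precisely where your sketch has a genuine gap, and it also contains a sign error that matters.

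First, the sign error: Algorithm \ref{Algorithm 2} sets $\eta_{b+1}=\eta_0(\mu_{b+1}+1)^y/(\alpha_{b+1})^{2s_{b+1}}$, so the Break counter enters the \emph{numerator}, not the denominator as you wrote. The learning rate \emph{increases} as Breaks accumulate. This direction is essential: the paper defines $\eta_{0,r}:=\eta_0\sqrt{r+1}$ and shows that each $\log N/\eta_0$ contribution to the mix-loss becomes $\log N/\eta_{0,r}$ for batches in phase $\mathcal{B}_r=\{b:\mu_b=r\}$, which is \emph{smaller} for larger $r$. If $(\mu_b+1)^y$ were in the denominator the whole mechanism would run backwards.

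Second, and more importantly, you assert that the mix-loss ``scales as $\Tilde O((|\mathcal U|+1)^y/\eta_0)$'' simply because ``$\mu_b$ is piecewise constant with only $U$ jumps.'' That observation alone does not give $\sqrt{U}$: if all $O(U)$ weight-resets occurred in phase $r=0$, the mix-loss would still be $O(U)/\eta_0$. The paper's key additional ingredient is a \emph{per-phase disintegration constraint}: letting $\rho_r^l=|\mathcal B_r\cap\mathcal L|$, Lemma \ref{orderM2} applied phase-by-phase yields $\rho_0^l+\cdots+\rho_r^l\le 2r+1$ for every $r\le U$ (and an analogous constraint for $\rho_r^v=|\mathcal B_r\cap\mathcal V|$). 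One then solves the LP $\max\sum_{r=0}^U \rho_r^l/\sqrt{r+1}$ subject to these nested constraints; KKT/complementary slackness forces all constraints tight, so $\rho_0^l=1$, $\rho_r^l=2$ for $r\ge1$, and the optimum is $\sum_{r}2/\sqrt{r+1}=O(\sqrt{U+1})$. This LP argument is what actually converts the $O(U)$ count into the $(|\mathcal U|+1)^{1/2}$ factor, and it is absent from your proposal. Your heuristic that ``$y=1/2$ balances early vs.\ late Breaks'' is the right intuition for why $y=1/2$ is optimal, but the $\sqrt{U}$ bound itself needs the constrained optimization, not just the Break count.
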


\begin{algorithm}[t]
  \caption{DBAR-unknown $|\mathcal{U}|$}
  \footnotesize
  \begin{flushleft}
        \textbf{Input:} Add two more inputs $\mu_0 = 0$. $y>0.$
  \end{flushleft}
 \hspace{2mm} \textcolor{blue}{// Modification 1: Add the following IF-ELSE Statement right after Line \ref{line9} in Algorithm \ref{Algorithm 1}.} 

\hspace{4mm}\textbf{if } $\mathcal{P}_{b+1}=\mathcal{P}_b$ \textbf{ then } $\mu_{b+1}=\mu_b$. \textbf{ else } $\mu_{b+1}=\mu_b+1$. \textbf{ end if}
    % \IF{$\mathcal{P}_{b+1}=\mathcal{P}_b$} 
    %     \STATE $\mu_{b+1}=\mu_b$.
    % \ELSE
    %     \STATE $\mu_{b+1}=\mu_b+1$.
    % \ENDIF
   
   \hspace{3mm}\textcolor{blue}{// Modification 2: Incorporate $\mu_{b+1}$ to set $\eta_{b+1}$ in Line \ref{line27} in Algorithm \ref{Algorithm 1}.}

\hspace{4mm}    $\eta_{b+1}=\frac{\eta_0 (\mu_{b+1}+1)^y}{(\alpha_{b+1})^{2s_{b+1}} }$.
  \label{Algorithm 2}
\end{algorithm}

\textit{Proof sketch:}  Define $\eta_{0,r} := \eta_{0} \sqrt{r+1}$. It turns out that for every $r=0,\dots,U$, $\Tilde{O}(\frac{1}{\eta_{0,r}})$ appears in the regret instead of the integrated term $\Tilde{O}(\frac{|\mathcal{U}|+1}{\eta_{0}})$ in Theorem \ref{regret bound1}. The constant $|\mathcal{U}|+1$ is distributed among each $\Tilde{O}(\frac{1}{\eta_{0,r}})$ term. Under the constraints given by the disintegration rule using Lemma \ref{orderM2} for each $r$, one can establish an upper bound of $\Tilde{O}(\frac{(|\mathcal{U}|+1)^{1/2}}{\eta_0})$ on the sum of $\Tilde{O}(\frac{1}{\eta_{0,r}})$ terms over $r=0,\dots,U$ by attaining the coefficients of these terms with complementary slackness in Karush-Kuhn-Tucker (KKT) conditions. 
The details are available in Appendix \ref{app: regret2}. \qed

% $\mathbb{E}_{K_{B-1:0}}\sum_{t=0}^{T} [c_t(x_t, u_t) - c_t(x_t', u_t')] + d\sum_{b=1}^{B-1}\mathcal{I}_{(K_b\neq K_{b-1})} - d\sum_{l=1}^{|\mathcal{L}|} \mathcal{I}_{(i_l' \neq i_{l-1}')}$

Our DBAR algorithm can also be applied to scenarios such as those switched systems (\cite{tousi2008sup, zhao2022event}) in which the transition dynamics and the associated controller pool change according to either the detection of a destabilizing controller or pre-determined 
% finite number of 
time instants (\cite{battistelli2011model}), as well as the ballooning problem (\cite{ghalme2021balloon}) where the controller pool may expand.
We proposed Algorithm \ref{Algorithm 3}, the switching version of DBAR, in Appendix \ref{applications}.

% Adaptive learning rate is applicable even if $\tau$ is fixed. The requirement of $T$ can be alleviated. Trade off between stability and $T$.

% \vspace{-1mm}
\section{Numerical Experiments}\label{numerical}

To demonstrate the main results of this paper, we provide illustrative examples on both linear and nonlinear dynamics with adversarial disturbances. While the simulations are on low-dimensional systems for illustration purposes, similar observations can be made for high-dimensional systems. 

\textit{Example 1:} Consider the following linear dynamical system with $x_t\in \mathbb{R}^2$ and $u_t\in\mathbb{R}^2$:
\begin{align}\label{ex1formula}
    x_{t+1} = 
    \begin{bmatrix}
2 & 1.2\\
1.1 & 2.5 
\end{bmatrix}
x_t + 
\begin{bmatrix}
1 & 0.3\\
0.4 & 0.9 
\end{bmatrix}
u_t+
w_t, \quad t=0,1,\dots,
\end{align}
where $x_0=[100, 200]'$ and $w_t =[\sin \bigr(\frac{t}{5\pi}\bigr), \sin\bigr(\frac{t}{11\pi}\bigr)]'$. We consider a linear policy $u_t = K x_t = 
\begin{bmatrix}
k_1 & k_2\\
k_3 & k_4 
\end{bmatrix}
x_t$
% Note that every (asymptotically) stabilizing controller in the linear system is indeed exponentially stabilizing controller (\cite{kahlilnonlinear}).
and a controller pool 
$K'=\{K\in\mathbb{R}^{2\times 2}:k_1,k_3,k_4 \in \{-3,-2,-1\}, k_2\in\{-1,0,1\}\}$
% \begin{align*}
%     % &K^1=\{K: k_1,k_3,k_4 \in \{-\frac{5}{2},-2,-\frac{3}{2}\}, k_2\in\{-\frac{1}{2},0,\frac{1}{2}\}\},\\&
%     K'=\{K:k_1,k_3,k_4 \in \{-3,-2,-1\}, k_2\in\{-1,0,1\}\}
% \end{align*}
that has
$|\mathcal{U}|=53$
% 53 destabilizing controllers
out of 81 candidate controllers.
% where $K^1$ has 60 stabilizing controllers and $K^2$ has 28 stabilizing controllers out of a total of 81 candidate controllers. 
The goal is to keep the state near the origin, where the cost function is quadratic at each time, namely $c_t(x_t,u_t)=\|x_t\|^2$.

Falsifying destabilizing controllers moderately stabilizes the state norm (\cite{li2023switching}). Compared to their work, Figures \ref{fig:example1stab} and \ref{fig:example1regret} show that both integral components of our algorithm DBAR, dynamic batch length and adaptive learning rate, further lowers the regret and stabilizes the system, where approximately 2/3 of controllers in $K'$ are destabilizing the system. In this case, Figures \ref{fig:example1arcomp} and \ref{fig:example1dbcomp} both demonstrate that the two components of our algorithm mutually reinforce each other, where each component stabilizes the state norm with or without time delay. This supports the observations in Appendix \ref{necdb}.
% as discussed in Remark \ref{algorithmdesignremark}. 
In Appendix \ref{explinear}, we also provide the experiment details and simulation results with noise terms generated by uniform random walk, where $w_t-w_{t-1}$ has a uniform distribution for $t\geq 1$, as well as the results with truncated Gaussian noise for sanity check.

\begin{figure}[t]
    \centering
    % 
    % \footnotesize
    \subfigure[Stability analysis]{\label{fig:example1stab}\includegraphics[height=95pt, width=95pt]{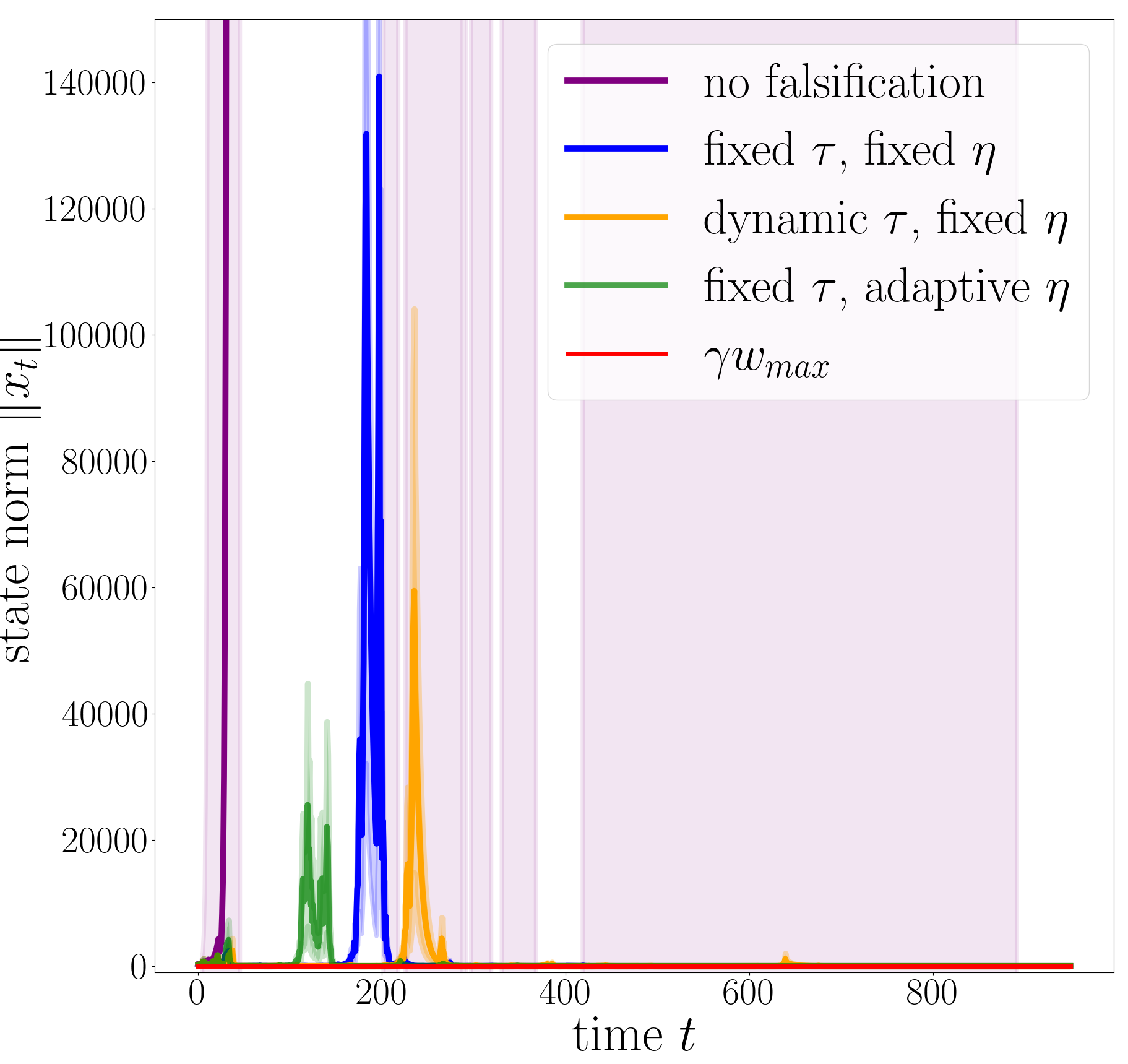}}
     \subfigure[Regret analysis]{\label{fig:example1regret}\includegraphics[height=95pt, width=95pt]{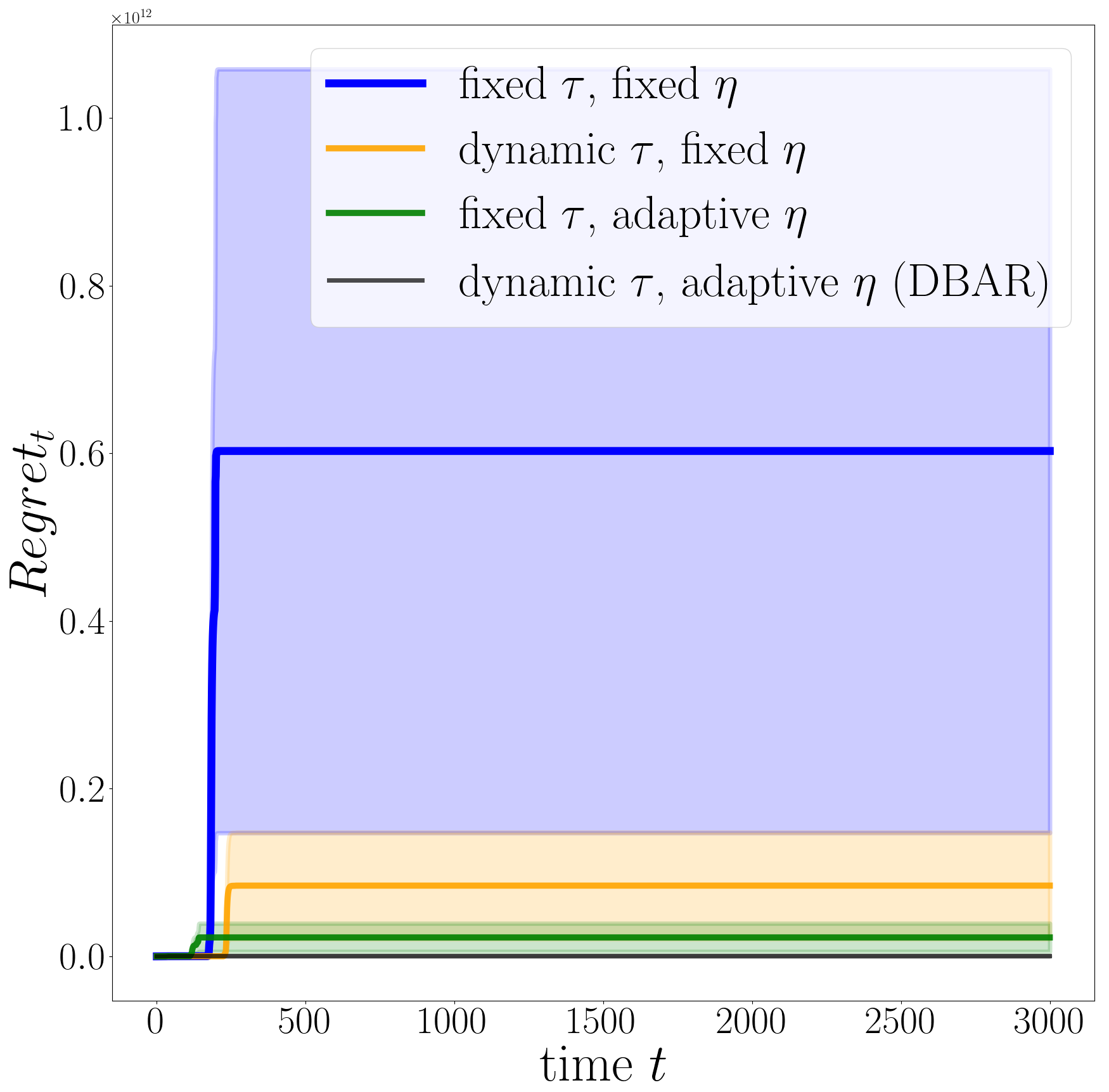}} 
    \subfigure[Adaptive learning rate under dynamic batch length]{\label{fig:example1arcomp}\includegraphics[height=95pt, width=100pt]{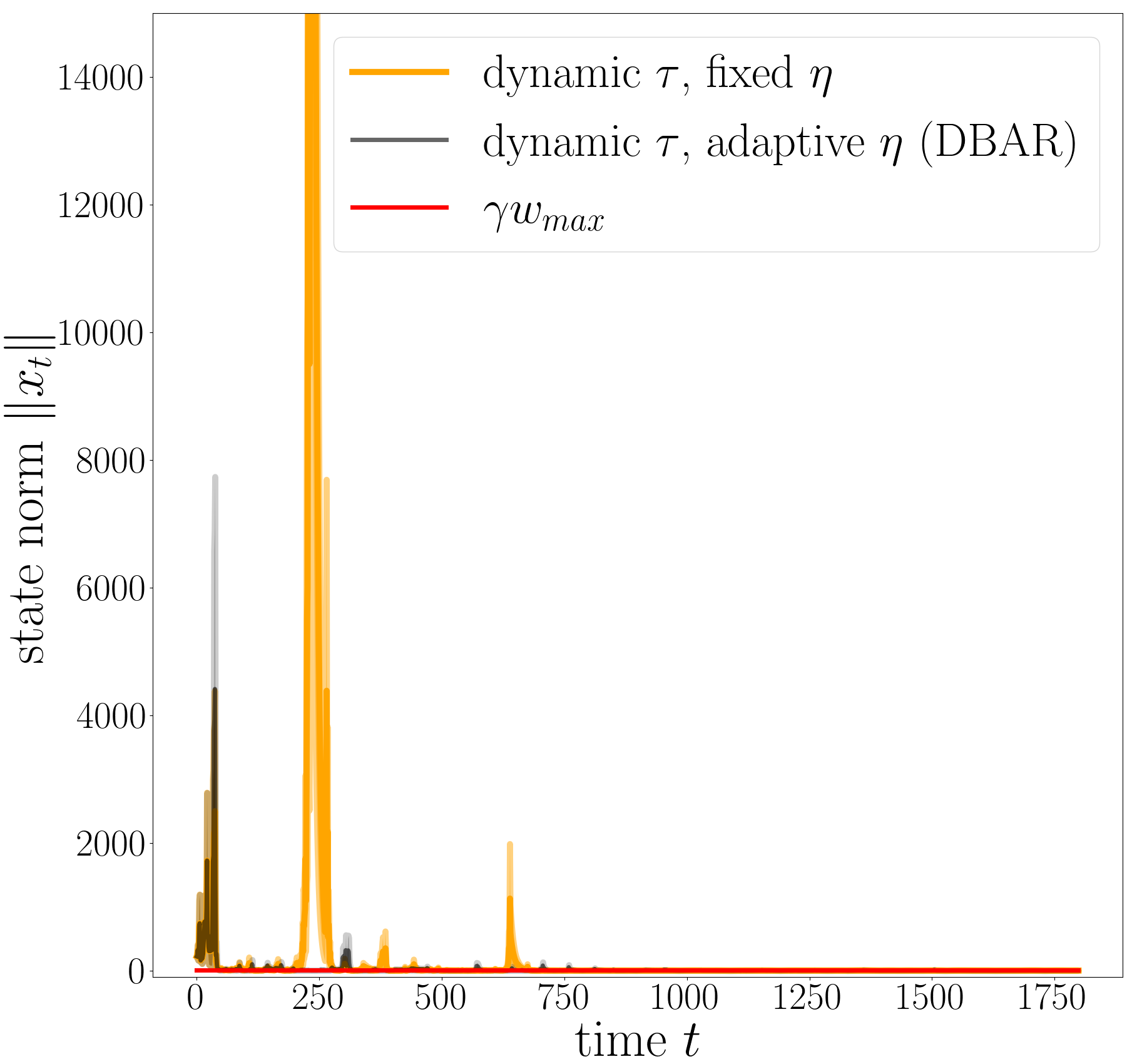}}
    \subfigure[Dynamic batch length under adaptive learning rate]{\label{fig:example1dbcomp}\includegraphics[height=95pt, width=100pt]{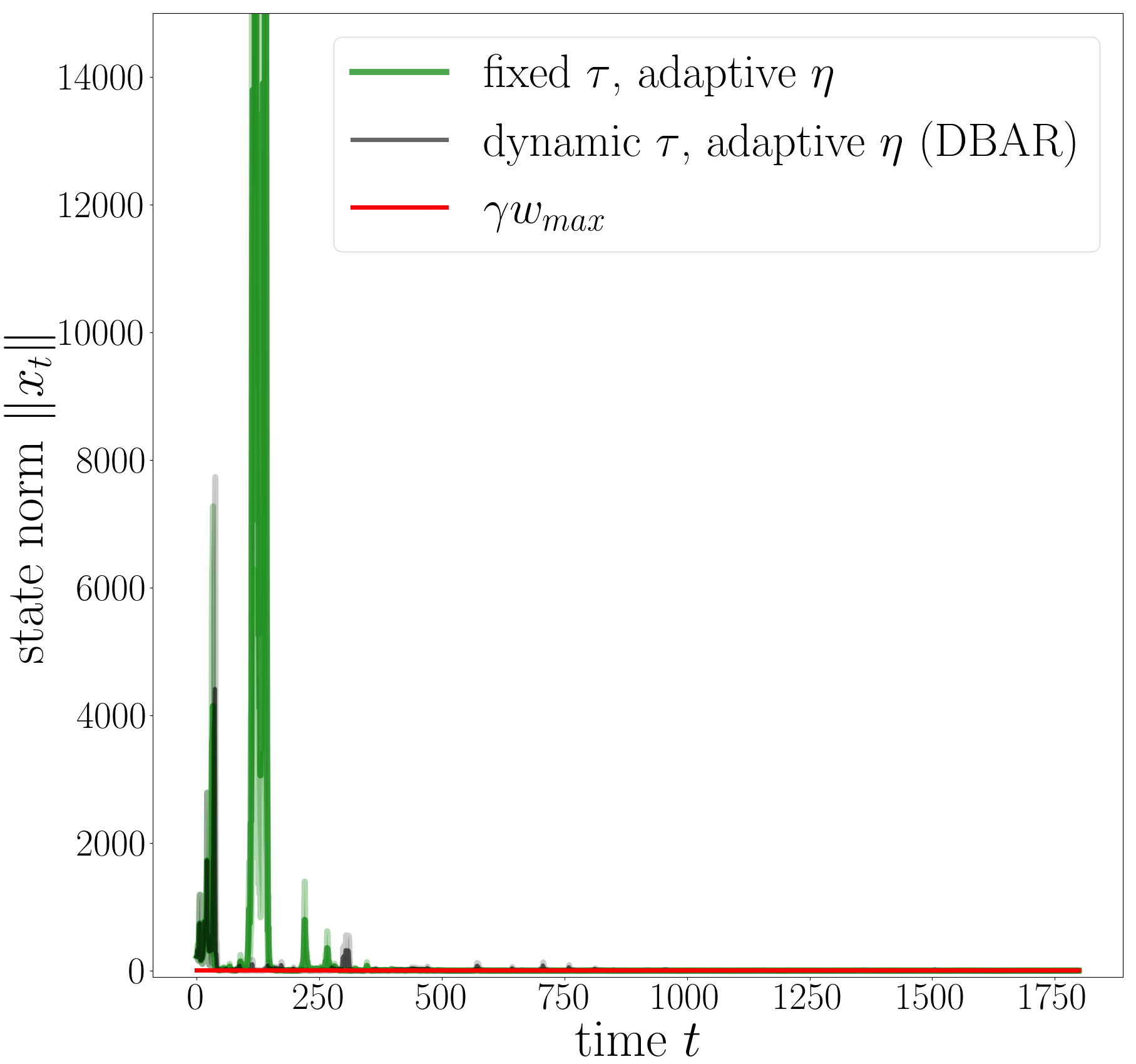}}
    \vspace{-2mm}
    \caption{The stability and the regret in the linear system under sinusoidal noise. \textit{Fixed $\tau$, fixed $\eta$} represents the algorithm in \cite{li2023switching}. Ablation study of the algorithm is presented.}
    \label{fig:example1}
    \vspace{-2mm}
\end{figure}

\begin{figure}[t]
    \centering
    % \footnotesize
     \subfigure[Ball-beam system]{\label{fig:nonlin1}\includegraphics[height=70pt, width=122pt]{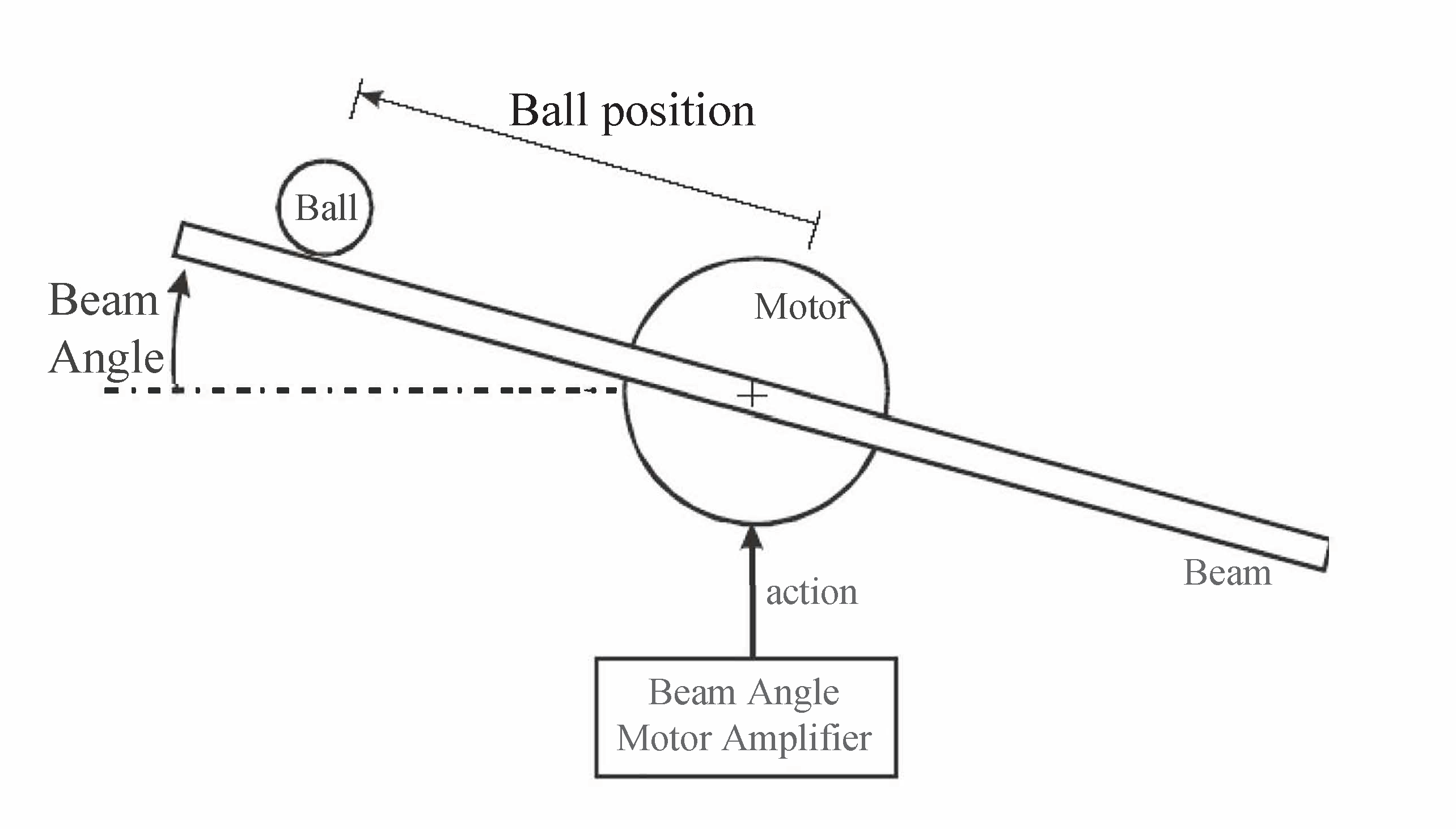}}
    \subfigure[Stability analysis]{\label{fig:nonlin2}\includegraphics[height=95pt, width=107pt]{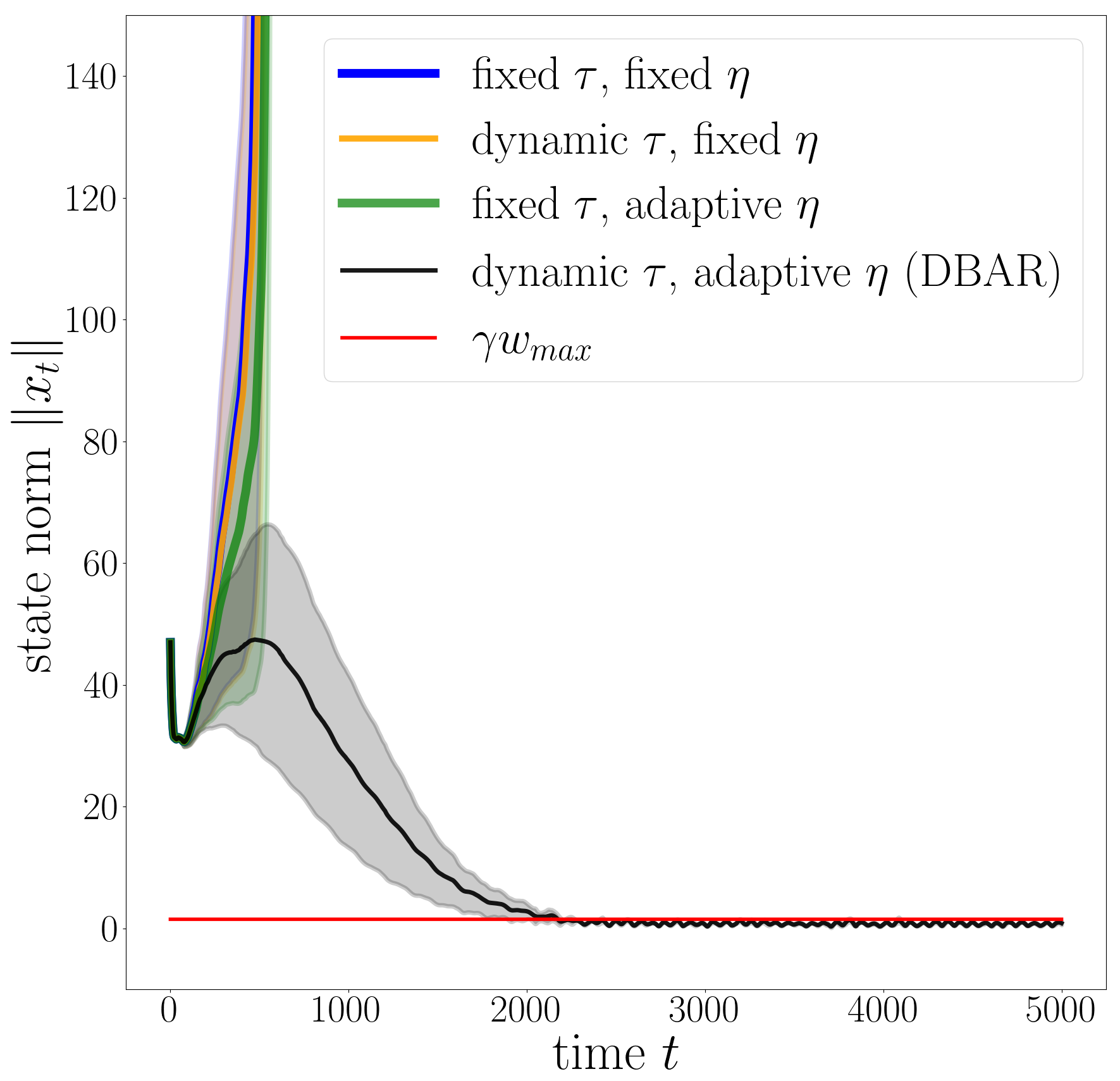}}
    \subfigure[Regret analysis]{\label{fig:nonlin3}\includegraphics[height=95pt, width=107pt]{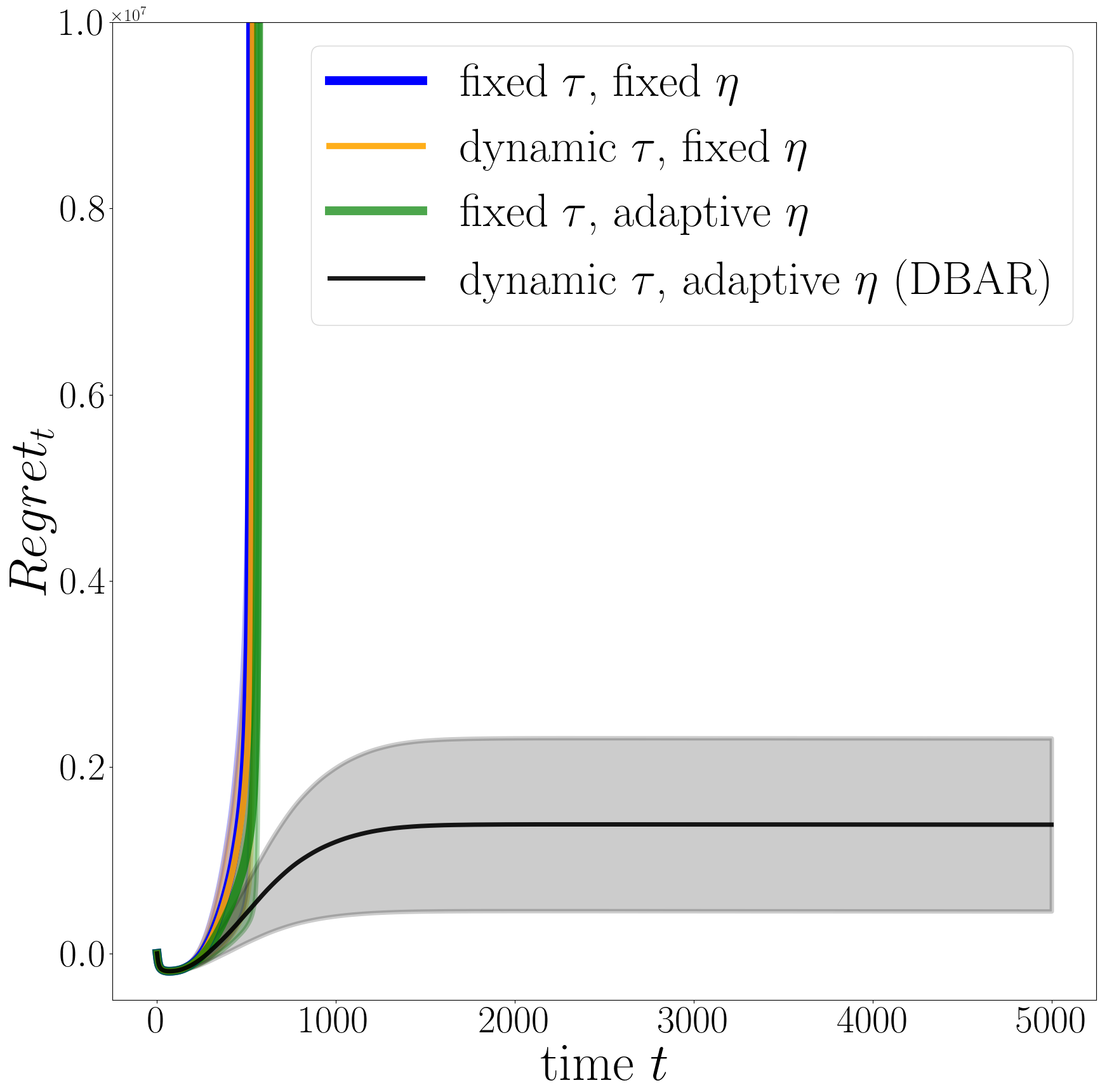}}
    \vspace{-2mm}
    \caption{The stability and the regret in the ball-beam system under sinusoidal noise. We selected $\beta(t) = \min\{10/t,1\}$ (see Definition \ref{iss}) and used squared sum of state and action norms as the cost.}
    \label{fig:example2}
    % \vspace{-2mm}
\end{figure}

\textit{Example 2:} Consider the following nonlinear noise-injected ball-beam system (\cite{hauser1992ball}):
\begin{align}\label{ex2formula}
    \ddot{x} = B( x \dot{\theta}^2 - 9.81 \sin\theta)+3w,  \quad\ddot{\theta} = u, \quad B=0.7143,
    % \vspace{-2mm}
\end{align}
where $x$ is the ball position, $\theta$ is the beam angle, $u$ is the action, and $w(t)=\sin\bigr(\frac{t}{7\pi}\bigr)$. We now adopt a broader notion of stabilizing controllers and choose the policy class to be the nested saturating control (\cite{teel1992global}), without considering exponentially stabilizing notions. In Figures \ref{fig:nonlin2} and \ref{fig:nonlin3}, we observe that 
each of the two components of DBAR does not necessarily stabilize the state norm by itself. However, if they are jointly applied, DBAR effectively stabilizes the explosion of the nonlinear system and enjoys the improved regret, even when we use $\beta(t)=O(1/t)$ to define the stabilizing controllers (see Definition \ref{iss}).
We also provide the simulation results with different polynomially decreasing $\beta(\cdot)$ series at various rates. More experiment details are available in Appendix \ref{expnonlinear}.

% \vspace{-1mm}
% \textit{i.e.}, for $t=1,\dots,T-1$, $w_t-w_{t-1} \sim \text{Uniform}[-\frac{2}{3T},\frac{2}{3T}]^2$, as well as the experiment details

\section{Conclusion}\label{conclusion}

In an online bandit nonlinear control problem, an agent makes decisions with the bandit feedback information, while suffering from nonlinear dynamics and adversarial disturbances. To address such challenges, this paper develops a novel Exp3-type algorithm with theoretical guarantees.
The proposed algorithm uses a dynamic batch length to achieve asymptotic stability of the system without requiring an exponential assumption on stabilizing controllers in the pool. 
Our adaptive learning rate scheme observes the stability of state norm to overcome the inherent multiplicative exponential term in the regret, thereby improving the overall regret. Future directions include extending these results to problems with 
% continuum of controllers and
explicit safety constraints while selecting the best stabilizing controller among a continuum of candidate controllers.
% what did you do?
% future direction

% \subsubsection*{Acknowledgments}
% This work was supported by grants from ARO, ONR, AFOSR, NSF, and the UC Noyce Initiative.

\bibliography{references}
\bibliographystyle{iclr2025_conference}
\newpage

\appendix

\section{Necessity of DBAR under weaker stability notion of required controllers}\label{necdb}

To illustrate how significant the weaker controller stability notion is compared to the exponential notions, let us further present a one-dimensional system, where the current system state is $1$. The goal is to achieve a state near $0$, and we would like to detect this stability by observing whether one arrives at a state less than $1-\epsilon$, where $\epsilon$ is an arbitrarily small positive number. Exponentially stabilizing controllers guarantee to detect the stability in $O(\log (\epsilon))$ time. However, with an asymptotically stabilizing controller, if the controller is designed to keep the system state unchanged for an arbitrarily long time $T$ and then collapse the state towards $0$ afterward, one cannot detect the stability before time $T$ regardless of how small $\epsilon$ is. In such a case, even though the controller ultimately achieves the goal, it may take a lot of time to learn whether a closed-loop system would be stable or not.

Note that dynamic batch length is an important part of our work. If an exponentially stabilizing controller is applied to a system, one can quickly certify the stability. However, if we only have the asymptotically stabilizing controllers as in our problem setting, it may take a long time to observe any abnormal behavior in the closed-loop system. Such an issue cannot be handled by a fixed batch length and in that sense dynamic batch length is a necessary part of our work. In Table \ref{summaryresults}, we have stated the intermediate step "Dynamic Batching" to achieve closed-loop system asymptotic stability, which was not achievable by the previous works. 

Figure \ref{fig:fixedvsdyn} also demonstrates the necessity of a dynamic batch length regardless of the noise assumption. The blue and orange lines represent the state norms generated by a fixed batch length and a dynamic batch length, respectively. With both relatively easier statistical noise and more challenging adversarial noise, the blue line shows a larger state norm than the orange line. Moreover, the blue line occasionally has higher values than the red line, which is our asymptotic stability bound $\gamma w_\text{max}=1.5$, while the orange line remains below the red line after a certain time.

\begin{figure}[ht]


    \centering
    % \footnotesize
     \subfigure[Statistical Noise]{\label{fig:fixedvsdyn1}\includegraphics[height=100.5pt, width=116pt]{figures/statnoise.png}} 
    \subfigure[Adversarial Noise]{\label{fig:fixedvsdyn2}\includegraphics[height=101pt, width=116pt]{figures/advnoise.png}}
    \caption{The state norm with a fixed batch length compared to that with a dynamic batch length. $x_{t+1}=x_t+0.15u_t+w_t$ with $u_t=Kx_t$ where $K\in [-3.0,-2.9,-2.8, \dots, 4.9, 5.0]$. We use $\tau_0=10, \gamma=3$, and set $w_\text{max}=0.5$. The noise $w_t$ is (a) i.i.d. sampled from Uniform$[-0.2, 0.5]$, and (b) $0.15+0.35\sin(\frac{t}{3\pi})$.}
    \label{fig:fixedvsdyn}
\end{figure}

However, it turns out that the resulting regret by dynamic batching contains the multiplicative term $o(T^{1/3})\cdot \exp(O(|\mathcal{U}|))$, which is because a dynamic batch length induces $H(\tau_{B-1})$ to be necessarily multiplied with $\exp(O(|\mathcal{U}|))$. (see Corollary \ref{onlycor}).
Thus, we came up with a careful switching strategy, an adaptive learning rate, to address this issue. The multiplicative term can be resolved with splitting technique by introducing an adaptive learning rate, achieving both closed-loop system asymptotic stability (by dynamic batch length) and the improved regret (by adaptive learning rate), even though we have greatly relaxed the assumption on controller stability (exponential to asymptotic). 
We developed this approach by factoring in every potential exponential term to be multiplied with the initial learning rate $\eta_0=O(T^{-2/3})$, which has a negative exponent on $T$, thus inherently serving as a mitigating factor (see Theorem \ref{knownu1} and the term $\frac{\eta_0 N }{2} \sum_{b=0}^{B-1} \mathbb{E}_{K_{b-1:0}} (w_b^K (i^b))^2$ in Lemma \ref{mixability gap}). 
Due to Lemma \ref{orderM2}, one can explain that the remaining terms produced by the splitting can be bounded by $O(|\mathcal{U}|)$. More details can be found in Appendix \ref{app: regret}.

\section{Glossary}\label{glossarychapter}

Before formally presenting the proofs, we provide a glossary to help readers understand the notations of our algorithm DBAR (see Algorithm \ref{Algorithm 1}).

\begin{table}[ht]
    \caption{Glossary}
    \label{glossary}
    \centering
    \begin{tabular}{|c||c|}
    \hline
    \textbf{Notation} & \textbf{Meaning} \\ \hline  \hline  $x_t$     &   state at time $t$ in the algorithm \\ \hline $x_t^*$  & optimal state at time $t$ \\ \hline
 $u_t$    & action at time $t$ in the algorithm \\ \hline$u_t^*$ & optimal action at time $t$\\ \hline
 $c_t(x_t, u_t)$ & cost at time $t$ \\ \hline $w_\text{max}$ & the maximum norm of the noise \\ \hline
 $T$   & the length of time in the algorithm \\ \hline$B$  & the number of batches in the algorithm \\ \hline
 $t_b$ & the start time for each batch $b$ \\ \hline$\tau_b$  & the batch length at batch $b$ \\ \hline
 $\eta_b$  &   learning rate at batch $b$   \\ \hline $K_b$ & the controller selected at batch $b$  \\ \hline
 $N$ & the number of controllers in the candidate pool  \\ \hline $W_b(k)$ & the weight of controller $k$ at batch $b$ \\ \hline
 $p_b(k) $ & the probability of selecting controller $k$ at batch $b$  \\ \hline $P_b$ & a set of available controllers at batch $b$ \\ \hline 
$\alpha_b, s_b$ & $(\alpha_b)^{s_b}$ indicates the magnitude of the state norm at $t_b$ compared to $\|x_0\|$ \\ \hline
$\beta(t)$, $\gamma$& applying a stabilizing controller incurs $\|x_t\| \leq \beta(t) \|x_0\| + \gamma w_\text{max}$\\ \hline
 $L_f$ & Lipschitz constant for the dynamic $f$ \\ \hline $L_{\pi}$ & Lipschitz constant for any controller $\pi$ \\ \hline
 $U$ & the number of times the Break statement is activated \\ \hline \hspace{1mm} $b_1, \dots, b_U$ \hspace{1mm} & the next batch after the Break statement is activated\\
 \hline
    \end{tabular}
    % \vspace{-2mm}
    % \vspace{-4mm}
\end{table}
% \begin{markdown}
% | Notation     | Description | Notation  | Description | 
% | ---- | -------------------- | ---- | -------------------|
% | $x_t$     |   state at time $t$ in the algorithm | $x_t^*$  | optimal state at time $t$ | 
% | $u_t$    | action at time $t$ in the algorithm |$u_t^*$ | optimal action at time $t$
% | $c_t(x_t, u_t)$ | cost at time $t$ | $w_\text{max}$ | the maximum norm of the noise |
% | $T$   | the length of time in the algorithm | $B$  | the number of batches in the algorithm | 
% | $t_b$ | the start time for each batch $b$ | $\tau_b$  | the batch length at batch $b$ | 
% | $\eta_b$  |   learning rate at batch $b$   | $K_b$ | the controller selected at batch $b$  | 
% | $N$ | the number of controllers in the candidate pool  | $W_b(k)$ | the weight of controller $k$ at batch $b$ | 
% | $p_b(k) $ | the probability of selecting controller $k$ at batch $b$  | $P_b$ | a set of available controllers at batch $b$ | 
% |$\alpha_b, s_b$ | $(\alpha_b)^{s_b}$ indicates the magnitude of the state norm at $t_b$ compared to $\|x_0\|$ | $\beta(t)$, $\gamma$| applying a stabilizing controller incurs $\|x_t\| \leq \beta(t) \|x_0\| + \gamma w_\text{max}$| 
% | $L_f$ | Lipschitz constant for the dynamic $f$ | $L_{\pi}$ | Lipschitz constant for any controller $\pi$ | 
% | $U$ | the number of times the Break statement is activated | $b_1, \dots, b_U$ | the next batch after the Break statement is activated|
% \end{markdown}

\section{Stability Proof}\label{app: stability}

Let $b_1,\dots,b_U$ denote the next batch after the Break statement is activated; \textit{i.e.}, $\|x_{t_{b_u}}\|>\beta(t_{b_u}-t_{b_u -1})\|x_{t_{b_u -1}}\| + \gamma w_\text{max}$ for every $u=1,\dots,U$. For future use, let $b_0=0$ and $b_{U+1}=B$. Accordingly, $t_{b_0}=t_0=0$ and $t_{b_{U+1}}=t_B=T+1$.

\begin{lemma}[Restatement of Lemma \ref{asymptotic2}]\label{asymptotic}
    Define $H(t):= \sum_{i=0}^{t-1} \beta(i)$. Then, we have 
    \[
    \lim_{t\to\infty}\frac{H(t)}{t} = 0.
    \]
\end{lemma}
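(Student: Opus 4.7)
The plan is to follow exactly the dichotomy that the proof sketch hints at: treat the case of bounded $H$ and unbounded $H$ separately, since the first case is immediate and the second case is where all the work lies.

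First I would note that $\beta$ is non-increasing with $\beta(t)\to 0$, so $\beta(\cdot)\geq 0$ eventually (and we may as well assume $\beta(\cdot)\geq 0$ throughout, since otherwise we bound $|\beta|$ similarly). If $\lim_{t\to\infty} H(t) = C < \infty$, then $H(t)/t \leq C/t \to 0$ and we are done.

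In the unbounded case $H(t)\to\infty$, rather than invoking L'H\^opital's rule (which would require passing to a continuous interpolant, as the sketch does via $\beta(0)+\int_0^{t-1}\beta(x)\,dx$), I would give a direct $\varepsilon$-argument which is a discrete Ces\`aro/Stolz step. Fix $\varepsilon>0$. Since $\beta(i)\to 0$, choose $N$ such that $\beta(i)<\varepsilon/2$ for all $i\geq N$. Splitting the sum at $N$, for every $t>N$ one has
\begin{align*}
\frac{H(t)}{t} \;=\; \frac{1}{t}\sum_{i=0}^{N-1}\beta(i) + \frac{1}{t}\sum_{i=N}^{t-1}\beta(i) \;\leq\; \frac{H(N)}{t} + \frac{t-N}{t}\cdot\frac{\varepsilon}{2} \;\leq\; \frac{H(N)}{t}+\frac{\varepsilon}{2}.
\end{align*}
Since $H(N)$ is a fixed finite number, choosing $t$ large enough makes $H(N)/t<\varepsilon/2$, giving $H(t)/t<\varepsilon$. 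As $\varepsilon$ was arbitrary, $\lim_{t\to\infty} H(t)/t = 0$.

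There is no real obstacle here: the only thing to be careful about is not accidentally invoking L'H\^opital on a discrete sum without justifying the interpolation. The $\varepsilon$-$N$ argument above sidesteps this entirely and handles both cases with the same bookkeeping (the bounded case falls out of the same inequality by letting $N\to\infty$, but it is cleaner to dispose of it separately as a one-liner). The non-increasing assumption on $\beta$ is not needed for this direct approach, only that $\beta(i)\to 0$; so the argument is in fact slightly stronger than what the statement requires.
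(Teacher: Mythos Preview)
Your proof is correct and takes a genuinely different route from the paper. The paper handles the divergent case by bounding $H(t)$ above via the integral $\beta(0)+\int_0^{t-1}\beta(x)\,dx$ (using monotonicity of $\beta$) and then invoking L'H\^opital's rule. You instead give the standard Ces\`aro-mean $\varepsilon$--$N$ argument, which is more elementary: it avoids the passage to a continuous interpolant, avoids L'H\^opital, and as you note it does not even use the monotonicity of $\beta$, only that $\beta(i)\to 0$. In fact your splitting argument already covers both cases simultaneously, so the initial dichotomy is not strictly needed (you observe this yourself). The one cosmetic point is that your remark ``we may as well assume $\beta(\cdot)\geq 0$'' is automatic here: since $\beta$ is non-increasing with $\lim_{t\to\infty}\beta(t)=0$, we have $\beta(t)\geq 0$ for all $t$, so no separate handling of $|\beta|$ is required. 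What the paper's approach buys is a slightly sharper intermediate inequality $H(t)\leq \beta(0)+\int_0^{t-1}\beta(x)\,dx$, which could be reused elsewhere; what your approach buys is self-containment and a weaker hypothesis.
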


\begin{proof}
    Recall that we designed $\beta(\cdot)$ to be non-increasing and nonnegative. Then, we have $\beta(i)\leq \int_{i-1}^{i}\beta(x)dx$ for every integer $i\geq 1$. Using the inequality, one can write
    \begin{align}\label{H-beta2}
     0\leq 
     H(t)
    % =\sum_{i=0}^{t-1} \beta(i)
    = \beta(0)+\sum_{i=1}^{t-1} \beta(i)
    \leq \beta(0)+\int_{0}^{t-1}\beta(x)dx.
    \end{align}
    If $\lim_{t\to\infty}H(t)<\infty$, clearly $\lim_{t\to\infty}\frac{H(t)}{t}=0$ holds. If $\lim_{t\to\infty}H(t)=\infty$, we leverage L'H\^opital's rule with $\beta(t)\to 0$ as $t\to\infty$ to derive
    \begin{align*}
    \lim_{t\to\infty}\frac{H(t)}{t}\leq \lim_{t\to\infty}\frac{\beta(0)+\int_{0}^{t-1}\beta(x)dx}{t} = \lim_{t\to\infty} \frac{\beta(t-1)}{1}=0,
    \end{align*}
    where the first inequality follows from (\ref{H-beta2}).
\end{proof}

\begin{lemma}\label{H-tau}
    For $0\leq j\leq k$, we have 
    \[
    \frac{H(\tau_k)}{H(\tau_j)} \leq \frac{\tau_k}{\tau_j}.
    \]
\end{lemma}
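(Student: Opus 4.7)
My plan is to reformulate the claim as a monotonicity statement for the sequence $H(t)/t$. Since $\tau_b$ is non-decreasing (Assumption \ref{batch length}) and $\tau_0 > 0$, both $\tau_j$ and $\tau_k$ are positive integers with $\tau_j \leq \tau_k$. Cross-multiplying, the desired inequality $\frac{H(\tau_k)}{H(\tau_j)} \leq \frac{\tau_k}{\tau_j}$ is equivalent to $\frac{H(\tau_k)}{\tau_k} \leq \frac{H(\tau_j)}{\tau_j}$. So it suffices to show that $t \mapsto H(t)/t$ is non-increasing on the positive integers.

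The key step is to use that $\beta(\cdot)$ is non-increasing. I would compute the discrete difference
\[
\frac{H(t+1)}{t+1} - \frac{H(t)}{t} = \frac{t\, H(t+1) - (t+1)\, H(t)}{t(t+1)} = \frac{t\,\beta(t) - H(t)}{t(t+1)},
\]
and then observe that, because $\beta(i) \geq \beta(t)$ for every $0 \leq i \leq t-1$,
\[
H(t) = \sum_{i=0}^{t-1}\beta(i) \geq t\,\beta(t),
\]
so the numerator is non-positive and $H(t+1)/(t+1) \leq H(t)/t$. Iterating this from $\tau_j$ up to $\tau_k$ (both positive integers with $\tau_j \leq \tau_k$) yields $H(\tau_k)/\tau_k \leq H(\tau_j)/\tau_j$, which rearranges to the claim.

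This proof is essentially routine; the only minor subtlety to flag is that $H(\tau_j)$ appears in a denominator, so one should note that $\tau_j \geq \tau_0 > 0$ and $\beta(0) = 1 > 0$ together give $H(\tau_j) \geq \beta(0) > 0$, ensuring the ratio is well-defined. I do not anticipate any real obstacle beyond this bookkeeping.
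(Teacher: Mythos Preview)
Your proof is correct. It differs from the paper's only in presentation: the paper writes the inequality in one shot as
\[
\frac{H(\tau_k)}{H(\tau_j)} = 1 + \frac{\sum_{i=\tau_j}^{\tau_k-1}\beta(i)}{H(\tau_j)} \leq 1 + \frac{(\tau_k-\tau_j)\,\beta(\tau_j)}{\tau_j\,\beta(\tau_j)} = \frac{\tau_k}{\tau_j},
\]
using $\beta(\tau_j)$ as a common pivot for numerator and denominator, whereas you instead prove the (slightly stronger) monotonicity of $t\mapsto H(t)/t$ on all positive integers via the discrete difference $t\beta(t)-H(t)\le 0$. Both arguments rest on exactly the same fact, the non-increasing property of $\beta(\cdot)$; your route yields a reusable monotonicity statement, while the paper's is a direct two-point comparison that avoids the telescoping step. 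Your remark that $H(\tau_j)\ge\beta(0)=1>0$ also covers a minor edge case (when $\beta(\tau_j)=0$) that the paper's display handles only implicitly.
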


\begin{proof}
    For $0\leq j\leq k$, 
    \[
    \frac{H(\tau_k)}{H(\tau_j)} \leq \frac{H(\tau_j)+\sum_{i=\tau_j}^{\tau_{k}-1}\beta(i)}{H(\tau_j)} \leq 1+\frac{(\tau_k - \tau_j)\cdot\beta(\tau_j)}{\tau_j\cdot\beta(\tau_j)}=\frac{\tau_k}{\tau_j},
    \]
    where the last inequality is due to the non-increasing property of $\beta(\cdot)$. The equality holds when $\beta(0)=\dots=\beta(\tau_{k}-1)$.
\end{proof}

\begin{lemma}[Sum of state norms in a single batch]\label{single batch}
    In Algorithm \ref{Algorithm 1}, for each batch $b=0,1,\dots,B-1$, the following inequality holds:
    \[
    \sum_{t=t_b}^{t_{b+1}-1}\|x_t\| \leq H(\tau_b) \|x_{t_b}\| +\gamma w_{\text{max}} (\tau_b-1) 
    \]
\end{lemma}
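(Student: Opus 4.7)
The plan is to exploit the stability check in Lines \ref{line5}-\ref{line8} of Algorithm \ref{Algorithm 1}. Whenever the inner \textbf{for} loop completes an iteration at time $t$ without triggering the Break, it must be the case that the observed $x_{t+1}$ passed the test $\|x_{t+1}\| \leq \beta(t+1-t_b)\|x_{t_b}\| + \gamma w_\text{max}$. Consequently, for every intermediate index $t \in \{t_b+1, \dots, t_{b+1}-1\}$ the bound
\[
\|x_t\| \leq \beta(t-t_b)\,\|x_{t_b}\| + \gamma w_\text{max}
\]
holds, while for the index $t = t_b$ itself the trivial identity $\|x_{t_b}\| = \beta(0)\,\|x_{t_b}\|$ (recall $\beta(0)=1$ from Definition \ref{iss}) takes its place.

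With this in hand the proof reduces to summing telescopically. First I would write
\[
\sum_{t=t_b}^{t_{b+1}-1}\|x_t\| \;=\; \|x_{t_b}\| + \sum_{t=t_b+1}^{t_{b+1}-1}\|x_t\| \;\leq\; \|x_{t_b}\|\Bigl[\beta(0) + \sum_{i=1}^{t_{b+1}-t_b-1}\beta(i)\Bigr] + \gamma w_\text{max}\,(t_{b+1}-t_b-1).
\]
The bracketed quantity is exactly $H(t_{b+1}-t_b)$ by definition of $H$. Letting $T_b := t_{b+1}-t_b$ denote the realized batch length, we have $T_b \leq \tau_b$ because the inner loop ranges over at most $\tau_b$ iterations; since $\beta(\cdot)$ is non-increasing and nonnegative, $H(\cdot)$ is non-decreasing, so $H(T_b)\leq H(\tau_b)$ and $T_b - 1 \leq \tau_b - 1$. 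Substituting these two monotonicity bounds yields exactly the claimed inequality.

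I do not expect any genuine obstacle: the statement is essentially a one-line bookkeeping consequence of Definition \ref{iss} combined with the Break-triggered stability certificate embedded in the algorithm. The only subtlety worth highlighting in the write-up is that the sum stops at $t_{b+1}-1$, not $t_{b+1}$, which is precisely why the possibly-violating state $x_{t_{b+1}}$ (the one that triggered the Break, if any) never enters the estimate, and why $\tau_b - 1$ rather than $\tau_b$ multiplies $\gamma w_\text{max}$.
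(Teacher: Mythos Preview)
Your proof is correct and follows essentially the same approach as the paper: bound each $\|x_t\|$ for $t_b < t \le t_{b+1}-1$ using the fact that the stability test in Line~\ref{line5} was passed at the previous iteration, handle $t=t_b$ via $\beta(0)=1$, and sum. Your treatment of the case $t_{b+1}-t_b < \tau_b$ (via monotonicity of $H$) is in fact slightly more explicit than the paper's, which simply asserts $\tau_b = t_{b+1}-t_b$.
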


\begin{proof}
    For $t=t_b$, we have $\|x_{t}\|\leq \beta(0)\|x_{t_b}\|$ since $\beta(0)= 1$. For $t_b<t\leq t_{b+1}-1$, we have
    \begin{align}\label{state}
    \|x_t\|\leq \beta(t-t_b)\|x_{t_b}\| +\gamma w_{\text{max}}.
    \end{align}
    Summing up all inequalities gives
    \[
    \sum_{t=t_b}^{t_{b+1}-1}\|x_t\| \leq H(t_{b+1}-t_{b}) \|x_{t_b}\| +\gamma w_{\text{max}} (t_{b+1}-t_{b}-1).
    \]
    Since Line \ref{line5} of Algorithm \ref{Algorithm 1} is not satisfied, $\tau_b=t_{b+1}-t_b$. This completes the proof.
\end{proof}

\begin{lemma}[Weighted sum of state norms between the two consecutive Break statements]\label{weighted two Break}
    In Algorithm \ref{Algorithm 1}, suppose that $\frac{\tau_1}{\tau_0}\beta(\tau_0) <1$. For every next batch index after the Break statement $u=0,\dots,U$, the following inequality holds:
    \[
    \sum_{b=b_u}^{b_{u+1}-1} H(\tau_b)\|x_{t_b}\| \leq \frac{1}{1-\frac{\tau_{b_u+1}}{\tau_{b_u}}\beta(\tau_{b_u})}H(\tau_{b_u})\|x_{t_{b_u}}\|+\frac{\gamma w_{\text{max}}}{1-\beta(\tau_{b_u+1})}\sum_{b=b_u+1}^{b_{u+1}-1}H(\tau_b).
    \]
\end{lemma}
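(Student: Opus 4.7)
My plan is to work within a single inter-Break segment $b = b_u, b_u+1, \dots, b_{u+1}-1$ and first control $\|x_{t_b}\|$ by a telescoped contraction. For every $b = b_u, \dots, b_{u+1}-2$ the test in Line~\ref{line5} of Algorithm~\ref{Algorithm 1} never fires (by the definition of $(b_u)$), so applying that test at the terminal inner step $t+1 = t_{b+1}$ yields the one-step inequality
\[
\|x_{t_{b+1}}\| \leq \beta(\tau_b)\|x_{t_b}\| + \gamma w_{\text{max}}.
\]
Writing $A_b := \|x_{t_b}\|$ and using $\beta(\tau_b) \leq \beta(\tau_{b_u+1})$ for $b \geq b_u+1$ (non-increasing $\beta$, non-decreasing $\tau_b$), I unroll this recurrence from $b_u$ by induction to obtain, for every $b_u+1 \leq b \leq b_{u+1}-1$,
\[
A_b \;\leq\; \beta(\tau_{b_u})\,\beta(\tau_{b_u+1})^{b-b_u-1}A_{b_u} \;+\; \frac{\gamma w_{\text{max}}}{1-\beta(\tau_{b_u+1})},
\]
where the denominator is positive since $\frac{\tau_1}{\tau_0}\beta(\tau_0)<1$ forces $\beta(\tau_{b_u+1}) \leq \beta(\tau_0) < 1$.

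Next, I multiply both sides by $H(\tau_b)$ and sum over $b = b_u+1,\dots,b_{u+1}-1$. The additive noise part contributes exactly $\frac{\gamma w_{\text{max}}}{1-\beta(\tau_{b_u+1})}\sum_{b=b_u+1}^{b_{u+1}-1}H(\tau_b)$, which matches the second term of the statement verbatim. For the state part, the naive bound $H(\tau_b) \leq (\tau_b/\tau_{b_u})H(\tau_{b_u})$ from Lemma~\ref{H-tau} is not enough on its own: I must further absorb the telescoping batch-length ratio $\tau_b/\tau_{b_u} = \prod_{j=b_u}^{b-1}\tau_{j+1}/\tau_j$ into each term of the emerging geometric sum. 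Using Assumption~\ref{batch length}---specifically that $\tau_{j+1}/\tau_j$ is bounded by $\tau_{b_u+1}/\tau_{b_u}$ for $j \geq b_u$, as is the case for polynomial batches---I obtain $\tau_b/\tau_{b_u} \leq (\tau_{b_u+1}/\tau_{b_u})^{b-b_u}$, and together with $\beta(\tau_{b_u+1}) \leq \beta(\tau_{b_u})$ this yields the geometric common ratio $r := \frac{\tau_{b_u+1}}{\tau_{b_u}}\beta(\tau_{b_u})$.

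Since $r \leq \frac{\tau_1}{\tau_0}\beta(\tau_0) < 1$, summing the geometric series bounds the state contribution by $\frac{r}{1-r}H(\tau_{b_u})A_{b_u}$. Finally, appending the missing $b = b_u$ term $H(\tau_{b_u})A_{b_u}$ to recover the full sum $\sum_{b=b_u}^{b_{u+1}-1}$ combines with $\frac{r}{1-r}H(\tau_{b_u})A_{b_u}$ into $\frac{1}{1-r}H(\tau_{b_u})A_{b_u}$, exactly the first term of the stated bound.

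The main technical obstacle is getting the geometric common ratio to be precisely $\frac{\tau_{b_u+1}}{\tau_{b_u}}\beta(\tau_{b_u})$, rather than either $\beta(\tau_{b_u})$ (too loose on the batch side) or $\frac{\tau_1}{\tau_0}\beta(\tau_0)$ (too loose overall). The trick is to interleave Lemma~\ref{H-tau} with the telescoping of $\tau_{j+1}/\tau_j$ \emph{inside} the geometric sum, rather than bounding $H(\tau_b)$ once and for all before summing. A secondary but routine point is that the one-step contraction is valid exactly on $b = b_u,\dots,b_{u+1}-2$ by the definition of $(b_u)$, which is precisely the range needed to estimate $A_b$ for $b \leq b_{u+1}-1$.
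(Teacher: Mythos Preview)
Your proposal is correct and follows essentially the same route as the paper: both arguments use the one-step contraction $\|x_{t_{b+1}}\|\le\beta(\tau_b)\|x_{t_b}\|+\gamma w_{\max}$ for $b_u\le b\le b_{u+1}-2$, combine it with Lemma~\ref{H-tau} and the non-increasing ratio $\tau_{b+1}/\tau_b$ to produce the geometric rate $r=\frac{\tau_{b_u+1}}{\tau_{b_u}}\beta(\tau_{b_u})$, and then sum. The only cosmetic difference is that the paper recurses directly on the weighted quantity $H(\tau_b)\|x_{t_b}\|$ (so $H(\tau_{a+1})/H(\tau_a)$ appears inside the product before invoking Lemma~\ref{H-tau}), whereas you unroll $\|x_{t_b}\|$ first and multiply by $H(\tau_b)$ afterward; the ingredients and assumptions are identical.
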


\begin{proof}
    Since we designed $(\tau_b)_{b\geq 0}$ to have a non-decreasing $\tau_b$ and non-increasing $\frac{\tau_{b+1}}{\tau_b}$, notice that we have $\beta(\tau_b)\leq \frac{\tau_{b+1}}{\tau_b}\beta(\tau_b) \leq \frac{\tau_{b}}{\tau_{b-1}}\beta(\tau_{b-1}) \leq \frac{\tau_1}{\tau_0} \beta(\tau_0)<1$ for every $b\geq 1$ since $\beta(\cdot)$ is non-increasing.

    If $b_{u+1}=b_u+1$, the inequality clearly holds since $\frac{1}{1-\frac{\tau_{b_u+1}}{\tau_{b_u}}\beta(\tau_{b_u})}>0$. Otherwise, consider the following inequality for $b_u<b\leq b_{u+1}-1$: 
    \begin{align*}
    H(\tau_b)\|x_{t_b}\| &\leq H(\tau_b)\beta(\tau_{b-1})\|x_{t_{b-1}}\| + H(\tau_b)\gamma w_{\text{max}} \\
    & = \frac{H(\tau_b)}{H(\tau_{b-1})}\beta(\tau_{b-1})H(\tau_{b-1})\|x_{t_{b-1}}\| + H(\tau_b)\gamma w_{\text{max}},
    \end{align*}
    where the inequality holds since Line \ref{line5} of Algorithm \ref{Algorithm 1} is not satisfied. Recursively applying this inequality, one arrives at
    \begin{align}\label{hx relationship}
    \nonumber H(\tau_b)\|x_{t_b}\| &\leq \Pi_{a=b_u}^{b-1} \Bigr[\frac{H(\tau_{a+1})}{H(\tau_a)}\beta(\tau_a)\Bigr]\cdot  H(\tau_{b_u})\|x_{t_{b_u}}\|+H(\tau_b) \gamma w_{\text{max}} (1+\sum_{b'=b_u+1}^{b-1}\Pi_{a=b'}^{b-1}\beta(\tau_a)) \\\nonumber &\leq \Pi_{a=b_u}^{b-1} \Bigr[\frac{H(\tau_{a+1})}{H(\tau_a)}\beta(\tau_a)\Bigr]\cdot  H(\tau_{b_u})\|x_{t_{b_u}}\|+H(\tau_b) \gamma w_{\text{max}} (1+\sum_{b'=b_u+1}^{b-1}[\beta(\tau_{b_u+1})]^{b-b'}) \\ &\leq \Pi_{a=b_u}^{b-1} \Bigr[\frac{H(\tau_{a+1})}{H(\tau_a)}\beta(\tau_a)\Bigr]\cdot  H(\tau_{b_u})\|x_{t_{b_u}}\|+H(\tau_b) \frac{\gamma w_{\text{max}}}{1-\beta(\tau_{b_u+1})} \\\nonumber&\leq \Pi_{a=b_u}^{b-1} \Bigr[\frac{\tau_{a+1}}{\tau_a}\beta(\tau_a)\Bigr]\cdot  H(\tau_{b_u})\|x_{t_{b_u}}\|+H(\tau_b) \frac{\gamma w_{\text{max}}}{1-\beta(\tau_{b_u+1})} \\\nonumber &\leq \Bigr[\frac{\tau_{b_u+1}}{\tau_{b_u}}\beta(\tau_{b_u})\Bigr]^{b-b_u}\cdot  H(\tau_{b_u})\|x_{t_{b_u}}\|+H(\tau_b) \frac{\gamma w_{\text{max}}}{1-\beta(\tau_{b_u+1})} ,
    \end{align}
    where the second inequality comes from the non-increasing property of $\beta(\cdot)$, the third inequality is by $\beta(\tau_{b_u+1})<1$, the fourth inequality is due to Lemma \ref{H-tau}, and the last inequality comes from the non-increasing property of $\frac{\tau_{b+1}}{\tau_b}\beta(\tau_b)$. Since $\frac{\tau_{b_u+1}}{\tau_{b_u}}\beta(\tau_{b_u})<1$, summing up the above inequalities for $b_u<b\leq b_{u+1}-1$ completes the proof.
\end{proof}

\begin{lemma}[Next state norm after the Break statement]\label{next Break}
    Define $M_1:= L_f(1+L_\pi)\gamma w_\text{max}+L_f(\pi_{0,\text{max}}+w_\text{max})+f_0.$ Then, for every $u=1,\dots,U$, we have 
    \[
    \|x_{t_{b_u}}\|\leq L_f(1+L_\pi)\beta(0)\|x_{t_{b_u-1}}\|+ M_1.
    \]
\end{lemma}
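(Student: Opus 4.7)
\textit{Proof proposal for Lemma \ref{next Break}.} The plan is to unroll the state at time $t_{b_u}$ one step back through the dynamic $f$, control the pre-Break state $x_{t_{b_u}-1}$ using the fact that the stability check passed on this state inside batch $b_u-1$, and then combine Lipschitz bounds from Assumptions \ref{transition dynamics} and \ref{controller set} together with the bounds on $\pi_i(0)$, $w_t$, and $f(0,0,0)$.

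First, by construction $t_{b_u}$ is the successor of the time step at which the Break was triggered inside batch $b_u-1$, so
\[
x_{t_{b_u}} = f\bigl(x_{t_{b_u}-1},\,\pi_{K_{b_u-1}}(x_{t_{b_u}-1}),\,w_{t_{b_u}-1}\bigr).
\]
Using the $L_f$-Lipschitz continuity of $f$ (Assumption \ref{transition dynamics}) against the reference point $(0,0,0)$ and applying the triangle inequality, together with $\|f(0,0,0)\|\leq f_0$, yields
\[
\|x_{t_{b_u}}\| \leq L_f\bigl(\|x_{t_{b_u}-1}\| + \|\pi_{K_{b_u-1}}(x_{t_{b_u}-1})\| + \|w_{t_{b_u}-1}\|\bigr) + f_0.
\]
The $L_\pi$-Lipschitz continuity of controllers and the bound $\|\pi_i(0)\|\leq \pi_{0,\text{max}}$ from Assumption \ref{controller set} give $\|\pi_{K_{b_u-1}}(x_{t_{b_u}-1})\|\leq L_\pi\|x_{t_{b_u}-1}\|+\pi_{0,\text{max}}$, and $\|w_{t_{b_u}-1}\|\leq w_\text{max}$ by definition of $\mathcal{W}$, so
\[
\|x_{t_{b_u}}\| \leq L_f(1+L_\pi)\|x_{t_{b_u}-1}\| + L_f(\pi_{0,\text{max}}+w_\text{max}) + f_0.
\]

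Second, I would control $\|x_{t_{b_u}-1}\|$ via the check in Line \ref{line5}. If $t_{b_u}-1>t_{b_u-1}$, then the IF condition was false at the previous iteration, so $\|x_{t_{b_u}-1}\|\leq \beta(t_{b_u}-1-t_{b_u-1})\|x_{t_{b_u-1}}\|+\gamma w_\text{max}\leq \beta(0)\|x_{t_{b_u-1}}\|+\gamma w_\text{max}$ by the non-increasing property of $\beta(\cdot)$. If instead $t_{b_u}-1=t_{b_u-1}$, then the same inequality holds trivially because $\beta(0)=1$. Either way,
\[
\|x_{t_{b_u}-1}\|\leq \beta(0)\|x_{t_{b_u-1}}\| + \gamma w_\text{max}.
\]

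Finally, substituting this bound into the previous display and collecting constants reproduces exactly the right-hand side $L_f(1+L_\pi)\beta(0)\|x_{t_{b_u-1}}\| + M_1$. The only mildly delicate point is the edge case $t_{b_u}-1=t_{b_u-1}$, where no explicit check bound is available for $\|x_{t_{b_u}-1}\|$; but this is immediately resolved by the normalization $\beta(0)=1$ from Definition \ref{iss}. No obstacle beyond bookkeeping is anticipated. \qed
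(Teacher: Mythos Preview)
Your proposal is correct and follows essentially the same route as the paper: one-step Lipschitz unrolling through $f$ and $\pi$, then controlling $\|x_{t_{b_u}-1}\|$ via the non-violated check in Line~\ref{line5} and the monotonicity of $\beta(\cdot)$. Your explicit handling of the edge case $t_{b_u}-1=t_{b_u-1}$ via $\beta(0)=1$ matches the paper's remark that equality holds when $t_{b_u}=t_{b_u-1}+1$.
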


\begin{proof}
    Suppose we picked a controller $\pi_t$ at time step $t$. Then, by Assumption \ref{controller set}, we have
    \begin{align}\label{policy}
    \|u_t\| = \|\pi_t(x_t)-\pi_t(0)+\pi_t(0)\| \leq \|\pi_t(x_t)-\pi_t(0)\|+\|\pi_t(0)\| \leq L_\pi \|x_t\|+\pi_{0,\text{max}}.
    \end{align}
    
    Combining the above inequality with Assumption \ref{transition dynamics}, one can write
    \begin{align*}
    \|x_{t+1}\|&=\|f(x_t,u_t,w_t)-f(0,0,0)+f(0,0,0)\| \\ &\leq \|f(x_t,u_t,w_t)-f(0,0,0)\|+\|f(0,0,0)\|  \leq L_f(\|x_t\|+\|u_t\|+\|w_t\|)+f_0 \\ &\leq L_f(\|x_t\|+L_\pi \|x_t\|+\pi_{0,\text{max}}+w_\text{max})+f_0 \\ &= L_f(1+L_\pi)\|x_t\|+L_f(\pi_{0,\text{max}}+w_\text{max})+f_0.
    \end{align*}

Thus, for every $u=1,\dots, U$, we obtain that
\begin{align*}
    \|x_{t_{b_u}}\|&\leq L_f(1+L_\pi)\|x_{t_{b_u}-1}\|+L_f(\pi_{0,\text{max}}+w_\text{max})+f_0 \\ &\leq L_f(1+L_\pi)(\beta(t_{b_u}-t_{b_u-1}-1)\|x_{t_{b_u-1}}\|+\gamma w_\text{max})+L_f(\pi_{0,\text{max}}+w_\text{max})+f_0 \\ &= L_f(1+L_\pi)\beta(t_{b_u}-t_{b_u-1}-1)\|x_{t_{b_u-1}}\|+ M_1 \\ &\leq L_f(1+L_\pi)\beta(0)\|x_{t_{b_u-1}}\|+ M_1,
\end{align*}
    where the second inequality holds since Line \ref{line5} of Algorithm \ref{Algorithm 1} is not satisfied during $t_{b_u-1}\leq t \leq t_{b_u}-1$ and the equality holds for the last inequality when $t_{b_u}=t_{b_u-1}+1$. This completes the proof. 
\end{proof}

\begin{lemma}[Weighted sum of state norms along the Break statements]\label{along Break}
    In Algorithm \ref{Algorithm 1}, suppose that $\frac{\tau_1}{\tau_0} \beta(\tau_0)<1$. Define $M_2:= L_f(1+L_\pi)\beta(0) \frac{\gamma w_{\text{max}}}{1-\beta(\tau_{1})}+M_1$. Then, there exists a constant $C\geq 1$ such that 
    \[
    \sum_{u=0}^U H(\tau_{b_u})\|x_{t_{b_u}}\| \leq \frac{[L_f(1+L_\pi)\beta(0)C]^{U+1} - 1}{L_f(1+L_\pi)\beta(0)C-1}H(\tau_0)\|x_{0}\|+\frac{([L_f(1+L_\pi)\beta(0)C]^{U} - 1)M_2}{[L_f(1+L_\pi)\beta(0)C-1]^2}H(\tau_{b_U})
    \]
\end{lemma}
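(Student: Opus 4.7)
The plan is to establish a one-step geometric recursion connecting consecutive break-batch indices $b_u$ and $b_{u+1}$, unroll it into an explicit bound on each $H(\tau_{b_u})\|x_{t_{b_u}}\|$, and finally sum over $u=0,\dots,U$. The two key inputs are inequality (\ref{hx relationship}) derived inside the proof of Lemma \ref{weighted two Break}, which propagates the state norm across the non-break batches of a single segment, together with Lemma \ref{next Break}, which controls the single transition across a break event.

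First, I would apply (\ref{hx relationship}) at $b=b_{u+1}-1$ (the last batch of segment $u$, where Line \ref{line5} is triggered). Using that the geometric factor $[\tfrac{\tau_{b_u+1}}{\tau_{b_u}}\beta(\tau_{b_u})]^{b_{u+1}-1-b_u}$ is at most $1$ and that $\beta(\tau_{b_u+1})\leq\beta(\tau_1)$ by monotonicity, this collapses to
\begin{align*}
H(\tau_{b_{u+1}-1})\,\|x_{t_{b_{u+1}-1}}\| \;\leq\; H(\tau_{b_u})\|x_{t_{b_u}}\| + H(\tau_{b_{u+1}-1})\,\tfrac{\gamma w_{\text{max}}}{1-\beta(\tau_1)}.
\end{align*}
Next, Lemma \ref{next Break} lets me pass from $\|x_{t_{b_{u+1}-1}}\|$ to $\|x_{t_{b_{u+1}}}\|$ at the cost of a factor $L_f(1+L_\pi)\beta(0)$ plus additive $M_1$, and Lemma \ref{H-tau} lets me absorb the ratio $H(\tau_{b_{u+1}})/H(\tau_{b_{u+1}-1})$ into the constant $C:=\tau_1/\tau_0\geq 1$. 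After bundling the two residual additive pieces $L_f(1+L_\pi)\beta(0)\gamma w_{\text{max}}/(1-\beta(\tau_1))$ and $M_1$ as (at most) $C M_2$ using $C\geq 1$, this yields the clean recursion
\begin{align*}
H(\tau_{b_{u+1}})\|x_{t_{b_{u+1}}}\| \;\leq\; [L_f(1+L_\pi)\beta(0) C]\cdot H(\tau_{b_u})\|x_{t_{b_u}}\| + C M_2\, H(\tau_{b_{u+1}}).
\end{align*}

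Unrolling this recursion from $u=0$, with $H(\tau_{b_0})\|x_{t_{b_0}}\|=H(\tau_0)\|x_0\|$, and bounding the propagated noise using the non-decreasing property of $H(\tau_{b_v})$, produces $H(\tau_{b_u})\|x_{t_{b_u}}\|\leq [L_f(1+L_\pi)\beta(0)C]^u H(\tau_0)\|x_0\| + C M_2 H(\tau_{b_u})\cdot\tfrac{[L_f(1+L_\pi)\beta(0)C]^u-1}{L_f(1+L_\pi)\beta(0)C-1}$. Summing over $u=0,\dots,U$ and applying $H(\tau_{b_u})\leq H(\tau_{b_U})$ in the noise term collapses the two resulting geometric sums into the two closed-form expressions appearing in the lemma. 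The main obstacle I expect is constant bookkeeping: every auxiliary piece ($M_1$, the bound $\gamma w_{\text{max}}/(1-\beta(\tau_1))$, the $H$-ratio constant $C$, and the contraction factor from (\ref{hx relationship})) must be arranged so that a single multiple of $M_2$ suffices, and the residual geometric sum must be tightened from the naive form $([L_f(1+L_\pi)\beta(0)C]^{U+1}-1)/([L_f(1+L_\pi)\beta(0)C]-1)^2 - (U+1)/([L_f(1+L_\pi)\beta(0)C]-1)$ to the claimed $([L_f(1+L_\pi)\beta(0)C]^U-1)/([L_f(1+L_\pi)\beta(0)C]-1)^2$. Sharpening to the stated form should require exploiting the strict contraction $[\tfrac{\tau_{b_u+1}}{\tau_{b_u}}\beta(\tau_{b_u})]^{b_{u+1}-1-b_u}<1$ whenever $b_{u+1}>b_u+1$, rather than discarding it via the blunt bound by $1$ used in the sketch above.
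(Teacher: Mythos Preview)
Your overall plan—derive a one-step recursion across break events from (\ref{hx relationship}) and Lemma~\ref{next Break}, unroll it, and sum the resulting geometric series—is exactly the skeleton of the paper's proof. The difference is in how the constant $C$ is manufactured.

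You discard the contraction factor $[\tfrac{\tau_{b_u+1}}{\tau_{b_u}}\beta(\tau_{b_u})]^{b_{u+1}-1-b_u}\leq 1$ and then only need a \emph{single-step} $H$-ratio, which Lemma~\ref{H-tau} and Assumption~\ref{batch length} bound by $C=\tau_1/\tau_0$. The paper does the opposite: it keeps the $\beta$ powers together with the full long-range ratio and proves, via a short covering argument (pick $b^*$ with $\tau_{b+1}/\tau_b<1/\beta(\tau_0)$ for $b\geq b^*$), that for all $b'>b$,
\[
\frac{\tau_{b'}}{\tau_b}\,[\beta(\tau_b)]^{\,b'-b-1}\;\leq\;\frac{(\tau_1/\tau_0)^{b^*}}{\beta(\tau_0)}\;=:\;C.
\]
Because the paper never splits off the single step $H(\tau_{b_u})/H(\tau_{b_u-1})$ with a separate bound, that ratio cancels exactly against $H(\tau_{b_u-1})$ in the noise term, so the additive piece comes out as $M_2\,H(\tau_{b_u})$ rather than your $CM_2\,H(\tau_{b_u})$. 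In other words, the ``constant bookkeeping'' obstacle you anticipate is resolved in the paper not by sharpening the contraction (your proposed fix), but by redefining $C$ to swallow the whole product $\frac{\tau_{b_u}}{\tau_{b_{u-1}}}\beta(\tau_{b_{u-1}})^{b_u-b_{u-1}-1}$ at once. Your route still yields a bound of the same exponential shape, just with $CM_2$ in place of $M_2$ and a smaller $C$; for every downstream use (Lemma~\ref{sum of state norms} onward) the two are interchangeable.
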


\begin{proof}
    Since we designed $\frac{\tau_{b+1}}{\tau_b}$ to converge, there exists $R>0$ such that $\frac{\tau_{b+1}}{\tau_b}\leq R$ for all $b\geq0$. Moreover, since $\lim_{b\to\infty}\frac{\tau_{b+1}}{\tau_b}=1$ and $\beta(\tau_0)<1$, there exists $b^*>0$ such that 
    \begin{align}\label{bstar}
        b\geq b^* \implies \frac{\tau_{b+1}}{\tau_b}< \frac{1}{\beta(\tau_0)}.  
    \end{align}

    Accordingly, for any two batches $b'>b\geq0$, we have
    \begin{align}\label{Cupperbound}
        \frac{\tau_{b'}}{\tau_{b}}[\beta(\tau_{b})]^{b'-b-1} &\leq  [\beta(\tau_{0})]^{b'-b-1} \Pi_{a=b}^{b'-1} \frac{\tau_{a+1}}{\tau_a} \leq \frac{R^{b^*}}{\beta(\tau_0)},
    \end{align}
    considering that $b=0$ and $b'=b^*$ yields the largest possible upper bound due to (\ref{bstar}). Now, define $C:=\frac{R^{b^*}}{\beta(\tau_0)}$. Notice that we have $C\geq1$ since the left-hand side of (\ref{Cupperbound}) is greater than equal to $1$ when $b'=b+1$. Then, for every $u=1,\dots,U$, one can write 
    \begin{align*}
        H(\tau_{b_u})\|x_{t_{b_u}}\| &\leq L_f(1+L_\pi)\beta(0)\frac{H(\tau_{b_u})}{H(\tau_{b_u-1})}H(\tau_{b_u-1})\|x_{t_{b_u-1}}\|+ H(\tau_{b_u})M_1 \\ &\leq L_f(1+L_\pi)\beta(0)\frac{H(\tau_{b_u})}{H(\tau_{b_u-1})}
        \Pi_{a=b_{u-1}}^{b_u-2} \Bigr[\frac{H(\tau_{a+1})}{H(\tau_a)}\beta(\tau_a)\Bigr]\cdot  H(\tau_{b_{u-1}})\|x_{t_{b_{u-1}}}\|\\ &\hspace{40mm}+L_f(1+L_\pi)\beta(0)H(\tau_{b_u}) \frac{\gamma w_{\text{max}}}{1-\beta(\tau_{b_{u-1}+1})}+H(\tau_{b_u})M_1 \\ &\leq L_f(1+L_\pi)\beta(0)\frac{H(\tau_{b_u})}{H(\tau_{b_{u-1}})}
        [\beta(\tau_{b_{u-1}})]^{b_u-b_{u-1}-1}\cdot  H(\tau_{b_{u-1}})\|x_{t_{b_{u-1}}}\|+H(\tau_{b_u})M_2 \\ &\leq L_f(1+L_\pi)\beta(0)\frac{\tau_{b_u}}{\tau_{b_{u-1}}}
        [\beta(\tau_{b_{u-1}})]^{b_u-b_{u-1}-1}\cdot  H(\tau_{b_{u-1}})\|x_{t_{b_{u-1}}}\|+H(\tau_{b_u})M_2 \\ &\leq L_f(1+L_\pi)\beta(0)C\cdot  H(\tau_{b_{u-1}})\|x_{t_{b_{u-1}}}\|+H(\tau_{b_u})M_2
    \end{align*}
    where the first inequality is due to Lemma \ref{next Break}, the second inequality is by (\ref{hx relationship}) in Lemma \ref{weighted two Break}, the fourth inequality is due to Lemma \ref{H-tau}, and the last inequality is by (\ref{Cupperbound}). Recursively applying this inequality, one arrives at 
    \begin{align*}
       H(\tau_{b_u})\|x_{t_{b_u}}\| &\leq [L_f(1+L_\pi)\beta(0)C]^u H(\tau_0)\|x_{0}\|+M_2\cdot \sum_{i=1}^{u} [L_f(1+L_\pi)\beta(0)C]^{u-i} H(\tau_{b_i})  \\ &\leq [L_f(1+L_\pi)\beta(0)C]^u H(\tau_0)\|x_{0}\|+M_2  H(\tau_{b_U})\cdot \frac{[L_f(1+L_\pi)\beta(0)C]^u - 1}{L_f(1+L_\pi)\beta(0)C-1} \\ &< [L_f(1+L_\pi)\beta(0)C]^u \cdot \biggr[H(\tau_0)\|x_{0}\|+\frac{M_2  H(\tau_{b_U})}{L_f(1+L_\pi)\beta(0)C-1}\biggr],
    \end{align*}
    where the second inequality comes from the non-decreasing property of $H(\cdot)$ and the equality holds when $H(\tau_{b_1})=\dots=H(\tau_{b_U})$. Notice that for $b'>b\geq0$, the case $H(\tau_{b'})=H(\tau_b)$ arises when $\tau_{b'}=\tau_{b}$ or $\beta(\tau_{b}+1)=\dots=\beta(\tau_{b'})=0$. Since $L_f(1+L_\pi)\beta(0)C>1$, summing up the above inequality for $u=1,\dots,U$ completes the proof.
\end{proof}

\begin{lemma}[Sum of state norms]\label{sum of state norms}
    In Algorithm \ref{Algorithm 1}, suppose that $\frac{\tau_1}{\tau_0} \beta(\tau_0)<1$. Then, we have
    \[
    \sum_{t=0}^T \|x_t\| \leq O([L_f(1+L_\pi)\beta(0)C]^{U}(\|x_0\|+H(\tau_{b_U})))+\gamma w_{\text{max}}\cdot (O(\sum_{b=0}^{B-1} H(\tau_b))+T)
    \]
\end{lemma}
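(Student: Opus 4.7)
The plan is to combine the three previous lemmas in a telescoping fashion. First I would break the time sum into batch-wise sums and apply Lemma \ref{single batch}:
\begin{align*}
\sum_{t=0}^T \|x_t\| = \sum_{b=0}^{B-1}\sum_{t=t_b}^{t_{b+1}-1}\|x_t\| \leq \sum_{b=0}^{B-1} H(\tau_b)\,\|x_{t_b}\| + \gamma w_\text{max}\sum_{b=0}^{B-1}(\tau_b-1),
\end{align*}
where the second sum is bounded by $\gamma w_\text{max}\cdot T$ since $\sum_{b=0}^{B-1}\tau_b \leq T+1$. This isolates the main quantity of interest, $\sum_{b=0}^{B-1} H(\tau_b)\|x_{t_b}\|$.

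Next I would split this sum along the Break-statement indices $b_0=0 < b_1 < \cdots < b_U < b_{U+1}=B$ and apply Lemma \ref{weighted two Break} on each block $[b_u, b_{u+1}-1]$:
\begin{align*}
\sum_{b=0}^{B-1} H(\tau_b)\|x_{t_b}\| &= \sum_{u=0}^{U}\sum_{b=b_u}^{b_{u+1}-1} H(\tau_b)\|x_{t_b}\| \\
&\leq \sum_{u=0}^{U}\frac{H(\tau_{b_u})\|x_{t_{b_u}}\|}{1-\frac{\tau_{b_u+1}}{\tau_{b_u}}\beta(\tau_{b_u})} + \frac{\gamma w_\text{max}}{1-\beta(\tau_{1})}\sum_{b=0}^{B-1} H(\tau_b).
\end{align*}
Here I use that the sequence $\frac{\tau_{b+1}}{\tau_b}\beta(\tau_b)$ is non-increasing in $b$ (as noted in the proof of Lemma \ref{weighted two Break}) and hence bounded above by $\frac{\tau_1}{\tau_0}\beta(\tau_0)<1$, so each coefficient $\frac{1}{1-\frac{\tau_{b_u+1}}{\tau_{b_u}}\beta(\tau_{b_u})}$ is absorbed into an $O(1)$ constant independent of $u$. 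Similarly, each $\frac{1}{1-\beta(\tau_{b_u+1})} \leq \frac{1}{1-\beta(\tau_1)}$ is $O(1)$, and the second term becomes $O(1)\cdot \gamma w_\text{max}\sum_{b=0}^{B-1}H(\tau_b)$.

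Finally I would plug in Lemma \ref{along Break} to bound $\sum_{u=0}^{U} H(\tau_{b_u})\|x_{t_{b_u}}\|$ by $O([L_f(1+L_\pi)\beta(0)C]^{U})\cdot (H(\tau_0)\|x_0\| + H(\tau_{b_U}))$. Since $H(\tau_0) = O(1)$ and $\beta(0)=1$, combining all three bounds produces
\begin{align*}
\sum_{t=0}^T \|x_t\| \leq O([L_f(1+L_\pi)\beta(0)C]^{U}(\|x_0\|+H(\tau_{b_U}))) + \gamma w_\text{max}\cdot\Bigl(O\bigl(\sum_{b=0}^{B-1} H(\tau_b)\bigr)+T\Bigr),
\end{align*}
as claimed. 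The only subtle point is bookkeeping around the boundary indices $b_0=0$ and $b_{U+1}=B$ (so the inner sum for $u=U$ covers the final partial block) and verifying that the two $O(1)$ coefficient bounds survive the summation over $u$ without introducing a hidden factor of $U$; both follow from the monotonicity of $\frac{\tau_{b+1}}{\tau_b}\beta(\tau_b)$ and of $\beta(\cdot)$, so no additional work beyond the three lemmas is required.
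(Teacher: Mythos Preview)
Your approach is the same as the paper's—apply Lemmas \ref{single batch}, \ref{weighted two Break}, and \ref{along Break} in sequence—and your bookkeeping in steps two and three is correct. There is one genuine slip in step one: you claim $\sum_{b=0}^{B-1}\tau_b \leq T+1$, but the inequality goes the other way. Whenever the Break statement fires at batch $b$ we have $t_{b+1}-t_b < \tau_b$, so $\sum_b \tau_b \geq \sum_b (t_{b+1}-t_b) = T+1$ (see (\ref{TandB}) in the paper). The overshoot $\sum_b(\tau_b-1)-T$ can be as large as order $U\cdot\tau_{B-1}$, which is not absorbed by any term in the stated bound.

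The fix is to use the sharper inequality that already appears inside the proof of Lemma \ref{single batch}, namely $\sum_{t=t_b}^{t_{b+1}-1}\|x_t\|\leq H(t_{b+1}-t_b)\|x_{t_b}\|+\gamma w_{\text{max}}(t_{b+1}-t_b-1)$, before it is loosened to $\tau_b-1$. Then $\sum_b(t_{b+1}-t_b-1)=T+1-B\leq T$ and you recover the exact $\gamma w_{\text{max}}\cdot T$ term. This is precisely what the paper does (its third displayed line carries $\gamma w_{\text{max}}(t_{b_{u+1}}-t_{b_u}-1)$, summing to $\gamma w_{\text{max}}(T-U)\leq \gamma w_{\text{max}}T$). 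With that correction your argument is complete.
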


\begin{proof}
    Applying Lemma \ref{single batch}, \ref{weighted two Break}, and \ref{along Break} in turn, we have
    \begin{align*}
    \sum_{t=0}^T \|x_t\| &= \sum_{u=0}^U \sum_{b=b_u}^{b_{u+1}-1}\sum_{t=t_b}^{t_{b+1}-1} \|x_t\| \\ &\leq \sum_{u=0}^U \sum_{b=b_u}^{b_{u+1}-1} \biggr[H(\tau_b) \|x_{t_b}\| +\gamma w_{\text{max}} (\tau_b-1)  \biggr] \\ &\leq \sum_{u=0}^U \biggr[ \frac{1}{1-\frac{\tau_{b_u+1}}{\tau_{b_u}}\beta(\tau_{b_u})}H(\tau_{b_u})\|x_{t_{b_u}}\|+\frac{\gamma w_{\text{max}}}{1-\beta(\tau_{b_u+1})}\sum_{b=b_u+1}^{b_{u+1}-1}H(\tau_b) + \gamma w_\text{max} (t_{b_{u+1}}-t_{b_{u}}-1)\biggr] \\ &\leq \frac{1}{1-\frac{\tau_{1}}{\tau_{0}}\beta(\tau_{0})} \sum_{u=0}^U H(\tau_{b_u})\|x_{t_{b_u}}\| + \frac{\gamma w_{\text{max}}}{1-\beta(\tau_{1})}(\sum_{b=0}^{B-1} H(\tau_b) - \sum_{u=0}^{U} H(\tau_{b_u}))+\gamma w_\text{max} (T-U) \\ &\leq \frac{1}{1-\frac{\tau_{1}}{\tau_{0}}\beta(\tau_{0})} \sum_{u=0}^U H(\tau_{b_u})\|x_{t_{b_u}}\| + \frac{\gamma w_{\text{max}}}{1-\beta(\tau_{1})}\sum_{b=0}^{B-1} H(\tau_b)+\gamma w_\text{max} T \\ &\leq \frac{H(\tau_0)\|x_{0}\|}{1-\frac{\tau_{1}}{\tau_{0}}\beta(\tau_{0})} \frac{[L_f(1+L_\pi)\beta(0)C]^{U+1} - 1}{L_f(1+L_\pi)\beta(0)C-1}+ \frac{H(\tau_{b_U})}{1-\frac{\tau_{1}}{\tau_{0}}\beta(\tau_{0})}\frac{([L_f(1+L_\pi)\beta(0)C]^{U} - 1)M_2}{[L_f(1+L_\pi)\beta(0)C-1]^2} \\ &\hspace{90mm}+ \frac{\gamma w_{\text{max}}}{1-\beta(\tau_{1})}\sum_{b=0}^{B-1} H(\tau_b)+\gamma w_\text{max} T \\ &= O([L_f(1+L_\pi)\beta(0)C]^{U}(\|x_0\|+H(\tau_{b_U})))+\gamma w_{\text{max}}\cdot (O(\sum_{b=0}^{B-1} H(\tau_b))+T)
\end{align*}
where the equality holds for the fourth inequality when Line \ref{line5} of Algorithm \ref{Algorithm 1} is not satisfied for the entire horizon. 
\end{proof}

\begin{theorem}[Restatement of Theorem \ref{asymptotic stability1}, Asymptotic stability]\label{asymptotic stability}
    In Algorithm \ref{Algorithm 1}, suppose that \\ $\frac{\tau_1}{\tau_0} \beta(\tau_0)<1$. Then, it holds that
    \[
    \lim_{T\to\infty}\frac{1}{T}\sum_{t=0}^T \|x_t\| \leq \gamma w_\text{max}.
    \]
\end{theorem}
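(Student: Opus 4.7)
The plan is to start from Lemma~\ref{sum of state norms}, which upper-bounds $\sum_{t=0}^T \|x_t\|$ by three components: an exponential-in-$U$ factor multiplying $\|x_0\|+H(\tau_{b_U})$, a Cesàro-type sum $\gamma w_\text{max}\cdot O(\sum_{b=0}^{B-1}H(\tau_b))$, and the linear-in-$T$ term $\gamma w_\text{max} T$. Dividing by $T$ and sending $T\to\infty$, I will show that the first two components vanish so that only the desired $\gamma w_\text{max}$ from the linear term survives.

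For the exponential-in-$U$ term, I first note that $U \leq |\mathcal{U}|$ deterministically: each invocation of the Break statement in Line~\ref{Break} permanently removes one controller from $\mathcal{P}_{b+1}$, and Assumption~\ref{controller set} guarantees at least one controller satisfying Definitions~\ref{iss}--\ref{is} that never triggers the test in Line~\ref{line5}. Hence $[L_f(1+L_\pi)\beta(0)C]^U$ is a fixed constant. For $H(\tau_{b_U})$, since $b_U \leq B-1$ we have $\tau_{b_U} \leq \tau_{B-1}$, and the polynomial batch schedule in Assumption~\ref{batch length} forces $\tau_{B-1}=o(T)$, so combining with $H(t)/t \to 0$ from Lemma~\ref{asymptotic2} yields $H(\tau_{b_U})/T\to 0$. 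The initial-state contribution $\|x_0\|/T$ is trivially $o(1)$.

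The main step is establishing $\sum_{b=0}^{B-1}H(\tau_b)=o(T)$. By Lemma~\ref{asymptotic2} and $\tau_b \to \infty$ from Assumption~\ref{batch length}, we have $H(\tau_b)/\tau_b \to 0$ as $b \to \infty$, so for any $\epsilon>0$ there is a fixed index $B^\ast$ with $H(\tau_b) \leq \epsilon \tau_b$ for all $b \geq B^\ast$. Splitting the sum yields
\[
\sum_{b=0}^{B-1}H(\tau_b) \leq \sum_{b=0}^{B^\ast-1}H(\tau_b) + \epsilon \sum_{b=B^\ast}^{B-1} \tau_b.
\]
Because the actual batch lengths sum to $T+1$ and at most $U \leq |\mathcal{U}|$ batches are truncated early by Break, $\sum_{b=0}^{B-1}\tau_b \leq T+1+|\mathcal{U}|\tau_{B-1} = T(1+o(1))$ for polynomial schedules. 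Dividing by $T$, the first sum contributes $o(1)$ (a fixed constant over $T$), and the second contributes $\epsilon(1+o(1))$. Letting $\epsilon \to 0$ closes the step.

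The main obstacle is this Cesàro step: carefully handling $\sum_{b=0}^{B-1}\tau_b$ in the presence of Break activations, where the designed length $\tau_b$ appears in $H(\tau_b)$ even though such a batch contributes strictly less than $\tau_b$ to the horizon. Once this is in place, combining the three vanishing contributions with the $\gamma w_\text{max}$ term surviving from the linear-in-$T$ coefficient in Lemma~\ref{sum of state norms} yields $\lim_{T\to\infty}\frac{1}{T}\sum_{t=0}^T\|x_t\| \leq \gamma w_\text{max}$, completing the proof.
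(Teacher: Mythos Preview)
Your proof is correct and follows the same high-level plan as the paper---invoke Lemma~\ref{sum of state norms}, then argue that every term except $\gamma w_{\text{max}}T$ is $o(T)$---but your treatment of the key step $\sum_{b=0}^{B-1}H(\tau_b)=o(T)$ is genuinely different. The paper bounds the discrete sums by integrals and applies a L'H\^opital-type argument together with $\lim_{b\to\infty}\tau_{b+1}/\tau_b=1$; you instead use the elementary Ces\`aro-style observation that $H(\tau_b)/\tau_b\to 0$ (via Lemma~\ref{asymptotic2} and $\tau_b\to\infty$), split the sum at a threshold $B^\ast$, and control $\sum_b\tau_b$ in terms of $T$. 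Your route is more direct and avoids the somewhat informal continuous L'H\^opital step on discrete sums.

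Two small points to tighten. First, you repeatedly appeal to ``polynomial batch schedules,'' but Theorem~\ref{asymptotic stability1} is stated for the full generality of Assumption~\ref{batch length}; your argument in fact only needs $\tau_{B-1}=O(T)$ (not $o(T)$), and this holds under Assumption~\ref{batch length} since $\tau_{B-1}\le(\tau_1/\tau_0)^{U}\,\tau_{B-U-1}\le(\tau_1/\tau_0)^{|\mathcal{U}|}\sum_{b=0}^{B-U-1}\tau_b\le(\tau_1/\tau_0)^{|\mathcal{U}|}T$ by the max-ratio condition and~(\ref{TandB}). With $\tau_{B-1}=O(T)$, your split gives $\limsup_T \frac{1}{T}\sum_b H(\tau_b)\le \epsilon\cdot O(1)$, and $\epsilon\to 0$ finishes the step just as well. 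Second, your bound $\sum_b\tau_b\le T+1+|\mathcal{U}|\tau_{B-1}$ should include one additional $\tau_{B-1}$ for the possible truncation of the terminal batch at time~$T$; this does not affect the conclusion.
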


\begin{proof}
    We mainly use Lemma \ref{asymptotic} to prove the asymptotic stability. First, we have 
    \begin{align}\label{htauB}
        H(\tau_{b_U})\leq H(\tau_{B-1})=o(\tau_{B-1})= o(T),
    \end{align}
    where the first equality is due to Lemma \ref{asymptotic} and $\tau_{B-1}=T$ when there is only one batch over the entire horizon. Now, consider the following relationship between the number of batch $B$ and the time horizon $T$:
    \begin{align}\label{TandB}
    \sum_{b=0}^{B-1} \tau_b \geq T \geq \sum_{b=0}^{B-U-1} \tau_b + U,
    \end{align}
    where the second inequality is due to the non-decreasing property of $\tau_b$. Now, if $\sum_{b=0}^{B-1}H(\tau_b)<\infty$, clearly $\sum_{b=0}^{B-1}H(\tau_b)=o(T)$. Otherwise, define $H(\tau_B)=H(\tau_{B-1})$. Then, we have
    \begin{align}\label{asym 0}
        \nonumber\lim_{T\to\infty}\frac{\sum_{b=0}^{B-1}H(\tau_b)}{T} &\leq \lim_{B\to\infty}\frac{\sum_{b=0}^{B-1}H(\tau_b)}{\sum_{b=0}^{B-U-1} \tau_b + U}
        \leq \lim_{B\to\infty}\frac{\int_{0}^B H(\tau_b)db}{\tau_0+\int_{0}^{B-U-1}\tau_b db + U} 
        \\\nonumber&= \lim_{B\to\infty} \frac{H(\tau_{B-1})}{\tau_{B-U-1}} 
        = \lim_{B\to\infty} \frac{H(\tau_{B-1})}{\tau_{B-1}}\Pi_{b=B-U-1}^{B-2} \frac{\tau_{b+1}}{\tau_b} 
        \\ &= 0\cdot 1^U = 0
    \end{align}
    where the second inequality leverages the non-decreasing property of both $\tau_b$ and $H(\tau_b)$, the remaining equalities leverage L'H\^opital's rule, Lemma \ref{asymptotic}, and $\lim_{b\to\infty}\frac{\tau_{b+1}}{\tau_b}=1$. Thus, with Lemma \ref{sum of state norms}, we have
    \[
    \sum_{t=0}^T \|x_t\| \leq O([L_f(1+L_\pi)\beta(0)C]^{U}(\|x_0\|+o(T)))+\gamma w_{\text{max}}\cdot (T+o(T)).
    \]
    This completes the proof.
\end{proof}

\begin{lemma}\label{T with B}
    In Algorithm \ref{Algorithm 1}, we have
    \[
    \lim_{T\to\infty}\frac{B}{T}=0
    \]
\end{lemma}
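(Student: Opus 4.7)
The plan is to leverage the lower bound on $T$ in terms of the batch lengths given by inequality (\ref{TandB}), combined with the fact that $\tau_b$ grows unboundedly (by Assumption \ref{batch length}). Since $U \leq |\mathcal{U}|$ is a fixed finite constant (independent of $T$), we can essentially bound $T$ from below by a quantity that grows superlinearly in $B$.

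First, I would observe that since each batch has finite length, taking $T \to \infty$ forces $B \to \infty$ as well; otherwise, a finite number of batches could only cover a bounded time horizon. Next, I would fix an arbitrary $\varepsilon > 0$ and set $M = 1/\varepsilon$. Because $\lim_{b \to \infty} \tau_b = +\infty$ and $\tau_b$ is non-decreasing, there exists $B_M \in \mathbb{Z}_+$ such that $\tau_b \geq M$ for all $b \geq B_M$.

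Then I would invoke the lower bound from (\ref{TandB}), namely $T \geq \sum_{b=0}^{B-U-1} \tau_b + U$, which was already established in the proof of Theorem \ref{asymptotic stability}. For $B$ sufficiently large (specifically $B \geq B_M + U + 1$), restricting the sum to indices $b \geq B_M$ yields
\[
T \;\geq\; \sum_{b=B_M}^{B-U-1} \tau_b \;\geq\; M \cdot (B - U - B_M).
\]
Dividing both sides into $B$, we get $\frac{B}{T} \leq \frac{B}{M(B - U - B_M)}$, and since $U$ and $B_M$ are fixed constants, letting $B \to \infty$ (equivalently $T \to \infty$) gives $\limsup_{T \to \infty} \frac{B}{T} \leq \frac{1}{M} = \varepsilon$. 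As $\varepsilon > 0$ was arbitrary, we conclude $\lim_{T \to \infty} \frac{B}{T} = 0$.

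The argument is essentially a routine consequence of the unboundedness of $\tau_b$ and the boundedness of $U$, so I do not expect any substantial obstacle. The only subtlety is to make sure that the lower bound (\ref{TandB}) is invoked correctly in the case where the Break statement truncates some batches, but since $U \leq |\mathcal{U}|$ is bounded, the truncated batches contribute only an additive constant that is absorbed harmlessly into the estimate.
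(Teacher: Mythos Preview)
Your proof is correct and starts from the same ingredient as the paper, namely the lower bound $T \geq \sum_{b=0}^{B-U-1}\tau_b + U$ from (\ref{TandB}) together with the unboundedness of $\tau_b$ and the uniform bound $U \leq |\mathcal{U}|$. The difference is purely in how the limit is extracted: the paper replaces the sum by an integral via monotonicity and then applies L'H\^opital's rule to obtain $\lim_{T\to\infty} B/T \leq \lim_{T\to\infty} 1/\tau_{B-U-1} = 0$, whereas you give a direct $\varepsilon$--$B_M$ argument by truncating the sum at an index beyond which $\tau_b \geq 1/\varepsilon$. Your route is slightly more elementary and sidesteps the mild informality of treating $B$ as a continuous variable in L'H\^opital's rule; the paper's route is a one-line computation once that informality is accepted. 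Both arrive at the same conclusion with essentially the same logical content.
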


\begin{proof}
    Recall the relationship stated in (\ref{TandB}) between $T$ and $B$. Using the second inequality, we have
    \begin{align*}
        0\leq \lim_{T\to\infty}\frac{B}{T} \leq \lim_{T\to\infty}\frac{B}{\sum_{b=0}^{B-U-1} \tau_b + U}&\leq \lim_{T\to\infty}\frac{B}{\tau_0+\int_{0}^{B-U-1}\tau_b db + U} \\ &= \lim_{T\to\infty}\frac{1}{\tau_{B-U-1}}=0,
    \end{align*}
    where the third inequality uses the non-decreasing property of $\tau_b$, after which we use L'H\^opital's rule. This completes the proof.
\end{proof}

\begin{theorem}[Restatement of Theorem \ref{fgstab1}, Finite-gain stability]\label{fgstab}
    In Algorithm \ref{Algorithm 1}, suppose that $\frac{\tau_1}{\tau_0} \beta(\tau_0)<1$. Assume that $\lim_{t\to\infty}H(t)<\infty$. Then, Algorithm \ref{Algorithm 1} achieves finite-gain $\mathcal{L}_1$ stability; \textit{i.e.}, there exist constants $A_1, A_2>0$ such that for all $T\in\mathbb{Z}_+$,
    \[
    \sum_{t=0}^T \|x_t\| \leq A_1 \cdot w_\text{max}T + A_2.
    \]
\end{theorem}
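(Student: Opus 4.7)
The plan is to derive the finite-gain stability bound as a direct consequence of Lemma~\ref{sum of state norms}, once we exploit the summability hypothesis $H_\infty := \lim_{t\to\infty} H(t) < \infty$ to turn the $o(T)$ factors that appeared in the proof of Theorem~\ref{asymptotic stability1} into genuine $O(1)$ constants and linear-in-$T$ quantities.

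First I would invoke Lemma~\ref{sum of state norms} verbatim to write
\[
\sum_{t=0}^T \|x_t\| \leq O\bigl([L_f(1+L_\pi)\beta(0)C]^{U}(\|x_0\|+H(\tau_{b_U}))\bigr) + \gamma w_{\text{max}}\cdot \Bigl(O\Bigl(\sum_{b=0}^{B-1} H(\tau_b)\Bigr)+T\Bigr),
\]
where the hidden constants depend only on $\tau_0,\tau_1,\beta(\tau_0),\beta(\tau_1)$, $L_f$, $L_\pi$, $M_1$, $M_2$ and $C$, none of which depend on $T$.

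Next I would bound each $T$-dependent piece using $H_\infty<\infty$. Since $H(\cdot)$ is non-decreasing and bounded above by $H_\infty$, the term $H(\tau_{b_U})$ is at most $H_\infty$, which is a constant. For the sum $\sum_{b=0}^{B-1} H(\tau_b)$ I would observe that every summand is $\leq H_\infty$, so it is bounded by $H_\infty\cdot B$. A trivial upper bound on the number of batches is $B\leq T+1$, since each batch occupies at least one time step; hence $\sum_{b=0}^{B-1} H(\tau_b) \leq H_\infty(T+1)$. Finally, because any destabilizing controller in $\mathcal{U}$ is permanently removed from $\mathcal{P}_b$ the first time the Break statement is triggered for it, we have $U\leq|\mathcal{U}|$, so the factor $[L_f(1+L_\pi)\beta(0)C]^{U}$ is a $T$-independent constant.

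Combining these observations, the first term on the right-hand side of the Lemma~\ref{sum of state norms} bound becomes a $T$-independent constant, which I absorb into $A_2$; while $\gamma w_\text{max}\cdot O(\sum_{b}H(\tau_b)) + \gamma w_\text{max}T$ becomes $\gamma w_\text{max}(O(H_\infty)T + O(1)) + \gamma w_\text{max}T$, so the overall $w_\text{max}T$-coefficient collects into a single constant $A_1>0$. Choosing $A_1,A_2$ explicitly then yields the desired bound $\sum_{t=0}^T\|x_t\|\leq A_1 w_\text{max}T + A_2$ uniformly in $T$.

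The main potential obstacle is cosmetic rather than mathematical: the bound from Lemma~\ref{sum of state norms} contains a factor $\gamma w_\text{max}$ on the linear piece but not on the constant piece, so one must be careful when reading off $A_1$ to ensure it multiplies $w_\text{max}T$ rather than just $T$. This is handled by noting that $w_\text{max}>0$ can be absorbed into the constants, or alternatively by tracking the $\gamma w_\text{max}$ factor throughout — both routes lead to the stated form. No novel ingredients beyond Lemma~\ref{sum of state norms} and the bookkeeping $U\leq|\mathcal{U}|$, $B\leq T+1$ are needed.
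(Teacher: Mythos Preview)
Your proof is correct and follows essentially the same route as the paper: invoke Lemma~\ref{sum of state norms}, use $H(t)\le H_\infty$ to bound both $H(\tau_{b_U})$ and $\sum_b H(\tau_b)\le H_\infty B$, use $U\le|\mathcal{U}|$ to make the first term a constant, and collect the linear piece into $A_1 w_{\max}T$. The only difference is that the paper bounds $B/T$ by citing Lemma~\ref{T with B} (convergence of $B/T$ to $0$ implies boundedness), whereas you use the more direct observation $B\le T+1$ from the fact that every batch occupies at least one time step; your choice is slightly more elementary and equally valid.
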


\begin{proof}
    Since $\lim_{t\to\infty}H(t)<\infty$, there exists a constant $q_1$ that upper-bounds $H(t)$; \textit{i.e.}, $H(t)\leq q_1$ for all $t\geq 0$. Likewise, by Lemma \ref{T with B}, there exists a constant $q_2$ that upper-bounds $\frac{B}{T}$. Thus, with Lemma \ref{sum of state norms}, one can write
    \begin{align*}
        \sum_{t=0}^T \|x_t\| &\leq O([L_f(1+L_\pi)\beta(0)C]^{U}(\|x_0\|+q_1))+\gamma w_{\text{max}}\cdot (O(B q_1)+T) \\ &=O([L_f(1+L_\pi)\beta(0)C]^{U}(\|x_0\|+q_1))+\gamma (1+\frac{B}{T}O(q_1)) \cdot w_{\text{max}}T  \\ &\leq O([L_f(1+L_\pi)\beta(0)C]^{U}(\|x_0\|+q_1))+\gamma (1+O(q_1 q_2)) \cdot w_{\text{max}}T.
    \end{align*}
    This completes the proof.
\end{proof}

\section{Regret Proof for Algorithm \ref{Algorithm 1}}\label{app: regret}

\begin{lemma}\label{cost}
    In Algorithm \ref{Algorithm 1}, we have 
    \[
    \mathbb{E}_{K_{B-1:0}}[w_b(K_b)] = \mathbb{E}_{K_{B-1:0}}[\mathbb{E}_{k\sim p_b}[w_b'(k)]],
    \]
\end{lemma}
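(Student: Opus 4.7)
The plan is to show that the identity $\mathbb{E}_{k \sim p_b}[w_b'(k)] = w_b(K_b)$ holds pointwise (deterministically) for every realization of $K_{B-1:0}$, after which taking $\mathbb{E}_{K_{B-1:0}}$ of both sides yields the lemma trivially. This is essentially the standard unbiasedness-of-importance-sampling argument underlying Exp3-type estimators, so I expect this to be a short calculation rather than a technical obstacle.

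First I would fix an arbitrary realization of $K_{B-1:0}$. This determines the trajectory $(x_t, u_t)_{t=t_b}^{t_{b+1}-1}$, the realized scalar $w_b(K_b) = \sum_{t=t_b}^{t_{b+1}-1} c_t(x_t, u_t)$, the distribution $p_b$ (which is determined by $K_{b-1:0}$ together with the observed state history through Line~\ref{line28}), and the active set $\mathcal{P}_b$. In particular, $K_b \in \mathcal{P}_b$ and $p_b(K_b) > 0$, so the ratio $w_b(K_b)/p_b(K_b)$ is well-defined.

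Next, I would expand the inner expectation as a sum against $p_b$ and substitute the definition of $w_b'(k)$ from Line~\ref{line21}:
\begin{align*}
\mathbb{E}_{k \sim p_b}\bigl[w_b'(k)\bigr]
 = \sum_{k \in \mathcal{P}_b} p_b(k)\, w_b'(k)
 = \sum_{k \in \mathcal{P}_b} p_b(k)\cdot \frac{w_b(K_b)}{p_b(k)}\,\mathcal{I}_{(K_b = k)}.
\end{align*}
Since $w_b(K_b)$ does not depend on the summation variable $k$, and $\mathcal{I}_{(K_b=k)}$ vanishes except at the unique index $k = K_b$, the sum collapses to a single nonzero term $p_b(K_b)\cdot w_b(K_b)/p_b(K_b) = w_b(K_b)$.

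Finally, I would take $\mathbb{E}_{K_{B-1:0}}$ of both sides of the pointwise identity $\mathbb{E}_{k \sim p_b}[w_b'(k)] = w_b(K_b)$ to conclude. The only conceptual subtlety to flag is that the inner $\mathbb{E}_{k \sim p_b}$ denotes an expectation over a fresh draw from $p_b$ that is conditionally independent of $K_b$ given the history; because the identity holds deterministically after conditioning on $K_{B-1:0}$, this subtlety does not affect the argument.
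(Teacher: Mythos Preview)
Your proposal is correct and matches the paper's own proof almost verbatim: both establish the pointwise identity $\mathbb{E}_{k\sim p_b}[w_b'(k)] = \sum_{k\in\mathcal{P}_b} p_b(k)\frac{w_b(K_b)}{p_b(k)}\mathcal{I}_{(K_b=k)} = w_b(K_b)$ and then take the outer expectation. The only cosmetic difference is that the paper phrases the conditioning as ``given $K_{b-1},\dots,K_0$'' and then invokes the tower property explicitly, whereas you fix the full realization $K_{B-1:0}$ at once; the content is identical.
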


\begin{proof}
    Given $K_{b-1},\dots, K_0$, we have
    \begin{align}\label{unbiased}
        \mathbb{E}_{k\sim p_b}[w_b'(k)] = \sum_{k\in \mathcal{P}_b} p_b(k) \frac{w_b(K_b)}{p_b(k)}\mathcal{I}_{(K_b=k)} =  w_b(K_b),
    \end{align}
    which implies that $w_b'(k)$ sampled from $p_b$ is an unbiased estimator of $w_b(K_b)$.

    Thus, for all $b=0,1,\dots,B-1$, one can write
    \begin{align*}
        \mathbb{E}_{K_{B-1:0}}[w_b(K_b)] &=  \mathbb{E}_{K_{b:0}}[w_b(K_b)] = \mathbb{E}_{K_{b-1:0}} \mathbb{E}_{K_{b}}[w_b(K_b) ~|~ K_{b-1:0}] \\ &=  \mathbb{E}_{K_{b-1:0}} \mathbb{E}_{K_{b}}[\mathbb{E}_{k\sim p_b}[w_b'(k)] ~|~ K_{b-1:0}] \\&= \mathbb{E}_{K_{b:0}}[\mathbb{E}_{k\sim p_b}[w_b'(k)]] = \mathbb{E}_{K_{B-1:0}}[\mathbb{E}_{k\sim p_b}[w_b'(k)]],
    \end{align*}
    where the first equality is because $K_{B-1},\dots,K_{b+1}$ does not affect the value of $w_b(K_b)$ and the remaining equalities are by law of total expectation and (\ref{unbiased}). 
\end{proof}

Now, we let $w_b^K(i)$ denote the cost incurred at batch $b$ if one selects the controllers for batch $0,\dots,b-1$ according to Algorithm \ref{Algorithm 1}, and the controller for batch $b$ to be $i$. 

\begin{lemma}\label{auxw}
    In Algorithm \ref{Algorithm 1}, for any $i\in\mathcal{P}_b$, we have
    \[
    \mathbb{E}_{K_{B-1:0}}[w_b'(i)]= \mathbb{E}_{K_{B-1:0}}[w_b^K(i)]
    \]
    and for some controller $i^b\in\mathcal{P}_b$, we have
    \[
    \mathbb{E}_{K_{B-1:0}}\biggr[\frac{\eta_0}{2}\frac{(w_b(K_b))^2}{p_b(K_b)} \biggr] \leq \frac{\eta_0 N}{2} \mathbb{E}_{K_{b-1:0}} (w_b^K(i^b))^2.
    \]
\end{lemma}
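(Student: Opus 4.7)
The plan is to condition on the history $K_{b-1:0}$ and apply the tower property of expectation, in close analogy with Lemma \ref{cost}. The first claim is essentially the statement that $w_b'(i)$ is an unbiased estimator of the hypothetical cost incurred at batch $b$ when controller $i$ is selected. Given $K_{b-1:0}$, the state trajectory entering batch $b$ (and hence the cost functional at that batch) is completely determined, so the quantity $w_b(k)$ that would be realized if $K_b = k$ coincides with $w_b^K(k)$ by the very definition of the latter.

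First, I would compute the conditional expectation
\begin{equation*}
\mathbb{E}_{K_b\sim p_b}\bigl[w_b'(i)\,\bigl|\,K_{b-1:0}\bigr]
= \sum_{k\in\mathcal{P}_b} p_b(k)\cdot \frac{w_b(k)}{p_b(i)}\mathcal{I}_{(k=i)}
= w_b(i) = w_b^K(i),
\end{equation*}
where the last equality uses the interpretation above. Taking the outer expectation over $K_{b-1:0}$ (and noting that $K_{B-1:b+1}$ does not affect $w_b^K(i)$) yields $\mathbb{E}_{K_{B-1:0}}[w_b'(i)] = \mathbb{E}_{K_{B-1:0}}[w_b^K(i)]$, which is the first claim.

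For the second claim, the same conditioning gives
\begin{equation*}
\mathbb{E}_{K_b\sim p_b}\!\left[\frac{(w_b(K_b))^2}{p_b(K_b)}\,\bigl|\,K_{b-1:0}\right]
= \sum_{k\in\mathcal{P}_b} p_b(k)\cdot\frac{(w_b(k))^2}{p_b(k)}
= \sum_{k\in\mathcal{P}_b} (w_b^K(k))^2.
\end{equation*}
I would then choose $i^b := \arg\max_{k\in\mathcal{P}_b}(w_b^K(k))^2$ (a function of $K_{b-1:0}$ only) and bound the sum by $|\mathcal{P}_b|\cdot (w_b^K(i^b))^2 \leq N\cdot (w_b^K(i^b))^2$. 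Multiplying by $\eta_0/2$ and taking the outer expectation over $K_{b-1:0}$ gives the desired inequality.

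I do not anticipate a real obstacle here; the only care point is conceptual rather than technical, namely correctly matching the random quantity $w_b(k)$ realized when $K_b=k$ with the deterministic (given history) quantity $w_b^K(k)$ so that the unbiasedness identity goes through cleanly. The definition of $i^b$ as a history-measurable argmax is important so that the outer expectation can be written as $\mathbb{E}_{K_{b-1:0}}(w_b^K(i^b))^2$, rather than requiring any additional independence assumption.
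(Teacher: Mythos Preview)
Your proposal is correct and follows essentially the same approach as the paper: condition on $K_{b-1:0}$, compute the inner expectation over $K_b\sim p_b$ (which collapses the indicator for the first claim and cancels the $p_b$ factors for the second), then bound the resulting sum by $N$ times the maximum term with $i^b=\arg\max_{k\in\mathcal{P}_b}(w_b^K(k))^2$. Your explicit remark that $i^b$ is $K_{b-1:0}$-measurable is a useful clarification that the paper leaves implicit.
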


\begin{proof}
    For all $b=0,1,\dots,B-1$ and for all $i\in\mathcal{P}_b$, we have
    \begin{align*}
        \mathbb{E}_{K_{B-1:0}}[w_b'(i)] &= \mathbb{E}_{K_{b:0}}[w_b'(i)] = \mathbb{E}_{K_{b-1:0}}[\mathbb{E}_{K_{b}}[w_b'(i)~|~K_{b-1:0}]] \\ &= \mathbb{E}_{K_{b-1:0}}[\sum_{K_b\in \mathcal{P}_b} p_b(K_b) \frac{w_b(K_b)}{p_b(i)} \mathcal{I}_{(K_b=i)}] \\ &= \mathbb{E}_{K_{b-1:0}}[w_b^K(i)] = \mathbb{E}_{K_{B-1:0}}[w_b^K(i)] 
    \end{align*}
    where the first equality is because $K_{B-1},\dots,K_{b+1}$ does not affect the value of $w_b'(i)$ and the last equality is because  $K_{B-1},\dots,K_{b}$ does not affect the value of $w_b^K(i)$. Next, we can also obtain that
    \begin{align*}
        \mathbb{E}_{K_{B-1:0}}\biggr[\frac{\eta_0}{2}\frac{(w_b(K_b))^2}{p_b(K_b)} \biggr] &= \mathbb{E}_{K_{b:0}}\biggr[\frac{\eta_0}{2}\frac{(w_b(K_b))^2}{p_b(K_b)} \biggr] = \mathbb{E}_{K_{b-1:0}}\mathbb{E}_{K_{b}} \biggr[\frac{\eta_0}{2}\frac{(w_b(K_b))^2}{p_b(K_b)} ~|~ K_{b-1:0}\biggr] \\ &=\mathbb{E}_{K_{b-1:0}} \sum_{K_b \in \mathcal{P}_b} \biggr[\frac{\eta_0}{2}p_b(K_b) \frac{(w_b(K_b))^2}{p_b(K_b)} \biggr] =\mathbb{E}_{K_{b-1:0}} \sum_{K_b \in \mathcal{P}_b} \biggr[\frac{\eta_0}{2} (w_b(K_b))^2 \biggr] \\ &\leq \frac{\eta_0 N}{2} \mathbb{E}_{K_{b-1:0}} (w_b^K(i^b))^2,
    \end{align*}
    for the controller $i^b = \arg \max_{i \in \mathcal{P}_b} (w_b^K(i))^2$. This completes the proof.
\end{proof}

In Algorithm \ref{Algorithm 1},  define $\mathcal{L}:= \{0\leq b\leq B-1, ~b\in\mathbb{Z}_+ : s_{b+1}\neq s_b\}$ and
let $b^1,\dots,b^{|\mathcal{L}|}$ denote the batch where Line \ref{line22} of Algorithm \ref{Algorithm 1} is satisfied; \textit{i.e.}, $s_{b^l+1}\neq s_{b^l}$ for $l=1,\dots,|\mathcal{L}|$. For convenience, we let $b^0=0$, $b^{|\mathcal{L}|+1}=B-1,$ and $s_B = s_{B-1}$. 
Also, define $\mathcal{V}:= \{0\leq b\leq B-1, ~b\in\mathbb{Z}_+ : s_{b}\neq 0\}$.

\begin{lemma}[Restatement of Lemma \ref{orderM2}]\label{orderM}
    In Algorithm \ref{Algorithm 1}, suppose that $\beta(\tau_0)<1$ and let $U$ denote the number of times that the Break statement is activated. Then, it holds that $|\mathcal{L}|=O(U)$ and $|\mathcal{V}|=O(U)$.
\end{lemma}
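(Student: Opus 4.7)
My plan is to exploit the dichotomy between batches in which the Break statement is activated and those in which it is not, together with the design choice $\delta \ge \gamma w_{\max}/(1-\beta(\tau_0))$, to show that without a Break the counter $s_b$ cannot strictly increase, and to show that $s_b$ decays back to zero geometrically fast in the number of Break-free batches.

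First I would show the key monotonicity step: if no Break occurs during batch $b$, then $s_{b+1} \le s_b$. The reasoning is that when no Break is triggered, Line~\ref{line5} is never satisfied, so $\|x_{t_{b+1}}\| \le \beta(\tau_b)\|x_{t_b}\| + \gamma w_{\max}$. Combining this with the inductive bound $\|x_{t_b}\| < (\alpha_b)^{s_b+1}\|x_0\| + \delta$ from Lines~\ref{line11}--\ref{line20} and using $\beta(\tau_b) \le \beta(\tau_0) < 1$ together with $\delta \ge \gamma w_{\max}/(1-\beta(\tau_0))$, I obtain $\|x_{t_{b+1}}\| < (\alpha_b)^{s_b+1}\|x_0\| + \delta$, which by the definition of $s_{b+1}$ forces $s_{b+1} \le s_b$. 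Therefore, every strict increase $s_{b+1} > s_b$ must be charged to a Break activation, giving at most $U$ such increases. Since Line~\ref{line14} explicitly prevents $s_{b+1} > s_b + 1$, each increase is by exactly one, and since $s_b \ge 0$ with $s_0 = 0$, the number of decreases is also bounded by the number of increases. This yields $|\mathcal{L}| \le 2U$.

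Next, for $|\mathcal{V}|$, I would estimate how many consecutive Break-free batches are needed for $s_b$ to drop strictly. Iterating $\|x_{t_{b+\tilde b}}\| \le \beta(\tau_0)^{\tilde b}\|x_{t_b}\| + \delta$ (using $\delta \ge \gamma w_{\max}/(1-\beta(\tau_0))$ to absorb the geometric series of noise terms) and substituting the bound $\|x_{t_b}\| < (\alpha_b)^{s_b+1}\|x_0\|+\delta$, I would solve for the smallest $\tilde b$ such that the right-hand side falls below $(\alpha_b)^{s_b}\|x_0\|+\delta$, which forces $s_{b+\tilde b} < s_b$. This gives $\tilde b = O(\log(\alpha_b + \delta/\|x_0\|)/|\log\beta(\tau_0)|)$. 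Since Lemma~\ref{next Break} (plus the design of Line~\ref{line14}) guarantees that $\alpha_b$ remains uniformly bounded across the run, this bound is an absolute constant depending only on the problem parameters.

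Finally I would stitch the two ingredients together. The batches with $s_b \ne 0$ partition naturally into maximal intervals between consecutive drops of $s_b$; each such interval has length at most the constant $O(\lceil \log(\alpha_{\max}+\delta/\|x_0\|)/|\log\beta(\tau_0)|\rceil)$ either because a decrease occurs or a Break occurs. Since the total number of decreases plus Breaks is $O(U)$ by the first step, we conclude $|\mathcal{V}| = O(U)$. The main subtlety I anticipate is ensuring that $\alpha_b$ does indeed stay uniformly bounded throughout the execution despite the upward adjustments on Line~\ref{line14}; I would handle this by invoking the already-established Lemma~\ref{next Break}, which bounds $\|x_{t_{b+1}}\|$ and thus caps how large $\alpha_{b+1}$ must be chosen when an increase is forced.
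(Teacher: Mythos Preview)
Your proposal is correct and follows essentially the same approach as the paper: both first show that without a Break the update rules and the choice $\delta \ge \gamma w_{\max}/(1-\beta(\tau_0))$ force $s_{b+1}\le s_b$, yielding $|\mathcal{L}|\le 2U$, and then iterate the contraction $\|x_{t_{b+\tilde b}}\|\le \beta(\tau_0)^{\tilde b}\|x_{t_b}\|+\delta$ to obtain the constant $\bigl\lceil \log(\alpha_b+\delta/\|x_0\|)/|\log\beta(\tau_0)|\bigr\rceil$ bounding the gap until the next Break or strict decrease, giving $|\mathcal{V}|=O(U)$. Your explicit invocation of Lemma~\ref{next Break} to cap $\alpha_b$ uniformly is exactly what the paper relies on (implicitly here, explicitly elsewhere), so nothing is missing.
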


\begin{proof}
    For every batch $b=0,\dots,B-1$, we have 
    \begin{align}\label{alphasdelta2}
    \|x_{t_b}\| < (\alpha_b)^{s_b + 1} \|x_0\|+\delta
    \end{align}
    by Lines \ref{line11}-\ref{line20}. If the Break statement is not activated, since we designed $\delta \geq \frac{\gamma w_\text{max}}{1-\beta(\tau_0)}$, it yields that 
    \begin{align*}
        \|x_{t_{b+1}}\| &\leq \beta(\tau_b)\|x_{t_{b}}\| + \gamma w_\text{max} \leq \beta(\tau_0) (\alpha_b)^{s_b + 1} \|x_0\|+\beta(\tau_0)\delta +\gamma w_\text{max} 
        \\ & \leq \beta(\tau_0) (\alpha_b)^{s_b + 1} \|x_0\| + \delta < (\alpha_b)^{s_b + 1} \|x_0\| + \delta,
    \end{align*}
    where the second and the last inequalities are due to $\beta(\tau_b)\leq \beta(\tau_0)<1$ and the third inequality is by the formulation of $\delta$. Then, $s_{b+1}> s_b$ cannot occur when the Break statement is not activated. Also, Line \ref{line14} avoids $s_{b+1}> s_b + 1$. As a result, starting from $s_0=0$, the event $s_{b+1}= s_b+1$ can occur at most $U$ times. Accordingly, the event $s_{b+1}<s_b$ also can occur at most $U$ times, leading to $|\mathcal{L}|\leq 2U$.

    Now, we observe the number of batches $\Tilde{b}$ needed to stabilize the state norm; \textit{i.e.}, $\min \{\Tilde{b}>0 : s_{b+\Tilde{b}}<s_b\}$ when the Break statement is not activated. Starting from batch $b$ and the corresponding $s_b$, provided that the Break statement is not activated, one can write 
    \begin{align}\label{bplustildeb2}
        \nonumber\|x_{t_{b+\Tilde{b}}}\| &\leq \beta(\tau_{b+\Tilde{b}-1}) \|x_{t_{b+\Tilde{b}-1}}\| + \gamma w_\text{max} \leq \beta(\tau_0) \|x_{t_{b+\Tilde{b}-1}}\| + \gamma w_\text{max} \\ \nonumber& \leq (\beta(\tau_0))^{\Tilde{b}}\|x_{t_{b}}\| + \gamma w_\text{max} \sum_{a=0}^{\Tilde{b}-1} (\beta(\tau_0))^a  \leq (\beta(\tau_0))^{\Tilde{b}}\|x_{t_{b}}\| + \frac{\gamma w_\text{max}}{1-\beta(\tau_0)} \\ &\leq (\beta(\tau_0))^{\Tilde{b}}\|x_{t_{b}}\| + \delta  < (\beta(\tau_0))^{\Tilde{b}}[(\alpha_b)^{s_b + 1} \|x_0\|+\delta] + \delta,
    \end{align}
    where the first and third inequalities are due to not satisfying Line \ref{line5} iteratively when the Break statement is not activated, the second and fourth inequalities are by $\beta(\tau_b)\leq \beta(\tau_0)<1$, and the last two inequalities are by the design of $\delta$ and (\ref{alphasdelta2}). It is desirable to find the minimum value of $\Tilde{b}$ that makes the right-hand side of (\ref{bplustildeb2}) smaller than $(\alpha_b)^{s_b} \|x_0\|+\delta$:
    \begin{align}\label{betaalpha2}
         (\beta(\tau_0))^{\Tilde{b}}[(\alpha_b)^{s_b + 1} \|x_0\|+\delta] + \delta \leq (\alpha_b)^{s_b} \|x_0\|+\delta 
        % \\\nonumber &\Leftrightarrow (\beta(\tau_0))^{\Tilde{b}}[(\alpha_b)^{s_b + 1} \|x_0\|+\delta]\leq (\alpha_b)^{s_b} \|x_0\| 
        ~~\Longleftrightarrow~~ \frac{1}{(\beta(\tau_0))^{\Tilde{b}}} \geq
        % \frac{(\alpha_b)^{s_b + 1} \|x_0\|+\delta}{(\alpha_b)^{s_b}\|x_0\|} = 
        \alpha_b + \frac{\delta}{(\alpha_b)^{s_b}\|x_0\|},
    \end{align}
    where the right-hand side of (\ref{betaalpha2}) can be upper-bounded by $\alpha_b + \frac{\delta}{\|x_0\|}$ since $\alpha_b>1$. Thus, if $s_b\neq 0$,
    \begin{align}\label{lceilrceil2}
    \min \{\Tilde{b}>0 : s_{b+\Tilde{b}}<s_b\} \leq \Biggr\lceil\frac{\log (\alpha_b + \frac{\delta}{\|x_0\|})}{-\log \beta(\tau_0)}\Biggr\rceil,
    \end{align}
    when the Break statement is not activated. 
    In other words, starting from a batch $b$ where $s_b>0$, within the number of batches on the right-hand side of (\ref{lceilrceil2}), either the Break statement is activated or the value of $s_b$ decreases.
    
    More specifically, consider two sets of batches: $\mathcal{B}_1 = \{0\leq b \leq B-1, b\in\mathbb{Z}_+: \text{the Break statement activated}\}$ and $\mathcal{B}_2 = \{0\leq b \leq B-1, b\in\mathbb{Z}_+: s_{b+1}<s_b\}$. Let $\mathcal{B}=\mathcal{B}_1 \cup \mathcal{B}_2$ be the set ordered by batch numbers. Then, the batch interval between two consecutive batches in $\mathcal{B}$ is upper-bounded by (\ref{lceilrceil2}). 
    Thus, considering that $|\mathcal{L}|\leq 2U$, we have 
    \[
    |\mathcal{V}|\leq (2U-1)\Biggr\lceil\frac{\log (\alpha_b + \frac{\delta}{\|x_0\|})}{-\log \beta(\tau_0)}\Biggr\rceil,
    \]
    which completes the proof.
\end{proof}

% In Lemma \ref{orderM}, we can see the importance of setting the value of $\alpha_b$. If $\alpha_b$ is too large, $\log \alpha_b$ term increases, while if it is too small, multiple jumps of $s_b$ might occur within one Break statement. 

\begin{lemma}[cumulative mix loss]\label{mix loss}
    In Algorithm \ref{Algorithm 1}, for any controller $i^l\in \mathcal{U}^c$ for $l=0,\dots,|\mathcal{L}|$, the cumulative mix loss is upper-bounded as follows:
    \[
    \mathbb{E}_{K_{B-1:0}}\sum_{b=0}^{B-1} -\frac{1}{\eta_0}\log(\mathbb{E}_{k\sim p_b}\exp(-\eta_b w_b'(k))) \leq \frac{\Tilde{O}(U+1)}{\eta_0}+\mathbb{E}_{K_{B-1:0}}\sum_{l=0}^{|\mathcal{L}|}\sum_{b=b^l}^{b^{l+1}-1} \frac{w_b^K(i^l)}{(\alpha_b)^{2s_b}}
    \]     
\end{lemma}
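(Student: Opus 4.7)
The plan is to run the standard Exp3 ``mix loss'' telescoping (cf.\ \cite{cesa2006pred, erven2011adahedge, rooij2014follow}) adapted to DBAR's two distinguishing features: the learning rate $\eta_b=\eta_0/(\alpha_b)^{2s_b}$ varies across batches, and the cumulative weights $W_b(k)$ are reset to zero at every transition in $s_b$ (Lines \ref{line22}--\ref{line26}). The enabling observation is that between two consecutive resets both $s_b$ and $\alpha_b$ are constant (by Lines \ref{line11}--\ref{line20}), so $\eta_b$ is constant and the update $W_{b+1}=W_b+w_b'$ proceeds exactly as in vanilla Exp3.

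Concretely, first I would partition $\{0,\dots,B-1\}$ into the $|\mathcal{L}|+1$ maximal ``constant-$s_b$ runs'' delimited by the transitions $\{b^l\}$, matched to the statement's indexing $[b^l,b^{l+1}-1]$ up to an off-by-one shift that only renames which $i^l$ certifies which batch. On a single run $[a,a']$ with $\eta_b\equiv\eta^*$ and $W_a(\cdot)\equiv 0$, the identity
\[
\mathbb{E}_{k\sim p_b}\exp(-\eta^* w_b'(k)) = \frac{\sum_k \exp(-\eta^*(W_b(k)+w_b'(k)))}{\sum_k \exp(-\eta^* W_b(k))}
\]
holds because $p_b\propto\exp(-\eta^* W_b)$, and the product across the run telescopes (using $W_{b+1}=W_b+w_b'$ strictly inside) to a single ratio whose denominator equals $N$. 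Lower-bounding the numerator by $\exp(-\eta^*\sum_{b=a}^{a'} w_b'(i^l))$ for any $i^l\in\mathcal{U}^c\subseteq\mathcal{P}_b$ and taking $-\log/\eta^*$ yields the standard single-run bound
\[
\sum_{b=a}^{a'} -\frac{1}{\eta^*}\log \mathbb{E}_{k\sim p_b}\exp(-\eta^* w_b'(k)) \leq \sum_{b=a}^{a'} w_b'(i^l) + \frac{\log N}{\eta^*}.
\]
Multiplying through by $\eta^*/\eta_0=1/(\alpha_b)^{2s_b}$---the common value on the run---rewrites the left-hand side as the $1/\eta_0$-normalized mix loss and the right-hand side as $\sum_{b=a}^{a'} w_b'(i^l)/(\alpha_b)^{2s_b}+\log N/\eta_0$.

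Summing these per-run bounds over all $|\mathcal{L}|+1$ runs, the overhead aggregates to $(|\mathcal{L}|+1)\log N/\eta_0=\Tilde{O}(U+1)/\eta_0$ by Lemma \ref{orderM2}. Taking $\mathbb{E}_{K_{B-1:0}}$ on both sides and invoking Lemma \ref{auxw} to replace $\mathbb{E}[w_b'(i^l)]$ by $\mathbb{E}[w_b^K(i^l)]$ then yields the claim.

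The main obstacle is the careful boundary bookkeeping: because Line \ref{line22} discards $w_{b^l}'$ rather than folding it into $W_{b^l+1}$, the telescoping must be carried out against the ``no-reset'' partition function $\sum_k\exp(-\eta^*(W_{b^l}+w_{b^l}'))$---distinct from the actual $W_{b^l+1}\equiv 0$---in order to account for the last batch's contribution inside the run. A second, subtler point is that the learning-rate swap factor $\eta^*/\eta_0$ multiplies only the $w_b'$ summation but not the $\log N$ penalty, which is precisely what keeps the aggregate penalty independent of how wildly $\eta_b$ oscillates across segments and thus bounded by $\Tilde{O}(U+1)/\eta_0$.
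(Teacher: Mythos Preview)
Your proposal is correct and follows essentially the same route as the paper: partition into maximal runs on which $s_b$ (hence $\eta_b$) is constant and $W_b$ accumulates from zero, telescope the mix loss within each run, pick out the single arm $i^l\in\mathcal{U}^c$ in the final numerator, collect one $\log N/\eta_0$ per run, and finish with $|\mathcal{L}|+1=O(U+1)$ via Lemma~\ref{orderM2} and the unbiasedness Lemma~\ref{auxw}. The paper carries out the very same decomposition, writing the per-run telescoping explicitly and observing that $\eta_{b^l}/\eta_0=(\alpha_{b^l})^{-2s_{b^l}}$ furnishes the factor in the right-hand sum.

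One small point you should make explicit when you write it up: within a run the controller pool can still shrink (a Break need not change $s_b$), so the telescoping is an inequality rather than an identity. Concretely, the numerator at batch $b$ is $\sum_{k\in\mathcal{P}_b}\exp(-\eta^* W_{b+1}(k))$ while the denominator at $b{+}1$ is $\sum_{k\in\mathcal{P}_{b+1}}\exp(-\eta^* W_{b+1}(k))$ with $\mathcal{P}_{b+1}\subseteq\mathcal{P}_b$; the paper uses exactly this monotonicity (its inequality~(\ref{bigw})). Likewise the initial denominator is $|\mathcal{P}_a|\leq N$, not $N$. Both corrections point in the favorable direction, so your argument goes through unchanged.
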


\begin{proof}
    Given $l=0,\dots,|\mathcal{L}|$, we can analyze a single mix loss for $b=b^l+1,\dots,b^{l+1}-1$ as follows:
    \begin{align}
        \nonumber -\frac{1}{\eta_0}\log(\mathbb{E}_{k\sim p_b}\exp(-\eta_b w_b'(k))) & = -\frac{1}{\eta_0}\log(\sum_{k\in\mathcal{P}_b} p_b(k)\exp(-\eta_b w_b'(k))) \\ \nonumber &= -\frac{1}{\eta_0}\log(\frac{\sum_{k\in\mathcal{P}_b} \exp(-\eta_{b} W_{b}(k))\exp(-\eta_b w_b'(k))}{\sum_{i\in\mathcal{P}_{b}}\exp(-\eta_{b} W_{b}(i))}) \\ \label{multiii} &= -\frac{1}{\eta_0}\log(\frac{\sum_{k\in\mathcal{P}_b} \exp(-\eta_{b} W_{b+1}(k))}{\sum_{i\in\mathcal{P}_{b}}\exp(-\eta_{b} W_{b}(i))}),
    \end{align}
    while a mix loss for $b=b^l$ is as follows:
    \begin{align}
        \nonumber -\frac{1}{\eta_0}\log(\mathbb{E}_{k\sim p_b}\exp(-\eta_{b^l} w_{b^l}'(k)))&= -\frac{1}{\eta_0}\log(\sum_{k\in\mathcal{P}_{b^l}} p_{b^l}(k)\exp(-\eta_{b^l} w_{b^l}'(k))) \\\nonumber &=-\frac{1}{\eta_0}\log(\frac{1}{|\mathcal{P}_{b^l}|}\sum_{k\in\mathcal{P}_{b^l}}\exp(-\eta_{b^l} w_{b^l}'(k))) \\\label{single} &\leq \frac{\log N}{\eta_0}-\frac{1}{\eta_0}\log(\sum_{k\in\mathcal{P}_{b^l}}\exp(-\eta_{b^l} w_{b^l}'(k))) \\\label{multi} &= \frac{\log N}{\eta_0}-\frac{1}{\eta_0}\log(\sum_{k\in\mathcal{P}_{b^l}}\exp(-\eta_{b^l} W_{b^l+1}(k))),   
    \end{align}
    where the last equality only holds when $b^{l+1}>b^l+1$. Now, notice that the batches $b=b^l,\dots,b^{l+1}-1$ share the same learning rate; \textit{i.e.}, $\eta_{b^l}=\dots=\eta_{b^{l+1}-1}$ since the same $s_b$ yields the same $\alpha_b$, and thus the same $\eta_b$.  Thus, in the case where $b^{l+1}>b^l+1$, we have 
    \begin{align}
        \nonumber\sum_{b=b^l}^{b^{l+1}-1} -\frac{1}{\eta_0}\log(\mathbb{E}_{k\sim p_b}\exp(-\eta_b w_b'(k))) &\nonumber \leq \frac{\log N}{\eta_0} -\frac{1}{\eta_0} \log(\Pi_{b=b^l+1}^{b^{l+1}-1} \frac{\sum_{k\in \mathcal{P}_{b-1}}\exp (-\eta_{b^l} W_{b}(k))}{\sum_{k\in \mathcal{P}_{b}}\exp(-\eta_{b^l} W_{b}(k))}) \\ &\nonumber \hspace{30mm} -\frac{1}{\eta_0} \log(\sum_{k\in\mathcal{P}_{b^{l+1}-1}}\exp (-\eta_{b^l}W_{b^{l+1}}(k))) \\ \label{bigw} & \leq  \frac{\log N}{\eta_0}-\frac{1}{\eta_0} \log(\sum_{k\in\mathcal{P}_{b^{l+1}-1}}\exp (-\eta_{b^l}W_{b^{l+1}}(k))),
    \end{align}
    where the first inequality is by (\ref{multiii}) and (\ref{multi}) and the second inequality comes from $\mathcal{P}_{b}\subseteq \mathcal{P}_{b-1}$. Considering both cases (\ref{single}) and (\ref{bigw}), for any controller $i^0,\dots,i^{|\mathcal{L}|}\in\mathcal{U}^c$, one can write
    \begin{align}\label{auxw1}
        \nonumber \sum_{b=0}^{B-1} -\frac{1}{\eta_0}\log(\mathbb{E}_{k\sim p_b}\exp(-\eta_b w_b'(k))) &\leq \sum_{l=0}^{|\mathcal{L}|} \biggr[\frac{\log N}{\eta_0}-\frac{1}{\eta_0} \log(\sum_{k\in\mathcal{P}_{b^{l+1}-1}}\exp (-\eta_{b^l}\sum_{b=b^l}^{b^{l+1}-1} w_b'(k)   ))  \biggr] \\\nonumber & \leq \frac{(|\mathcal{L}|+1)\log N}{\eta_0} - \sum_{l=0}^{|\mathcal{L}|} \frac{1}{\eta_0} \log (\exp (-\eta_{b^l}\sum_{b=b^l}^{b^{l+1}-1} w_b'(i^l) )) \\ &= \frac{\Tilde{O}(U+1)}{\eta_0} +\sum_{l=0}^{|\mathcal{L}|} \frac{\sum_{b=b^l}^{b^{l+1}-1} w_b'(i^l)}{(\alpha_{b^l})^{2s_{b^l}}},
    \end{align}
    where the first inequality considers $W_{b^l+1}(k)=\sum_{b=b^l}^{b^{l+1}-1} w_b'(k)$ in (\ref{bigw}), the second inequality is because any controller $i^l$ is an element of $\mathcal{P}_{b^{l+1}-1}$, and the last equality comes from the definition of $\eta_{b^l}=\eta_0/(\alpha_{b^l})^{2s_{b^l}}$ and $|\mathcal{L}|=O(U)$ by Lemma \ref{orderM}. Finally, by Lemma \ref{auxw}, taking the expectation of (\ref{auxw1}) with respect to $K_{B-1:0}$ completes the proof. 
\end{proof}

Now, we consider the cumulative mixability gap. 
\begin{lemma}[cumulative mixability gap]\label{mixability gap}
    In Algorithm \ref{Algorithm 1}, there exists a set of controllers $i^b \in \mathcal{P}_b$ for $b=0,\dots,B-1$ such that the cumulative mixability gap is upper-bounded as follows:
    \[
    \mathbb{E}_{K_{B-1:0}}\sum_{b=0}^{B-1} \mathbb{E}_{k\sim p_b} [w_b'(k)]+\frac{1}{\eta_0}\log(\mathbb{E}_{k\sim p_b}\exp(-\eta_b w_b'(k))) \leq \frac{O(U)}{2 \eta_0} + \frac{\eta_0 N }{2} \sum_{b=0}^{B-1} \mathbb{E}_{K_{b-1:0}} (w_b^K (i^b))^2
    \]
\end{lemma}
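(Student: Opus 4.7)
My plan is to imitate the standard Exp3 mixability-gap bound but to track carefully the price paid for using two different learning rates, $\eta_0$ outside and $\eta_b$ inside the logarithm. The first step is to exploit the sparsity of $w_b'$: since $w_b'(k)=\frac{w_b(K_b)}{p_b(k)}\mathcal{I}_{(K_b=k)}$ vanishes off $k=K_b$, one has
\[
\mathbb{E}_{k\sim p_b}\exp(-\eta_b w_b'(k)) = 1 + p_b(K_b)\bigl(\exp(-\eta_b w_b(K_b)/p_b(K_b)) - 1\bigr).
\]
Applying $e^{-y}\leq 1-y+y^2/2$ (valid for $y\geq 0$, which holds because costs are nonnegative) followed by $\log(1+u)\leq u$, and using $\mathbb{E}_{k\sim p_b}[w_b'(k)] = w_b(K_b)$ from Lemma \ref{cost}, yields the per-batch upper bound
\[
(1-\eta_b/\eta_0)\,w_b(K_b) + \frac{\eta_b^2}{2\eta_0}\cdot\frac{w_b(K_b)^2}{p_b(K_b)}.
\]
The definition $\eta_b=\eta_0/(\alpha_b)^{2s_b}$ guarantees $\eta_b\leq\eta_0$, so the quadratic coefficient satisfies $\eta_b^2/\eta_0\leq\eta_0$.

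The next step handles the rate-mismatch term $(1-\eta_b/\eta_0)\,w_b(K_b)$. The decisive observation is that $1-\eta_b/\eta_0$ is strictly positive only when $s_b\neq 0$, i.e., only on the set $\mathcal{V}$, whose cardinality is $O(U)$ by Lemma \ref{orderM}. I split this term by weighted AM-GM $uv\leq \frac{u^2}{2c}+\frac{cv^2}{2}$ with $u=(1-\eta_b/\eta_0)\sqrt{p_b(K_b)}$, $v=w_b(K_b)/\sqrt{p_b(K_b)}$, and $c=\eta_0$, which gives
\[
(1-\eta_b/\eta_0)\,w_b(K_b) \leq \frac{(1-\eta_b/\eta_0)^2 p_b(K_b)}{2\eta_0} + \frac{\eta_0\, w_b(K_b)^2}{2\,p_b(K_b)}.
\]
Since $(1-\eta_b/\eta_0)^2\leq 1$ and $p_b(K_b)\leq 1$, the first piece is at most $\frac{1}{2\eta_0}$ per batch and is supported on $\mathcal{V}$, so its sum is at most $\frac{|\mathcal{V}|}{2\eta_0}=\frac{O(U)}{2\eta_0}$. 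The second piece has the same shape as the original quadratic remainder.

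I then sum over $b=0,\dots,B-1$, take $\mathbb{E}_{K_{B-1:0}}$, and invoke the second inequality of Lemma \ref{auxw} with $i^b=\arg\max_{i\in\mathcal{P}_b}(w_b^K(i))^2$ to turn each $\mathbb{E}[\frac{\eta_0}{2}\frac{w_b(K_b)^2}{p_b(K_b)}]$ into $\frac{\eta_0 N}{2}\,\mathbb{E}_{K_{b-1:0}}(w_b^K(i^b))^2$. Combining the two quadratic contributions (the original one and the AM-GM byproduct) produces the second term of the stated bound, with an absolute constant absorbed into the $\frac{\eta_0 N}{2}$ coefficient. The main obstacle is the precise choice of the AM-GM weight $c$: pairing the mismatch factor $1-\eta_b/\eta_0$ with $\sqrt{p_b(K_b)}$ so that the leftover becomes a probability (which is bounded by $1$) is exactly what allows Lemma \ref{orderM} to convert the mismatch penalty into the $O(U)/\eta_0$ form; a naive split would leave a residue proportional to $w_b(K_b)$, which could grow with the trajectory and destroy the regret bound.
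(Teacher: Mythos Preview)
Your proposal is correct and tracks the paper's proof closely: both reduce the per-batch mixability gap to a quadratic term $\frac{\eta_0}{2}\frac{w_b(K_b)^2}{p_b(K_b)}$ plus a penalty $\frac{1}{2\eta_0}$ that is supported only on $\mathcal{V}=\{b:s_b\neq 0\}$, invoke Lemma~\ref{orderM} to get $|\mathcal{V}|=O(U)$, and finish with Lemma~\ref{auxw}. The one tactical difference is that for $b\in\mathcal{V}$ the paper never carries the mismatch factor $(1-\eta_b/\eta_0)$: it simply bounds the gap by $\mathbb{E}_{k\sim p_b}[w_b'(k)]=w_b(K_b)$ via $e^{-\eta_b w_b'(k)}\leq 1$, and then adds the nonnegative completed square $\frac{1}{2\eta_0}\,\mathbb{E}_{k\sim p_b}(\eta_0 w_b'(k)-1)^2$ to land exactly on $\frac{\eta_0}{2}\frac{w_b(K_b)^2}{p_b(K_b)}+\frac{1}{2\eta_0}$. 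Your unified route (keep $\eta_b$, then AM-GM the mismatch term) is equally valid but leaves an extra quadratic contribution $\frac{\eta_0}{2}\frac{w_b(K_b)^2}{p_b(K_b)}$ on the $\mathcal{V}$-batches, so your final coefficient is $\eta_0 N$ rather than $\frac{\eta_0 N}{2}$; this factor of two is harmless downstream since every use of the lemma is wrapped in $O(\cdot)$.
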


\begin{proof}
    Given the set $\mathcal{V}$, we can analyze a single mixability gap for $b\notin \mathcal{V}$ and $b\in \mathcal{V}$, respectively. Since $s_b=0$ for $b\notin \mathcal{V}$, given $K_{b-1},\dots,K_0$, we have
    \begin{align}\label{notinV}
        \nonumber\mathbb{E}_{k\sim p_b} [w_b'(k)]+\frac{1}{\eta_0}\log(\mathbb{E}_{k\sim p_b}\exp(-&\eta_b w_b'(k))) = \mathbb{E}_{k\sim p_b} [w_b'(k)]+\frac{1}{\eta_0}\log(\mathbb{E}_{k\sim p_b}\exp(-\eta_0 w_b'(k))) \\ \nonumber&\leq  \mathbb{E}_{k\sim p_b} [w_b'(k)]+\frac{1}{\eta_0}(\mathbb{E}_{k\sim p_b}\exp(-\eta_0 w_b'(k))-1) \\ \nonumber& \leq \mathbb{E}_{k\sim p_b} [w_b'(k)]+\frac{1}{\eta_0}(\mathbb{E}_{k\sim p_b}\frac{\eta_0^2 (w_b'(k))^2}{2}-\eta_0 w_b'(k))\\\nonumber& = \frac{\eta_0}{2}\mathbb{E}_{k\sim p_b}[(w_b'(k))^2] \\ &= \frac{\eta_0}{2}\sum_{k\in\mathcal{P}_b} p_b(k) \frac{(w_b(K_b))^2}{(p_b(k))^2}\mathcal{I}_{(K_b=k)} = \frac{\eta_0}{2}\frac{(w_b(K_b))^2}{p_b(K_b)},
    \end{align}
    where the first inequality uses $\log(x)\leq x-1$ for all $x\in\mathbb{R}$ and the second inequality uses $e^x \leq 1+x+\frac{x^2}{2}$ for all $x\in\mathbb{R}$. Now, for $b\in \mathcal{V}$, given $K_{b-1},\dots,K_0$, we obtain that
    \begin{align}\label{inV}
        \nonumber\mathbb{E}_{k\sim p_b} [w_b'(k)]+\frac{1}{\eta_0}\log(\mathbb{E}_{k\sim p_b}\exp(&-\eta_b w_b'(k))) \leq \mathbb{E}_{k\sim p_b} [w_b'(k)]+\frac{1}{\eta_0}(\mathbb{E}_{k\sim p_b}\exp(-\eta_b w_b'(k)) -1) \\\nonumber& \leq \mathbb{E}_{k\sim p_b} [w_b'(k)] \\ \nonumber& \leq \mathbb{E}_{k\sim p_b} [w_b'(k)]+\frac{1}{\eta_0}(\mathbb{E}_{k\sim p_b}\frac{\eta_0^2 (w_b'(k))^2}{2}-\eta_0 w_b'(k)+\frac{1}{2}) \\&= \frac{\eta_0}{2}\mathbb{E}_{k\sim p_b}[(w_b'(k))^2] + \frac{1}{2 \eta_0} = \frac{\eta_0}{2}\frac{(w_b(K_b))^2}{p_b(K_b)} + \frac{1}{2 \eta_0},
    \end{align}
    where the second inequality uses $e^x \leq 1$ for all $x\leq 0$ and the third inequality uses $\frac{x^2}{2}+x+\frac{1}{2}\geq 0$ for all $x\in\mathbb{R}$. Since $|\mathcal{V}|=O(U)$ by Lemma \ref{orderM}, we have inequality (\ref{inV}) holding at most $O(U)$ times and (\ref{notinV}) holding in the remaining batches among $b=0,\dots,B-1$. Finally, by Lemma \ref{auxw}, taking expectation of (\ref{notinV}) and (\ref{inV}) with respect to $K_{B-1:0}$ completes the proof.
\end{proof}

We let $x_t$ and $u_t$ denote the state and action sequence in the algorithm depending on the context. We let $x_t^K(i)$ and $u_t^K(i)$ for $t=t_b,\dots,t_{b+1}-1$ denote 
the state and action sequence 
generated by selecting the controllers before batch $b$ according to Algorithm \ref{Algorithm 1}, while selecting the controller $i$ at batch $b$.
Accordingly, we have $w_b^K (i) = \sum_{t=t_b}^{t_{b+1}-1}  c_t(x_t^K(i), u_t^K(i))$. We also let $x_t^*$ and $u_t^*$ denote the optimal state and action sequence generated by the best stabilizing controller $i^*$ that satisfies both of Definitions \ref{iss} and \ref{is}; \textit{i.e.}, $i^* = \arg\min_{i\in \mathcal{S}} \sum_{t=0}^T c_t(x_t, \pi_{i^*}(x_t))$ subject to the transition dynamics.

\begin{lemma}\label{mixgapfurther}
    In Algorithm \ref{Algorithm 1}, suppose that $\frac{\tau_1}{\tau_0}(\beta(\tau_0))^2<\frac{1}{2\sqrt{2}}$. For any controller $i^b\in\mathcal{P}_b$ for $b=0,\dots,B-1$, we have 
    \[
    \sum_{b=0}^{B-1} \mathbb{E}_{K_{b-1:0}} (w_b^K (i^b))^2 = \exp(O(U))O(\tau_{B-1}H(\tau_{B-1}))+ O(\sum_{b=0}^{B-1} (\tau_b)^2).
    \]
\end{lemma}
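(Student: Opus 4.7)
The plan is to produce a per-batch pointwise bound on $(w_b^K(i^b))^2$ and then aggregate across the blocks $[b_u, b_{u+1}-1]$ between consecutive Break activations. The starting point is a uniform cost bound: combining Assumption \ref{cost functions} with the action bound $\|u\| \leq L_\pi \|x\| + \pi_{0,\text{max}}$ from (\ref{policy}) yields $|c_t(x,u)| = O(\|x\|^2 + 1)$. Crucially, for every $i^b \in \mathcal{P}_b$ the stability check in Line \ref{line5} applied to the hypothetical trajectory ensures $\|x_t^K(i^b)\| \leq \beta(t-t_b)\|x_{t_b}\| + \gamma w_{\text{max}}$ for all $t \in [t_b, t_{b+1}-1]$, so even a destabilizing $i^b$ still present in $\mathcal{P}_b$ is kept inside the ISS envelope throughout batch $b$. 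Using $\beta(\cdot)^2 \leq \beta(\cdot)$ (since $\beta(0)=1$ and $\beta$ is non-increasing), summing the step-wise costs gives $|w_b^K(i^b)| = O(H(\tau_b)\|x_{t_b}\|^2 + \tau_b)$, and squaring produces
\[
(w_b^K(i^b))^2 = O\bigl((H(\tau_b))^2 \|x_{t_b}\|^4 + \tau_b^2\bigr).
\]

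Next I would apply a block-by-block decomposition in the spirit of Lemma \ref{weighted two Break}. Inside block $u$, iterating the non-Break update $\|x_{t_{b+1}}\| \leq \beta(\tau_b)\|x_{t_b}\| + \gamma w_{\text{max}}$ yields $\|x_{t_b}\|^4 = O\bigl([\beta(\tau_{b_u})]^{4(b-b_u)}\|x_{t_{b_u}}\|^4 + 1\bigr)$, while Lemma \ref{H-tau} gives $(H(\tau_b))^2 \leq (H(\tau_{b_u}))^2 (\tau_1/\tau_0)^{2(b-b_u)}$. Multiplying these, the common ratio of the geometric series inside the block becomes $\bigl[\tfrac{\tau_1}{\tau_0}(\beta(\tau_0))^2\bigr]^2 < \tfrac{1}{8}$ by the sharpened hypothesis, so
\[
\sum_{b=b_u}^{b_{u+1}-1}(H(\tau_b))^2\|x_{t_b}\|^4 \leq O\bigl((H(\tau_{b_u}))^2\|x_{t_{b_u}}\|^4\bigr) + O\Bigl(\sum_{b=b_u}^{b_{u+1}-1}(H(\tau_b))^2\Bigr).
\]
Chaining Lemma \ref{next Break} with (\ref{hx relationship}) gives $\|x_{t_{b_u}}\| \leq L_f(1+L_\pi)\|x_{t_{b_{u-1}}}\| + O(1)$, hence $\|x_{t_{b_u}}\|^4 = \exp(O(u))$. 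Because $H(\tau_{b_u})$ is non-decreasing in $u$, summing over $u$ collapses to the terminal term:
\[
\sum_{u=0}^{U}(H(\tau_{b_u}))^2\|x_{t_{b_u}}\|^4 \leq \exp(O(U))(H(\tau_{B-1}))^2 \leq \exp(O(U))\tau_{B-1}H(\tau_{B-1}),
\]
where the last step uses $H(t) \leq t$. The residual satisfies $\sum_b(H(\tau_b))^2 \leq \sum_b \tau_b^2$, so adding this together with the $O(\tau_b^2)$ contribution from the per-batch bound produces the claimed expression, and since $U \leq |\mathcal{U}|$ the inequality survives the expectation $\mathbb{E}_{K_{b-1:0}}$.

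The main obstacle is the calibration of the sharpened hypothesis $\frac{\tau_1}{\tau_0}(\beta(\tau_0))^2 < \frac{1}{2\sqrt{2}}$: the fourth power $\beta(\tau_0)^4$ from squaring the state envelope twice must overpower the squared growth $(\tau_1/\tau_0)^2$ of $(H(\tau_b))^2$, and the numerical threshold $1/(2\sqrt{2})$ is precisely what makes $[\tfrac{\tau_1}{\tau_0}(\beta(\tau_0))^2]^2 < 1/8$ so that the $O(\cdot)$ absorbs the geometric constants uniformly over blocks. A secondary delicate point is the interpretation of $x_t^K(i^b)$: the Break rule must apply to the hypothetical trajectory as well, which is what keeps even destabilizing $i^b \in \mathcal{U}\cap\mathcal{P}_b$ inside the ISS envelope required for the per-step cost bound.
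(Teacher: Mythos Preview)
Your argument is correct and follows the same overall structure as the paper: a per-batch bound on $(w_b^K(i^b))^2$ in terms of $\|x_{t_b}\|^4$ and a batch-length factor, followed by the block-by-block aggregation of Lemmas~\ref{weighted two Break}--\ref{along Break}. The one variation is that you bound $|w_b^K(i^b)|$ and then square, producing the weight $(H(\tau_b))^2$, whereas the paper applies Cauchy--Schwarz $(\sum_t c_t)^2\leq\tau_b\sum_t c_t^2$ and obtains the weight $\tau_b H(\tau_b)$; since $H(\tau_b)\leq\tau_b$, both are dominated by $\tau_b^2$ and the final bound is identical.

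One small correction to your commentary: the threshold $1/(2\sqrt{2})$ is not \emph{precisely} what your own route requires. Because you iterate the linear state bound fully within a block and only then take the fourth power, the factor $8$ from $(a+b)^4\leq 8(a^4+b^4)$ sits as a constant multiplier outside the geometric series rather than inside the ratio; your ratio is $(\tau_1/\tau_0)^2\beta(\tau_0)^4$, and mere strict inequality $<1$ would suffice. It is the paper's approach---applying the fourth-power inequality at each step of the weighted recursion---that inserts the $8$ into the ratio and forces the sharper constant $\tfrac{\tau_1^2}{\tau_0^2}\,8\beta(\tau_0)^4<1$.
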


\begin{proof}
    By Assumption \ref{cost functions}, for all $x \in\mathbb{R}^n$ and $u\in\mathbb{R}^m$, we have 
    \begin{align}\label{costequation}
        \nonumber |c_t(x,u)| &=|c_t(x,u)-c_t(0,0)+c_t(0,0)| \leq |c_t(x,u)-c_t(0,0)| + |c_t(0,0)| \\ \nonumber &\leq (L_{c_1}(\|x\|+\|u\|)+L_{c2})(\|x\|+ \|u\|) + c_{0,\text{max}} \\ \nonumber &= L_{c_1}(\|x\|+\|u\|)^2 +L_{c2}(\|x\|+ \|u\|) + c_{0,\text{max}} \\ &\leq 2L_{c_1}(\|x\|^2+\|u\|^2) +L_{c2}(\|x\|+ \|u\|) + c_{0,\text{max}},
    \end{align}
    where the last inequality is due to Cauchy–Schwarz inequality. Thus, we can upper-bound $(w_b^K (i^b))^2$ for any controller $i^b\in\mathcal{P}_b$ for $b=0,\dots,B-1$ as follows:
    \begin{align}\label{w squared}
        \nonumber (w_b^K (i^b))^2 &= \biggr[\sum_{t=t_b}^{t_{b+1}-1}  c_t(x_t^K(i^b), u_t^K(i^b))\biggr]^2 \leq \sum_{t=t_b}^{t_{b+1}-1}  c_t(x_t^K(i^b), u_t^K(i^b))^2 (t_{b+1}-t_b) \\ \nonumber&\leq (t_{b+1}-t_b) \sum_{t=t_b}^{t_{b+1}-1}  
        (2L_{c_1}(\|x_t^K(i^b)\|^2+\|u_t^K(i^b)\|^2) +L_{c2}(\|x_t^K(i^b)\|+ \|u_t^K(i^b)\|) + c_{0,\text{max}})^2 \\ &\leq 5(t_{b+1}-t_b) \sum_{t=t_b}^{t_{b+1}-1} (4L_{c_1}^2(\|x_t^K(i^b)\|^4 +\|u_t^K(i^b)\|^4) + L_{c2}^2 (\|x_t^K(i^b)\|^2+\|u_t^K(i^b)\|^2) + c_{0,\text{max}}^2)
    \end{align}

    where the first and the third inequalities are due to Cauchy–Schwarz inequality.

    From (\ref{state}), for $t_b<t\leq t_{b+1}-1$, we have
    \begin{align}
        \label{double}&\|x_t^K(i^b)\|^2 \leq 2 [\beta(t-t_b)]^2 \|x_{t_b}^K(i^b)\|^2 + 2\gamma^2 w_\text{max}^2 \\ \label{quadruple}&\|x_t^K(i^b)\|^4 \leq 8 [\beta(t-t_b)]^4 \|x_{t_b}^K(i^b)\|^4 + 8\gamma^4 w_\text{max}^4,
    \end{align}
    where the inequalities are by Cauchy-Schwarz inequality. Accordingly, we obtain that 
    \begin{align}
        \label{23}&\sum_{t=t_b}^{t_{b+1}-1} \|x_t^K(i^b)\|^2 \leq 2 H(t_{b+1}-t_b) \|x_{t_b}^K(i^b)\|^2 + 2\gamma^2 w_\text{max}^2 (t_{b+1}-t_b-1)\\
        \label{24}&\sum_{t=t_b}^{t_{b+1}-1} \|x_t^K(i^b)\|^4 \leq 8 H(t_{b+1}-t_b) \|x_{t_b}^K(i^b)\|^4 + 8\gamma^4 w_\text{max}^4 (t_{b+1}-t_b-1),
    \end{align}
    where we use $\beta(\cdot)\leq1$ to derive $\sum_{t=0}^{t_{b+1}-t_b-1} [\beta(t)]^p \leq \sum_{t=0}^{t_{b+1}-t_b-1} [\beta(t)] = H(t_{b+1}-t_b)$ for $p\geq1$. 

    From (\ref{policy}), for $t_b\leq t\leq t_{b+1}-1$, we have 
    \begin{align}
       \label{25}&\|u_t^K(i^b)\|^2 \leq 2 L_\pi^2 \|x_t^K(i^b)\|^2 + 2 \pi_{0,\text{max}}^2 \\ \label{26}&\|u_t^K(i^b)\|^4 \leq 8 L_\pi^4 \|x_t^K(i^b)\|^4 + 8 \pi_{0,\text{max}}^4,
    \end{align}
    where the inequalities are by Cauchy-Schwarz inequality. Now, we substitute (\ref{23}), (\ref{24}), (\ref{25}), (\ref{26}), and $t_{b+1}-t_b\leq \tau_b$ into the right-hand side of (\ref{w squared}) to upper-bound $(w_b^K (i^b))^2$ as follows:
    \begin{align}\label{wcubed}
        \nonumber(w_b^K (i^b))^2 &\leq 5\tau_b [32 L_{c1}^2 (1+8L_\pi^4)   H(\tau_b) \|x_{t_b}^K(i^b)\|^4  + 2L_{c2}^2 (1+2L_\pi^2)  H(\tau_b) \|x_{t_b}^K(i^b)\|^2] + \\ \nonumber& 5\tau_b^2 [ 32 L_{c1}^2 ((1+8L_\pi^4) \gamma^4 w_\text{max}^4  +  \pi_{0,\text{max}}^4 ) + 2L_{c2}^2( (1+2L_\pi^2) \gamma^2 w_\text{max}^2 +  \pi_{0,\text{max}}^2 ) + c_{0,\text{max}}^2 ] \\\nonumber& = M_3 \tau_b H(\tau_b) \|x_{t_b}^K(i^b)\|^4 + M_4 \tau_b H(\tau_b) \|x_{t_b}^K(i^b)\|^2 + M_5 \tau_b^2 \\ &= M_3 \tau_b H(\tau_b) \|x_{t_b}\|^4 + M_4 \tau_b H(\tau_b) \|x_{t_b}\|^2 + M_5 \tau_b^2,
    \end{align}
    where $M_3, M_4, M_5$ are constants determined by $L_{c1}, L_{c2}, L_\pi, \gamma, w_{\text{max}}, \pi_{0,\text{max}}$, and $c_{0,\text{max}}$. The last equality comes from $x_{t_b}^K(i^b)=x_{t_b}$ for any $i^b\in\mathcal{P}_b$. 

    Meanwhile, one can upper-bound both $\sum_{b=0}^{B-1}\tau_b H(\tau_b) \|x_{t_b}\|^4$ and $\sum_{b=0}^{B-1}\tau_b H(\tau_b) \|x_{t_b}\|^2$ by successively applying Lemma \ref{single batch}, \ref{weighted two Break}, and \ref{along Break} in the same fashion as presented in the proof of Lemma \ref{sum of state norms}. Since $\frac{\tau_1^2}{\tau_0^2}8(\beta(\tau_0))^4<1$, by (\ref{double}) and (\ref{quadruple}), there exists $C_1, C_2 \geq 1$ such that 
    \begin{align*}
        &\sum_{b=0}^{B-1}\tau_b  H(\tau_b) \|x_{t_b}\|^4 = O([8L_f^4(1+L_\pi)^4\beta(0)^4C_1]^{U}(\|x_0\|^4+\tau_{b_U}H(\tau_{b_U})))+8\gamma^4 w_{\text{max}}^4\cdot O(\sum_{b=0}^{B-1} \tau_b H(\tau_b)) \\ &\sum_{b=0}^{B-1}\tau_b  H(\tau_b) \|x_{t_b}\|^2 = O([2L_f^2(1+L_\pi)^2\beta(0)^2C_2]^{U}(\|x_0\|^2+\tau_{b_U}H(\tau_{b_U})))+2\gamma^2 w_{\text{max}}^2\cdot O(\sum_{b=0}^{B-1} \tau_b H(\tau_b)).
    \end{align*}
    
    Substituting the equalities into the summation of (\ref{wcubed}) for $b=0,\dots,B-1$ yields
    \begin{align}\label{expou}
    \sum_{b=0}^{B-1} (w_b^K (i^b))^2 = \exp(O(U))O(\tau_{b_U}H(\tau_{b_U}))+ O(\sum_{b=0}^{B-1} \tau_b H(\tau_b)) + O(\sum_{b=0}^{B-1} (\tau_b)^2).
    \end{align}
    
    Notice that taking expectation of $(w_b^K (i^b))^2$ with respect to $K_{b-1:0}$ does not affect the inequality. Finally, $\tau_{b_U} \leq \tau_{B-1}$ and $H(\tau_b) = o(\tau_b)$ completes the proof.
\end{proof}

\begin{lemma}\label{mixlossfurther}
    In Algorithm \ref{Algorithm 1}, for the best stabilizing controller $i^*\in\mathcal{S}$, we have
    \[
    \mathbb{E}_{K_{B-1:0}} \sum_{b=0}^{B-1}\sum_{t=t_b}^{t_{b+1}-1} \biggr[ \frac{ c_t(x_t^K(i^*), u_t^K(i^*))}{(\alpha_b)^{2 s_b}} - c_t(x_t^*, u_t^*)\biggr] \leq O(U) + O(\sum_{b=0}^{B-1}H(\tau_b)).
    \]
\end{lemma}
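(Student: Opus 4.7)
The plan is to decompose each per-time-step summand using a ``regularization-shift'' trick, bound the Lipschitz-difference piece via incremental stability of $i^*$, and handle the shift piece by invoking $|\mathcal{V}| = O(U)$ from Lemma \ref{orderM}. Concretely, write
\[
\frac{c_t(x_t^K(i^*), u_t^K(i^*))}{(\alpha_b)^{2 s_b}} - c_t(x_t^*, u_t^*) = \frac{c_t(x_t^K(i^*), u_t^K(i^*)) - c_t(x_t^*, u_t^*)}{(\alpha_b)^{2 s_b}} + \left(\frac{1}{(\alpha_b)^{2 s_b}} - 1\right) c_t(x_t^*, u_t^*),
\]
and control the two pieces separately. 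I expect the Lipschitz-difference piece to yield the $O(\sum_b H(\tau_b))$ term and the regularization-shift piece to yield the $O(U)$ term.

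For the Lipschitz-difference piece, I would apply Assumption \ref{cost functions} to upper bound the numerator by the local Lipschitz expression, using that the max state/action norms are bounded via ISS of $i^*$ (Definition \ref{iss}) and Lipschitz continuity of $\pi_{i^*}$ (Assumption \ref{controller set}), and that $\|x_t^K(i^*) - x_t^*\| + \|u_t^K(i^*) - u_t^*\|$ is controlled by incremental stability (Definition \ref{is}) together with Lipschitz $\pi_{i^*}$. Combining, I obtain a per-step estimate of the form
\[
|c_t(x_t^K(i^*), u_t^K(i^*)) - c_t(x_t^*, u_t^*)| \leq G_1 \beta(t-t_b)^2 \|x_{t_b}\|\,\|x_{t_b} - x_{t_b}^*\| + G_2 \beta(t-t_b)\|x_{t_b} - x_{t_b}^*\|.
\]
Using the algorithmic invariant $\|x_{t_b}\| \leq (\alpha_b)^{s_b+1}\|x_0\| + \delta$ from Lines \ref{line11}--\ref{line20}, together with the boundedness of $\alpha_b$ (guaranteed by Line \ref{line14} and Lemma \ref{next Break}), both normalized coefficients $\|x_{t_b}\|\,\|x_{t_b} - x_{t_b}^*\|/(\alpha_b)^{2s_b}$ and $\|x_{t_b} - x_{t_b}^*\|/(\alpha_b)^{2s_b}$ collapse to $O(1)$. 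Hence the per-step bound is $O(\beta(t-t_b))$, which sums to $O(H(\tau_b))$ over a batch and to $O(\sum_{b=0}^{B-1} H(\tau_b))$ overall.

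For the regularization-shift piece, the factor $\tfrac{1}{(\alpha_b)^{2s_b}} - 1$ vanishes whenever $s_b = 0$, so only batches in $\mathcal{V}$ contribute. On those batches, ISS of $i^*$ bounds the optimal trajectory uniformly, so $|c_t(x_t^*, u_t^*)|$ is controlled by an absolute constant $E$, and $|\tfrac{1}{(\alpha_b)^{2s_b}} - 1| \leq 1$. After collapsing each $\mathcal{V}$-batch's shift contribution to a constant, Lemma \ref{orderM} ($|\mathcal{V}| = O(U)$) closes the argument with the $O(U)$ term.

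The main obstacle will be the last collapse. Bounding the shift piece by $E$ per time step in each $\mathcal{V}$-batch would inflate to $O(E\sum_{b\in\mathcal{V}}\tau_b) = O(U\cdot\tau_{B-1})$, which is too large. The fix is to exploit the sign $\tfrac{1}{(\alpha_b)^{2s_b}} - 1 \leq 0$ in concert with the nonnegativity (or at least boundedness from below) of the optimal cost $c_t(x_t^*, u_t^*)$ implied by the tracking-type structure of Assumption \ref{cost functions}, thereby reducing the per-batch contribution to a constant before applying the $|\mathcal{V}| = O(U)$ count. This mirrors, in the cost analysis, the same ``regularization absorbs the exponential factor'' mechanism that underlies the design of the adaptive learning rate in Line \ref{line27} of Algorithm \ref{Algorithm 1}.
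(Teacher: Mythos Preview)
Your approach mirrors the paper's: the same add-and-subtract decomposition, the same use of incremental stability of $i^*$ together with the algorithmic invariant $\|x_{t_b}\|<(\alpha_b)^{s_b+1}\|x_0\|+\delta$ (and the boundedness of $\alpha_b$) to make the normalized Lipschitz piece collapse to $O(\beta(t-t_b))$ per step, and the same appeal to $|\mathcal{V}|=O(U)$ from Lemma~\ref{orderM} for the shift piece. The paper carries out the Lipschitz step by splitting the factor $(\alpha_b)^{-2s_b}$ as $(\alpha_b)^{-s_b}\cdot(\alpha_b)^{-s_b}$, one copy absorbing the $\max$-norm coefficient and the other absorbing $\|x_{t_b}-x_{t_b}^*\|$; your product form $\|x_{t_b}\|\,\|x_{t_b}-x_{t_b}^*\|/(\alpha_b)^{2s_b}=O(1)$ is the same computation phrased differently.

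One caveat on your fix for the shift piece: mere boundedness from below of $c_t(x_t^*,u_t^*)$ does \emph{not} reduce the per-batch contribution to a constant---a lower bound $-E$ still leaves $(\tfrac{1}{(\alpha_b)^{2s_b}}-1)c_t(x_t^*,u_t^*)\le E$ per time step, hence $O(\tau_b)$ per $\mathcal{V}$-batch and $O(U\,\tau_{B-1})$ overall, exactly the blow-up you flagged. What you actually need is nonnegativity of the costs, so that $(\tfrac{1}{(\alpha_b)^{2s_b}}-1)c_t(x_t^*,u_t^*)\le 0$ for every $t$ and the shift term drops from the upper bound. Nonnegativity is not written into Assumption~\ref{cost functions}, but the paper relies on it implicitly elsewhere (the step $e^{-\eta_b w_b'(k)}\le 1$ in the mixability-gap bound already requires $w_b'(k)\ge 0$), and its own proof of this lemma is equally loose at precisely this step: it records a per-time-step bound of $M_6$ on the shift for $s_b\neq 0$ and then sums to $M_6|\mathcal{V}|$, whereas the literal per-step bound would give $M_6\sum_{b\in\mathcal{V}}(t_{b+1}-t_b)$. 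So drop the ``boundedness from below'' hedge and commit to nonnegativity; the rest of your argument is sound and matches the paper.
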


\begin{proof}
    Since $x_t^*$ is generated by a stabilizing controller, we have 
    \begin{align*}
        &\|x_t^*\| \leq \beta(t) \|x_0\| + \gamma w_\text{max}\leq \beta(0)\|x_0\| + \gamma w_\text{max} \\
        &\|x_t^*\|^2 \leq 2\beta(t)^2 \|x_0\|^2 + 2\gamma^2 w_\text{max}^2 \leq 2\beta(0)^2 \|x_0\|^2 + 2\gamma^2 w_\text{max}^2 ,
    \end{align*}
    where the inequalities are by Cauchy-Schwarz inequality and the non-increasing property of $\beta(\cdot)$. Then, by (\ref{policy}), (\ref{costequation}), and (\ref{25}), we have
    \begin{align}\label{30}
        \nonumber c_t(x_t^*, u_t^*) &\leq 2L_{c_1}(\|x_t^*\|^2+\|u_t^*\|^2) +L_{c2}(\|x_t^*\|+ \|u_t^*\|) + c_{0,\text{max}} \\ \nonumber&\leq 2L_{c_1}((1+2L_\pi^2)\|x_t^*\|^2+ 2\pi_{0,\text{max}}^2) +L_{c2}((1+L_\pi)\|x_t^*\|+ \pi_{0,\text{max}}) + c_{0,\text{max}} \\ \nonumber&\leq 4L_{c1}(1+2L_\pi^2) (\beta(0))^2 \|x_0\|^2 + L_{c2}(1+L_\pi)\beta(0)\|x_0\| + 4L_{c1}(1+2L_\pi^2)\gamma^2 w_\text{max}^2 \\&\hspace{23mm}+ L_{c2}(1+L_\pi)\gamma w_\text{max} + 4L_{c1}\pi_{0,\text{max}}^2 + L_{c2}\pi_{0,\text{max}} + c_{0,\text{max}} := M_6.
    \end{align}
    In Algorithm \ref{Algorithm 1}, one can write
    \begin{align}
        \label{xtb}&\biggr\|\frac{x_{t_b}}{(\alpha_b)^{s_b}}\biggr\| \leq \frac{  (\alpha_b)^{s_b+1}\|x_0\|+\delta }{(\alpha_b)^{s_b}}\leq \alpha_b \|x_0\| + \delta \\ \label{xtstar}&\biggr\|\frac{x_{t}^*}{(\alpha_b)^{s_b}}\biggr\| \leq \frac{  \beta(t) \|x_0\| + \gamma w_\text{max}}{(\alpha_b)^{s_b}}\leq\beta(0) \|x_0\| + \gamma w_\text{max},
    \end{align}
    where the equalities hold for the last inequalities of (\ref{xtb}) and (\ref{xtstar}) when $s_b=0$. 
    
    By Assumption \ref{cost functions}, for the best stabilizing controller $i^*\in\mathcal{S}$ and for $t_b\leq t < t_{b+1}$, we have
    \begin{align}\label{33}
        \nonumber&\frac{1}{(\alpha_b)^{2 s_b}}|c_t(x_t^K(i^*), u_t^K(i^*)) - c_t(x_t^*, u_t^*)| \\\nonumber &\leq \frac{1}{(\alpha_b)^{2 s_b}}(L_{c1}(\max\{\|x_t^K(i^*)\|, \|x_t^*\|\}+\max\{\|u_t^K(i^*)\|, \|u_t^*\|\})+L_{c2})(\|x_t^K(i^*)-x_t^*\|+\|u_t^K(i^*)-u_t^*\|) \\\nonumber &\leq \frac{1}{(\alpha_b)^{2 s_b}}(L_{c1}((1+L_\pi)\max\{\|x_t^K(i^*)\|, \|x_t^*\|\}+ \pi_{0,\text{max}})+L_{c2})(1+L_\pi)\|x_t^K(i^*)-x_t^*\| \\ \nonumber &= (1+L_\pi)(L_{c1}(1+L_\pi)\max\{\biggr\|\frac{x_t^K(i^*)}{(\alpha_b)^{s_b}}\biggr\|, \biggr\|\frac{x_t^*}{(\alpha_b)^{s_b}}\biggr\|\}+ \frac{L_{c1}\pi_{0,\text{max}}+L_{c2}}{(\alpha_b)^{s_b}})\biggr\|\frac{x_t^K(i^*)-x_t^*}{(\alpha_b)^{s_b}}\biggr\| \\ \nonumber &\leq (1+L_\pi)(\beta(t-t_b)L_{c1}(1+L_\pi)\max\{\biggr\|\frac{x_{t_b}^K(i^*)}{(\alpha_b)^{s_b}}\biggr\|, \biggr\|\frac{x_{t_b}^*}{(\alpha_b)^{s_b}}\biggr\|\}\\\nonumber&\hspace{40mm}+ \frac{L_{c1}(1+L_\pi)\gamma w_\text{max}+L_{c1}\pi_{0,\text{max}}+L_{c2}}{(\alpha_b)^{s_b}})\cdot \beta(t-t_b)\biggr\|\frac{x_{t_b}^K(i^*)-x_{t_b}^*}{(\alpha_b)^{s_b}}\biggr\| \\\nonumber &\leq L_{c1} (1+L_\pi)^2 \beta(t-t_b)^2 \biggr(\biggr\|\frac{x_{t_b}}{(\alpha_b)^{s_b}}\biggr\|+\biggr\|\frac{x_{t_b}^*}{(\alpha_b)^{s_b}}\biggr\|\biggr)^2 \\\nonumber &\hspace{30mm}+ \frac{L_{c1}(1+L_\pi)\gamma w_\text{max}+L_{c1}\pi_{0,\text{max}}+L_{c2}}{(\alpha_b)^{s_b}}(1+L_\pi) \beta(t-t_b) \biggr(\biggr\|\frac{x_{t_b}}{(\alpha_b)^{s_b}}\biggr\|+\biggr\|\frac{x_{t_b}^*}{(\alpha_b)^{s_b}}\biggr\|\biggr) \\ \nonumber&\leq L_{c1} (1+L_\pi)^2 \beta(t-t_b)^2 ((\alpha_b+\beta(0))\|x_0\| + \delta+\gamma w_\text{max}  )^2  \\\nonumber &\hspace{15mm}+ (L_{c1}(1+L_\pi)\gamma w_\text{max}+L_{c1}\pi_{0,\text{max}}+L_{c2})(1+L_\pi) \beta(t-t_b) ( (\alpha_b+\beta(0))\|x_0\| + \delta+\gamma w_\text{max}  ) \\ &\leq M_7 \beta(t-t_b)^2 + M_8 \beta(t-t_b),
    \end{align}
    where $M_7$ and $M_8$ are constants determined by $L_{c1}, L_{c2}, L_{\pi}, \pi_{0,\text{max}}, \beta(0), \delta, \gamma, w_{\text{max}}$ and $\max_{b\in\{0,1,\dots,B-1\}}\alpha_b$. Notice that $\alpha_{b}$ in Line \ref{line14} of Algorithm \ref{Algorithm 1} is upper-bounded by some constant by Lemma \ref{next Break}. The second inequality is by (\ref{policy}), the third inequality is due to Definition \ref{incrementally stable} and by leveraging the same stabilizing controller $i^*$ from $t_b$ for both trajectories $x_{t}^K(i^*)$ and $x_t^*$, the fourth inequality uses $x_{t_b}^K(i^*)=x_{t_b}$, and the fifth inequality is by (\ref{xtb}) and (\ref{xtstar}). By combining (\ref{30}) and (\ref{33}), we have
    \begin{align*}
        \nonumber\biggr|\frac{c_t(x_t^K(i^*), u_t^K(i^*))}{(\alpha_b)^{2 s_b}} - c_t(x_t^*, u_t^*)\biggr|
        &= \biggr| \frac{c_t(x_t^K(i^*), u_t^K(i^*))}{(\alpha_b)^{2 s_b}} - \frac{c_t(x_t^*, u_t^*)}{(\alpha_b)^{2 s_b}} - \frac{(\alpha_b)^{2 s_b}-1}{(\alpha_b)^{2 s_b}} c_t(x_t^*, u_t^*) \biggr| \\ \nonumber&\leq \frac{1}{(\alpha_b)^{2 s_b}}| c_t(x_t^K(i^*), u_t^K(i^*)) - c_t(x_t^*, u_t^*)| + \frac{(\alpha_b)^{2 s_b}-1}{(\alpha_b)^{2 s_b}} c_t(x_t^*, u_t^*) \\ &\leq  \begin{dcases*}
        M_7 \beta(t-t_b)^2 + M_8 \beta(t-t_b), & if $s_b=0$, \\ M_7 \beta(t-t_b)^2 + M_8 \beta(t-t_b)+ M_6, & if $s_b\neq 0$.
        \end{dcases*}
    \end{align*}

    Thus, one can conclude that
    \begin{align}\label{mhtaub}
        \nonumber \sum_{b=0}^{B-1}\sum_{t=t_b}^{t_{b+1}-1} \biggr[\frac{c_t(x_t^K(i^*), u_t^K(i^*))}{(\alpha_b)^{2 s_b}} - c_t(x_t^*, u_t^*)\biggr] &\leq  M_6 |\mathcal{V}|+\sum_{b=0}^{B-1} (M_7+M_8)H(t_{b+1}-t_b) \\ &= O(U) + O(\sum_{b=0}^{B-1}H(\tau_b)),
    \end{align}
    where the first inequality uses $\beta(\cdot)\leq1$ to derive $\sum_{t=t_b}^{t_{b+1}-1} [\beta(t-t_b)]^2 \leq \sum_{t=t_b}^{t_{b+1}-1} [\beta(t-t_b)] = H(t_{b+1}-t_b)$ and the last equality uses $t_{b+1}-t_b\leq \tau_b$ and Lemma \ref{orderM}. Taking expectation of (\ref{mhtaub}) with respect to $K_{B-1:0}$ completes the proof.
\end{proof}

\begin{theorem}[Restatement of Theorem \ref{regret bound1}, Regret Bound]\label{regret bound}
    In Algorithm \ref{Algorithm 1}, suppose that $\frac{\tau_1}{\tau_0}(\beta(\tau_0))^2<\frac{1}{2\sqrt{2}}$. Then, the regret bound is as follows:
    \begin{align*}
    &\mathbb{E}_{K_{B-1:0}}\sum_{t=0}^{T} [c_t(x_t, u_t) - c_t(x_t^*, u_t^*)] \\ &= O(|\mathcal{U}|) + O(\sum_{b=0}^{B-1}H(\tau_b))+\frac{\Tilde{O}(|\mathcal{U}|+1)}{\eta_0} + \frac{\eta_0 N }{2} [\exp(O(|\mathcal{U}|))O(\tau_{B-1}H(\tau_{B-1}))+ O(\sum_{b=0}^{B-1} (\tau_b)^2)].
    \end{align*}
\end{theorem}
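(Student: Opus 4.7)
\textit{Proof plan.} My plan is to decompose the expected per-batch cost into the standard mix loss plus mixability gap and then assemble the four lemmas the authors have prepared (Lemmas \ref{mix loss}, \ref{mixability gap}, \ref{mixgapfurther}, and \ref{mixlossfurther}) into a single bound. First I would rewrite the played cost using Lemma \ref{cost}, so that
\[
\mathbb{E}_{K_{B-1:0}} \sum_{t=0}^{T} c_t(x_t,u_t) \;=\; \sum_{b=0}^{B-1} \mathbb{E}_{K_{B-1:0}}[w_b(K_b)] \;=\; \sum_{b=0}^{B-1} \mathbb{E}_{K_{B-1:0}}\!\bigl[\mathbb{E}_{k\sim p_b}[w_b'(k)]\bigr],
\]
and split each inner term via the identity $\mathbb{E}_{k\sim p_b}[w_b'(k)] = \bigl[-\tfrac{1}{\eta_0}\log \mathbb{E}_{k\sim p_b}\exp(-\eta_b w_b'(k))\bigr] + \bigl[\mathbb{E}_{k\sim p_b}[w_b'(k)] + \tfrac{1}{\eta_0}\log \mathbb{E}_{k\sim p_b}\exp(-\eta_b w_b'(k))\bigr]$, the first bracket being the mix loss and the second the mixability gap.

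Next I would bound the cumulative mixability gap by Lemma \ref{mixability gap}, which yields $\tfrac{O(U)}{2\eta_0} + \tfrac{\eta_0 N}{2}\sum_{b} \mathbb{E}_{K_{b-1:0}}(w_b^K(i^b))^2$, and then feed Lemma \ref{mixgapfurther} into the quadratic sum to obtain $\tfrac{\eta_0 N}{2}\bigl[\exp(O(U))\,O(\tau_{B-1}H(\tau_{B-1})) + O(\sum_b (\tau_b)^2)\bigr]$. For the cumulative mix loss I would apply Lemma \ref{mix loss} with the common choice $i^l = i^*$, the best stabilizing controller; this is permissible since $i^* \in \mathcal{S} \subseteq \mathcal{U}^c$, and it produces $\tfrac{\tilde{O}(U+1)}{\eta_0} + \mathbb{E}_{K_{B-1:0}}\sum_{b=0}^{B-1}\tfrac{w_b^K(i^*)}{(\alpha_b)^{2s_b}}$. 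Subtracting the optimal cost $\mathbb{E}\sum_t c_t(x_t^*, u_t^*)$ and invoking Lemma \ref{mixlossfurther} rewrites the remainder as $O(U) + O(\sum_{b} H(\tau_b))$. Collecting all the pieces and using $U \leq |\mathcal{U}|$ gives exactly the claimed bound.

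The main obstacle, and the only step that deviates from the textbook Exp3-with-mini-batch argument, is the asymmetric use of two learning rates (the adaptive $\eta_b$ inside the logarithm and the fixed $\eta_0$ outside it). This asymmetry is what lets the $(\alpha_b)^{2s_b}$ normalization absorb batches with inflated state norms, but it creates two bookkeeping hazards: inside the mix-loss telescoping, the rate changes whenever $s_{b+1}\neq s_b$, so the telescoping resets and introduces a $\tfrac{\log N}{\eta_0}$ charge per element of $\mathcal{L}$; inside the mixability gap, batches with $s_b \neq 0$ (\textit{i.e.}, $b\in \mathcal{V}$) force the weaker bound $e^{-\eta_b w_b'(k)}\leq 1$ rather than the usual quadratic Taylor bound, contributing $\tfrac{1}{2\eta_0}$ per such batch. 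Both penalties would ruin the rate unless the cardinalities $|\mathcal{L}|$ and $|\mathcal{V}|$ are controlled, which is precisely the content of Lemma \ref{orderM}: together they are $O(U)$, so the total slack is only $\tilde{O}(U+1)/\eta_0$. I would therefore be careful, in executing the plan, to verify that the choice $i^l = i^*$ is consistent across all the resetting batches and that the $(\alpha_b)^{2s_b}$ factor in Lemma \ref{mixlossfurther} matches exactly the scaling that came out of the mix-loss step, so that no residual exponential factor multiplies the $O(\sum_b H(\tau_b))$ term.
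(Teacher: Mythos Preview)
Your proposal is correct and follows essentially the same route as the paper's own proof: rewrite the played cost via Lemma \ref{cost}, split into mix loss plus mixability gap, apply Lemmas \ref{mix loss} and \ref{mixability gap} (with the choice $i^l=i^*$), then feed in Lemmas \ref{mixgapfurther} and \ref{mixlossfurther}, and finish with $U\leq|\mathcal{U}|$. Your discussion of the $\eta_0$/$\eta_b$ asymmetry and the role of Lemma \ref{orderM} in controlling $|\mathcal{L}|$ and $|\mathcal{V}|$ also matches the paper's reasoning exactly.
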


\begin{proof}
    By Lemma \ref{cost}, we have
    \begin{align*}
    \mathbb{E}_{K_{B-1:0}}\sum_{t=0}^{T}c_t(x_t, u_t)&=
    \mathbb{E}_{K_{B-1:0}}\sum_{b=0}^{B-1}\sum_{t=t_b}^{t_{b+1}-1}  c_t(x_t, u_t)= \mathbb{E}_{K_{B-1:0}}\sum_{b=0}^{B-1}[w_b(K_b)] \\&= \mathbb{E}_{K_{B-1:0}}\sum_{b=0}^{B-1}[\mathbb{E}_{k\sim p_b}[w_b'(k)]] \\ &\leq \frac{\Tilde{O}(U+1)}{\eta_0}+ \frac{\eta_0 N }{2} \sum_{b=0}^{B-1} \mathbb{E}_{K_{b-1:0}} (w_b^K (i^b))^2 +\mathbb{E}_{K_{B-1:0}}\sum_{l=0}^{|\mathcal{L}|}\sum_{b=b^l}^{b^{l+1}-1} \frac{w_b^K(i^*)}{(\alpha_b)^{2s_b}} \\&\leq \frac{\Tilde{O}(U+1)}{\eta_0} + \frac{\eta_0 N }{2} [\exp(O(U))O(\tau_{B-1}H(\tau_{B-1}))+ O(\sum_{b=0}^{B-1} (\tau_b)^2)] \\ &\hspace{27mm}+O(U) + O(\sum_{b=0}^{B-1}H(\tau_b)) + \mathbb{E}_{K_{B-1:0}}\sum_{t=0}^{T}c_t(x^*, u^*),
    \end{align*}
    where the first inequality is due to Lemma \ref{mix loss} and \ref{mixability gap}, and the last inequality is due to Lemma \ref{mixgapfurther} and \ref{mixlossfurther}. Using $U\leq |\mathcal{U}|$ completes the proof.
\end{proof}

\begin{theorem}[Restatement of Theorem \ref{knownu1}, Regret bound with known $|\mathcal{U}|$]\label{knownu}
    In Algorithm \ref{Algorithm 1}, let $\tau_0=\lfloor (\frac{z}{N(|\mathcal{U}|+1)})^{1/2}  \rfloor$ and $\tau_b = \lceil (\frac{(\nu b+z)}{N(|\mathcal{U}|+1)})^{1/2}  \rceil$ for every $b\geq 1$ with the constants $z,\nu>0$ that satisfies $\tau_0>0$ and $\frac{\tau_1}{\tau_0}(\beta(\tau_0))^2<\frac{1}{2\sqrt{2}}$. Also, let $\eta_0=O(\frac{(|\mathcal{U}|+1)^{2/3}}{T^{2/3} N^{1/3}})$. When $T\geq \max\{ \frac{|\mathcal{U}|^{3/2}}{(N(|\mathcal{U}|+1))^{1/2}}, N(|\mathcal{U}|+1)\}$, we have
    \[
    \mathbb{E}_{K_{B-1:0}}\sum_{t=0}^{T} [c_t(x_t, u_t) - c_t(x_t^*, u_t^*)] = \Tilde{O}(T^{2/3}N^{1/3}(|\mathcal{U}|+1)^{1/3}))+o(1)\exp (O(|\mathcal{U}|)) + o(T),
    \]
    which implies that we achieve a sublinear regret bound. Moreover, when $H(t)\leq O(\sum_{i=1}^t \frac{1}{i})$ for all $t\geq 1$, we have
    \[
    \mathbb{E}_{K_{B-1:0}}\sum_{t=0}^{T} [c_t(x_t, u_t) - c_t(x_t^*, u_t^*)] = \bigr[\Tilde{O}(T^{2/3})+\Tilde{O}(T^{-1/3})\exp(O(|\mathcal{U}|))\bigr]N^{1/3}(|\mathcal{U}|+1)^{1/3}.
    \]
\end{theorem}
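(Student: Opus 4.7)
The plan is to derive Theorem~\ref{knownu1} as a direct specialization of the general regret inequality of Theorem~\ref{regret bound1}: substitute the polynomial batch schedule with exponent $1/2$ and the prescribed learning rate $\eta_0$ into each of the five contributions on the right-hand side, and then simplify. Since the hypothesis $\frac{\tau_1}{\tau_0}(\beta(\tau_0))^2<\frac{1}{2\sqrt{2}}$ is assumed, Theorem~\ref{regret bound1} is immediately applicable, so the proof reduces to asymptotic bookkeeping in the three parameters $T$, $N$, $|\mathcal{U}|$. The main technical care is to balance the five terms sharply under the specific $\eta_0$ and to use the hypotheses $T\geq\max\{|\mathcal{U}|^{3/2}/(N(|\mathcal{U}|+1))^{1/2},\,N(|\mathcal{U}|+1)\}$ to absorb lower-order pieces into the dominant $\Tilde{O}(T^{2/3})$ contribution.

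First I would derive the key scaling estimates for the batch schedule by integral comparison. With $\tau_b=\Theta(((\nu b+z)/(N(|\mathcal{U}|+1)))^{1/2})$ and the constraint $\sum_{b=0}^{B-1}\tau_b$ lying between $T-\tau_{B-1}$ and $T$, I obtain $B=\Theta(T^{2/3}(N(|\mathcal{U}|+1))^{1/3})$, $\tau_{B-1}=\Theta(T^{1/3}/(N(|\mathcal{U}|+1))^{1/3})$, and $\sum_{b=0}^{B-1}\tau_b^2=\Theta(T^{4/3}/(N(|\mathcal{U}|+1))^{1/3})$. Under the additional hypothesis $H(t)\leq O(\sum_{i=1}^t 1/i)=O(\log t)$ one further gets $H(\tau_b)=\Tilde{O}(1)$ uniformly in $b$, so $\sum_{b=0}^{B-1}H(\tau_b)=\Tilde{O}(B)$ and $\tau_{B-1}H(\tau_{B-1})=\Tilde{O}(\tau_{B-1})$.

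Second, I would plug these estimates together with $\eta_0=O((|\mathcal{U}|+1)^{2/3}/(T^{2/3}N^{1/3}))$ into each term of Theorem~\ref{regret bound1}. The third term yields $\Tilde{O}(|\mathcal{U}|+1)/\eta_0=\Tilde{O}(T^{2/3}N^{1/3}(|\mathcal{U}|+1)^{1/3})$, the second contribution matches this from $\sum H(\tau_b)=\Tilde{O}(B)$, and the non-exponential half of the last bracket gives $\eta_0 N\sum\tau_b^2=O(T^{2/3}N^{1/3}(|\mathcal{U}|+1)^{1/3})$; these three are by design balanced at $\Tilde{O}(T^{2/3})$ by the chosen $\eta_0$. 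The exponential half yields $\eta_0 N\exp(O(|\mathcal{U}|))\tau_{B-1}H(\tau_{B-1})=\Tilde{O}(T^{-1/3}N^{1/3}(|\mathcal{U}|+1)^{1/3})\exp(O(|\mathcal{U}|))$, which is precisely the second claimed piece of the bound. Finally, the additive $O(|\mathcal{U}|)$ is absorbed into $\Tilde{O}(T^{2/3}N^{1/3}(|\mathcal{U}|+1)^{1/3})$ using $T\geq N(|\mathcal{U}|+1)$, which implies $|\mathcal{U}|\leq T$.

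The harder half is the weaker sublinearity claim when only $\lim_{t\to\infty}H(t)/t=0$ is available (as opposed to $H(t)=O(\log t)$). Here the bounds $H(\tau_b)=\Tilde{O}(1)$ fail, but Lemma~\ref{asymptotic2} gives $H(\tau_b)=o(\tau_b)$ and $\tau_{B-1}H(\tau_{B-1})=o(\tau_{B-1}^2)=o(T^{2/3}/(N(|\mathcal{U}|+1))^{2/3})$, so the exponential term degrades only to $o(1)\exp(O(|\mathcal{U}|))$. The same L'H\^opital/integral comparison used in the proof of Theorem~\ref{asymptotic stability1} yields $\sum_{b=0}^{B-1}H(\tau_b)=o(T)$. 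Combining with the three exactly balanced $\Tilde{O}(T^{2/3})$ contributions gives a regret of order $\Tilde{O}(T^{2/3}N^{1/3}(|\mathcal{U}|+1)^{1/3})+o(1)\exp(O(|\mathcal{U}|))+o(T)$, which is sublinear. The anticipated main obstacle is the careful juggling of the asymptotic notation across $T$, $N$, and $|\mathcal{U}|$; in particular, the condition $T\geq|\mathcal{U}|^{3/2}/(N(|\mathcal{U}|+1))^{1/2}$ should ensure that $B$ is large enough relative to $|\mathcal{U}|$ so that every destabilizing controller has room to be flagged by the Break statement within the horizon, which is what lets us apply the estimates from Appendix~\ref{app: stability} under the polynomial schedule.
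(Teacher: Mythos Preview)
Your proposal is correct and follows essentially the same route as the paper: invoke Theorem~\ref{regret bound1}, compute $B$, $\tau_{B-1}$, and $\sum_b\tau_b^2$ by integral comparison with the polynomial schedule, substitute the prescribed $\eta_0$, and balance the five contributions, treating the general sublinear case via Lemma~\ref{asymptotic2} and $\sum_b H(\tau_b)=o(T)$ exactly as you describe. One small correction: the hypothesis $T\geq|\mathcal{U}|^{3/2}/(N(|\mathcal{U}|+1))^{1/2}$ is used in the paper purely algebraically to absorb the additive $O(|\mathcal{U}|)$ into $O(T^{2/3}N^{1/3}(|\mathcal{U}|+1)^{1/3})$, not to guarantee that destabilizing controllers have room to be flagged within the horizon.
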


\begin{proof}
    By the formulation of $(\tau_b)_{b\geq 0}$, we have
    \begin{align*}
        \sum_{b=0}^{B-1} \frac{(\nu b+z)^{1/2}}{(N(|\mathcal{U}|+1))^{1/2}} -1 \leq \sum_{b=0}^{B-1} \tau_b = T \leq \sum_{b=0}^{B-1} \frac{(\nu b+z)^{1/2}}{(N(|\mathcal{U}|+1))^{1/2}} +(B-1),
    \end{align*}
    where we can further use non-decreasing property of $(\cdot)^{1/2}$ to arrive at 
    \begin{align}\label{TBTB}
        \nonumber&\frac{z^{1/2}+ \frac{2}{3\nu}[(\nu (B-1)+z)^{3/2}-z^{3/2}]}{(N(|\mathcal{U}|+1))^{1/2}} -1=\frac{z^{1/2}+ \int_{0}^{B-1}(\nu b +z)^{1/2}db}{(N(|\mathcal{U}|+1))^{1/2}} -1\leq T \\ &\hspace{30mm}\leq \frac{\int_{0}^{B}(\nu b+z)^{1/2}db}{(N(|\mathcal{U}|+1))^{1/2}} +(B-1) = \frac{\frac{2}{3\nu}[(\nu B+z)^{3/2}-z^{3/2}]}{(N(|\mathcal{U}|+1))^{1/2}} +(B-1),
    \end{align}
    thus we have $B=O(T^{2/3}N^{1/3}(|\mathcal{U}|+1)^{1/3})$ from the first inequality and $T=O(B^{3/2}N^{-1/2}(|\mathcal{U}|+1)^{-1/2})$ from the second inequality and $T\geq N(|\mathcal{U}|+1)$. Similarly, we can find the order of $\sum_{b=0}^{B-1}(\tau_b)^2$ as follows:
    \begin{align}\label{tausquared}
        \nonumber\sum_{b=0}^{B-1}(\tau_b)^2 \leq \sum_{b=0}^{B-1} \biggr[\frac{(\nu b+z)^{1/2}}{(N(|\mathcal{U}|+1))^{1/2}}+1&\biggr]^2 \leq \int_0^B \biggr[\frac{(\nu b+z)}{(N(|\mathcal{U}|+1))}+\frac{2(\nu b+z)^{1/2}}{(N(|\mathcal{U}|+1))^{1/2}}+1\biggr] db  \\ & =O(\frac{B^2}{N(|\mathcal{U}|+1)})=O(T^{4/3}N^{-1/3}(|\mathcal{U}|+1)^{-1/3}),
    \end{align}
    where the last equality is by $B=O(T^{2/3}N^{1/3}(|\mathcal{U}|+1)^{1/3})$. We also have 
    \begin{align}\label{taub-1}
        \tau_{B-1} = \lceil (\frac{(\nu (B-1)+z)}{N(|\mathcal{U}|+1)})^{1/2}  \rceil= O(B^{1/2}N^{-1/2}(|\mathcal{U}|+1)^{-1/2}) = O(T^{1/3}N^{-1/3}(|\mathcal{U}|+1)^{-1/3}).
    \end{align}
    Thus, we have
    \begin{align}\label{tauHtau}
        O(\tau_{B-1}H(\tau_{B-1})) = o((\tau_{B-1})^2) = o(T^{2/3}N^{-2/3}(|\mathcal{U}|+1)^{-2/3}) = \frac{o(1)}{\eta_0 N},
    \end{align}
    where the first equality is due to Lemma \ref{asymptotic}. With $T\geq \frac{|\mathcal{U}|^{3/2}}{(N(|\mathcal{U}|+1))^{1/2}}$, we have
    \begin{align}
        \label{37}&\eta_0 N \exp (O(|\mathcal{U}|))O(\tau_{B-1}H(\tau_{B-1})) = o(1)\exp (O(|\mathcal{U}|))  \\\label{38}&
        O(|\mathcal{U}|) = O(T^{2/3}N^{1/3}(|\mathcal{U}|+1)^{1/3}).
    \end{align}
    With (\ref{tausquared}), (\ref{tauHtau}), (\ref{37}), and (\ref{38}), we can apply Theorem \ref{regret bound} to derive
    \[
    \mathbb{E}_{K_{B-1:0}}\sum_{t=0}^{T} [c_t(x_t, u_t) - c_t(x_t^*, u_t^*)] = \Tilde{O}(T^{2/3}N^{1/3}(|\mathcal{U}|+1)^{1/3})+o(1)\exp (O(|\mathcal{U}|)) + O(\sum_{b=0}^{B-1}H(\tau_b)).
    \]
    Applying (\ref{asym 0}) to $O(\sum_{b=0}^{B-1}H(\tau_b))$ achieves a sublinear regret bound.

    Moreover, when $\lim_{t\to\infty}H(t)<\infty$, there exists a constant $q_1$ that upper-bounds $H(t)$; \textit{i.e.}, $H(t)\leq q_1$ for all $t\geq 0$. Then, we have 
    \begin{align}\label{limitedh}
        \sum_{b=0}^{B-1}H(\tau_b) \leq q_1 B = O(B) = O(T^{2/3}N^{1/3}(|\mathcal{U}|+1)^{1/3}).
    \end{align}
    Also, (\ref{tauHtau}) and (\ref{37}) can be modified to 
    \begin{align}\label{expoexpo}
        \nonumber &\tau_{B-1} H(\tau_{B-1}) \leq q_1 \tau_{B-1} = O(T^{1/3}N^{-1/3}(|\mathcal{U}|+1)^{-1/3}) ,\\
        & \eta_0 N \exp(O(|\mathcal{U}|)))O(\tau_{B-1} H(\tau_{B-1})) = O(T^{-1/3}N^{1/3}(|\mathcal{U}|+1)^{1/3})\cdot \exp(O(|\mathcal{U}|)).
    \end{align} 
Similarly, when $H(t)= O(\sum_{i=1}^t \frac{1}{i})$ for all $t\geq 1$, we have
\begin{align}
         &\sum_{b=0}^{B-1}H(\tau_b) \leq B H(\tau_{B-1}) = O(B\log \tau_{B-1}) = \Tilde{O}(T^{2/3}N^{1/3}(|\mathcal{U}|+1)^{1/3}),\label{limitedhhh} \\ &\eta_0 N \exp(O(|\mathcal{U}|)))O(\tau_{B-1} H(\tau_{B-1})) = \Tilde{O}(T^{-1/3}N^{1/3}(|\mathcal{U}|+1)^{1/3})\cdot \exp(O(|\mathcal{U}|)).\label{limitedhhhh}
    \end{align}

    Using (\ref{limitedh}), (\ref{expoexpo}), \eqref{limitedhhh}, and \eqref{limitedhhhh} completes the proof.
\end{proof}

Small modification provides the regret bound for the intermediate step "Dynamic Batching" mentioned in Appendix \ref{necdb}. 

\begin{corollary}\label{onlycor}
    Consider "Dynamic Batching" strategy without adaptive learning rate, i.e. $s_b = 0$ for all $b=0,\dots, B-1$ in Algorithm \ref{Algorithm 1}. Let $\tau_0, \dots, \tau_{B-1}$ and $\eta_0$ be the same quantity with Theorem \ref{knownu}. When $T\geq \max\{ \frac{|\mathcal{U}|^{3/2}}{(N(|\mathcal{U}|+1))^{1/2}}, N(|\mathcal{U}|+1)\}$, the term $o(1)\exp(O(|\mathcal{U}|))$ in the regret bound of Theorem \ref{knownu} is replaced by $o(T^{1/3})\exp(O(|\mathcal{U}|))$.
\end{corollary}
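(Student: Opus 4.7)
The plan is to retrace the proof of Theorem \ref{knownu} and isolate the one step whose output changes when adaptive learning rate is disabled. When $s_b = 0$ throughout Algorithm \ref{Algorithm 1}, the sets $\mathcal{L}$ and $\mathcal{V}$ are empty, $(\alpha_b)^{2s_b}=1$, and $\eta_b=\eta_0$ for every $b$. Consequently Lemma \ref{mix loss} simplifies (its $\tilde O(|\mathcal U|+1)/\eta_0$ factor collapses to $(\log N)/\eta_0$), Lemma \ref{mixability gap} loses its $O(U)/\eta_0$ correction, and Lemma \ref{mixgapfurther}---whose derivation never uses the adaptive rate---is unchanged. Thus $\sum_b\mathbb{E}(w_b^K(i^b))^2 = \exp(O(U))O(\tau_{B-1}H(\tau_{B-1})) + O(\sum_b(\tau_b)^2)$ still holds, and multiplying by $\eta_0 N/2$ contributes $o(1)\exp(O(|\mathcal U|)) + \tilde O(T^{2/3})$ exactly as in Theorem \ref{knownu}.

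The essential change occurs in the analog of Lemma \ref{mixlossfurther}. With $(\alpha_b)^{2s_b}=1$, the scaled estimate $\|x_{t_b}/(\alpha_b)^{s_b}\| \leq \alpha_b\|x_0\| + \delta$ from (\ref{xtb}) no longer absorbs the raw state norm, so the chain of inequalities in (\ref{33}) now carries explicit $\|x_{t_b}\|$ factors. Redoing the bound yields
\begin{align*}
\bigl|c_t(x_t^K(i^*),u_t^K(i^*)) - c_t(x_t^*,u_t^*)\bigr| = O(\beta(t-t_b)^2\|x_{t_b}\|^2) + O(\beta(t-t_b)\|x_{t_b}\|) + O(\beta(t-t_b)),
\end{align*}
and summing over $t\in[t_b,t_{b+1})$ with $\sum_t\beta(t-t_b)^p \leq H(\tau_b)$ for $p\geq 1$, then over batches, gives
\begin{align*}
\mathbb{E}\sum_{b=0}^{B-1}\sum_{t=t_b}^{t_{b+1}-1}\bigl[c_t(x_t^K(i^*),u_t^K(i^*))-c_t(x_t^*,u_t^*)\bigr] = O\Bigl(\sum_b H(\tau_b)\|x_{t_b}\|^2\Bigr) + O\Bigl(\sum_b H(\tau_b)\|x_{t_b}\|\Bigr) + O\Bigl(\sum_b H(\tau_b)\Bigr).
\end{align*}

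Next I bound $\sum_b H(\tau_b)\|x_{t_b}\|^k$ for $k=1,2$ by lifting Lemmas \ref{weighted two Break} and \ref{along Break} from the linear to the squared state norm. Cauchy--Schwarz applied to the inter-batch recursion yields $\|x_{t_{b+1}}\|^2 \leq 2\beta(\tau_b)^2\|x_{t_b}\|^2 + 2\gamma^2 w_\text{max}^2$, and the hypothesis $\frac{\tau_1}{\tau_0}(\beta(\tau_0))^2 < \frac{1}{2\sqrt{2}}$ forces $2\beta(\tau_0)^2<1$ so that both the geometric summation mirroring Lemma \ref{weighted two Break} and the cross-Break induction mirroring Lemma \ref{along Break} converge. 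The resulting estimates take the form
\begin{align*}
\sum_b H(\tau_b)\|x_{t_b}\|^k = \exp(O(U))\cdot O(H(\tau_{B-1})) + O\Bigl(\sum_b H(\tau_b)\Bigr), \qquad k=1,2,
\end{align*}
so combined with the preceding display,
\begin{align*}
\mathbb{E}\sum_{b=0}^{B-1}\sum_{t=t_b}^{t_{b+1}-1}\bigl[c_t(x_t^K(i^*),u_t^K(i^*))-c_t(x_t^*,u_t^*)\bigr] = \exp(O(|\mathcal{U}|))\cdot O(H(\tau_{B-1})) + O\Bigl(\sum_b H(\tau_b)\Bigr).
\end{align*}

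Finally, I invoke Lemma \ref{asymptotic} with $\tau_{B-1} = O(T^{1/3}N^{-1/3}(|\mathcal U|+1)^{-1/3})$ from (\ref{taub-1}) to conclude $H(\tau_{B-1}) = o(\tau_{B-1}) = o(T^{1/3})$, so this new contribution is $o(T^{1/3})\exp(O(|\mathcal U|))$. Combined with the unchanged $o(1)\exp(O(|\mathcal U|))$ coming from the $(w_b^K(i^b))^2$ analysis, the aggregate exponential term becomes $o(T^{1/3})\exp(O(|\mathcal U|))$, precisely replacing the $o(1)\exp(O(|\mathcal U|))$ of Theorem \ref{knownu}, while all $\tilde O(T^{2/3})$ polynomial terms persist. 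The main obstacle will be the quadratic state-norm recursion: rigorously reproducing Lemmas \ref{weighted two Break}--\ref{along Break} for $\|x_{t_b}\|^2$ requires careful tracking of the factor of $2$ introduced by Cauchy--Schwarz and verifying that the hypothesis indeed yields contraction in both the intra-Break geometric series and the cross-Break induction step.
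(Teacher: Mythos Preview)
Your proposal is correct and follows essentially the same route as the paper's proof: both identify that only the analog of Lemma \ref{mixlossfurther} changes when $(\alpha_b)^{2s_b}\equiv 1$, redo the cost-difference bound to expose $O(H(\tau_b)\|x_{t_b}\|^2)+O(H(\tau_b)\|x_{t_b}\|)+O(H(\tau_b))$ per batch, and then sum $\sum_b H(\tau_b)\|x_{t_b}\|^k$ via the same machinery as Lemmas \ref{weighted two Break}--\ref{along Break} to obtain $\exp(O(|\mathcal U|))\cdot O(H(\tau_{B-1}))=\exp(O(|\mathcal U|))\cdot o(T^{1/3})$. You are more explicit than the paper about why the quadratic recursion still contracts (the paper simply invokes the pattern already established in the proof of Lemma \ref{mixgapfurther}), so the ``main obstacle'' you flag is in fact already handled there and needs no new work.
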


\begin{proof}
    Since $s_b = 0$ for all $b$, we need to modify Lemma \ref{mixlossfurther}. Equation \eqref{33} is modified to 
\begin{align*}
    |c_t(x_t^K(i^*), u_t^K(i^*)) - c_t(x_t^*, u_t^*)| \leq L_{c1} (1&+L_\pi)^2 \beta(t-t_b)^2 (\|x_{t_b}\|+\|x_{t_b}^*\|)^2 + L_{c1}(1+L_\pi)\gamma w_\text{max}\\&+L_{c1}\pi_{0,\text{max}}+L_{c2}(1+L_\pi) \beta(t-t_b) (\|x_{t_b}\|+\|x_{t_b}^*\|),
\end{align*}
which incurs
\begin{align*}
    \sum_{t=t_b}^{t_{b+1}-1}|c_t(x_t^K(i^*), u_t^K(i^*)) - c_t(x_t^*, u_t^*)| \leq O(H(\tau_b)\|x_{t_b}\|^2)+O(H(\tau_b)\|x_{t_b}\|)+O(H(\tau_b)).
\end{align*}
Thus, it follows that
\begin{align*}
    \sum_{b=0}^{B-1}\sum_{t=t_b}^{t_{b+1}-1}|c_t(x_t^K(i^*), u_t^K(i^*)) - c_t(x_t^*, u_t^*)| &\leq \exp(O(|\mathcal{U}|))(\|x_0\|^2 + \|x_0\| ) + \exp(O(|\mathcal{U}|))H(\tau_{B-1})\\& = \exp(O(|\mathcal{U}|))\cdot o(T^{1/3}),
\end{align*}
where the last equality is by the choice of $\tau_{B-1} = O(B^{1/2}) = O(T^{1/3})$ and applying Lemma \ref{asymptotic2}. This shows that $o(1)\exp(O(|\mathcal{U}|))$ in Theorem \ref{knownu} should be replaced by $o(T^{1/3})\exp(O(|\mathcal{U}|))$ in the algorithm without adaptive learning rate.
\end{proof}

\section{Regret Proof for Algorithm \ref{Algorithm 2}}\label{app: regret2}

\begin{theorem}[Restatement of Theorem \ref{unknownu1}, Regret bound with unknown $|\mathcal{U}|$]\label{unknownu}
    In Algorithm \ref{Algorithm 2}, let $\tau_0=\lfloor (\frac{z}{N})^{1/2}  \rfloor$ and $\tau_b = \lceil (\frac{(\nu b+z)}{N})^{1/2}  \rceil$ for every $b\geq 1$ with the constants $z,\nu>0$ that satisfies $\tau_0>0$ and $\frac{\tau_1}{\tau_0}(\beta(\tau_0))^2<\frac{1}{2\sqrt{2}}$. Also, let $\eta_0=O(\frac{1}{T^{2/3} N^{1/3}})$ and $y=\frac{1}{2}$. When $T\geq \max\{ \frac{|\mathcal{U}|^{3/2}}{N^{1/2} (|\mathcal{U}|+1)^{3/4}}, N\}$, we have
    \[
    \mathbb{E}_{K_{B-1:0}}\sum_{t=0}^{T} [c_t(x_t, u_t) - c_t(x_t^*, u_t^*)] = \Tilde{O}(T^{2/3}N^{1/3}(|\mathcal{U}|+1)^{1/2}) +o(1)\exp(O(|\mathcal{U}|))(|\mathcal{U}|+1)^{1/2}+ o(T),
    \]
    which implies that we achieve a sublinear regret bound. Moreover, when $H(t)\leq O(\sum_{i=1}^t \frac{1}{i})$ for all $t\geq 1$, we have
    \[
    \mathbb{E}_{K_{B-1:0}}\sum_{t=0}^{T} [c_t(x_t, u_t) - c_t(x_t^*, u_t^*)] = \bigr[\Tilde{O}(T^{2/3})+\Tilde{O}(T^{-1/3})\exp(O(|\mathcal{U}|))\bigr]N^{1/3}(|\mathcal{U}|+1)^{1/2}
    \]
\end{theorem}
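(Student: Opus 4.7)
The plan is to transport the analysis of Theorems~\ref{regret bound} and~\ref{knownu} to the block-dependent learning rate of Algorithm~\ref{Algorithm 2}. Partition the $B$ batches into $U+1$ consecutive blocks $\mathcal{B}_0,\dots,\mathcal{B}_U$ according to the value of $\mu_b$; inside block $\mathcal{B}_r$ the base rate is constant at $\eta_{0,r} := \eta_0\sqrt{r+1}$, and Break activations occur only at the boundaries between consecutive blocks. Within each block, Algorithm~\ref{Algorithm 2} therefore behaves like Algorithm~\ref{Algorithm 1} run with learning rate $\eta_{0,r}$ on a fixed controller pool.

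I would first re-derive the cumulative mix loss and mixability gap blockwise. Inside $\mathcal{B}_r$ the telescoping in Lemma~\ref{mix loss} goes through with $\eta_{0,r}$ in place of $\eta_0$, so the contribution of $\mathcal{B}_r$ is $\Tilde{O}(n_r)/\eta_{0,r}$, where $n_r$ counts the weight-reset sub-intervals ($s_{b+1}\neq s_b$) in $\mathcal{B}_r$ plus one slot for the initial $\log N$ term. Likewise Lemma~\ref{mixability gap} contributes $O(|\mathcal{V}\cap\mathcal{B}_r|)/\eta_{0,r} + \tfrac{\eta_{0,r}N}{2}\sum_{b\in\mathcal{B}_r}\mathbb{E}(w_b^K(i^b))^2$. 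Summing over $r$ and using $\eta_{0,r}\le\eta_{0,U}$ to dominate the quadratic piece, the regret takes the shape
\[
\sum_{r=0}^U \frac{\Tilde{O}(n_r+|\mathcal{V}\cap\mathcal{B}_r|)}{\eta_{0,r}} + \frac{\eta_0\sqrt{|\mathcal{U}|+1}\,N}{2}\sum_{b=0}^{B-1}\mathbb{E}(w_b^K(i^b))^2 + O\Bigl(\sum_{b=0}^{B-1}H(\tau_b)\Bigr).
\]
Lemmas~\ref{mixgapfurther} and~\ref{mixlossfurther} apply unchanged since their proofs do not reference $\eta_b$.

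The key combinatorial claim is $\sum_{r=0}^U n_r/\sqrt{r+1} = \Tilde{O}(\sqrt{|\mathcal{U}|+1})$. A per-block refinement of Lemma~\ref{orderM2} yields the cumulative constraint $\sum_{r'\le r} n_{r'} = O(r+1)$: every weight reset in $\mathcal{B}_0\cup\cdots\cup\mathcal{B}_r$ is either an increase of $s_b$ occurring at one of the at most $r$ Break activations before $\mathcal{B}_{r+1}$, or a matching decrease inside those blocks. Maximizing $\sum_r n_r/\sqrt{r+1}$ over the polytope defined by these cumulative constraints together with $n_r\ge 0$ can be handled by Abel summation on $S_r:=\sum_{r'\le r} n_{r'}$, using $1/\sqrt{r+1}-1/\sqrt{r+2} = O(1/(r+1)^{3/2})$, or equivalently by KKT with complementary slackness tight on every partial-sum constraint; both give the bound $O\bigl(\sum_{r=0}^U 1/\sqrt{r+1}\bigr) = O(\sqrt{|\mathcal{U}|+1})$. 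The identical argument bounds $\sum_r |\mathcal{V}\cap\mathcal{B}_r|/\sqrt{r+1}$.

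Finally, substituting the polynomial batch schedule $(1/\sqrt{N},z,1/2,\nu)$ into \eqref{TBTB}-\eqref{taub-1} yields $B=O(T^{2/3}N^{1/3})$, $\sum_b\tau_b^2=O(T^{4/3}N^{-1/3})$, and $\tau_{B-1}=O(T^{1/3}N^{-1/3})$, with no $|\mathcal{U}|$-factor since it is absent from the polynomial parameters. Plugging in $\eta_0=O(1/(T^{2/3}N^{1/3}))$ and using $H(\tau_{B-1})=O(\log T)$ together with the bound from Lemma~\ref{mixgapfurther}, the dominant polynomial piece is $\Tilde{O}(T^{2/3}N^{1/3}\sqrt{|\mathcal{U}|+1})$ and the exponential piece is $\Tilde{O}(T^{-1/3}N^{1/3}\sqrt{|\mathcal{U}|+1})\exp(O(|\mathcal{U}|))$, matching the claimed bound once $T\ge\max\{|\mathcal{U}|^{3/2}/(N^{1/2}(|\mathcal{U}|+1)^{3/4}),\,N\}$ absorbs the residual $O(|\mathcal{U}|)$ and $O(\sum_b H(\tau_b))$ terms. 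The main obstacle will be the disintegration step: establishing the partial-sum constraints $\sum_{r'\le r} n_{r'}=O(r+1)$ cleanly enough that the KKT/Abel-summation argument produces exactly the exponent $1/2$ on $(|\mathcal{U}|+1)$ rather than a worse power, and verifying that the analogous treatment of $|\mathcal{V}\cap\mathcal{B}_r|$ goes through without introducing additional logarithmic factors that would inflate the final rate.
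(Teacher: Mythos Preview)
Your proposal is correct and follows essentially the same approach as the paper: both partition the batches into blocks $\mathcal{B}_r$ by the value of $\mu_b$, re-derive Lemmas~\ref{mix loss} and~\ref{mixability gap} blockwise with $\eta_{0,r}=\eta_0\sqrt{r+1}$, bound the quadratic piece by $\eta_{0,r}\le\eta_0\sqrt{U+1}$, and then solve the LP $\max\sum_r \rho_r/\sqrt{r+1}$ subject to the cumulative constraints $\sum_{r'\le r}\rho_{r'}=O(r+1)$ via KKT/complementary slackness to obtain the $O(\sqrt{|\mathcal{U}|+1})$ factor. The paper carries out the KKT computation explicitly (finding $\lambda_r=1/\sqrt{r+1}-1/\sqrt{r+2}>0$ so all constraints are tight, whence $\rho_r^l=2$ for $r\ge 1$), which is exactly your Abel-summation alternative, and handles the $\mathcal{V}$ constraints with the extra constant $M_9=\lceil\log(\alpha_b+\delta/\|x_0\|)/(-\log\beta(\tau_0))\rceil$ that you correctly absorb into the $O(\cdot)$.
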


\begin{proof}
    By the formulation of $(\tau_b)_{b\geq 0}$, as in (\ref{TBTB}), we can derive
    \begin{align*}
        B= O(T^{2/3} N^{1/3}) \quad\text{and}\quad T = O(B^{3/2}N^{-1/2})
    \end{align*}
    when $T\geq N$. We can also obtain
    \begin{align*}
        \sum_{b=0}^{B-1} (\tau_b)^2 = O(T^{4/3}N^{-1/3}) \quad\text{and}\quad
        O(\tau_{B-1}H(\tau_{B-1})) = o(T^{2/3}N^{-2/3})
    \end{align*}
    similar to (\ref{tausquared}) and (\ref{tauHtau}). Now, define $\eta_{0,r} := \eta_{0} (r+1)^y = \eta_{0} \sqrt{r+1}$. Let $\mathcal{B}_r$ denote the set of batches where $\mu_{b} = r$; \textit{i.e.}, $\mathcal{B}_r =\{0\leq b \leq B-1, ~b\in\mathbb{Z}_+: \mu_{b} = r\}$. Then, one can write
    \begin{align}\label{etawbK}
        \nonumber\frac{ N }{2} \sum_{r=0}^U\sum_{b\in\mathcal{B}_r} &\eta_{0,r}\mathbb{E}_{K_{b-1:0}} (w_b^K (i^b))^2 \leq \frac{ \eta_{0} N }{2} \sum_{r=0}^U\sum_{b\in\mathcal{B}_r} \sqrt{U+1}\mathbb{E}_{K_{b-1:0}} (w_b^K (i^b))^2 \\\nonumber
        &= \sqrt{U+1}\cdot O(T^{-2/3}N^{2/3})[\exp(O(U))O(\tau_{B-1}H(\tau_{B-1}))+O(\sum_{b=0}^{B-1} (\tau_b)^2)] \\&\leq \sqrt{|\mathcal{U}|+1}\cdot[o(1)\exp(O(|\mathcal{U}|))+O(T^{2/3}N^{1/3})],
    \end{align}
    where the first equality holds by Lemma \ref{mixgapfurther} and the second inequality holds by $U\leq |\mathcal{U}|$.

    Recall the definition and the cardinality of $\mathcal{L}= \{0\leq b \leq B-1, ~b\in\mathbb{Z}_+ : s_{b+1}\neq s_b\}$ and $\mathcal{V}= \{0\leq b \leq B-1, ~b\in\mathbb{Z}_+: s_{b}\neq 0\}$ in Lemma \ref{orderM}. We focus on the mix loss and the mixability gap with the denominator $\eta_{0,r}$; \textit{i.e.}, $-\frac{1}{\eta_{0,r}}\log(\mathbb{E}_{k\sim p_b}\exp(-\eta_b w_b'(k)))$ and $\mathbb{E}_{k\sim p_b} [w_b'(k)]+\frac{1}{\eta_{0,r}}\log(\mathbb{E}_{k\sim p_b}\exp(-\eta_b w_b'(k)))$. Considering that $\frac{\eta_b}{\eta_{0,r}}$ still remains to be $\frac{1}{(\alpha_b)^{2 s_b}}$ as in Algorithm \ref{Algorithm 1}, Lemma \ref{mix loss} can be modified to
    \begin{align}\label{16mod}
    \mathbb{E}_{K_{B-1:0}}\sum_{r=0}^{U}\sum_{b\in \mathcal{B}_r} -\frac{1}{\eta_{0,r}}\log(\mathbb{E}_{k\sim p_b}\exp(-\eta_b w_b'(k))) \leq \sum_{r=0}^U\frac{\rho_r^l \log N}{\eta_{0,r}}+\mathbb{E}_{K_{B-1:0}}\sum_{l=0}^{|\mathcal{L}|}\sum_{b=b^l}^{b^{l+1}-1} \frac{w_b^K(i^l)}{(\alpha_b)^{2s_b}},
    \end{align}
    where $\rho_r^l$ denotes the number of batches in $\mathcal{B}_r \cap \mathcal{L}$. Similarly, considering that $\eta_{0,r}$ now depends on the value of $r$, Lemma \ref{mixability gap} can be modified to 
    \begin{align}\label{17mod}
        \nonumber&\mathbb{E}_{K_{B-1:0}}\sum_{r=0}^{U}\sum_{b\in \mathcal{B}_r} \mathbb{E}_{k\sim p_b} [w_b'(k)]+\frac{1}{\eta_{0,r}}\log(\mathbb{E}_{k\sim p_b}\exp(-\eta_b w_b'(k))) \\ &\hspace{50mm}\leq \sum_{r=0}^U \frac{\rho_r^v}{2 \eta_{0,r}} + \frac{ N }{2} \sum_{r=0}^U\sum_{b\in\mathcal{B}_r} \eta_{0,r}\mathbb{E}_{K_{b-1:0}} (w_b^K (i^b))^2,
    \end{align}
    where $\rho_r^v$ denotes the number of batches in $\mathcal{B}_r \cap \mathcal{V}$. Now, our goal is to upper-bound $\sum_{r=0}^U\frac{\rho_r^l }{\eta_{0,r}}=\frac{1}{\eta_0}\sum_{r=0}^U\frac{\rho_r^l \log N}{\sqrt{r+1}}$ in (\ref{16mod}) and $\sum_{r=0}^U \frac{\rho_r^v}{ \eta_{0,r}}=\frac{1}{\eta_0}\sum_{r=0}^U \frac{\rho_r^v}{\sqrt{r+1}}$ in (\ref{17mod}). It is straightforward to infer that $\rho_0^l+\rho_1^l+\dots + \rho_U^l\leq 2U+1$ by Lemma \ref{orderM} and (\ref{auxw1}), which also leads to $\rho_0^l+\rho_1^l+\dots + \rho_{r}^l \leq 2r+1$ for $r=0,\dots,U$. Similarly, we can infer that $\rho_0^v=0$ and $\rho_1^v+\dots + \rho_{U}^v\leq (2U-1)\lceil\frac{\log (\alpha_b + \frac{\delta}{\|x_0\|})}{-\log \beta(\tau_0)}\rceil$ by Lemma \ref{orderM} and (\ref{inV}), which also leads to $\rho_1^v+\dots + \rho_{r}^v \leq (2r-1)\lceil\frac{\log (\alpha_b + \frac{\delta}{\|x_0\|})}{-\log \beta(\tau_0)}\rceil$ for $r=1,\dots,U$. Define $M_9:= \lceil\frac{\log (\alpha_b + \frac{\delta}{\|x_0\|})}{-\log \beta(\tau_0)}\rceil$ and consider the following maximization problems to get the upper bound. 
    \begin{equation*}\label{eq: P1}
    \begin{alignedat}{2}
    l^*=\max_{\rho_0^l,\dots,\rho_U^l} \quad &\mathrlap{\sum_{r=0}^{U} \frac{\rho_r^l}{\sqrt{r+1}}} &v^*=\max_{\rho_1^v,\dots,\rho_U^v} \quad &\mathrlap{\sum_{r=1}^{U} \frac{\rho_r^v}{\sqrt{r+1}}}\\
     \text{s.t.} \quad & \rho_0^l\leq 1  & \text{s.t.} \quad &\rho_1^v\leq M_9  \\&\rho_0^l+\rho_1^l\leq 3 &&\rho_1^v+\rho_2^v\leq 3M_9\\ &\dots && \dots \\  &\rho_0^l+\rho_1^l+\dots + \rho_U^l\leq 2U+1, &&\rho_1^v+\rho_2^v+\dots + \rho_U^v\leq (2U-1)M_9.
    \end{alignedat}
    \end{equation*}
    We can easily achieve an optimal point of each linear programming (LP) problem by the well-known Karush-Kuhn-Tucker (KKT) conditions. There exist positive constants $\lambda_0,\dots,\lambda_U, \kappa_1,\dots,\kappa_U$ such that 
    \begin{align}
        &[1 \quad \frac{1}{\sqrt{2}} \quad \dots \quad \frac{1}{\sqrt{U+1}}] = [\sum_{r=0}^{U} \lambda_{r} \quad \sum_{r=1}^{U} \lambda_{r}\quad\dots \quad  \lambda_{U}] \\& [\frac{1}{\sqrt{2}} \quad \frac{1}{\sqrt{3}}\quad\dots \quad \frac{1}{\sqrt{U+1}}] = [\sum_{r=1}^{U} \kappa_{r} \quad \sum_{r=2}^{U} \kappa_{r}\quad\dots \quad  \kappa_{U}],
    \end{align}
    which yields $\lambda_U = \kappa_U = \frac{1}{\sqrt{U+1}}$,~ $\lambda_{r} = \kappa_{r} = \frac{1}{\sqrt{r+1}}-\frac{1}{\sqrt{r+2}}>0$ for $r=1,\dots,U-1$, and $\lambda_{0}=1-\frac{1}{\sqrt{2}}$. Since every dual variable is positive, complementary slackness tells that there is no slack for every inequality at the optimal solution. Thus, the optimal solutions are
    \begin{align*}
        &\rho_0^l=1,\quad \rho_{r}^l = 2,\quad r=1,\dots,U.\\ &\rho_1^v=M_9,\quad \rho_{r}^v = 2M_9,\quad r=2,\dots,U, 
    \end{align*}
    where the corresponding optimal objective values are 
    \begin{align*}
        &l^* = 1+\sum_{r=1}^U \frac{2}{\sqrt{r+1}}\leq 1+\sqrt{2}+2\int_{1}^U \frac{1}{\sqrt{r+1}}dr = O(\sqrt{U+1}) \\ &v^* = \frac{M_9}{\sqrt{2}}+\sum_{r=2}^U \frac{2M_9}{\sqrt{r+1}}\leq \frac{M_9}{\sqrt{2}}+\frac{2M_9}{\sqrt{3}}  +2M_9\int_{2}^U \frac{1}{\sqrt{r+1}}dr=O(\sqrt{U+1}),
    \end{align*}
    where we leverage the non-increasing property of $\frac{1}{\sqrt{r+1}}$ for the inequalities. Thus, we have both $\frac{1}{\eta_0}\sum_{r=0}^U\frac{\rho_r^l \log N}{\sqrt{r+1}} = \Tilde{O}(T^{2/3}N^{1/3}(U+1)^{1/2})$ and $\frac{1}{\eta_0}\sum_{r=0}^U \frac{\rho_r^v}{\sqrt{r+1}}=O(T^{2/3}N^{1/3}(U+1)^{1/2})$. Combining (\ref{etawbK}), (\ref{16mod}), and (\ref{17mod}) with Lemma \ref{mixlossfurther} and $U\leq |\mathcal{U}|$, one can write
    \begin{align*}
        &\hspace{5mm}\mathbb{E}_{K_{B-1:0}}\sum_{t=0}^{T} [c_t(x_t, u_t) - c_t(x_t^*, u_t^*)] \\&= \Tilde{O}(T^{2/3}N^{1/3}(|\mathcal{U}|+1)^{1/2}) +o(1)\exp(O(|\mathcal{U}|))(|\mathcal{U}|+1)^{1/2}+ O(|\mathcal{U}|)+O(\sum_{b=0}^{B-1}H(\tau_b)) \\&= \Tilde{O}(T^{2/3}N^{1/3}(|\mathcal{U}|+1)^{1/2})+o(1)\exp(O(|\mathcal{U}|))(|\mathcal{U}|+1)^{1/2} + O(\sum_{b=0}^{B-1}H(\tau_b)),
    \end{align*}
    where the second equality holds when $T\geq \frac{|\mathcal{U}|^{3/2}}{N^{1/2} (|\mathcal{U}|+1)^{3/4}}$. Using (\ref{asym 0}) shows a sublinear regret bound. When $H(t)\leq O(\sum_{i=1}^t \frac{1}{i})$ for all $t\geq 1$, (\ref{limitedh}) and (\ref{expoexpo}) are modified to 
    \begin{align*}
        &\sum_{b=0}^{B-1}H(\tau_b) \leq  O(BH(\tau_{B-1})) = \Tilde{O}(T^{2/3}N^{1/3}),\\
        &\tau_{B-1} H(\tau_{B-1}) \leq  \tau_{B-1}O(\log(\tau_{B-1})) = \Tilde{O}(T^{1/3}N^{-1/3}) ,\\
        & \eta_0 N \exp(O(|\mathcal{U}|)))O(\tau_{B-1} H(\tau_{B-1})) = \Tilde{O}(T^{-1/3}N^{1/3})\cdot \exp(O(|\mathcal{U}|)).
    \end{align*}
    Applying this equality to re-derive \eqref{etawbK} completes the proof.
\end{proof}

\section{Applications: Switched systems}\label{applications}

So far, we have used the best stabilizing controller $i^*\in\mathcal{S}$ for all time steps $t=0,\dots,T$ as the baseline of regret. However, the proofs of the theorems stated above imply one can even use any set of controllers
$\{i^0, i^1, \ldots\} \subseteq \mathcal{S}$ 
as a baseline, where the controller is switched from $i^l$ to $i^{l+1}$ whenever the cumulative weight $W(\cdot)$ resets. This motivates the application of our DBAR algorithm to scenarios such as the switched systems (\cite{tousi2008sup, zhao2022event}) for which the transition dynamics and the associated controller pool may undergo changes, as well as the ballooning problem (\cite{ghalme2021balloon}) where the controller pool may expand up to some finite set. We propose Algorithm \ref{Algorithm 3}, the switching version of DBAR, which resets the weight whenever the system is faced with a finite number of $O(U)$ switches. Here, we consider the regret with switching costs where the unit cost $d\geq1$ is additionally incurred when the controller is switched; \textit{i.e.}, $d\sum_{t=1}^{T}\mathcal{I}_{(i_t\neq i_{t-1})}$ done in \citet{altschuler2018switching} and \citet{arora2019switching}. 

For an event $A$, $\mathcal{I}_{(A)}$ denotes an indicator function, where $\mathcal{I}_{(A)}=1$ if an event $A$ occurs and $\mathcal{I}_{(A)}=0$ otherwise. $Pr(A)$ denotes the probability of an event $A$.
Let $x_t'$ and $u_t'$ denote the state and action sequence generated by our set of best stabilizing controllers $\{i_0', \ldots, i_{|\mathcal{L}|}'\}\subseteq\mathcal{S}$.
We consider a regret with switching cost where the unit switching cost is $d\geq1$; \textit{i.e.}, $\mathbb{E}_{K_{B-1:0}}\bigr[\sum_{t=0}^{T} [c_t(x_t, u_t) - c_t(x_t', u_t')] + d\sum_{b=1}^{B-1}\mathcal{I}_{(K_b\neq K_{b-1})} - d\sum_{l=1}^{|\mathcal{L}|} \mathcal{I}_{(i_l' \neq i_{l-1}')}\bigr]$.

Algorithm \ref{Algorithm 3} can easily be generalized to the situation where we have $O(U)$ number of system switches or controller pool switches. 
In fact, we can simply add $i_{|\mathcal{L}|+1}', \ldots , i_{|\mathcal{L}|+O(U)}'\in \mathcal{S}$ to the set of best stabilizing controllers $\{i_0', \ldots, i_{|\mathcal{L}|}'\}\subseteq\mathcal{S}$, where $|\mathcal{L}|=O(U)$ by Lemma \ref{orderM}. Thus, it suffices to derive the regret bound of Algorithm \ref{Algorithm 3}, even in the context of general switched systems or ballooning problem.
We first provide a useful lemma to construct a regret bound.

% Algorithm \ref{Algorithm 3} can easily be generalized to the situation where we have $O(U)$ number of system switches or controller pool switches. In fact, we can simply add $i_{|\mathcal{L}|+1}', \dots i_{|\mathcal{L}|+O(U)}'$ to our set of best stabilizing controllers, where $|\mathcal{L}|=O(U)$ by Lemma \ref{orderM}. Thus, it suffices to derive the regret bound of Algorithm \ref{Algorithm 3}, even in the context of general switched systems.

\begin{algorithm}[ht]
  \caption{DBAR-switching}
  \footnotesize
  
    \hspace{2mm} \textcolor{blue}{// Modification: Use this IF-ELSE Statement to select the current policy in Line \ref{line2} in Algorithm \ref{Algorithm 1}.}

    \hspace{4mm}\textbf{if } $b>0$ and $s_{b}=s_{b-1}$ and $\mathcal{P}_b=\mathcal{P}_{b-1}$ \textbf{ then} 
  
     \hspace{8mm} Pick $K_{b}=K_{b-1}$ with probability $\frac{\exp(-\eta_{b} W_b(K_{b-1}))}{\exp(-\eta_{b-1} W_{b-1}(K_{b-1}))}$. Sample $K_b$ from a distribution $p_b$ with 
     
\hspace{8mm} probability $1-\frac{\exp(-\eta_{b} W_b(K_{b-1}))}{\exp(-\eta_{b-1} W_{b-1}(K_{b-1}))}$.
   
   \hspace{4mm}\textbf{else}
        
       \hspace{8mm} Sample $K_b$ from a distribution $p_b$. Terminate the algorithm if $\mathcal{P}_b$ is empty.
    
     \hspace{4mm}\textbf{end if}

  \label{Algorithm 3}
\end{algorithm}

\begin{lemma}\label{algo3}
    In Algorithm $\ref{Algorithm 3}$, let $\tau_0=\lfloor (\frac{z}{N(|\mathcal{U}|+1)})^{1/2}  \rfloor$ and $\tau_b = \lceil (\frac{(\nu b+z)}{N(|\mathcal{U}|+1)})^{1/2}  \rceil$ for every $b\geq 1$ with the constants $z,\nu>0$ that satisfies $\tau_0>0$ and $\frac{\tau_1}{\tau_0}(\beta(\tau_0))^2<\frac{1}{2\sqrt{2}}$. When $T\geq \frac{(o(1) exp(O(|\mathcal{U}|)))^{3/2}}{(N(|\mathcal{U}|+1))^{1/2}}$, we have
    \[
    \mathbb{E}_{K_{B-1:0}}\sum_{b=1}^{B-1}\mathcal{I}_{(K_b\neq K_{b-1})} =O(|\mathcal{U}|)+ O(\eta_0 NT).
    \]
\end{lemma}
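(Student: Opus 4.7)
The plan is to split the switches according to the conditional branching in Algorithm~\ref{Algorithm 3}. Let $\mathcal{M} = \{1 \leq b \leq B-1 : s_b = s_{b-1} \text{ and } \mathcal{P}_b = \mathcal{P}_{b-1}\}$ be the set of batches on which $K_{b-1}$ may be reused, and $\mathcal{M}^c$ its complement. Switches on $\mathcal{M}^c$ contribute at most $|\mathcal{M}^c|$, and by Lemma~\ref{orderM} the value $s_b$ can change at most $|\mathcal{L}| = O(|\mathcal{U}|)$ times while $\mathcal{P}_b$ can change at most $U \leq |\mathcal{U}|$ times, so $|\mathcal{M}^c| = O(|\mathcal{U}|)$, matching the first term in the bound.

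For $b \in \mathcal{M}$, the keep-probability in the modification step should simplify. On $\mathcal{M}$, $s_b = s_{b-1}$ forces $\alpha_b = \alpha_{b-1}$ by Lines~\ref{line11}--\ref{line20} of Algorithm~\ref{Algorithm 1}, and hence $\eta_b = \eta_{b-1}$; moreover, $s_b = s_{b-1}$ means Line~\ref{line22} triggers no weight reset, so $W_b(K_{b-1}) = W_{b-1}(K_{b-1}) + w_{b-1}'(K_{b-1})$. The keep-probability collapses to $\exp(-\eta_b w_{b-1}'(K_{b-1}))$, so the switching probability conditional on history is at most $1 - \exp(-\eta_b w_{b-1}'(K_{b-1})) \leq \eta_0\, w_{b-1}'(K_{b-1})$, using $1 - e^{-x} \leq x$ for $x \geq 0$ and $\eta_b \leq \eta_0$. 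Taking expectations via the tower property (mirroring Lemma~\ref{cost}) yields $\mathbb{E}[w_{b-1}'(K_{b-1})] = \mathbb{E}[\sum_{k \in \mathcal{P}_{b-1}} w_{b-1}^K(k)]$, reducing the task to bounding $\eta_0\, \mathbb{E}[\sum_{b,k} w_{b-1}^K(k)]$.

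The main obstacle is controlling $w_{b-1}^K(k)$ uniformly over $k$, including potentially destabilizing choices. The Break condition of Line~\ref{line5} is essential here: while a batch has not yet terminated, $\|x_t^K(k)\| \leq \beta(t-t_{b-1})\|x_{t_{b-1}}\| + \gamma w_\text{max}$ for every $k$, and combining with the cost estimate (\ref{costequation}) and the action bound (\ref{policy}) yields $w_{b-1}^K(k) = O(\tau_{b-1}(\|x_{t_{b-1}}\|^2 + 1))$ uniformly in $k$. Summing over $k$ and $b$ gives $\sum_{b,k} w_{b-1}^K(k) = O(N) \sum_b \tau_{b-1}(\|x_{t_{b-1}}\|^2 + 1)$.

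To finish, I would split the batch sum using the set $\mathcal{V}$ of Lemma~\ref{orderM}. For $b-1 \notin \mathcal{V}$, $\|x_{t_{b-1}}\| = O(\alpha_{b-1}\|x_0\| + \delta) = O(1)$ and $\sum_b \tau_{b-1} = T$, yielding an $O(NT)$ contribution that delivers the $O(\eta_0 NT)$ term. For the $O(|\mathcal{U}|)$ batches with $b-1 \in \mathcal{V}$, inequality (\ref{alphasdelta2}) gives $\|x_{t_{b-1}}\|^2 \leq \exp(O(|\mathcal{U}|))$, and $\tau_{b-1} \leq \tau_{B-1}$, so the residual is bounded by $O(\eta_0 N |\mathcal{U}| \tau_{B-1} \exp(O(|\mathcal{U}|)))$. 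Plugging in $\tau_{B-1} = O(T^{1/3}(N(|\mathcal{U}|+1))^{-1/3})$ from (\ref{taub-1}) and applying the lower bound $T \geq (o(1)\exp(O(|\mathcal{U}|)))^{3/2}/(N(|\mathcal{U}|+1))^{1/2}$, this residual collapses into $O(|\mathcal{U}|)$, producing the stated bound.
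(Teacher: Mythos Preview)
Your decomposition is the same as the paper's: switches on $\mathcal{M}^c$ are $O(|\mathcal{U}|)$, and on $\mathcal{M}$ the switching probability is bounded by $\eta_0 w_{b-1}'(K_{b-1})$, whose conditional expectation is $\sum_{k\in\mathcal{P}_{b-1}} w_{b-1}^K(k)$. The difference lies in how you control the per-batch cost.

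There is a genuine gap in your last step. You bound $w_{b-1}^K(k)=O(\tau_{b-1}(\|x_{t_{b-1}}\|^2+1))$ and then claim the residual over $b-1\in\mathcal{V}$, namely $O(\eta_0 N|\mathcal{U}|\,\tau_{B-1}\exp(O(|\mathcal{U}|)))$, ``collapses into $O(|\mathcal{U}|)$''. This cannot hold: $\eta_0$ is a free parameter in the lemma, so no bound independent of $\eta_0$ is available. If instead you try to absorb the residual into $O(\eta_0 NT)$, you need $\tau_{B-1}\exp(O(|\mathcal{U}|))\le O(T)$, i.e.\ $T\ge \exp(O(|\mathcal{U}|))/(N(|\mathcal{U}|+1))^{1/2}$, which is strictly stronger than the stated hypothesis $T\ge (o(1)\exp(O(|\mathcal{U}|)))^{3/2}/(N(|\mathcal{U}|+1))^{1/2}$.

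The missing ingredient is the $\beta$ decay inside the batch: since $\|x_t^K(k)\|\le \beta(t-t_{b-1})\|x_{t_{b-1}}\|+\gamma w_{\max}$ and $\beta\le 1$, summing $\beta^2\le\beta$ gives the sharper bound $w_{b-1}^K(k)=O\bigl(H(\tau_{b-1})\|x_{t_{b-1}}\|^2\bigr)+O(\tau_{b-1})$, as in the paper's (\ref{ot123}). With $H(\tau_{b-1})$ in place of $\tau_{b-1}$, the $\mathcal{V}$ residual becomes $O(\eta_0 N\,H(\tau_{B-1})\exp(O(|\mathcal{U}|)))$; the factor $H(\tau_{B-1})/\tau_{B-1}=o(1)$ (Lemma~\ref{asymptotic}) is precisely the $o(1)$ appearing in the hypothesis on $T$, and now the residual does absorb into $O(\eta_0 NT)$. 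The paper obtains the same $H(\tau_{b_U})\exp(O(|\mathcal{U}|))$ term via the recursive machinery behind (\ref{expou}) rather than the $\mathcal{V}/\mathcal{V}^c$ split, but either route works once the $H(\tau_b)$ factor is retained.
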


\begin{proof}
    For all $b=1,\dots,B-1$ such that $s_b=s_{b-1}$, given $K_{b-1},\dots,K_0$, we have
    \begin{align}\label{etawwww}
        \nonumber Pr(K_b\neq K_{b-1}) &\leq 1-\frac{\exp(-\eta_{b} W_b(K_{b-1}))}{\exp(-\eta_{b-1} W_{b-1}(K_{b-1}))} \leq 1-\frac{\exp(-\eta_{b-1} W_b(K_{b-1}))}{\exp(-\eta_{b-1} W_{b-1}(K_{b-1}))} \\\nonumber &= 1- \exp (-\eta_{b-1} w_{b-1}'(K_{b-1}))\leq 1- \exp (-\eta_{0} w_{b-1}'(K_{b-1})) \\&\leq \eta_{0} w_{b-1}'(K_{b-1})=\eta_{0} \frac{w_{b-1}(K_{b-1})}{p_{b-1}(K_{b-1})},
    \end{align}
    where the second inequality is because $\eta_b=\eta_{b-1}$ when $s_b=s_{b-1}$, the third inequality uses $\eta_0\geq \eta_b$ for all $b\geq 0$, and the last inequality uses $1+x\leq e^{x}$ for all $x\in\mathbb{R}$. Now, given a set of controllers $i^b\in\mathcal{P}_{b}$ for $b=0,\dots,B-1$, we can upper-bound $\sum_{b=0}^{B-2}w_{b}(i^{b})$ by $t_{b+1}-t_b\leq \tau_b$ as follows:
    \begin{align}\label{ot123}
        \nonumber\sum_{b=0}^{B-2}w_{b}(i^{b}) &=\sum_{b=0}^{B-2} \sum_{t=t_{b}}^{t_{b+1}-1} c_t(x_t,u_t) \leq \sum_{b=0}^{B-2} \sum_{t=t_{b}}^{t_{b+1}-1} 2L_{c_1}(\|x_t\|^2+\|u_t\|^2) +L_{c2}(\|x_t\|+ \|u_t\|) + c_{0,\text{max}} \\\nonumber&\leq \sum_{b=0}^{B-2} \sum_{t=t_{b}}^{t_{b+1}-1} 2L_{c_1}((1+2L_\pi^2) \|x_t\|^2 +2\pi_{0,\text{max}}^2) + L_{c2}((1+L_\pi) \|x_t\| + \pi_{0,\text{max}})+c_{0,\text{max}} \\\nonumber&\leq \sum_{b=0}^{B-2} 2L_{c_1}(1+2L_\pi^2)H(\tau_b)\|x_{t_b}\|^2 + L_{c2}(1+L_\pi)H(\tau_b) \|x_{t_b}\|+\tau_b[4L_{c_1}\pi_{0,\text{max}}^2+L_{c2}\pi_{0,\text{max}}+c_{0,\text{max}}] \\&= O(\exp(O(|\mathcal{U}|))H(\tau_{b_U}))+O(\sum_{b=0}^{B-2}H(\tau_b))+O(\sum_{b=0}^{B-2}\tau_b),
    \end{align}
    where the first inequality is due to (\ref{costequation}), the second inequality is by (\ref{policy}) and (\ref{25}), the third inequality is due to using $\beta(\cdot)\leq1$ to derive $\sum_{t=0}^{t_{b+1}-t_b} [\beta(t)]^2 \leq \sum_{t=0}^{t_{b+1}-t_b} [\beta(t)] = H(t_{b+1}-t_b)$, and the last equality can be derived in the same fashion with (\ref{expou}). With $T\geq \frac{(o(1) exp(O(|\mathcal{U}|)))^{3/2}}{(N(M+1))^{1/2}}$, we obtain by (\ref{taub-1}) that
    \begin{align}\label{otot}
    O(\exp(O(|\mathcal{U}|))H(\tau_{b_U}))+O(\sum_{b=0}^{B-2}H(\tau_b))+O(\sum_{b=0}^{B-2}\tau_b)\leq O(T).
    \end{align}
    Thus, one can write
    \begin{align}\label{47}
        \nonumber&\mathbb{E}_{K_{B-1:0}}\sum_{b=1}^{B-1}\mathcal{I}_{(K_b\neq K_{b-1})}=\sum_{b=1}^{B-1}\mathbb{E}_{K_{b:0}}\mathcal{I}_{(K_b\neq K_{b-1})} = \sum_{b=1}^{B-1}\mathbb{E}_{K_{b-1:0}}\mathbb{E}_{K_b}[\mathcal{I}_{(K_b\neq K_{b-1})}~|~K_{b-1:0}] \\\nonumber &=\sum_{b=1}^{B-1}\mathbb{E}_{K_{b-1:0}} Pr(K_b\neq K_{b-1} ~|~ K_{b-1:0}) \\\nonumber&=\sum_{b=1}^{B-1}\mathbb{E}_{K_{b-1:0}}[Pr(s_b=s_{b-1},\mathcal{P}_b=\mathcal{P}_{b-1}~|~ K_{b-1:0}) Pr(K_b\neq K_{b-1}~|~ s_b=s_{b-1}, \mathcal{P}_b=\mathcal{P}_{b-1}, K_{b-1:0}) \\\nonumber&\hspace{20mm}+Pr(s_b\neq s_{b-1} ~\text{or} ~\mathcal{P}_b\neq\mathcal{P}_{b-1}~|~ K_{b-1:0}) Pr(K_b\neq K_{b-1}~|~ s_b\neq s_{b-1} ~\text{or} ~\mathcal{P}_b\neq\mathcal{P}_{b-1}, K_{b-1:0}) ] \\\nonumber& = |\mathcal{L}|+U+\sum_{b=1}^{B-1}\mathbb{E}_{K_{b-1:0}}[Pr(s_b=s_{b-1},\mathcal{P}_b=\mathcal{P}_{b-1}~|~ K_{b-1:0}) Pr(K_b\neq K_{b-1}~|~ s_b=s_{b-1}, \mathcal{P}_b=\mathcal{P}_{b-1}, K_{b-1:0}) \\\nonumber&\leq |\mathcal{L}|+U+\sum_{b=1}^{B-1}\mathbb{E}_{K_{b-1:0}} Pr(K_b\neq K_{b-1}~|~ s_b=s_{b-1}, \mathcal{P}_b=\mathcal{P}_{b-1}, K_{b-1:0})\\ \nonumber&\leq |\mathcal{L}|+U+\sum_{b=1}^{B-1}\mathbb{E}_{K_{b-1:0}} \eta_{0} \frac{w_{b-1}(K_{b-1})}{p_{b-1}(K_{b-1})} \\\nonumber&=  |\mathcal{L}|+U+\sum_{b=1}^{B-1}\mathbb{E}_{K_{b-2:0}}\mathbb{E}_{K_{b-1}} \biggr[\eta_{0} \frac{w_{b-1}(K_{b-1})}{p_{b-1}(K_{b-1})}~|~K_{b-2:0}\biggr] \\ \nonumber&=|\mathcal{L}|+U+\sum_{b=1}^{B-1}\eta_{0}\mathbb{E}_{K_{b-2:0}} \sum_{K_{b-1}\in\mathcal{P}_{b-1}} p_{b-1}(K_{b-1}) \frac{w_{b-1}(K_{b-1})}{p_{b-1}(K_{b-1})} \\ &\leq |\mathcal{L}|+U+\sum_{b=1}^{B-1}\eta_{0}N\mathbb{E}_{K_{b-2:0}}  w_{b-1}(i^{b-1})
    \end{align} 
    for the controller $i^{b-1} = \arg\max_{i\in\mathcal{P}_{b-1}}w_{b-1}(i)$. The first equality is because $K_{B-1},\dots,K_{b+1}$ does not affect on $\mathcal{I}_{(K_b\neq K_{b-1})}$ and the second inequality is by (\ref{etawwww}). Taking expectation of (\ref{ot123}) with respect to $K_{b-1:0}$ and applying it to (\ref{47}) yields 
    \begin{align*}
        \mathbb{E}_{K_{B-1:0}}\sum_{b=1}^{B-1}\mathcal{I}_{(K_b\neq K_{b-1})} = |\mathcal{L}|+U+O(\eta_0 NT)
    \end{align*}
    by (\ref{otot}). Using $|\mathcal{L}|=O(U)$ in Lemma \ref{orderM} and $U\leq |\mathcal{U}|$ completes the proof.
\end{proof}

Algorithm \ref{Algorithm 3} uses the same distribution with Algorithm \ref{Algorithm 1} if $b=0$ or $s_b\neq s_{b-1}$ or $\mathcal{P}_b\neq \mathcal{P}_{b-1}$. It turns out that even if $s_b=s_{b-1}$ and $\mathcal{P}_b= \mathcal{P}_{b-1}$, the distribution of policy from Algorithm \ref{Algorithm 1} and \ref{Algorithm 3} are indeed the same, which is motivated by \citet{anava2015online}. For the sake of completeness, we state the lemma in this paper.

\begin{lemma}\label{samedistn}
    Let $p_b$ and $\Tilde{p}_b$ denote the distribution of policy at batch $b=0,\dots,B-1$ resulting from Algorithm \ref{Algorithm 1} and \ref{Algorithm 3}, respectively. Then, $p$ and $\tilde{p}$ are the same distribution.
\end{lemma}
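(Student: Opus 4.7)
My plan is to prove this by induction on the batch index $b$, showing that the marginal distribution of $K_b$ under Algorithm~\ref{Algorithm 3} coincides with $p_b$ used in Algorithm~\ref{Algorithm 1}. The base case $b=0$ is immediate: the conditional branch in Algorithm~\ref{Algorithm 3} requires $b>0$, so $K_0$ is sampled from $p_0$, giving $\tilde{p}_0=p_0$. Similarly, whenever $s_b\neq s_{b-1}$ or $\mathcal{P}_b\neq \mathcal{P}_{b-1}$, the ELSE branch draws directly from $p_b$ and there is nothing to check. So the only nontrivial case is $b>0$ with $s_b=s_{b-1}$ and $\mathcal{P}_b=\mathcal{P}_{b-1}$.

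In that case, I would first record the key consistency fact that $\eta_b=\eta_{b-1}$. This follows because $\eta_b=\eta_0/(\alpha_b)^{2s_b}$ by Line~\ref{line27}, and an inspection of Lines~\ref{line11}--\ref{line20} shows that $\alpha_b=\alpha_{b-1}$ whenever $s_b=s_{b-1}$ (the rule that updates $\alpha$ is triggered only when the $s$-index jumps by more than one). Consequently the ``stick'' probability
\[
q(k'):=\frac{\exp(-\eta_b W_b(k'))}{\exp(-\eta_{b-1} W_{b-1}(k'))}=\exp(-\eta_{b-1}\,w_{b-1}'(k'))\in[0,1],
\]
using $W_b(k')=W_{b-1}(k')+w_{b-1}'(k')$ together with nonnegativity of the per-batch cost. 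This sanity check ensures Algorithm~\ref{Algorithm 3} is well-defined.

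Next, for any $k\in\mathcal{P}_b$, I will decompose
\[
\tilde{p}_b(k)=\underbrace{\tilde{p}_{b-1}(k)\,q(k)}_{\text{stick at }k}+\Big(\sum_{k'\in\mathcal{P}_{b-1}}\tilde{p}_{b-1}(k')(1-q(k'))\Big)\,p_b(k),
\]
invoke the inductive hypothesis $\tilde{p}_{b-1}=p_{b-1}$, and plug in the Gibbs forms $p_{b-1}(k)=\exp(-\eta_{b-1}W_{b-1}(k))/Z_{b-1}$ and $p_b(k)=\exp(-\eta_b W_b(k))/Z_b$ where $Z_b:=\sum_{i\in\mathcal{P}_b}\exp(-\eta_b W_b(i))$. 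A short computation gives $p_{b-1}(k)q(k)=p_b(k)\cdot Z_b/Z_{b-1}$ and $\sum_{k'}p_{b-1}(k')q(k')=Z_b/Z_{b-1}$, so the two summands collapse to $p_b(k)(Z_b/Z_{b-1})+p_b(k)(1-Z_b/Z_{b-1})=p_b(k)$, closing the induction.

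The only delicate step is the algebraic cancellation between the stick and resample contributions, which relies crucially on $\mathcal{P}_b=\mathcal{P}_{b-1}$ so that the same normalizer $Z_{b-1}$ sums over the controller set appearing in $p_b$. I do not anticipate any genuine obstacle beyond keeping track of this bookkeeping and verifying that $q(k')\in[0,1]$; both are direct consequences of the algorithm's design (equal learning rates and monotone cumulative weights when the context $(s,\mathcal{P})$ is preserved).
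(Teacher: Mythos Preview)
Your proposal is correct and follows essentially the same approach as the paper: both argue by induction on $b$, dispatch the trivial cases immediately, and handle the remaining case $s_b=s_{b-1}$, $\mathcal{P}_b=\mathcal{P}_{b-1}$ via the same law-of-total-probability decomposition and Gibbs-form cancellation (your $Z_b$ is the paper's $Y_b$). The computations are line-for-line equivalent; the only cosmetic difference is that the paper defers the observation $\eta_b=\eta_{b-1}$ (and hence $q(k')\in[0,1]$) to the end rather than recording it up front as you do.
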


\begin{proof}
    For $b=0$, $p_0(k)=\Tilde{p}_0(k)=\frac{1}{N}$ for all $k\in\mathcal{P}_0$. For all $b=1,\dots,B-1$ such that $s_b\neq s_{b-1}$ or $\mathcal{P}_b\neq \mathcal{P}_{b-1}$, it holds that $p_b=\Tilde{p}_b$. Thus, it suffices to prove the induction step for $b=1,\dots,B-1$ such that $s_b=s_{b-1}$ and $\mathcal{P}_b= \mathcal{P}_{b-1}$. Define $Y_b:=\sum_{k\in\mathcal{P}_b}\exp (-\eta_{b} W_b(k))$ and suppose that $p_{b-1}=\Tilde{p}_{b-1}$. Thus, we have
    \begin{align*}
        \Tilde{p}_b(k) &= \Tilde{p}_{b-1}(k)\cdot \frac{\exp (-\eta_b W_b(k))}{\exp (-\eta_{b-1} W_{b-1}(k))} + p_b(k)\cdot \sum_{i\in\mathcal{P}_{b}}  (1-\frac{\exp (-\eta_b W_b(i))}{\exp (-\eta_{b-1} W_{b-1}(i))})\cdot \Tilde{p}_{b-1}(i)\\&=p_{b-1}(k)\cdot \frac{\exp (-\eta_b W_b(k))}{\exp (-\eta_{b-1} W_{b-1}(k))} + p_b(k)\cdot \sum_{i\in\mathcal{P}_{b}}  (1-\frac{\exp (-\eta_b W_b(i))}{\exp (-\eta_{b-1} W_{b-1}(i))})\cdot p_{b-1}(i) \\ &= \frac{\exp (-\eta_{b-1} W_{b-1}(k))}{Y_{b-1}}\cdot \frac{\exp (-\eta_b W_b(k))}{\exp (-\eta_{b-1} W_{b-1}(k))}\\&\hspace{30mm} + \frac{\exp (-\eta_{b} W_{b}(k))}{Y_{b}}\sum_{i\in\mathcal{P}_{b}}  (1-\frac{\exp (-\eta_b W_b(i))}{\exp (-\eta_{b-1} W_{b-1}(i))})\frac{\exp (-\eta_{b-1} W_{b-1}(k))}{Y_{b-1}}\\&=\frac{\exp (-\eta_b W_b(k))}{Y_{b-1}}+\frac{\exp (-\eta_{b} W_{b}(k))}{Y_{b}}\sum_{i\in\mathcal{P}_{b}}\frac{\exp (-\eta_{b-1} W_{b-1}(i))-\exp (-\eta_{b} W_{b}(i))}{Y_{b-1}}\\&=\frac{\exp (-\eta_b W_b(k))}{Y_{b-1}}+\frac{\exp (-\eta_{b} W_{b}(k))}{Y_{b}}\cdot\frac{Y_{b-1}-Y_b}{Y_{b-1}} = \frac{\exp (-\eta_{b} W_{b}(k))\cdot Y_{b-1}}{Y_b\cdot Y_{b-1}}=p_b(k),
    \end{align*}
    where the first equality is due to the law of total probability, the second equality is due to the induction hypothesis, and the fifth equality is by $\mathcal{P}_b=\mathcal{P}_{b-1}$. Notice that $s_b=s_{b-1}$ yields $\eta_b=\eta_{b-1}$ and $W_b(k)\geq W_{b-1}(k)$, and thus $0\leq\frac{\exp (-\eta_b W_b(k))}{\exp (-\eta_{b-1} W_{b-1}(k))} \leq1$; \textit{i.e.}, the probability distribution is properly defined for every batch. This completes the proof.
\end{proof}

\begin{theorem}[Regret with switching costs bound with known $|\mathcal{U}|$]\label{switchingknownu}
    In Algorithm $\ref{Algorithm 3}$, let $\tau_0=\lfloor (\frac{z}{N(|\mathcal{U}|+1)})^{1/2}  \rfloor$ and $\tau_b = \lceil (\frac{(\nu b+z)}{N(|\mathcal{U}|+1)})^{1/2}  \rceil$ for every $b\geq 1$ with the constants $z,\nu>0$ that satisfies $\tau_0>0$ and $\frac{\tau_1}{\tau_0}(\beta(\tau_0))^2<\frac{1}{2\sqrt{2}}$. Also, let $\eta_0=O(\frac{(|\mathcal{U}|+1)^{2/3}}{T^{2/3} N^{1/3}d^{1/3}})$. When $T\geq \max\{\frac{(o(1) exp(O(|\mathcal{U}|)))^{3/2}}{(N(|\mathcal{U}|+1))^{1/2}}, \frac{|\mathcal{U}|^{3/2}d}{(N(|\mathcal{U}|+1))^{1/2}}, N(|\mathcal{U}|+1)d\}$, we have
    \begin{align*}
        &\mathbb{E}_{K_{B-1:0}}\Biggr[\sum_{t=0}^{T} [c_t(x_t, u_t) - c_t(x_t', u_t')] + d\sum_{b=1}^{B-1}\mathcal{I}_{(K_b\neq K_{b-1})} - d\sum_{l=1}^{|\mathcal{L}|} \mathcal{I}_{(i_l' \neq i_{l-1}')}\Biggr]\\&\hspace{70mm}=\Tilde{O}(T^{2/3}N^{1/3}(|\mathcal{U}|+1)^{1/3}d^{1/3}) +o(T),
    \end{align*}
    which implies that we achieve a sublinear regret bound. Moreover, when $\lim_{t\to\infty}H(t)<\infty$ and $T\geq \max\{\frac{\exp(O(|\mathcal{U}|))}{d^{2/3}}, \frac{|\mathcal{U}|^{3/2}d}{(N(|\mathcal{U}|+1))^{1/2}}, N(|\mathcal{U}|+1)d\}$, we have
    \[
    \mathbb{E}_{K_{B-1:0}}\Biggr[\sum_{t=0}^{T} [c_t(x_t, u_t) - c_t(x_t', u_t')] + d\sum_{b=1}^{B-1}\mathcal{I}_{(K_b\neq K_{b-1})} - d\sum_{l=1}^{|\mathcal{L}|} \mathcal{I}_{(i_l' \neq i_{l-1}')}\Biggr] = \Tilde{O}(T^{2/3}N^{1/3}(|\mathcal{U}|+1)^{1/3}d^{1/3}).
    \]
\end{theorem}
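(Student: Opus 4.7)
\textit{Proof proposal sketch for Theorem \ref{switchingknownu}:} The plan is to decompose the regret with switching costs into three pieces and then recycle the regret analysis of Algorithm \ref{Algorithm 1}, adjusting only where the switching cost enters. First, by Lemma \ref{samedistn}, the policy distributions $p_b$ of Algorithms \ref{Algorithm 1} and \ref{Algorithm 3} coincide at every batch, so $\mathbb{E}_{K_{B-1:0}}\sum_{t=0}^T c_t(x_t,u_t)$ is identical under the two algorithms. Hence the analysis underlying Theorem \ref{regret bound} applies verbatim to the cost part of Algorithm \ref{Algorithm 3}; the only modification is that we compare against the switching baseline $\{i_0',\ldots,i_{|\mathcal{L}|}'\}\subseteq \mathcal{S}$ rather than a single $i^*$.

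The crucial observation is that the cumulative mix-loss bound in Lemma \ref{mix loss} already permits a distinct controller $i^l$ on each segment $[b^l,b^{l+1})$, precisely because the weights $W_{b+1}(\cdot)$ reset whenever $s_{b+1}\neq s_b$ (Line \ref{line22}). Thus choosing $i^l = i_l'$ yields the same cost regret bound as in Theorem \ref{regret bound}, with only an additive $O(\sum_{b=0}^{B-1}H(\tau_b))$ from replaying the argument of Lemma \ref{mixlossfurther} segment-by-segment and invoking the incremental-stability property (Definition \ref{incrementally stable}) on each of the at most $|\mathcal{L}|+1 = O(|\mathcal{U}|)$ segments.

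For the switching cost itself, apply Lemma \ref{algo3} to get $d\,\mathbb{E}_{K_{B-1:0}}\sum_{b=1}^{B-1}\mathcal{I}_{(K_b\neq K_{b-1})} = O(d|\mathcal{U}|) + O(d\eta_0 NT)$, and use $|\mathcal{L}|=O(|\mathcal{U}|)$ from Lemma \ref{orderM} together with $d\sum_{l=1}^{|\mathcal{L}|}\mathcal{I}_{(i_l'\neq i_{l-1}')}\leq d|\mathcal{L}|$ to absorb the baseline switching cost into the $O(d|\mathcal{U}|)$ term on the algorithm side. Combining this with the cost regret gives an overall bound of
\begin{align*}
\Tilde{O}\!\left(\tfrac{|\mathcal{U}|+1}{\eta_0}\right) + O(\eta_0 T^{4/3}N^{2/3}(|\mathcal{U}|+1)^{-1/3}) + O(d\eta_0 NT) + O(d|\mathcal{U}|) + O(\textstyle\sum H(\tau_b)) + \text{exp-term},
\end{align*}
where the exp-term comes from $\frac{\eta_0 N}{2}\exp(O(|\mathcal{U}|))O(\tau_{B-1}H(\tau_{B-1}))$. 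The choice $\eta_0 = O((|\mathcal{U}|+1)^{2/3}/(T^{2/3}N^{1/3}d^{1/3}))$ is dictated by balancing the $\Tilde{O}((|\mathcal{U}|+1)/\eta_0)$ term against $O(d\eta_0 NT)$, both becoming $\Tilde{O}(T^{2/3}N^{1/3}(|\mathcal{U}|+1)^{1/3}d^{1/3})$.

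The main obstacle I expect is verifying that each of the three hypotheses on $T$ in the theorem corresponds exactly to one term in the decomposition being absorbed: $T\geq\frac{(o(1)\exp(O(|\mathcal{U}|)))^{3/2}}{(N(|\mathcal{U}|+1))^{1/2}}$ ensures Lemma \ref{algo3} applies (controlling the $O(\eta_0 NT)$ switching count via $\sum H(\tau_b)+\sum \tau_b \leq O(T)$); $T\geq \frac{|\mathcal{U}|^{3/2}d}{(N(|\mathcal{U}|+1))^{1/2}}$ absorbs the linear-in-$|\mathcal{U}|$ additive terms $O(d|\mathcal{U}|)$ into the target rate; and $T\geq N(|\mathcal{U}|+1)d$ ensures $O(d\eta_0 NT)$ matches the target $\Tilde{O}(T^{2/3}N^{1/3}(|\mathcal{U}|+1)^{1/3}d^{1/3})$. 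For the sublinear statement, the residual $o(T)$ term comes from $O(\sum_{b=0}^{B-1}H(\tau_b))$ via Lemma \ref{asymptotic} and Eq.~(\ref{asym 0}); for the refined statement under $\lim_{t\to\infty}H(t)<\infty$, Eqs.~(\ref{limitedh})--(\ref{expoexpo}) tighten $\sum H(\tau_b)$ and $\tau_{B-1}H(\tau_{B-1})$, and $T\geq \exp(O(|\mathcal{U}|))/d^{2/3}$ is exactly what is needed for the exp-term $\exp(O(|\mathcal{U}|))T^{-1/3}N^{1/3}(|\mathcal{U}|+1)^{1/3}d^{-1/3}$ to fold into the stated rate.
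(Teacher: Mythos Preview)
Your proposal is correct and follows essentially the same route as the paper: invoke Lemma~\ref{samedistn} to transfer the cost-regret bound of Theorem~\ref{regret bound} to Algorithm~\ref{Algorithm 3}, add the switching-count bound from Lemma~\ref{algo3}, drop the baseline switching term (it is subtracted and nonnegative, so ``absorb'' is slightly misleading phrasing), and then match each $T$ hypothesis to the term it controls. One small point to tighten: in the sublinear statement the first hypothesis $T\geq \frac{(o(1)\exp(O(|\mathcal{U}|)))^{3/2}}{(N(|\mathcal{U}|+1))^{1/2}}$ is used not only to validate Lemma~\ref{algo3} but also to fold the exp-term $\eta_0 N\exp(O(|\mathcal{U}|))O(\tau_{B-1}H(\tau_{B-1}))$ directly into $\Tilde{O}(T^{2/3}N^{1/3}(|\mathcal{U}|+1)^{1/3}d^{-1/3})$---the paper's bound has no separate $o(1)\exp(O(|\mathcal{U}|))$ summand here, unlike Theorem~\ref{knownu}.
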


\begin{proof}
    The distribution of policy is the same for Algorithm \ref{Algorithm 1} and \ref{Algorithm 3} by Lemma \ref{samedistn}. Thus, we can use Theorem \ref{regret bound} with Lemma \ref{algo3} to achieve
    \begin{align}\label{switchingcost12}
        \nonumber&\mathbb{E}_{K_{B-1:0}}\Biggr[\sum_{t=0}^{T} [c_t(x_t, u_t) - c_t(x_t', u_t')] + d\sum_{b=1}^{B-1}\mathcal{I}_{(K_b\neq K_{b-1})} - d\sum_{l=1}^{|\mathcal{L}|} \mathcal{I}_{(i_l' \neq i_{l-1}')}\Biggr]\\ \nonumber&\leq \frac{\Tilde{O}(|\mathcal{U}|+1)}{\eta_0} + \frac{\eta_0 N }{2} [\exp(O(|\mathcal{U}|))O(\tau_{B-1}H(\tau_{B-1}))+ O(\sum_{b=0}^{B-1} (\tau_b)^2)] \\ &\hspace{65mm} + O(\sum_{b=0}^{B-1}H(\tau_b)) + O(d|\mathcal{U}|) + O(d\eta_0 NT),
    \end{align}
    since $d\geq 1$ and $\sum_{l=1}^L \mathcal{I}_{(i_l' \neq i_{l-1}')}\geq0$. Notice that $(\tau_b)_{b\geq 0}$ is the same for Algorithm \ref{Algorithm 1} and \ref{Algorithm 3}. Accordingly, we still have $B=O(T^{2/3}N^{1/3}(|\mathcal{U}|+1)^{1/3})$ by (\ref{TBTB}) and $T\geq N(|\mathcal{U}|+1)d\geq N(|\mathcal{U}|+1)$. We also still have (\ref{tausquared}) and (\ref{taub-1}). Thus, with $T\geq \frac{(o(1) exp(O(|\mathcal{U}|)))^{3/2}}{(N(|\mathcal{U}|+1))^{1/2}}$ and $T\geq \frac{|\mathcal{U}|^{3/2}d}{(N(|\mathcal{U}|+1))^{1/2}}$, we obtain that
    \begin{align*}
        &\eta_0 N \exp (O(|\mathcal{U}|))O(\tau_{B-1}H(\tau_{B-1})) = o(d^{-1/3})\exp (O(|\mathcal{U}|)) = O(T^{2/3}N^{1/3}(|\mathcal{U}|+1)^{1/3}d^{-1/3}). \\&
        O(d|\mathcal{U}|) = O(T^{2/3}N^{1/3}(|\mathcal{U}|+1)^{1/3}d^{1/3}).
    \end{align*}
    Also, with $T\geq N(|\mathcal{U}|+1)d$, we have
    \begin{align*}
        O(d\eta_0 NT) = O(T^{2/3}N^{1/3}(|\mathcal{U}|+1)^{1/3}d^{1/3}).
    \end{align*}
    Combining all the above equalities with (\ref{switchingcost12}), one can write
    \begin{align*}
        &\mathbb{E}_{K_{B-1:0}}\Biggr[\sum_{t=0}^{T} [c_t(x_t, u_t) - c_t(x_t', u_t')] + d\sum_{b=1}^{B-1}\mathcal{I}_{(K_b\neq K_{b-1})} - d\sum_{l=1}^{|\mathcal{L}|} \mathcal{I}_{(i_l' \neq i_{l-1}')}\Biggr]\\ &\hspace{60mm}=O(T^{2/3}N^{1/3}(|\mathcal{U}|+1)^{1/3}d^{1/3}) + O(\sum_{b=0}^{B-1}H(\tau_b)).
    \end{align*}
    Using (\ref{asym 0}) shows a sublinear regret bound. When $\lim_{t\to\infty} H(t)<\infty$, (\ref{expoexpo}) is modified to 
    \begin{align*}
        & \eta_0 N \exp(O(|\mathcal{U}|))O(\tau_{B-1} H(\tau_{B-1})) = O(T^{2/3}N^{1/3}(|\mathcal{U}|+1)^{1/3}d^{1/3}),
    \end{align*}
    only with $T\geq \frac{\exp(O(|\mathcal{U}|))}{d^{2/3}}$. This completes the proof.
\end{proof}

\section{Numerical Experiment Details}\label{numeric imple}

In the two following subsections, we will present experiment details on linear and nonlinear systems, respectively. Since our Algorithm \ref{Algorithm 1} only hinges on the system state norm as a context, we can avoid computational burden; thus, Apple M1 Chip with 8‑Core CPU is sufficient for the experiments. 
The error bars (shaded area) in all the figures in the paper report 95\% confidence intervals based on the standard error. We calculate the standard error by randomly sampling 100 seeds to consider the variability of our experimental results.
The first factor of variability is the randomness of selecting the policy determined by the probability calculated in Algorithm \ref{Algorithm 1}. The second factor is the randomness of adversarial disturbances stated in each experiment. For example, sinusoidal noise does not involve any randomness but the uniform random walk contains the randomness in the difference between two consecutive noises.

\subsection{Experiments for the Linear system}\label{explinear}

In this subsection, we introduce the implementation details and present more experiments on the linear system (\ref{ex1formula}) discussed in Example 1 of Section \ref{numerical}. 

We consider three different noises for the experiments. To perform a fair comparison, the bounding constant $w_\text{max}$ is set to 1. 
\begin{align*}
    &\text{(a) Sanity check: Gaussian noise with mean 0.3 and standard deviation 0.1, truncated to} [-0.4,1] \\
    &\text{(b) Sinusoidal noise } w_t=\Bigr[\sin\Bigr(\frac{t}{5\pi}\Bigr), \sin\Bigr(\frac{t}{11\pi}\Bigr)\Bigr]'\\
    &\text{(c) Uniform random walk, where } w_0=\text{Uniform}\biggr[\frac{1}{3}-\frac{2}{3T}, \frac{1}{3}+\frac{2}{3T}\biggr]^2 \\ &\hspace{40.5mm}\text{ and } w_t-w_{t-1} \text{ follows Uniform}\biggr[-\frac{2}{3T}, \frac{2}{3T}\biggr]^2, 
\end{align*}
where $T$ is time horizon. One can easily see that for uniform random walk, $|w_T|\leq 1$ for any $T$. Notice that we use statistical (Gaussian) noise for the sanity check, and the rest are the adversarial disturbances.

We perform the ablation study of Algorithm \ref{Algorithm 1}, which means that we consider four scenarios: (fixed, dynamic) batch length and (fixed, adaptive) learning rate.
For all the experiments implementing the algorithm, we use $T=3000, \eta_0=0.025, \gamma =2.5, \alpha_0=1.01,$ and $x_0 = [100, 200]'$.
For the dynamic batch length, we consider $\tau_0=11$ and $\tau_b = \lceil \tau_0\cdot(\frac{b+10}{10})^{0.5} \rceil$. It is well known that every (asymptotically) stabilizing controller in the linear system is indeed exponentially stabilizing controller (\cite{kahlilnonlinear}). Hence, we use $\beta(t)=0.99^t$ without relaxing the assumptions on stabilizing controllers. Finally, we use $\delta = \frac{\gamma w_\text{max}}{1-\beta(\tau_0)}$. Since the sinusoidal noise case is already presented in Figure \ref{fig:example1}, we only present truncated Gaussian noise case and uniform random walk case here.

In Figures \ref{fig:example1}, \ref{gauss}, and \ref{uniformrandomwalklinear}, we observe that each component of DBAR, a dynamic batch length and an adaptive learning rate, \textit{jointly} improves both the stability and the regret regardless of the noise form. For example, a dynamic batch length delays the time that large state norms occur during learning, but does not necessarily stabilize that state norm by itself (see Figures \ref{unifwalka} and \ref{unifwalkb}). However, when applied together with an adaptive learning rate, a potential multiplicative exponential term is mitigated (see Remark \ref{algorithmdesignremark}) and the state norm is thus stabilized. This can be observed in Figures \ref{fig:example1dbcomp}, \ref{truncatedd}, and \ref{unifwalkd} when comparing fixed and dynamic batch lengths under an adaptive learning rate. This results from using a non-decreasing batch length where the increasing ratio between two consecutive batch lengths is determined to converge to 1 (see Assumption \ref{batch length}). On the other hand, an adaptive learning rate effectively lowers the state norm at the time that large state norms occur without delay, since the learning rate adaptively decreases whenever the agent faces large state norm. This can be seen in \ref{fig:example1arcomp}, \ref{truncatedc}, and \ref{unifwalkc}, the ablation study about the comparison between fixed and adaptive learning rates under a dynamic batch length. Thus, DBAR effectively stabilizes the state norm below $\gamma w_\text{max}$ and minimizes the regret, where the two components support each other.

\begin{figure}[t]
    \centering
    % 
    % \footnotesize
    \subfigure[Stability analysis]{\label{truncateda}\includegraphics[height=95pt, width=95pt]{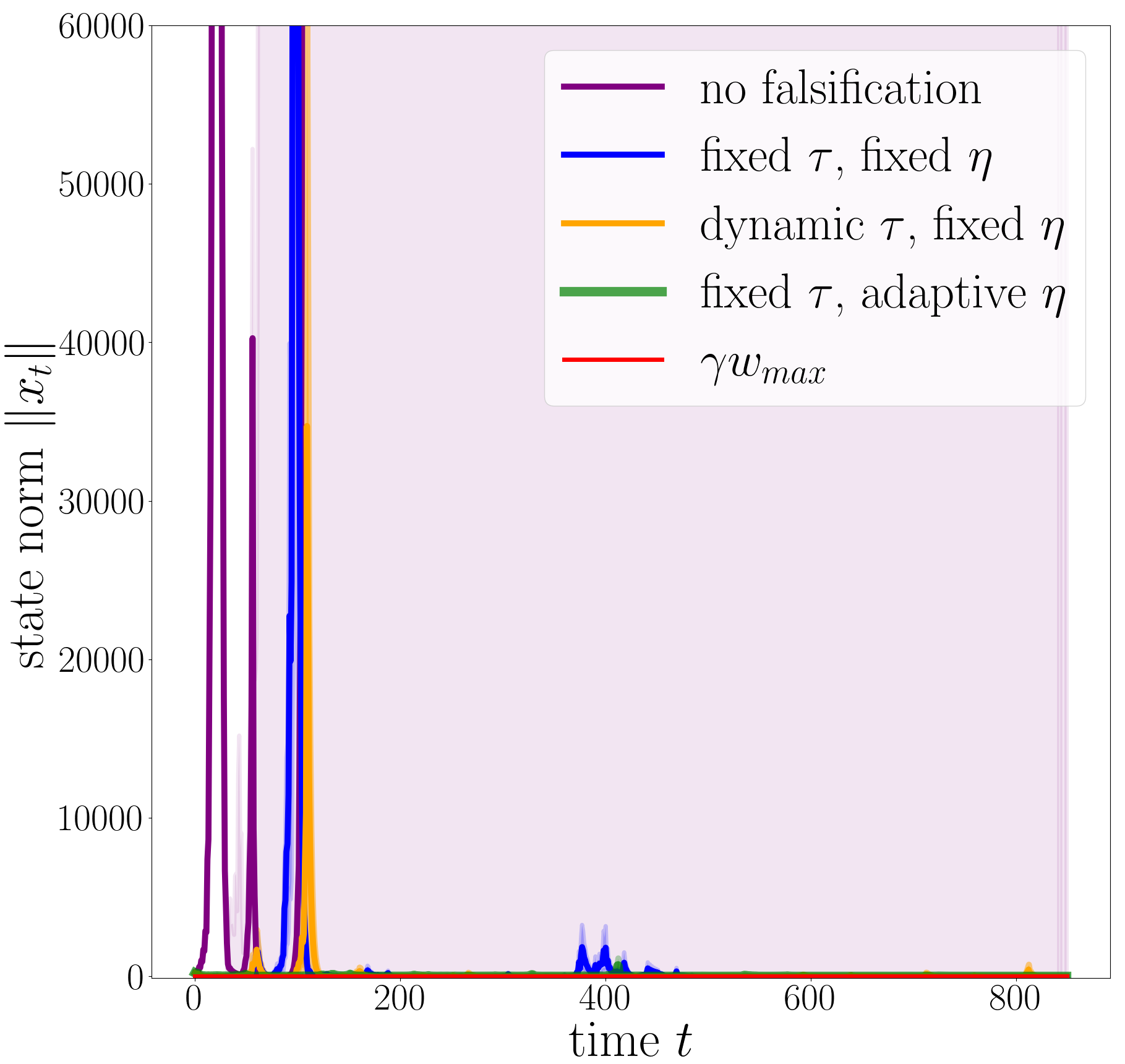}}
     \subfigure[Regret analysis]{\label{truncatedb}\includegraphics[height=95pt, width=95pt]{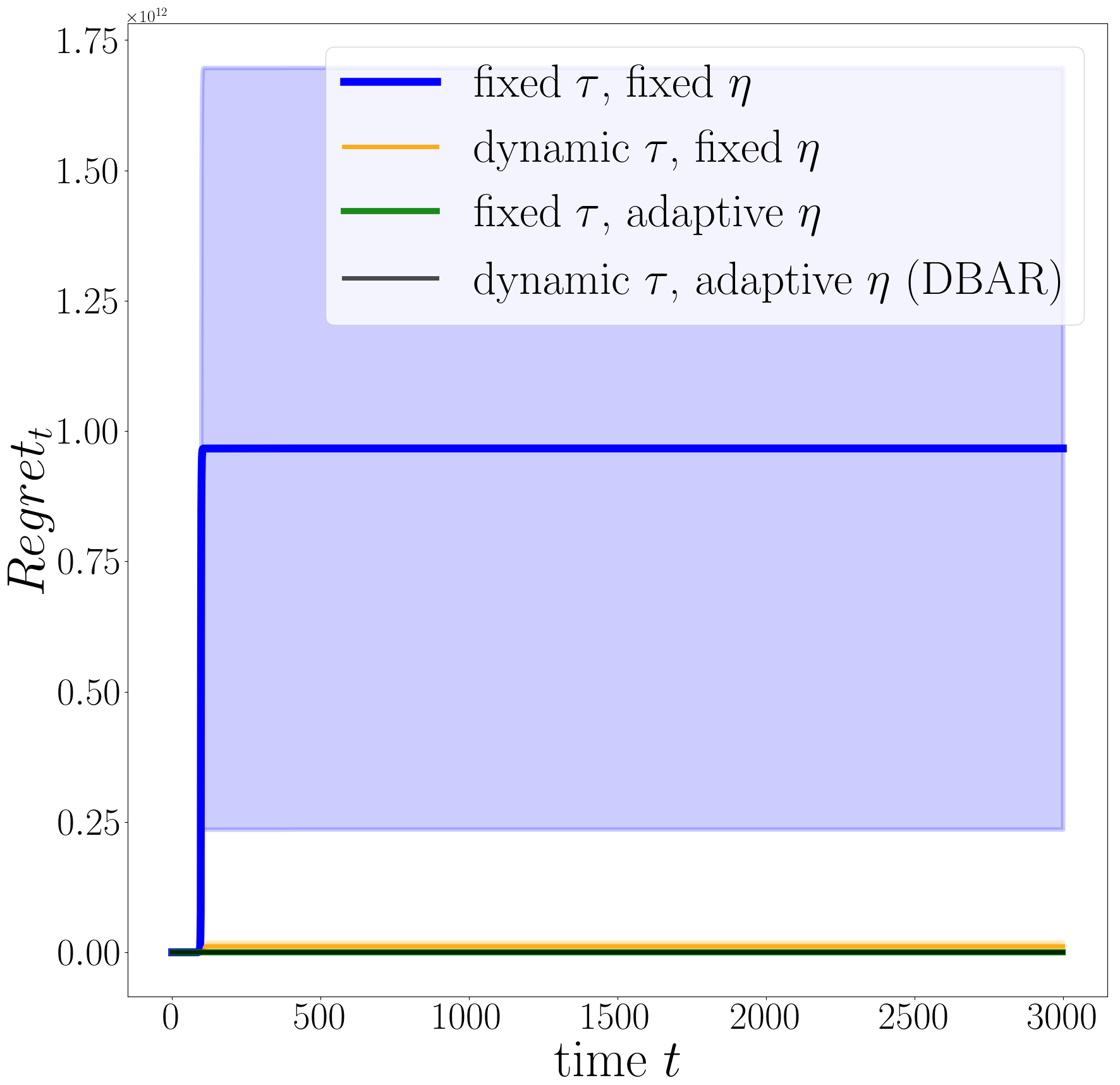}} 
    \subfigure[Adaptive learning rate under dynamic batch length]{\label{truncatedc}\includegraphics[height=95pt, width=100pt]{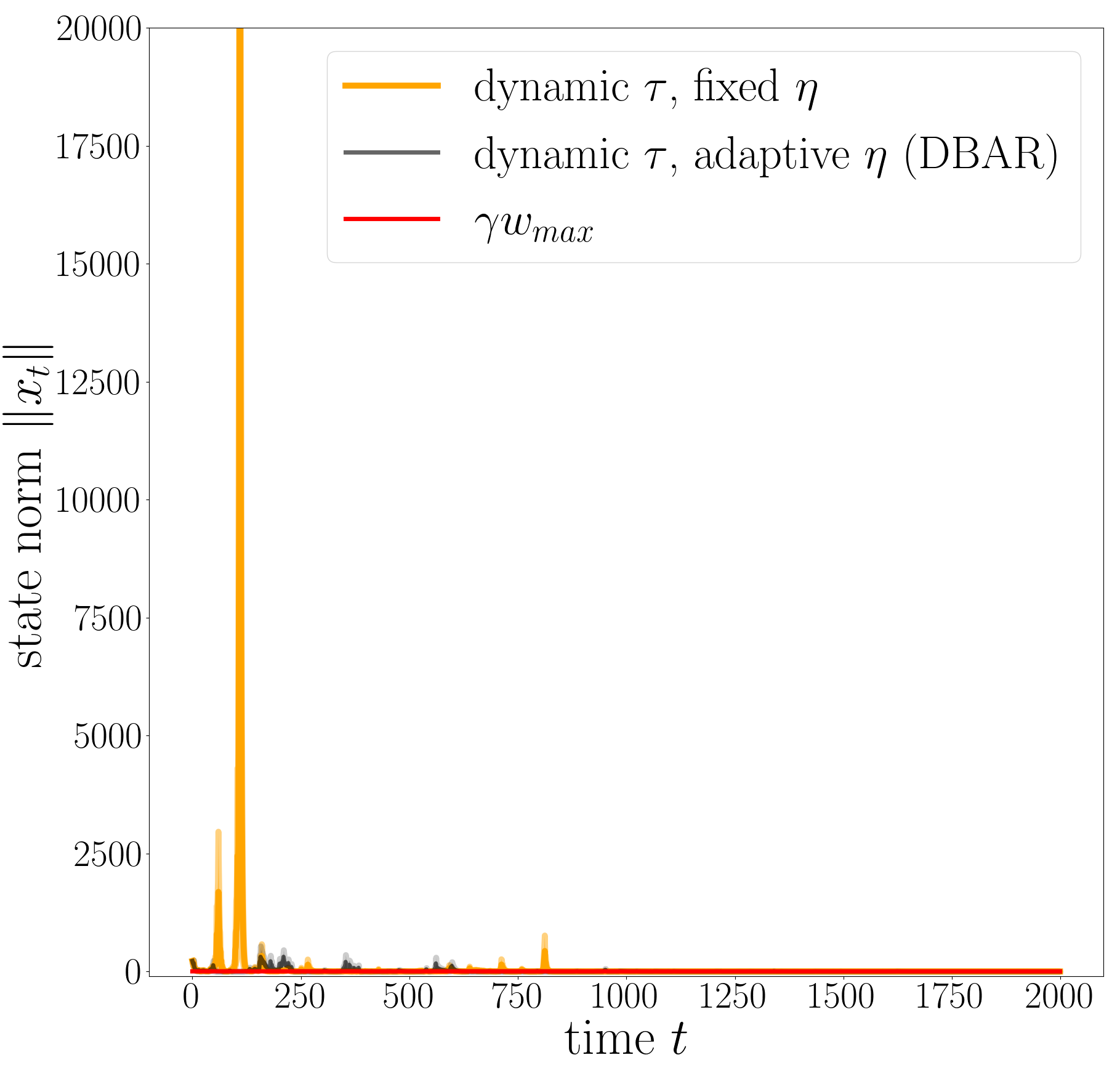}}
    \subfigure[Dynamic batch length under adaptive learning rate]{\label{truncatedd}\includegraphics[height=95pt, width=100pt]{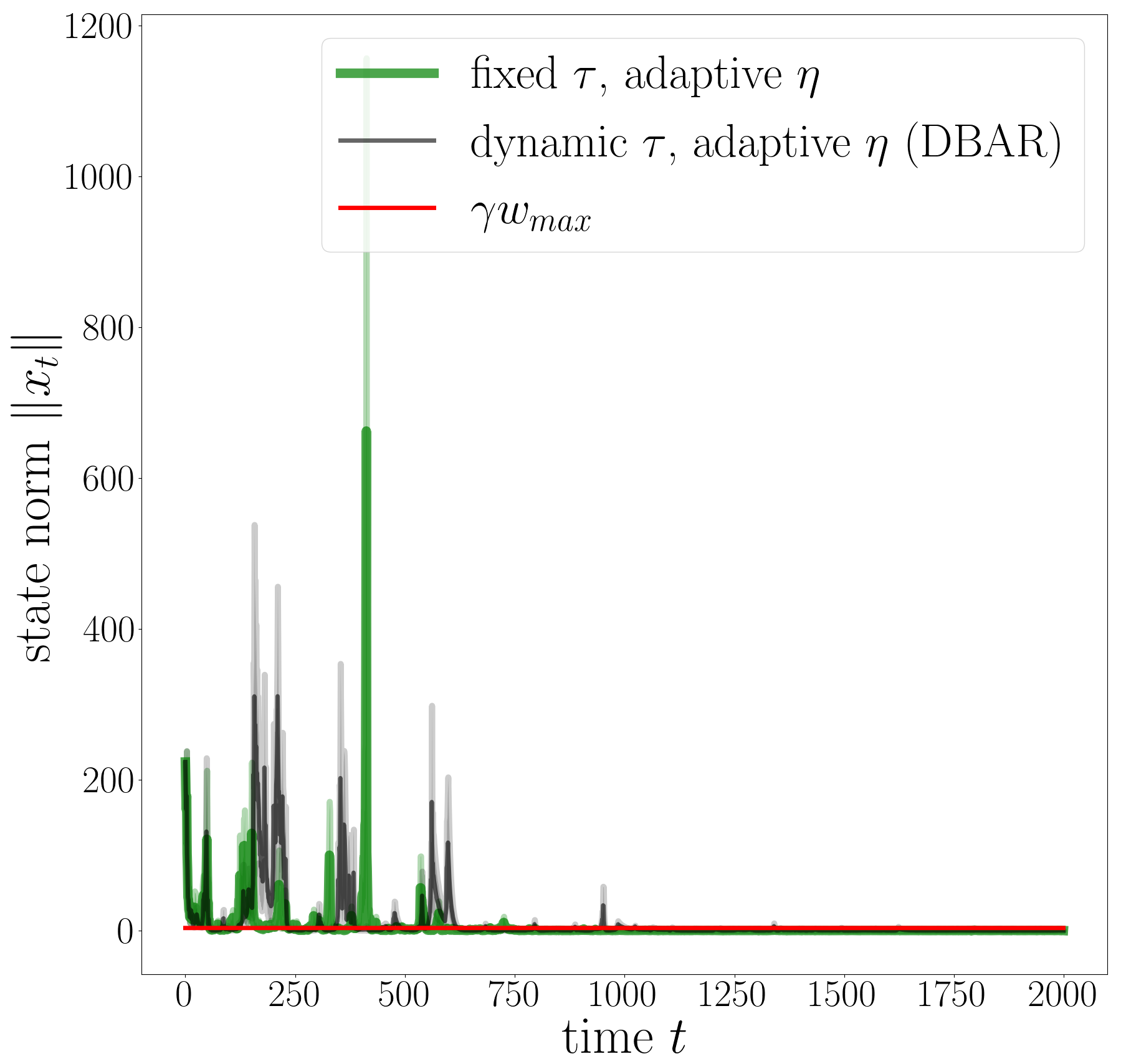}}
  
    \caption{The stability and the regret in the linear system under truncated Gaussian noise. Ablation study of the algorithm is presented.}
   
    \label{gauss}
\end{figure}

\begin{figure}[t]
    \centering
    % 
    % \footnotesize
    \subfigure[Stability analysis]{\label{unifwalka}\includegraphics[height=95pt, width=95pt]{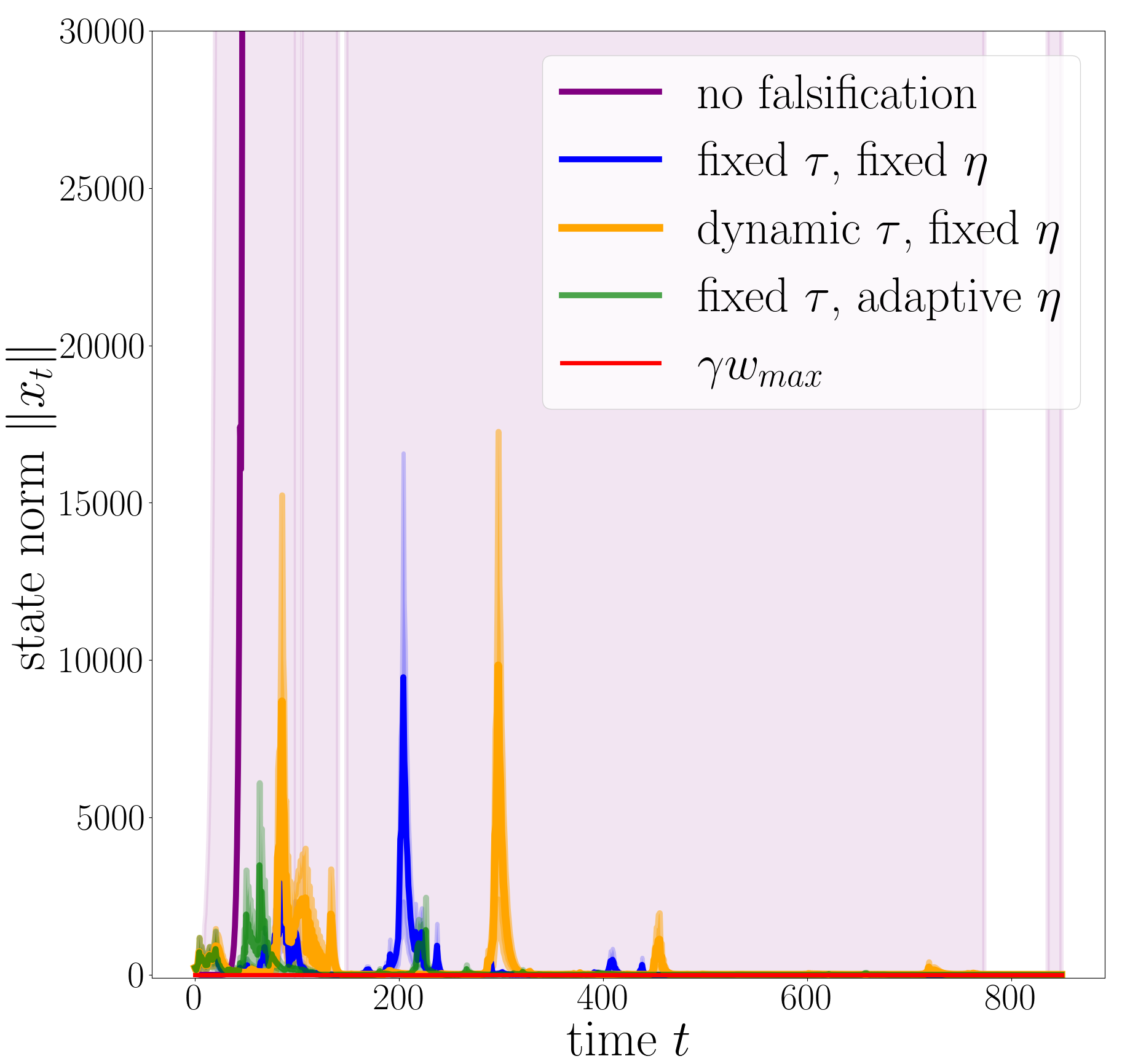}}
     \subfigure[Regret analysis]{\label{unifwalkb}\includegraphics[height=95pt, width=95pt]{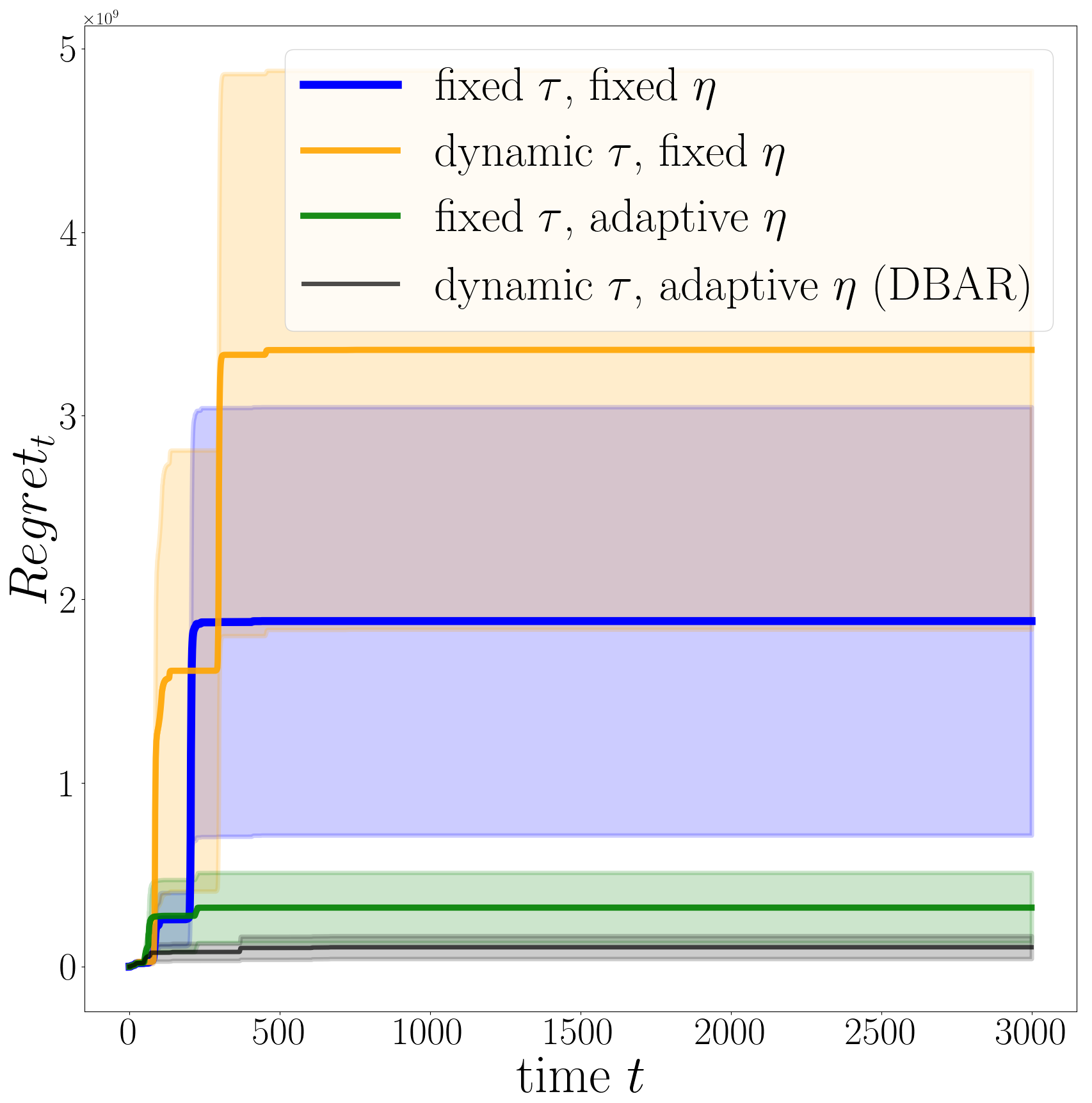}} 
    \subfigure[Adaptive learning rate under dynamic batch length]{\label{unifwalkc}\includegraphics[height=95pt, width=100pt]{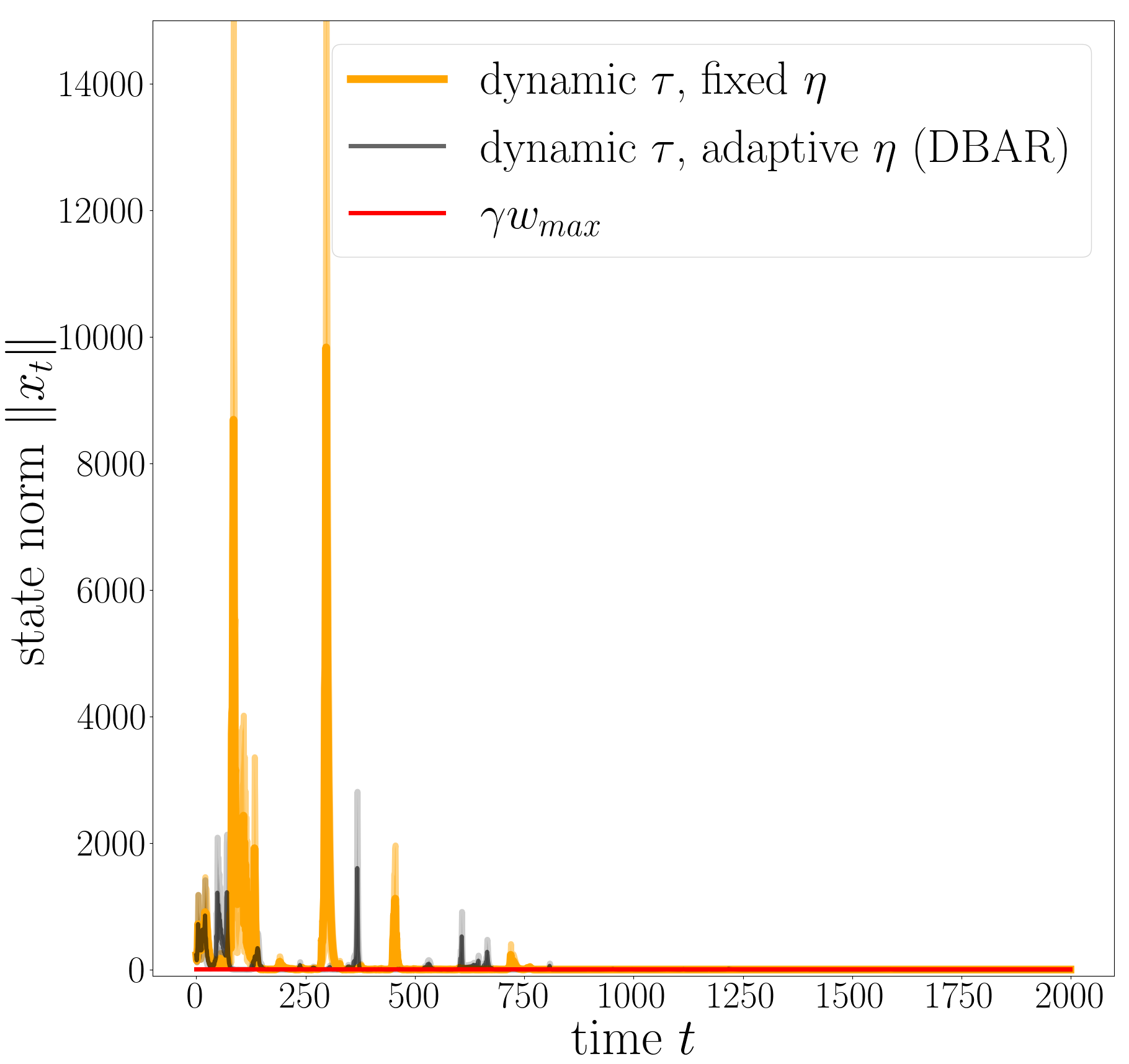}}
    \subfigure[Dynamic batch length under adaptive learning rate]{\label{unifwalkd}\includegraphics[height=95pt, width=100pt]{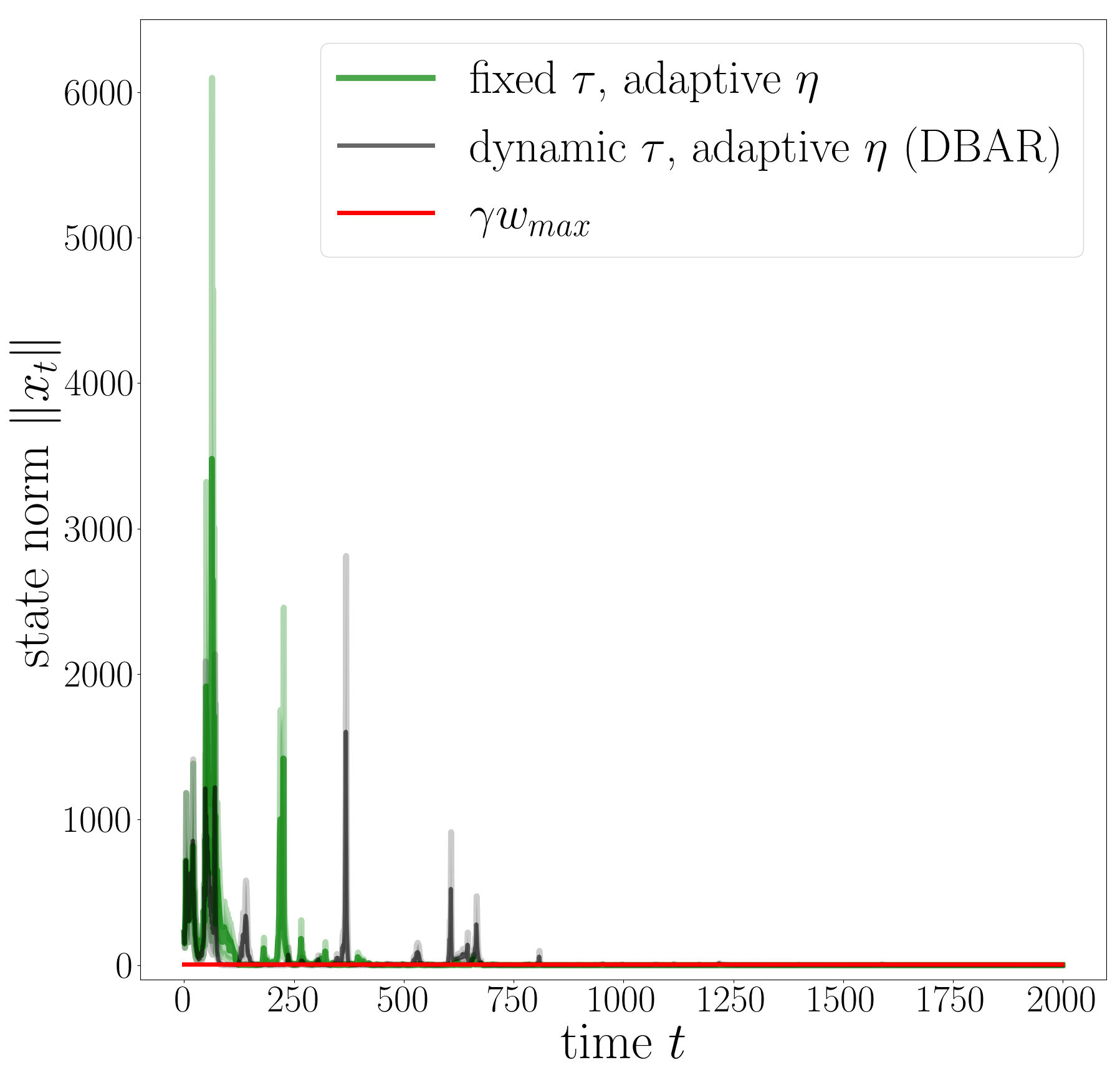}}
  
    \caption{The stability and the regret in the linear system under Uniform random walk. Ablation study of the algorithm is presented.}
   
    \label{uniformrandomwalklinear}
\end{figure}

\subsection{Experiments for the Nonlinear system}\label{expnonlinear}

In this subsection, we introduce the implementation details and present more experiments on the nonlinear ball-beam system introduced in Example 2 of Section \ref{numerical}. To study this continuous-time nonlinear system, we first derive the first-order state representation of (\ref{ex2formula}) with the states $(y_1,y_2,y_3,y_4)=(x, \dot{x}, -9.81B\theta, -9.81B\dot{\theta})\in \mathbb{R}^4$ and the action $v=-9.81Bu$:
\begin{align*}
    \dot{y_1} = y_2,\quad \dot{y_2} = 9.81B \sin\Bigr(\frac{y_3}{9.81B}\Bigr)+\frac{y_1 y_4^2}{B(9.81)^2}+3w, \quad\dot{y_3} = y_4, \quad\dot{y_4}=v,
\end{align*}
where $w$ is a sinusoidal noise $\sin\bigr(\frac{t}{7\pi}\bigr)$ and $w_\text{max}=1$.
A nested saturating control policy is known to successfully stabilize the ball-beam system if the correct parameters are given, but it does not necessarily exponentially stabilize the system (\cite{barbu1997ballbeam}). This necessitates our approach of extending the notion of stabilizing controllers beyond exponential assumptions. 
In this experiment, we aim to learn the parameters of the best stabilizing controller. 
We choose a nested saturating control policy $v'$ determined by three positive parameters $(p,k_1,k_2)$:
\begin{align*}
    &\epsilon = \frac{1}{\sqrt{1+y_1^2+y_2^2}}, \quad p_1 = p, \quad p_2 = \frac{p}{\epsilon}, \quad p_3 = \frac{p}{\epsilon^2}, \quad p_4 = \frac{p}{\epsilon^3}, \\ &z_1 = y_1+ k_1 y_2 + k_1 y_3 + y_4, \quad z_2 = y_2 + k_2 y_3 + y_4, \quad z_3 = y_3 + y_4, \quad z_4 = y_4, \\ &v' = \sigma_{p_4}(z_4+\sigma_{p_3}(z_3+\sigma_{p_2}(z_2+\sigma_{p_1}(z_1)))), 
\end{align*}
where $\sigma_{p}(z)$ is the saturating function defined as $p$ if $z>p$, $-p$ if $z<-p$, and $z$ if $|z|\leq p$. We consider the controller pool
\begin{align*}
V' = \{v': p \in\{2,16,30,44,58,72,86,100\}, ~ &k_1 \in \{2,2.5,3,3.5, 4,4.5,5,5.5,6,6.5\}, \\ &k_2 \in \{1,1.5,2,2.5,3,3.5, 4,4.5,5,5.5\}\},
\end{align*}
which has a total of 800 controllers. Among them, we do not know if a controller stabilizes the system. For simplicity, we perform forward-Euler discretization on the system with a sampling time $0.01$. The resulting discrete-time states and actions are denoted by $y^t$ and $v^t$ at $t^\text{th}$ sampling time. We use the cost function $c_t(y^t, v^t)=\|y^t\|^2 + \|v^t\|^2$ to stabilize the ball position and the beam angle towards 0. 

We again perform the ablation study of Algorithm \ref{Algorithm 1}.
For the experiments implementing the algorithm, we use $T=5000$, $\eta_0 = 0.025, \gamma=1.5, \alpha_0 = 1.01$, and $y^0 = [-32,24,5.6, 24]$. For the dynamic batch length, we consider $\tau_0=9$ and $\tau_b = \lceil \tau_0\cdot(\frac{b+41}{40})^{0.5} \rceil$.

Unlike the choice of $\beta(t)$ in Section \ref{explinear}, we select the stabilizing controller only to satisfy (asymptotic) ISS in Definition \ref{iss}, instead of exponential ISS. 
To deeply study this notion, we consider different polynomially decreasing series (which is not exponentially decreasing) to be the candidates for $\beta(t)$:
\begin{align*}
\beta_1(t) = \min\biggr\{\frac{10}{t},1\biggr\}, \quad \beta_2(t) = \min\biggr\{\frac{10}{t^{1.02}},1\biggr\}.
\end{align*}

% \begin{align*}
% \beta_1(t) = \max\{1-0.01\cdot t^{0.6},0\}, \quad \beta_2(t) = \max\{1-0.01\cdot t^{0.75},0\}.
% \end{align*}

\begin{figure}[t]
    \centering
    % 
    % \footnotesize
    \subfigure[Stability analysis: $\beta_1(t)$]{\label{pointsixa}\includegraphics[height=95pt, width=98pt]{figures/nonstab100.png}}
     \subfigure[Regret analysis: $\beta_1(t)$]{\label{pointsixb}\includegraphics[height=95pt, width=97pt]{figures/nonregret100.png}} 
   \subfigure[Stability analysis: $\beta_2(t)$]{\label{pointsevena}\includegraphics[height=95pt, width=98pt]{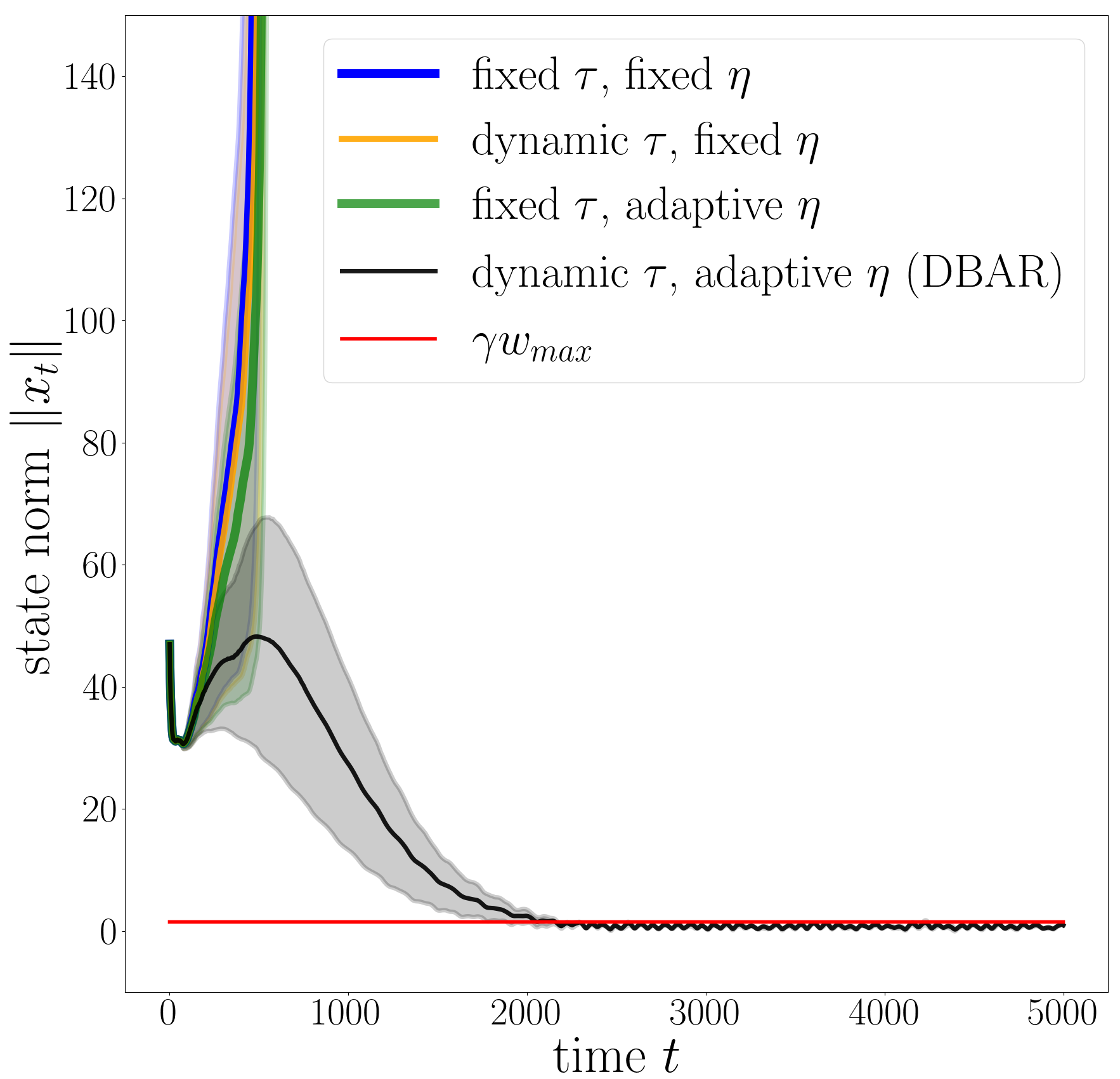}}
     \subfigure[Regret analysis: $\beta_2(t)$]{\label{pointsevenb}\includegraphics[height=95pt, width=97pt]{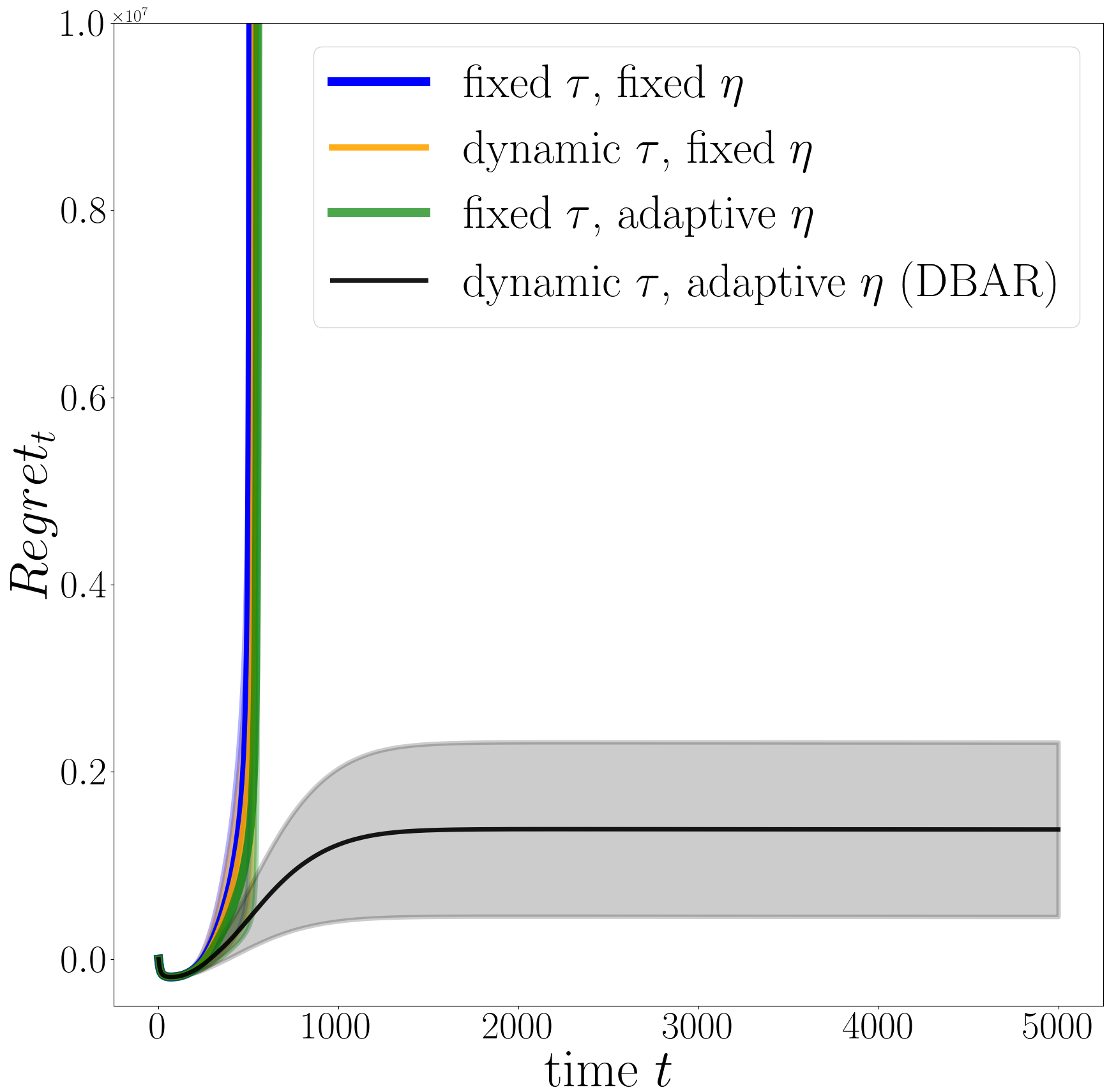}} 
   
    \caption{The stability and the regret in the noise-injected ball-beam system under sinusoidal noise and the choice of $\beta_1(t)$ or $\beta_2(t)$.}
   
    \label{pointsix}
\end{figure}

\begin{figure}[t]
    \centering
    % 
    % \footnotesize
    \subfigure[Stability analysis: $\beta_3(t)$]{\label{pointharm100a}\includegraphics[height=95pt, width=98pt]{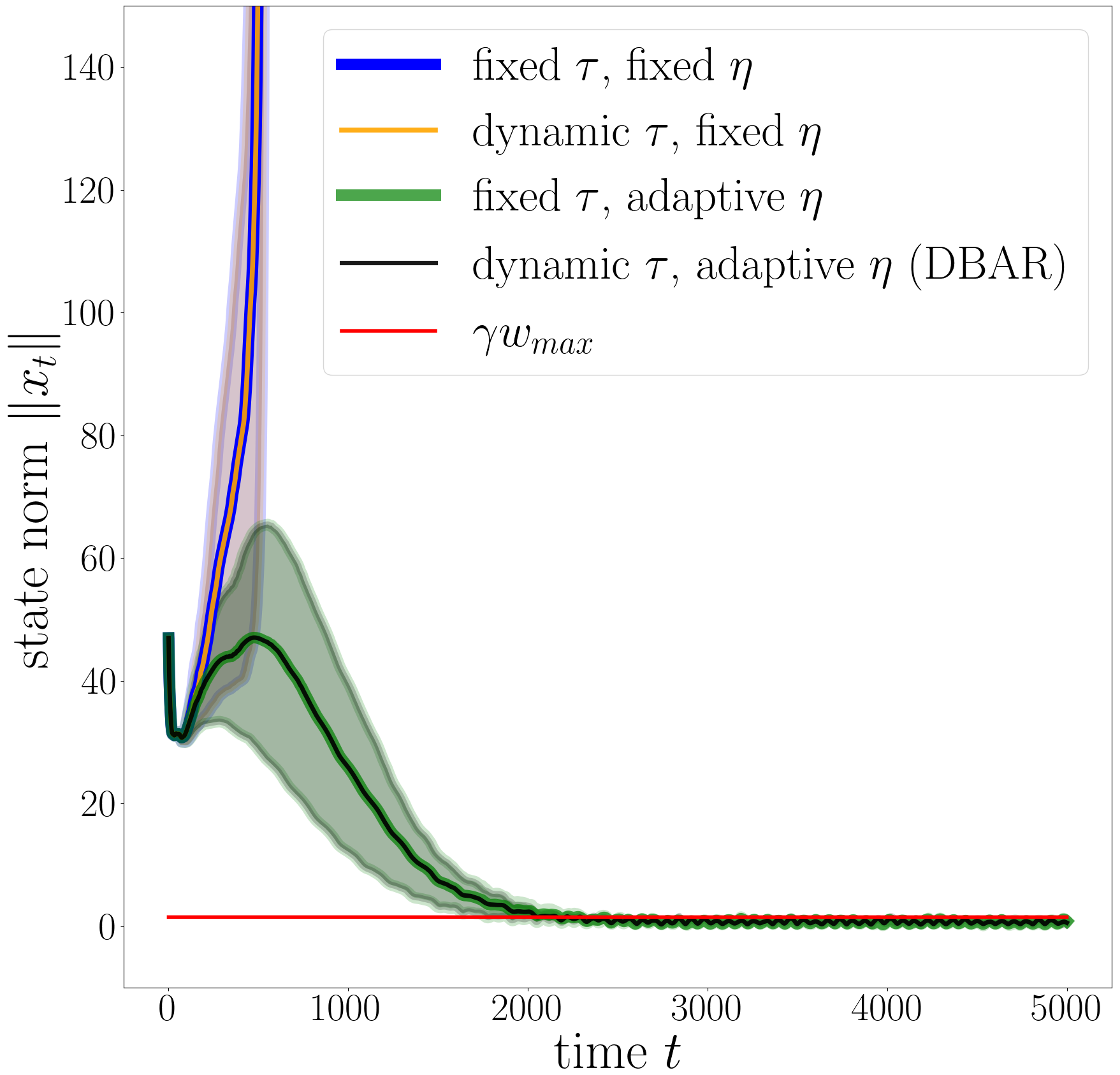}}
     \subfigure[Regret analysis: $\beta_3(t)$]{\label{pointharm100b}\includegraphics[height=95pt, width=97pt]{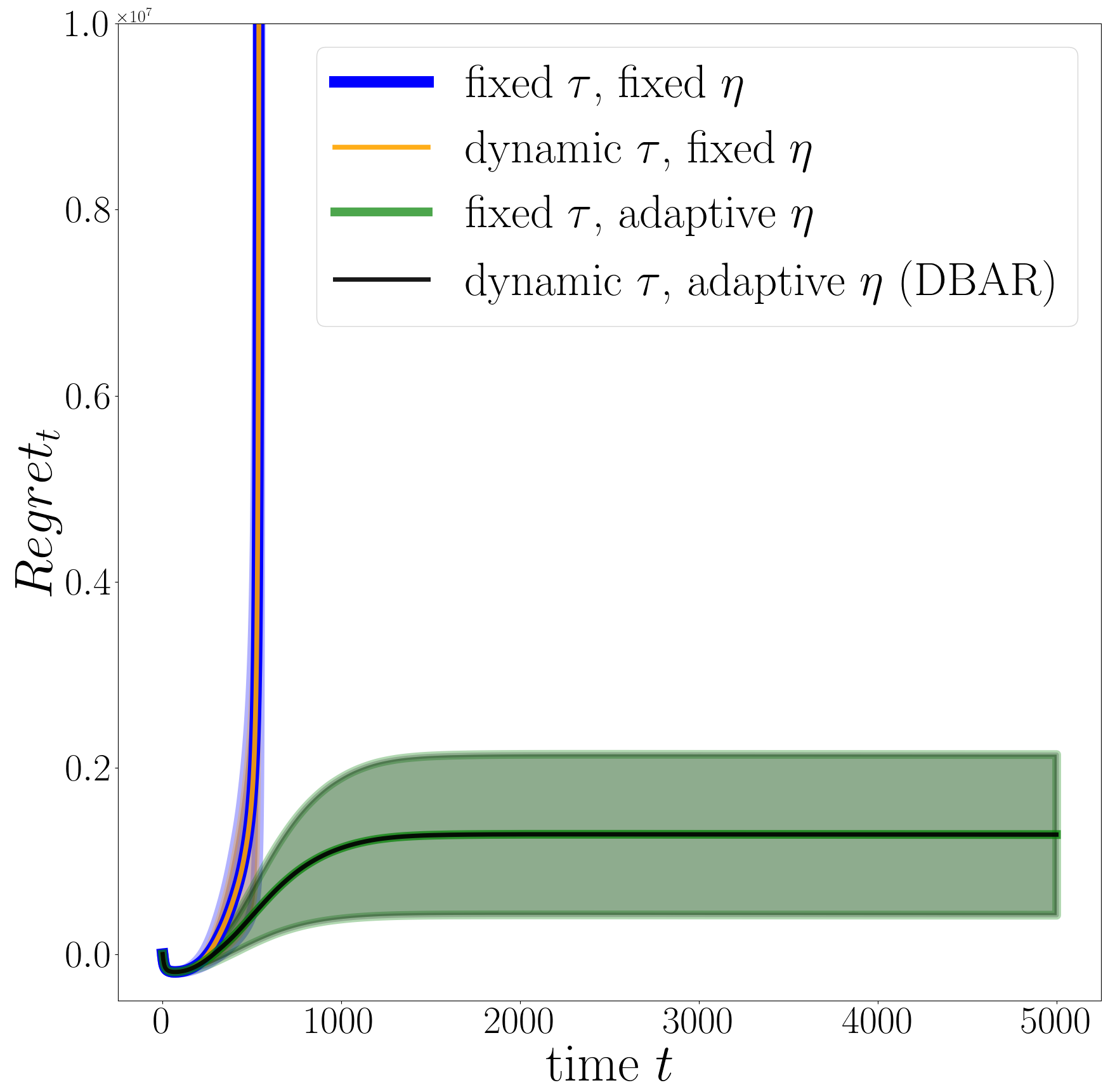}} 
   \subfigure[Stability analysis: $\beta_4(t)$]{\label{pointharm102a}\includegraphics[height=95pt, width=98pt]{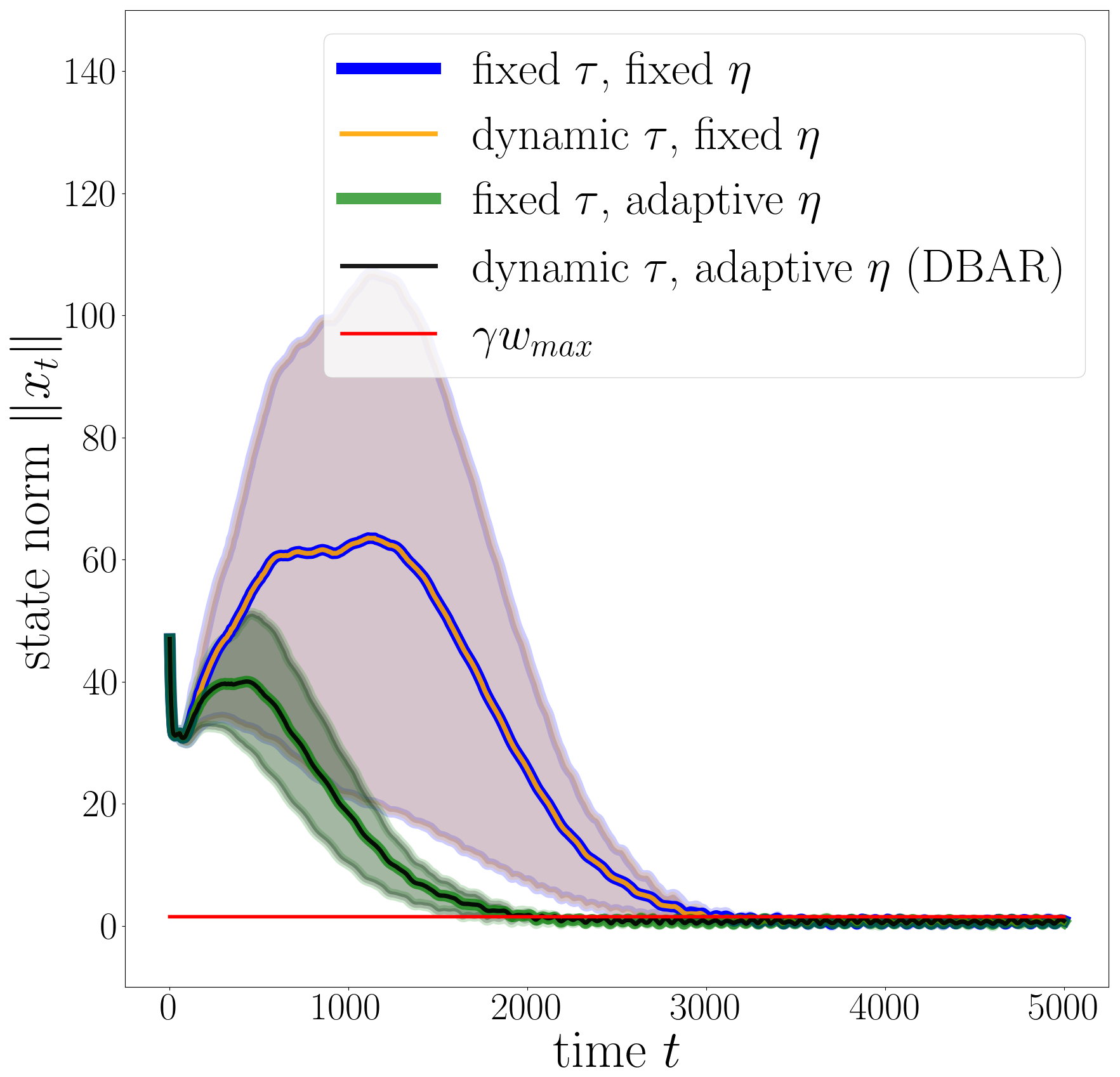}}
     \subfigure[Regret analysis: $\beta_4(t)$]{\label{pointharm102b}\includegraphics[height=95pt, width=97pt]{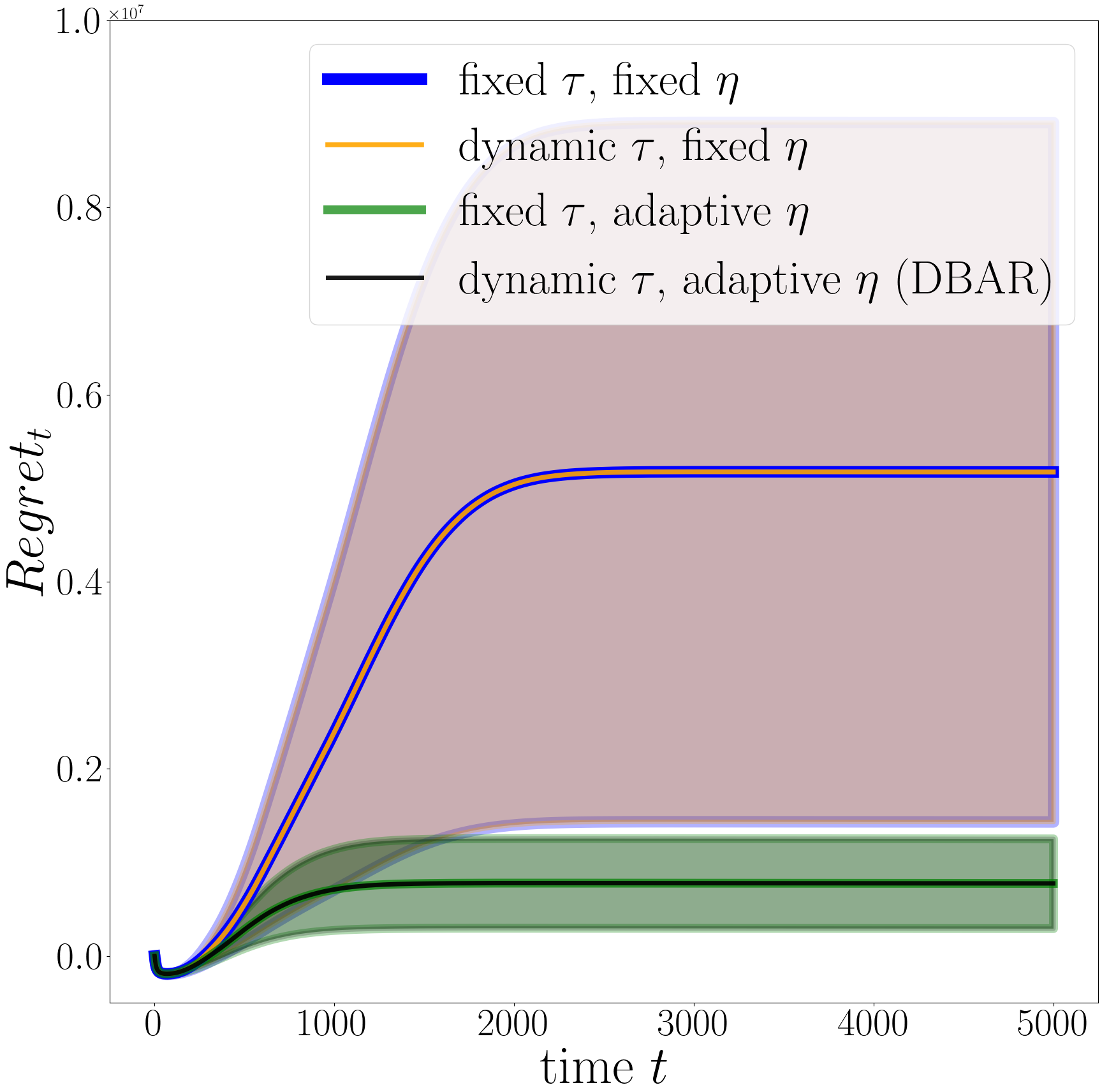}} 
   
    \caption{The stability and the regret in the noise-injected ball-beam system under sinusoidal noise and the choice of $\beta_3(t)$ or $\beta_4(t)$.}
   
    \label{pointharmonic}
\end{figure}

Figures \ref{fig:nonlin2} and \ref{fig:nonlin3} show the stability and regret analysis of the system under $\beta_1(t)$. For the completeness, we present the same pictures in Figures \ref{pointsixa} and \ref{pointsixb}.
Notice that our notion of stabilizing controllers also include the case where $H(t)$ is not summable at infinity (see the footnote in Theorem \ref{knownu1}), which is demonstrated by the choice of $\beta_1(t)$ as a stabilizing criterion.

In our experiment, there are 225 controllers out of 800 controllers that induces the system to explode, starting from the initial state. However, there exist far more destabilizing controllers within this pool, since most of 575 controllers are only locally stabilizing controllers, meaning that the system is stabilized only at some initial states. With only few stabilizing controllers in the pool, 
% Figure \ref{pointsix} demonstrates that our algorithm DBAR successfully stabilizes the state norm below $\gamma w_\text{max}$ and minimizes the regret for both $\beta_1(t)$ and $\beta_2 (t)$.
Figure \ref{pointsix} illustrates that a dynamic batch length by itself still suffers from a multiplicative exponential term regarding a series of destabilizing controllers. However, for both $\beta_1(t)$ and $\beta_2(t)$, even though $H(t)$ and $O(\sum_{i=1}^t \frac{1}{i})$ are close enough, one can observe that the combination of the two components of DBAR effectively resolves this malignant term and the resulting closed-loop system enjoys both asymptotic system stability and the improved regret (see Table \ref{summaryresults}).

  % For $\beta_1(t)$, a dynamic batch length fails to stabilize the system, whereas an adaptive learning rate successfully learns stabilizing controllers. This is because an adaptive learning rate directly addresses large state norms, while dynamic batch length tries to address them indirectly via time delay and fails to resolve the multiplicative exponential term when faced with a series of destabilizing controllers (see Remark \ref{algorithmdesignremark}). 

% Although the dynamic batch length does not contribute to the improved stability and regret, adding the dynamic batch length to the adaptive learning rate stabilizes the system faster than only using the adaptive learning rate. This again shows that the two components of DBAR \textit{jointly} improves the stability and the regret. For $\beta_2(t)$, both dynamic batch length and adaptive learning rate contribute to the desirable closed-loop asymptotic system stability and the regret, and combining the two further improves the system to learn the best stabilizing controller faster. 

We also consider two different $\beta(\cdot)$'s at different rates and present the results in Figure \ref{pointharmonic}:
\begin{align*}
\beta_3(t) = \min\biggr\{\frac{10}{t^{1.05}},1\biggr\}, \quad \beta_4(t) = \min\biggr\{\frac{10}{t^{1.08}},1\biggr\}.
\end{align*}

The behaviors of $\beta_3(t)$ and $\beta_4(t)$ are slightly different from those of $\beta_1(t)$ and $\beta_2 (t)$, in the sense that while DBAR still performs well, the system already appears stabilized even without some components of DBAR. This stems from the amount of discarding the destabilizing controllers. $\beta_4(t)$ removes the controller with the most strict criteria, followed by $\beta_3(t)$, $\beta_2(t)$, and $\beta_1(t)$ since $1.08>1.05>1.02>1$. This prevents the explosion of the nonlinear system by eliminating potential destabilizing controllers not yet seen in an unstable region in advance. However, in practice, if the given candidate controller set had not included any controller satisfying the strict assumptions, the algorithm would have \textit{terminated}, failing to keep the system running.
This finding again demonstrates why it is crucial to allow a broader class of controllers and still achieve a tight regret bound. Moreover, the experimental results strongly support that our algorithm DBAR performs well for any choice of $\beta(t)$, which determines the scope of stabilizing controllers.

\end{document}